\newtheorem{theorem}{Theorem}
\newtheorem{lemma}{Lemma}
\newtheorem{corollary}{Corollary}
\newtheorem{observation}{Observation}
\newtheorem{definition}{Definition}
\newtcolorbox[auto counter]{mybox}[2][]{
	enhanced,
	colback=blue!5!white,
	colframe=blue!75!black,
	fonttitle=\bfseries,
	title=Box \thetcbcounter: #2,#1
}
\newcommand{\bE}{\mathbb{E}}
\newcommand{\TV}{\mathrm{TV}}
\begin{document}
\title{Exponential Separations between Quantum Learning with and without Purification}

\author{Zhenhuan Liu}
\thanks{ZL, WG, and ZD contributed equally to this work.\\\href{mailto:liu-zh20@mails.tsinghua.edu.cn}{liu-zh20@mails.tsinghua.edu.cn}\\ \href{mailto:wgong@g.harvard.edu}{wgong@g.harvard.edu}\\ \href{mailto:du-zy23@mails.tsinghua.edu.cn}{du-zy23@mails.tsinghua.edu.cn}\\ \href{cai.zhenyu.physics@gmail.com}{cai.zhenyu.physics@gmail.com}}
\affiliation{Center for Quantum Information, Institute for Interdisciplinary Information Sciences, Tsinghua University, Beijing 100084, China}
\author{Weiyuan Gong}
\thanks{ZL, WG, and ZD contributed equally to this work.\\\href{mailto:liu-zh20@mails.tsinghua.edu.cn}{liu-zh20@mails.tsinghua.edu.cn}\\ \href{mailto:wgong@g.harvard.edu}{wgong@g.harvard.edu}\\ \href{mailto:du-zy23@mails.tsinghua.edu.cn}{du-zy23@mails.tsinghua.edu.cn}\\ \href{cai.zhenyu.physics@gmail.com}{cai.zhenyu.physics@gmail.com}}
\affiliation{School of Engineering and Applied Sciences, Harvard University, 150 Western Ave., Allston, Massachusetts 02134, USA}
\author{Zhenyu Du}
\thanks{ZL, WG, and ZD contributed equally to this work.\\\href{mailto:liu-zh20@mails.tsinghua.edu.cn}{liu-zh20@mails.tsinghua.edu.cn}\\ \href{mailto:wgong@g.harvard.edu}{wgong@g.harvard.edu}\\ \href{mailto:du-zy23@mails.tsinghua.edu.cn}{du-zy23@mails.tsinghua.edu.cn}\\ \href{cai.zhenyu.physics@gmail.com}{cai.zhenyu.physics@gmail.com}}
\affiliation{Center for Quantum Information, Institute for Interdisciplinary Information Sciences, Tsinghua University, Beijing 100084, China}
\author{Zhenyu Cai}
\affiliation{Department of Materials, University of Oxford, Parks Road, Oxford OX1 3PH, United Kingdom}
\affiliation{Quantum Motion, 9 Sterling Way, London N7 9HJ, United Kingdom}

\begin{abstract}
In quantum learning tasks, quantum memory can offer exponential reductions in statistical complexity compared to any single-copy strategies, but this typically necessitates at least doubling the system size.
We show that such exponential reductions can also be achieved by having access to the purification of the target mixed state.
Specifically, for a low-rank mixed state, only a \emph{constant} number of ancilla qubits is needed for estimating properties related to its purity, cooled form, principal component and quantum Fisher information with \emph{constant} sample complexity, which utilizes single-copy measurements on the purification.
Without access to the purification, we prove that these tasks require \emph{exponentially} many copies of the target mixed state for any strategies utilizing a bounded number of ancilla qubits, even with the knowledge of the target state's rank.
Our findings also lead to practical applications in areas such as quantum cryptography.
With further discussions about the source and extent of the advantages brought by purification, our work uncovers a new resource with significant potential for quantum learning and other applications.
\end{abstract}

\maketitle

\vspace{2em}
\tableofcontents

\vspace{3em}
\section{Introduction}\label{sec:intro}
Methods for learning and predicting properties of unknown quantum states are key subroutines in quantum information science, facilitating the benchmarking and verification of quantum devices and the study of many-body physics~\cite{eisert2020quantum,Daley2022analog,elben2023randomized}. 
The properties of quantum systems can be much more naturally probed using techniques that leverage quantum resources than their classical counterparts. 
The recent rapid advance of quantum technologies on various fronts, such as signal sensing, computational capacity, and information storage, offers the potential for efficient implementations of these learning techniques~\cite{ludlow2015clock,Wang2017memory,Arute2019supermacy,Bluvstein2024rydberg}.

Quantum memory is one of the most crucial resources for quantum learning. 
Recent works have shown that quantum memory can reduce statistical complexity, like sample and query complexities, in learning from quantum data~\cite{anshu2024survey}. 
Specifically, polynomial statistical separations between quantum learning protocols with and without quantum memory~\cite{bubeck2020entanglement,chen2022toward,chen2022tight,chen2024optimalstate,aaronson2024quantum,fawzi2023quantum} have been proven.
The subsequent works~\cite{huang2021information,aharonov2022quantum,chen2022memory,chen2022quantum,oh2024entanglement} further reveal exponential separations between protocols with and without quantum memory in various tasks including shadow tomography, physical quantum system learning, unitary learning, circuit learning, and quantum dynamics learning. 
Moreover, fine-grained trade-offs between sample complexity and quantum memory requirements are characterized for learning Pauli observables~\cite{chen2022memory,chen2024optimal}, testing purity~\cite{chen2024optimal}, and learning Pauli channel eigenvalues~\cite{chen2022quantum,chen2023efficient,chen2024tight}.
Some of these advantages have been further experimentally demonstrated on near-term devices~\cite{Huang_2022_quantum,bluvstein2024logical,seif2024entanglement}.

In quantum state learning tasks, the statistical advantages brought by quantum memory stem from the capability for joint quantum operations.  
Such operations normally require doubling the system size and sometimes even a polynomial qubit overhead~\cite{chen2021hierarchy,huang2021information}, leading to significant hardware requirements in their implementations.
Therefore, it is natural to ask:

\begin{center}
\emph{Is there a quantum resource that can achieve an exponential separation in learning with only a constant number of ancilla qubits?}
\end{center}

\begin{figure*}[htbp]
\centering
\includegraphics[width=0.7\linewidth]{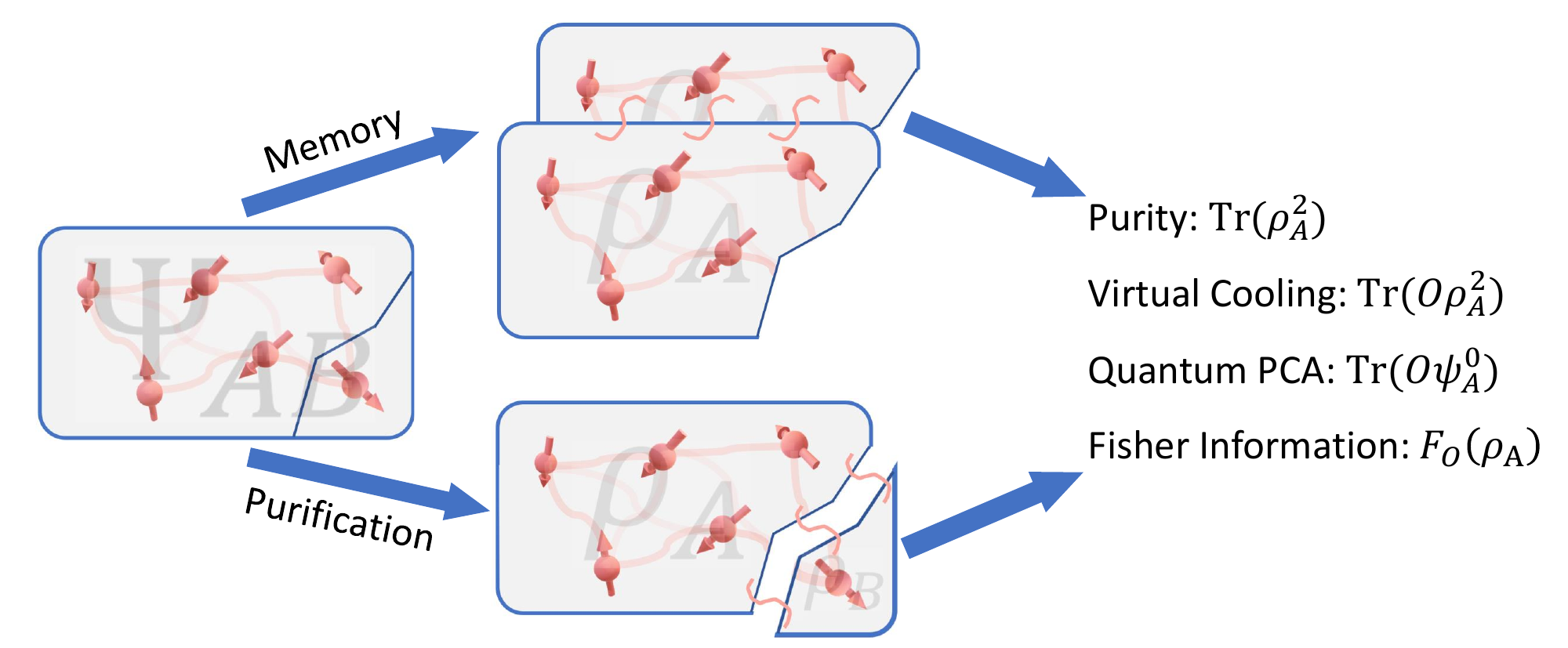}
\caption{We consider learning an unknown quantum mixed state $\rho_A$, which is the reduced density matrix of a global pure state, $\rho_A=\Tr_B(\Psi_{AB})$.
Compared with $\rho_A$, $\Psi_{AB}$ has a constant number of additional ancilla qubits with reduced density matrix being $\rho_B$. 
In certain learning tasks, such as purity estimation, quantum virtual cooling, quantum principal component analysis (quantum PCA), and quantum Fisher information estimation, joint operations among many identical copies bring exponential sample complexity advantages compared with single-copy strategies.
However, executing these tasks typically necessitates doubling the system size, which can be resource-intensive.
In this work, we prove that the ability to perform quantum operations on $\Psi_{AB}$, which only contains a constant number of additional qubits, also leads to an exponential separation in sample complexity when compared to protocols that utilize single-copy operations or even with a bounded number of memory qubits.
}
\label{fig:overview}
\end{figure*}

In this work, we answer this question affirmatively by considering the concept of \emph{purification}, which refers to a pure state $\Psi_{AB}=\ketbra{\Psi_{AB}}{\Psi_{AB}}$ whose reduced density matrix is the target mixed state we wish to probe, $\rho_A=\Tr_B(\Psi_{AB})$.
Our motivation for considering purification as a resource for quantum learning is multifaceted.
From an information-theoretic perspective, a mixed state indicates that some information from the target system leaks into the environment as a result of the entanglement between the system and the environment.
Therefore, estimating certain properties of the mixed state can be much easier with access to the global freedoms contained in the purification. 
Moreover, based on Uhlmann's theorem and transformation~\cite{uhlmann1976transition}, purification serves as a powerful theoretical tool that has been widely applied in various domains, including quantum inner product calculation, quantum Shannon entropy, quantum cryptography, quantum gravity, and quantum complexity theory~\cite{bostanci2023unitary}.
Accessing or preparing the purification of an unknown quantum state also arises as a key subroutine in a wide range of quantum tasks. 
For example, analog quantum simulation often involves preparing a global quantum many-body state and estimating its local properties like the entanglement Hamiltonian~\cite{Kokail2021entanglement}, entanglement entropy~\cite{Islam2015entropy,adam2016thermalization,brydges2019renyi}, and correlation functions~\cite{Shao2024FHM}.
In these cases, even though the properties of the local mixed states are of interest, we naturally have access to the global pure state.
Another practical scenario involving purification is quantum cryptography, including quantum key distribution~\cite{bennett1984quantum,ekert1991quantum} and quantum random number generator~\cite{herrero2017qrng}, where the complement system of the purification is at risk of being acquired by a third party seeking to steal the secret key or the random number. 
Additionally, certain quantum simulation algorithms, such as block encoding, require the purification-preparation oracle to encode a mixed state~\cite{Low2019hamiltonian,wang2024entropy,liu2024estimatingtracequantumstate}.
Furthermore, preparing and manipulating purification play vital roles in entanglement distillation~\cite{divincenzo1999entanglement} and mixed state compiling~\cite{ezzell2023quantum}.

Assume the target $n$-qubit state $\rho_A$ is low-rank and can be purified to the pure state $\Psi_{AB}$ with a constant number of ancilla qubits.
We show that many learning tasks can be achieved with constant sample complexity using $\Psi_{AB}$, which is established by designing new purification-assisted quantum learning protocols.
\begin{theorem}[Informal]
Given an $n$-qubit state $\rho_A$ that can be purified to $\Psi_{AB}$ with a constant number of ancilla qubits in system $B$, the quantities of $\Tr(\rho_A^t)$, $\Tr(O\rho_A^t)$, $\Tr(O\psi_A^0)$, and $F_O(\rho_A)$ can be accurately estimated with sample complexity $\mathcal{O}(1)$ by single-copy operations on $\Psi_{AB}$. 
Here $O$ represents some observable with bounded norm, $t\ge2$ is an integer, $\psi_A^0=\ketbra{\psi_A^0}{\psi_A^0}$ is the eigenstate of $\rho_A$ corresponding to its largest eigenvalue, and $F_O(\rho_A)$ denotes the quantum Fisher information.
\end{theorem}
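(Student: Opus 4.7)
The plan is to exploit the fact that, since $|B| = O(1)$, the Schmidt rank of $\Psi_{AB}$ across the $A|B$ cut is bounded by $r := 2^{|B|} = O(1)$, so the entire spectrum of $\rho_A$ ``lives on'' the tiny system $B$. Writing the Schmidt decomposition $\ket{\Psi_{AB}} = \sum_{i=0}^{r-1}\sqrt{\lambda_i}\ket{\psi_A^i}\ket{\phi_B^i}$, the reduced states $\rho_A$ and $\rho_B$ share the spectrum $\{\lambda_i\}$, and measuring $B$ in the Schmidt basis $\{\ket{\phi_B^i}\}$ collapses $A$ onto the eigenstate $\ket{\psi_A^i}$ with probability $\lambda_i$. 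Every estimation task will be reduced to a ``measure $B$ in a basis derived from tomography of $\rho_B$, then measure $O$ on $A$'' protocol.

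The first step is full state tomography of $\rho_B$. Since $\rho_B$ lives on only $|B| = O(1)$ qubits, standard tomography on single copies of the $B$-register yields $\epsilon$-accurate estimates of both $\{\lambda_i\}$ and $\{\ket{\phi_B^i}\}$ using $O(1)$ copies of $\Psi_{AB}$. This immediately gives $\Tr(\rho_A^t) = \Tr(\rho_B^t) = \sum_i \lambda_i^t$ with no measurement on $A$ at all. For $\Tr(O\rho_A^t) = \sum_i \lambda_i^t\bra{\psi_A^i}O\ket{\psi_A^i}$, I would measure $B$ in the learned Schmidt basis, record the outcome $i$ (which occurs with probability $\lambda_i$), estimate $\bra{\psi_A^i}O\ket{\psi_A^i}$ by single-copy shadow techniques on the collapsed state of $A$, and weight the estimator by $\lambda_i^{t-1}$; this gives an unbiased estimator of variance $O(\|O\|^2)$. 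For $\Tr(O\psi_A^0)$, the same protocol is used but conditioned on outcome $i=0$; since $\lambda_0 \ge 1/r = \Omega(1)$, post-selection is efficient. Finally, the quantum Fisher information admits the spectral expansion $F_O(\rho_A) = 2\sum_{ij:\lambda_i+\lambda_j>0}\frac{(\lambda_i-\lambda_j)^2}{\lambda_i+\lambda_j}\abs{O_{ij}}^2$, so only the off-diagonal matrix elements $O_{ij} = \bra{\psi_A^i}O\ket{\psi_A^j}$ remain to be estimated; I would extract $\Re O_{ij}$ by measuring $B$ in the rotated basis $\{(\ket{\phi_B^i}\pm\ket{\phi_B^j})/\sqrt{2}\}$, which prepares on $A$ the controlled superposition $\propto\sqrt{\lambda_i}\ket{\psi_A^i}\pm\sqrt{\lambda_j}\ket{\psi_A^j}$, then measuring $O$ on it and subtracting the known diagonal contributions, with $\Im O_{ij}$ obtained from the analogous basis $\{(\ket{\phi_B^i}\pm \mathrm{i}\ket{\phi_B^j})/\sqrt{2}\}$. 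Since there are only $O(r^2)=O(1)$ matrix elements and $r = O(1)$, the entire Fisher-information estimator uses $O(1)$ copies.

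The main obstacle I anticipate is error propagation from the tomographic estimate $\tilde{\rho}_B$ into the conditional measurement on $A$. Because the measurement basis on $B$ is itself data-dependent, a basis misalignment of trace distance $\delta$ tilts the post-selected state on $A$ by the same order, contributing an $O(\|O\|\delta)$ bias to each conditional $O$-expectation; requiring total error $\epsilon$ then forces tomographic accuracy $\delta = O(\epsilon/\|O\|)$, which is still $O(1)$ samples because $\dim B$ is constant. A secondary subtlety arises at near-degenerate eigenvalues, where individual eigenvectors $\ket{\phi_B^i}$ and $\ket{\psi_A^i}$ are ill-defined: for the principal-component task one invokes a constant spectral gap on $\rho_A$ (needed for $\psi_A^0$ to be well-defined in the first place), while for the Fisher information the spectral sum is manifestly basis-invariant within degenerate subspaces, so grouping near-degenerate indices yields a stable estimator. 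I expect the cleanest part of the write-up to be the $\Tr(\rho_A^t)$ identity and the hardest part to be a uniform perturbation bound that converts tomography accuracy on $\rho_B$ into end-to-end error bounds for all four quantities simultaneously.
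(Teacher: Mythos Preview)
Your proposal is correct and follows essentially the same strategy as the paper: tomograph $\rho_B$ on the constant-sized system, then use the learned Schmidt data to construct measurements on $B$ that filter $A$ onto eigenstate combinations. The paper packages each task as ``measure the observable $O\otimes M_B$ on $\Psi_{AB}$'' for a suitable $M_B$ built from $\hat\rho_B$ (namely $M_B=\hat\rho_B^{t-1}$ for virtual cooling, $M_B=\hat\psi_B^0/\hat\lambda_B^0$ for PCA, and $M_B=\hat P_{B\pm}^{jk}$ for Fisher information), whereas you phrase the same operations as ``measure $B$ in a rotated basis, then measure $O$ on the steered $A$ state'' --- these are operationally identical since your rotated bases diagonalize the paper's $M_B$. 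For the Fisher information specifically, the paper extracts $|O_{jk}|^2$ by multiplying two single-copy estimators of $\Tr[\Psi_{AB}(O\otimes\hat P_{B\pm}^{jk})]$, which is exactly your $\Re O_{jk}$/$\Im O_{jk}$ extraction rewritten. One small difference: the paper handles near-degeneracy by simply \emph{assuming} $\Theta(1)$ gaps between all nonzero eigenvalues (their Theorem~6), rather than invoking basis-invariance of the Fisher sum as you suggest; your observation is correct but would require a separate threshold-and-group argument that the paper does not attempt.
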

\noindent These four quantities play important roles in quantum entropy (purity) estimation~\cite{Islam2015entropy,adam2016thermalization,brydges2019renyi,zhang2021shadow,Shaw2024benchmarking,gong2024sample}, quantum virtual cooling~\cite{cotler2019cooling}, quantum error mitigation~\cite{cai2023qem,koczorExponentialErrorSuppression2021,hugginsVirtualDistillationQuantum2021}, quantum principal component analysis~\cite{Lloyd2014qpca,Kimmel2017hamiltonian}, and quantum metrology~\cite{braunstein1994fisher,Giovannetti2011metrology}.
At the same time, we prove that these learning tasks are exponentially hard without purification, even when prior knowledge of the rank is available.
\begin{theorem}[Informal]
Any protocol capable of accurately estimating any one of $\Tr(\rho_A^t)$, $\Tr(O\rho_A^t)$, $\Tr(O\psi_A^0)$, or $F_O(\rho_A)$ using only single-copy operations on $\rho_A$ requires at least $\Omega(2^{n/2})$ sample complexity, even if the rank of $\rho_A$ is a constant and is known. 
Furthermore, even if one has access to $k\le n$ qubits of quantum memory and can interact with $\rho_A$ fewer than twice in a single experiment, a sample complexity at least $\Omega(\min\{2^{n/2},2^{n-k}\})$ is necessary.
\end{theorem}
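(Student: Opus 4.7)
The plan is to establish both lower bounds via a standard hypothesis-testing reduction (Le Cam's two-point method). For each of the four quantities I would construct two ensembles $\mathcal{D}_0$ and $\mathcal{D}_1$ of $n$-qubit rank-$r$ states such that (i) the target quantity differs by $\Omega(1)$ between the two ensembles and (ii) their first moments coincide, so that no single-copy measurement statistic distinguishes them to first order. The canonical construction takes $\rho_b = U D_b U^\dagger$ with $U$ drawn from a unitary $2$-design and $D_b$ a fixed diagonal rank-$r$ density operator; choosing $D_0=\tfrac{1}{r}\mathrm{diag}(1,\dots,1,0,\dots,0)$ versus $D_1=\mathrm{diag}(1-(r-1)\varepsilon,\varepsilon,\dots,\varepsilon,0,\dots,0)$ induces an $\Omega(1)$ gap in every $\Tr(\rho^t)$ for $t\ge 2$ and, combined with an appropriate rotation that pins the top eigenvector to a random direction, in $\Tr(O\rho^t)$, $\Tr(O\psi_A^0)$ and $F_O(\rho_A)$ for a generic bounded observable $O$ (say a Pauli string). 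Any algorithm that outputs an $o(1)$-accurate estimate of the quantity with constant success probability must decide $\mathcal{D}_0$ versus $\mathcal{D}_1$, so a distinguishing lower bound immediately yields an estimation lower bound while leaving the rank constant and known.

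For the single-copy $\Omega(2^{n/2})$ bound, I would bound the TV distance between the $N$-outcome joint distributions under $\mathcal{D}_0$ and $\mathcal{D}_1$ by its $\chi^2$ divergence, which after Haar integration depends only on the second moments $\bE_U[(UD_bU^\dagger)^{\otimes 2}]$. Via Weingarten calculus these decompose into combinations of the identity and the swap $F$ whose coefficients differ between the two hypotheses by $|\Tr(D_0^2)-\Tr(D_1^2)|=\Theta(1)$ times a $1/2^n$-order factor, so to remain indistinguishable one needs $N^2/2^n$ bounded, forcing $N=\Omega(2^{n/2})$. For the memory-restricted setting I would employ the quantum-classical tree formalism of Chen, Cotler, Huang and Li: each leaf of the tree corresponds to an outcome sequence and each interaction with $\rho$ is a joint channel on the system and a $k$-qubit memory register; Haar-averaging the leaf contributions under $\mathcal{D}_b$ shows that the memory can ``absorb'' at most $2^k$ degrees of freedom per experiment, replacing $2^{-n}$ by $2^{-(n-k)}$ in the second-moment bound and hence yielding the $\Omega(2^{n-k})$ term. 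Combining this with the previous argument and the constraint of at most one interaction per experiment produces the claimed $\Omega(\min\{2^{n/2},2^{n-k}\})$.

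The main technical obstacle, I expect, is controlling the Haar-averaged moments $\bE_U[(UD_bU^\dagger)^{\otimes m}]$ on the low-rank subspace: the standard mixed-versus-perturbed argument is formulated for the fully-mixed reference $I/2^n$, whereas here the eigenvalues of $D_b$ are supported on only $r$ dimensions and the Weingarten expressions pick up additional projector contributions that must be shown not to spoil the $2^{-(n-k)}$ scaling in the tree argument. A secondary subtlety is unifying the four tasks into one construction while ensuring that (a) the top eigenvalue stays non-degenerate, so that $\psi_A^0$ and hence $\Tr(O\psi_A^0)$ are well-defined, and (b) the small eigenvalues $\varepsilon$ remain bounded away from zero, since $F_O(\rho_A)$ diverges badly as eigenvalues pinch. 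Once these rank-$r$ moment bounds are established, the lower bound follows mechanically by plugging the resulting TV bound into Le Cam's inequality.
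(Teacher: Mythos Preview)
Your high-level framework (reduce to two-hypothesis distinguishing, apply Le Cam, then use the learning-tree/bounded-memory machinery of Chen--Cotler--Huang--Li) is exactly what the paper does. The critical gap is in your ensemble construction for the three observable-dependent quantities. With $\rho_b=UD_bU^\dagger$ and $U$ Haar, the eigenvectors of $\rho_0$ and $\rho_1$ are the \emph{same} random columns $U\ket{j}$; only the eigenvalues differ. Hence $\Tr(O\psi_A^0)=\bra{0}U^\dagger O U\ket{0}$ has \emph{identical} distribution under both hypotheses (and $\psi_A^0$ is not even well-defined when $D_0$ is degenerate). The same holds for $\Tr(O\rho^t)$ and $F_O(\rho)$: once the eigenbasis is Haar-randomized, a fixed $O$ cannot systematically separate the two ensembles. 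Your construction therefore proves only the purity lower bound; the sentence ``combined with an appropriate rotation that pins the top eigenvector to a random direction'' does not produce an $\Omega(1)$ gap for the other three tasks.

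The paper avoids this by using \emph{task-specific, structured} ensembles in which part of the eigenbasis is deterministic and aligned with the chosen observable. For cooling/PCA the state has a block form $\rho=\tfrac12\ketbra{0}\otimes(\cdot)+\tfrac14\ketbra{1}\otimes(\cdot)$ versus the same with $\ket{0}\leftrightarrow\ket{1}$, so the principal component always lives in a fixed first-qubit sector and $\Tr(Z_1\rho^2)=\pm\tfrac18$. For Fisher information the first qubit is in the computational versus Hadamard basis, so $F_{X_1}$ is $\Theta(1)$ in one case and $0$ in the other. The Haar randomness is applied only to the \emph{remaining} $n{-}1$ or $n{-}2$ qubits, which is what makes the ensembles indistinguishable while the structured first qubit carries the gap. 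Two secondary issues: a unitary $2$-design is not enough for adaptive protocols, since the learning tree produces $T$-fold products and hence $T$-th Haar moments (the paper uses genuine Haar averages and Lemma~\ref{lem:high_moment_haar}); and the $\chi^2$ reduction to ``second moments only'' is incorrect for the same reason---the paper instead bounds the likelihood ratio directly along the tree and invokes a martingale argument for the memory case.
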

\noindent Our results demonstrate that using a constant number of purification qubits leads to an exponential separation in sample complexity for these quantum learning tasks, compared to protocols that solely rely on single-copy measurements of $\rho_A$.
Furthermore, the distinction between learning with a constant number of purification qubits and memory qubits underscores the crucial role of purification in achieving an exponential learning advantage. 
An intuitive comparison between purification-assisted protocols and those utilizing quantum memory is illustrated in Fig.~\ref{fig:overview}.

Beyond quantum state learning, we explore the utility of purification in other contexts. 
We design new purification-assisted channel learning protocols and show the constant sample complexity in key tasks like unitarity estimation~\cite{montanaro2013survey,chen2023unitarity}, virtual channel purification~\cite{liu2024virtual}, and channel principal component analysis.
The sample complexity separations in state learning also lead to applications in quantum cryptography.
We design a protocol that enables the verification of a $2n$-qubit quantum computer using an $n$-qubit quantum computer.
In addition, we have constructed a blind observable estimation protocol that permits a client to measure an observable using a server, without revealing the correct expectation value to the server.


The remaining parts of the paper are organized as follows.
In Sec.~\ref{sec:upper_bound}, we propose purification-assisted protocols and show their constant sample complexity in purity estimation, quantum virtual cooling, quantum principal component analysis, and quantum Fisher information estimation.
These protocols are built from a series of simple yet insightful observations.
In Sec.~\ref{sec:lower_bound}, we reduce the learning tasks into carefully constructed state discrimination hard instances to prove the fundamental exponential sample complexity for protocols that only rely on single-copy operations on $\rho_A$ and less than $n$ memory qubits.
This section only contains main results and ideas, with all technical derivations left in Appendix~\ref{app:proof_hardness}.
In Sec.~\ref{sec:channel_learning}, we show how to design purification-based channel learning protocols for learning the information of unitarity, virtual channel distillation, and channel principal component analysis with constant sample complexity.
In Sec.~\ref{sec:crypto}, we discuss the applications of purification in quantum cryptography, proposing new quantum verification and blind observable estimation protocols.
In Sec.~\ref{sec:discussion}, we discuss the origin of the advantages of purification and compare the capabilities of purification and quantum memory.
In Sec.~\ref{sec:outlook}, we summarize our conclusions and propose some open problems.

\section{Constant Sample Complexity with Purification}\label{sec:upper_bound}

In this section, we focus on scenarios where the target quantum state $\rho_A$ on system $A$ can be purified to $\Psi_{AB}=\ketbra{\Psi_{AB}}$ with a constant number of ancilla qubits in system $B$ and physical operations allowed on $\Psi_{AB}$.
We present four examples to demonstrate how access to purification enables the achievement of constant sample complexity.

The sample complexity separation achieved through accessing purification can be intuitively understood through the Schmidt decomposition. 
Any bipartite pure state can be decomposed into
\begin{equation}\label{eq:schimdt}
\ket{\Psi_{AB}}=\sum_{j=0}^{2^{\abs{B}}-1}\sqrt{\lambda_j}\ket{\psi_A^j}\otimes\ket{\psi_B^j},
\end{equation}
where $\left\{\ket{\psi_A^j}\right\}_j$ and $\left\{\ket{\psi_B^j}\right\}_j$ are sets of mutually orthogonal state vectors. 
From this decomposition, the first observation is that the eigenvalue spectra of $\rho_A$ and $\rho_B$ are identical.
Therefore, the estimation of functions of $\rho_A$'s eigenvalues can be replaced by performing the same estimation on $\rho_B$.
Secondly, quantum operations on $\rho_B$ can filter the target eigenstates of $\rho_A$, enhancing the operations and measurements applicable to those eigenstates.
Therefore, if the qubit number in $\rho_B$ is significantly smaller than that in $\rho_A$, one can utilize $\rho_B$ to simplify learning tasks about $\rho_A$.

\subsection{Purity Estimation}
The most intuitive application is the purity estimation, a vital task in quantum information science finding applications in quantum benchmarking~\cite{Shaw2024benchmarking} and entanglement detection~\cite{Islam2015entropy,adam2016thermalization,brydges2019renyi,zhang2021shadow}.
The effectiveness of purification in purity estimation stems from a simple observation:
\begin{observation}\label{obs:purity}
Given a pure state $\Psi_{AB}=\ketbra{\Psi_{AB}}{\Psi_{AB}}$ and its reduced density matrices $\rho_A=\Tr_B(\Psi_{AB})$ and $\rho_B=\Tr_A(\Psi_{AB})$, we have $\Tr(\rho_A^2)=\Tr(\rho_B^2)$.
\end{observation}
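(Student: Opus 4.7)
The plan is to invoke the Schmidt decomposition already written down in Eq.~\eqref{eq:schimdt} and simply read off the two reduced density matrices. Since $\ket{\Psi_{AB}} = \sum_j \sqrt{\lambda_j}\ket{\psi_A^j}\otimes\ket{\psi_B^j}$ with orthonormal $\{\ket{\psi_A^j}\}$ and $\{\ket{\psi_B^j}\}$, tracing out one subsystem at a time gives the spectral decompositions
\begin{equation}
\rho_A = \sum_j \lambda_j \ketbra{\psi_A^j}{\psi_A^j}, \qquad \rho_B = \sum_j \lambda_j \ketbra{\psi_B^j}{\psi_B^j}.
\end{equation}
In particular, $\rho_A$ and $\rho_B$ share the same (non-zero) spectrum $\{\lambda_j\}$.

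Squaring and taking the trace then yields $\Tr(\rho_A^2) = \sum_j \lambda_j^2 = \Tr(\rho_B^2)$, which is exactly the claim. The argument is essentially a one-liner once the Schmidt form is in hand, so there is no substantive obstacle; the only small point worth noting is that one should verify the decomposition in Eq.~\eqref{eq:schimdt} can always be chosen with orthonormal local bases (standard fact via the singular value decomposition of the coefficient matrix of $\ket{\Psi_{AB}}$ in any product basis), which justifies that tracing out $B$ really does produce a diagonal $\rho_A$ with eigenvalues $\lambda_j$.

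If desired, the statement can be generalized for free to all integer moments $t\ge 2$, giving $\Tr(\rho_A^t)=\Tr(\rho_B^t)=\sum_j \lambda_j^t$ by the identical argument; this stronger form is what underpins the later claims about $\Tr(\rho_A^t)$-type quantities, so it is natural to prove the general version and obtain Observation~\ref{obs:purity} as the $t=2$ case.
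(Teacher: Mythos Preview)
Your proposal is correct and matches the paper's approach exactly: the paper simply states that the observation follows from the fact that $\rho_A$ and $\rho_B$ share the same eigenvalue spectrum (via the Schmidt decomposition), which is precisely what you spell out. Your remark about the generalization to all $t\ge 2$ is also in line with how the paper uses this fact in Appendix~\ref{app:purity_upper}.
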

\noindent This observation can be easily proved from the fact that the density matrices $\rho_A$ and $\rho_B$ have identical eigenvalue spectra.
Therefore, measuring the purity of the larger state $\rho_A$ is equivalent to measuring the purity of the small state $\rho_B$, which can be achieved with constant sample complexity.

\begin{theorem}
Given an $n$-qubit mixed state $\rho_A$, which is the reduced density matrix of a pure state $\Psi_{AB}$ with $n+\mathcal{O}(1)$ qubits, there exists a protocol based on single-copy measurements on $\Psi_{AB}$ that uses $\mathcal{O}(1)$ copies of $\Psi_{AB}$ to estimate $\Tr(\rho_A^2)$ within constant additive error.
\end{theorem}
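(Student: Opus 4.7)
The plan is to leverage Observation~\ref{obs:purity} to reduce purity estimation of the $n$-qubit state $\rho_A$ to purity estimation of the constant-qubit state $\rho_B$, and then apply any standard constant-dimensional tomography bound.

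First, I would note that a single-copy measurement on $\Psi_{AB}$ includes, as a special case, any measurement acting only on system $B$ followed by tracing out system $A$. Concretely, from one copy of $\Psi_{AB}$, simply discarding the $A$ register yields an i.i.d.\ sample of $\rho_B=\Tr_A(\Psi_{AB})$. Since $|B|=\mathcal{O}(1)$, the Hilbert space dimension $d_B=2^{|B|}$ is a constant. By Observation~\ref{obs:purity}, estimating $\Tr(\rho_A^2)$ is equivalent to estimating $\Tr(\rho_B^2)$, so it suffices to design a purity estimator for a constant-dimensional mixed state using single-copy measurements.

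Second, I would invoke any standard constant-dimensional estimator for $\Tr(\rho_B^2)$. For concreteness, full state tomography on $\rho_B$ via random Pauli (or single-qubit Clifford shadow) measurements recovers an estimator $\hat{\rho}_B$ satisfying $\|\hat{\rho}_B-\rho_B\|_1\le\epsilon$ with probability $1-\delta$ using a number of copies that depends only on $d_B$, $\epsilon$, and $\log(1/\delta)$, and hence is $\mathcal{O}(1)$ for constant target error. Setting $\hat{p}=\Tr(\hat{\rho}_B^2)$ and using the elementary bound $|\Tr(\hat{\rho}_B^2)-\Tr(\rho_B^2)|\le 2\|\hat{\rho}_B-\rho_B\|_1$ converts the trace-distance guarantee into a constant additive error on $\Tr(\rho_B^2)$, and thus on $\Tr(\rho_A^2)$ by Observation~\ref{obs:purity}.

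Finally, I would verify that the protocol genuinely uses only single-copy operations on $\Psi_{AB}$: each experiment consists of (i) receiving one copy of $\Psi_{AB}$, (ii) performing a (possibly randomized) single-copy measurement on the $\mathcal{O}(1)$ qubits of system $B$, and (iii) discarding system $A$. No joint operation across distinct copies of $\Psi_{AB}$, and no operation involving the $n$ qubits of $A$, is ever required. I do not anticipate a substantive obstacle here; the argument is essentially a one-line reduction via the identity $\Tr(\rho_A^2)=\Tr(\rho_B^2)$ followed by a black-box invocation of constant-dimensional tomography. The only mildly subtle point is emphasizing that the estimator inherits its statistical guarantee entirely through this identity, so that the $n$-qubit system $A$ plays no role in either the measurement design or the sample-complexity analysis.
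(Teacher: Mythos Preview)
Your proposal is correct and follows essentially the same approach as the paper: reduce via Observation~\ref{obs:purity} to the purity of the constant-dimensional state $\rho_B$, perform single-copy tomography on $\rho_B$, and convert the trace-distance guarantee into an additive error bound on $\Tr(\rho_B^2)=\Tr(\rho_A^2)$. The paper's proof in Appendix~\ref{app:purity_upper} differs only in that it expands $(\rho+c)^t-\rho^t$ to handle general moments $\Tr(\rho_A^t)$, whereas you give the direct H\"older-type bound for $t=2$.
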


\noindent Several single-copy purity estimation protocols have been developed. Among these, the randomized measurement protocol~\cite{elben2019statistical,elben2023randomized} reaches the optimal sample complexity concerning $n$ dependence~\cite{chen2022memory,gong2024sample}.
Specifically, the randomized measurement protocol can estimate the purity of an unknown $d$-dimension state to $\epsilon$ accuracy with sample complexity scaling as $\mathcal{O}\left(\max\left\{\frac{1}{\epsilon^2},\frac{\sqrt{d}}{\epsilon}\right\}\right)$.
Therefore, applying this protocol to estimate the purity of $\rho_B$ consisting of a constant number of qubits to constant additive error requires only $\mathcal{O}(1)$ sample complexity.
A more straightforward method is based on quantum tomography~\cite{KUENG2017tomo,chen2023adaptivity}.
In this approach, one first reconstructs an estimator for the density matrix of system $B$ and then computes the purity from this estimator.
In Appendix~\ref{app:purity_upper}, we prove the constant sample complexity of tomography-based purity estimation protocol and show its generalization to the estimation of higher-order moments $\Tr(\rho_A^t)$.

\subsection{Quantum Virtual Cooling}
The estimation of $\Tr(O\rho_A^t)$ with an integer $t\ge2$ and some observable $O$ is essential to tasks including quantum virtual cooling~\cite{cotler2019cooling} and quantum error mitigation~\cite{hugginsVirtualDistillationQuantum2021,koczorExponentialErrorSuppression2021}.
For a thermal state $\rho_A\propto e^{-\beta H}$ where $H$ is the system's Hamiltonian, estimating $\Tr(O\rho_A^t)$ allows one to extract information from the system at a lower temperature.
In the case of a pure state subjected to incoherent errors, estimating $\Tr(O\rho_A^t)$ assists in suppressing the noise rate, thereby improving the accuracy and reliability of measured properties.

When the observable $O$ has a bounded spectral norm, $\norm{O}_\infty\le\mathrm{1}$, accurately estimating the linear function $\Tr(O\rho_A)$ only requires constant sample complexity. This is because the variance of measuring $O$, given by $\Tr(O^2\rho_A)-\Tr(O\rho_A)^2\le\norm{O}_\infty^2$, is upper bounded by some constant. 
However, estimating $\Tr(O\rho_A^t)$ is more challenging due to its non-linearity. 
While numerous protocols leveraging quantum memory have been proposed \cite{cotler2019cooling,hugginsVirtualDistillationQuantum2021,koczorExponentialErrorSuppression2021,Zhou2024hybrid}, single-copy schemes are more suitable for near-term devices. 
Protocols such as classical shadow \cite{Huang2020predicting, hu2022logical, seif2023shadow_dist} have been developed in order to address this difficulty. 
However, an exponential sample complexity lower bound exists for single-copy protocols~\cite{Huang_2022_quantum}, underscoring the inherent difficulty of the task.

Here, we show that purification assists in the estimation of $\Tr(O\rho_A^t)$ based on the following observation:
\begin{observation}\label{obs:cooling}
Given a pure state $\Psi_{AB}=\ketbra{\Psi_{AB}}{\Psi_{AB}}$ and its reduced density matrices $\rho_A=\Tr_B(\Psi_{AB})$ and $\rho_B=\Tr_A(\Psi_{AB})$, we have $\rho_A^t=\Tr_B[\Psi_{AB}(\mathbb{I}_A\otimes\rho_B^{t-1})]$, where $\mathbb{I}$ denotes the identity matrix.
\end{observation}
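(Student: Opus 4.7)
The plan is to establish the identity by expanding both sides in the Schmidt basis of $\Psi_{AB}$ that was already introduced in Eq.~(\ref{eq:schimdt}). The Schmidt decomposition simultaneously diagonalises $\rho_A$ and $\rho_B$ with a common spectrum $\{\lambda_j\}$, so the claimed operator equation should collapse to a single algebraic identity in the $\lambda_j$. In particular, once everything is written in terms of the paired vectors $\ket{\psi_A^j}$ and $\ket{\psi_B^j}$, the partial trace over $B$ will simply pick out a diagonal contribution via orthonormality.

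Concretely, I would first record that $\rho_A^t = \sum_j \lambda_j^t \ket{\psi_A^j}\bra{\psi_A^j}$ and $\rho_B^{t-1} = \sum_j \lambda_j^{t-1} \ket{\psi_B^j}\bra{\psi_B^j}$, both immediate from Eq.~(\ref{eq:schimdt}). Next I would expand $\Psi_{AB} = \sum_{j,k} \sqrt{\lambda_j \lambda_k}\, \ket{\psi_A^j}\bra{\psi_A^k} \otimes \ket{\psi_B^j}\bra{\psi_B^k}$, multiply on the right by $\mathbb{I}_A \otimes \rho_B^{t-1}$ (which pulls out a factor $\lambda_k^{t-1}$ on the $k$-th term of the $B$ factor), and then take the partial trace over $B$. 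The orthonormality relation $\langle \psi_B^k | \psi_B^j \rangle = \delta_{jk}$ collapses the double sum to a single sum, yielding $\sum_j \lambda_j \cdot \lambda_j^{t-1} \ket{\psi_A^j}\bra{\psi_A^j} = \rho_A^t$, which is exactly the left-hand side.

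I do not anticipate a genuine obstacle: the whole content of the statement is the pairing of Schmidt vectors on the two sides and the fact that $\rho_A$ and $\rho_B$ share their non-zero spectrum, so the proof is a single-line verification after decomposition. The only subtlety worth flagging is the placement of $\rho_B^{t-1}$ on the right of $\Psi_{AB}$; by cyclicity of the partial trace over $B$ one could equivalently put it on the left or split it symmetrically, but the asymmetric form in the statement is the convenient one for the downstream protocol, since it factors as $\Psi_{AB}$ times an observable supported purely on $B$, which is exactly what will be measured when estimating $\Tr(O\rho_A^t)$. As a sanity check, the $t=2$ case reduces to the well-known identity $\rho_A^2 = \Tr_B[\Psi_{AB}(\mathbb{I}_A \otimes \rho_B)]$ that underlies standard virtual-distillation arguments.
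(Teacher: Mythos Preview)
Your proposal is correct and follows essentially the same route as the paper: expand $\Psi_{AB}$, $\rho_A$, and $\rho_B$ in the Schmidt basis of Eq.~\eqref{eq:schimdt}, multiply by $\mathbb{I}_A\otimes\rho_B^{t-1}$, and let the orthonormality $\braket{\psi_B^k}{\psi_B^j}=\delta_{jk}$ collapse the partial trace to $\sum_j \lambda_j^t\ketbra{\psi_A^j}{\psi_A^j}=\rho_A^t$. The only cosmetic difference is that the paper keeps an explicit third summation index $l$ from $\rho_B^{t-1}=\sum_l\lambda_l^{t-1}\ketbra{\psi_B^l}{\psi_B^l}$ and imposes two Kronecker deltas, whereas you absorb the $\rho_B^{t-1}$ factor directly to get $\lambda_k^{t-1}$ before tracing; the computations are equivalent.
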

\noindent This observation can be easily proved using the Schmidt decomposition. 
Leveraging this observation, we can design a protocol to estimate $\Tr(O\rho_A^t)$, relying only on single-copy operations on $\Psi_{AB}$.
The protocol first performs quantum tomography on system $B$ to reconstruct the classical estimator $\hat{\rho}_B$.
Next, we treat $O\otimes\hat{\rho}_B^{t-1}$ as the new observable and estimate it on state $\Psi_{AB}$, as $\Tr(O\rho_A^t)=\Tr\left[\Psi_{AB}\left(O\otimes\rho_B^{t-1}\right)\right]$.
Since system $B$ only contains a constant number of qubits and $\norm{O}_\infty\le\mathrm{1}$, both the construction of $\hat{\rho}_B$ and observable measurement require constant sample complexity, which results in the following theorem. 
The detailed proof is presented in Appendix~\ref{app:cooling_upper}. 
\begin{theorem}\label{thm:cooling}
Given an $n$-qubit mixed state $\rho_A$, which is the reduced density matrix of a pure state $\Psi_{AB}$ with $n+\mathcal{O}(1)$ qubits, there exists a protocol based on single-copy measurements on $\Psi_{AB}$ that uses $\mathcal{O}(1)$ copies of $\Psi_{AB}$ to estimate $\Tr(O\rho_A^t)$, with $\norm{O}_\infty=\mathcal{O}(1)$ and $t$ being an integer, within constant additive error.
\end{theorem}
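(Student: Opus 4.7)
My plan is to turn Observation~\ref{obs:cooling} into a two-stage protocol: first learn $\rho_B$ to constant accuracy on the constant-size register $B$, then use the resulting estimator $\hat\rho_B$ to build a classically-described bounded observable whose expectation on $\Psi_{AB}$ approximates $\Tr(O\rho_A^t)$. Rewriting the target via the observation,
\begin{equation*}
\Tr(O\rho_A^t) = \Tr\bigl[\Psi_{AB}\bigl(O\otimes \rho_B^{t-1}\bigr)\bigr],
\end{equation*}
reduces the task to estimating a linear functional of $\Psi_{AB}$ with respect to the observable $M := O\otimes \rho_B^{t-1}$, whose spectral norm is $\norm{O}_\infty\norm{\rho_B}_\infty^{t-1}=\mathcal{O}(1)$. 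The only obstruction is that $M$ is not classically known in advance.

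For Stage 1, I would perform standard full-state tomography on $\rho_B=\Tr_A(\Psi_{AB})$ by simply discarding system $A$ on each copy of $\Psi_{AB}$. Because $B$ contains only $\mathcal{O}(1)$ qubits, any textbook scheme produces $\hat\rho_B$ with $\norm{\hat\rho_B-\rho_B}_1\le\epsilon_1$ using $\mathcal{O}(1)$ copies at any fixed constant $\epsilon_1$; I would also project $\hat\rho_B$ onto the positive-semidefinite cone so that $\norm{\hat\rho_B}_\infty\le 1$. For Stage 2, I would treat $\hat M := O\otimes \hat\rho_B^{t-1}$ as a bounded observable with a fully classical description and estimate $\Tr(\Psi_{AB}\hat M)$ from fresh single-copy measurements of $\Psi_{AB}$ in any basis that diagonalizes $\hat M$. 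Since $\norm{\hat M}_\infty \le \norm{O}_\infty = \mathcal{O}(1)$, a Hoeffding bound yields $\epsilon_2$-accuracy from $\mathcal{O}(1/\epsilon_2^2)$ samples.

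It then suffices to control the systematic error from substituting $\hat\rho_B$ for $\rho_B$. Matrix-norm duality gives
\begin{equation*}
\bigl|\Tr[\Psi_{AB}(O\otimes(\rho_B^{t-1}-\hat\rho_B^{t-1}))]\bigr| \le \norm{O}_\infty \cdot \norm{\rho_B^{t-1}-\hat\rho_B^{t-1}}_1,
\end{equation*}
and the telescoping identity
\begin{equation*}
\rho_B^{t-1} - \hat\rho_B^{t-1} = \sum_{k=0}^{t-2} \rho_B^{k}\,(\rho_B-\hat\rho_B)\,\hat\rho_B^{t-2-k},
\end{equation*}
together with H\"older's inequality and $\norm{\rho_B}_\infty,\norm{\hat\rho_B}_\infty\le 1$, yields $\norm{\rho_B^{t-1}-\hat\rho_B^{t-1}}_1\le (t-1)\epsilon_1$. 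Choosing $\epsilon_1,\epsilon_2$ as constants proportional to the desired target accuracy and taking a union bound over the two stages delivers the claimed $\mathcal{O}(1)$ sample complexity.

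The step I expect to be the main obstacle is precisely this perturbation bound for $\rho_B^{t-1}$: it is the only place where the integer $t$ enters non-trivially, and one has to mix trace-norm and spectral-norm bounds carefully through H\"older so that the pre-factor is just $t-1$ rather than anything growing with the ambient dimension. Everything else (tomography on a constant-size register, Hoeffding on a bounded observable, and the exchange of the nonlinearity in $\rho_A$ for a linear expectation on $\Psi_{AB}$) becomes routine once Observation~\ref{obs:cooling} is invoked.
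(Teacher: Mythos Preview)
Your proposal is correct and follows essentially the same two-stage strategy as the paper's proof in Appendix~\ref{app:cooling_upper}: tomography on the constant-size register $B$ followed by single-copy estimation of the bounded observable $O\otimes\hat\rho_B^{t-1}$ on $\Psi_{AB}$, with the same triangle-inequality error split. The only cosmetic differences are that you control $\norm{\rho_B^{t-1}-\hat\rho_B^{t-1}}_1$ via the telescoping identity (yielding a factor $t-1$) whereas the paper uses a binomial expansion (yielding $2t$), and you invoke Hoeffding where the paper bounds the variance directly.
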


\subsection{Quantum Principal Component Analysis}
Given a quantum state $\rho_A$, quantum principal component analysis aims to output the value of $\Tr(O\psi_A^0)=\bra{\psi_A^0}O\ket{\psi_A^0}$, where $\ket{\psi_A^0}$ is the eigenstate of $\rho_A$ corresponding to its largest eigenvalue.
Since the principal component of $\rho_A$ contains key information of the target quantum system, quantum principal component analysis finds applications in quantum error mitigation~\cite{hugginsVirtualDistillationQuantum2021,koczorExponentialErrorSuppression2021,wei2023realizing}, Hamiltonian simulation~\cite{Kimmel2017hamiltonian}, quantum machine learning~\cite{Lloyd2014qpca}, and classical data analysis~\cite{gordon2022covariance}.
Notably, the value of $\Tr(O\psi_A^0)$ can be rewritten as $\mathrm{lim}_{t\to\infty}\frac{\Tr(O\rho_A^t)}{\Tr(\rho_A^t)}$ when the largest eigenvalue of $\rho_A$ is non-degenerate.
Thus, quantum principal component analysis can be viewed as the extreme case of quantum virtual cooling and may encounter similar challenges for single-copy schemes.

Inspired by quantum steering~\cite{uola2020steering}, purification simplifies this task based on the following observation:
\begin{observation}\label{obs:pca}
Given a pure state $\Psi_{AB}=\ketbra{\Psi_{AB}}{\Psi_{AB}}$, its two reduced density matrices $\rho_A=\Tr_B(\Psi_{AB})$ and $\rho_B=\Tr_A(\Psi_{AB})$, principal components $\psi_A^0$ and $\psi_B^0$ of $\rho_A$ and $\rho_B$, respectively, and the corresponding eigenvalues $\lambda_A^0=\lambda_B^0$, we have $\psi_A^0=\frac{1}{\lambda_B^0}\Tr_B[\Psi_{AB}(\mathbb{I}_A\otimes\psi_B^0)]$.
\end{observation}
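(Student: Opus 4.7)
The plan is to prove this observation by a direct computation in the Schmidt basis, exactly in the spirit of the other two observations in this section. Writing $\Psi_{AB}$ in the Schmidt form of Eq.~\eqref{eq:schimdt}, the eigenvectors of $\rho_A=\Tr_B(\Psi_{AB})$ are $\{\ket{\psi_A^j}\}$ with eigenvalues $\{\lambda_j\}$, and the eigenvectors of $\rho_B$ are $\{\ket{\psi_B^j}\}$ with the same spectrum. Labelling the index so that $j=0$ corresponds to the largest eigenvalue $\lambda_0 = \lambda_A^0 = \lambda_B^0$ automatically pairs the principal components $\ket{\psi_A^0}$ and $\ket{\psi_B^0}$ on the two sides of the bipartition.

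Using the Schmidt form, I would first expand
\begin{equation}
\Psi_{AB} = \sum_{j,k} \sqrt{\lambda_j \lambda_k}\, \ket{\psi_A^j}\bra{\psi_A^k} \otimes \ket{\psi_B^j}\bra{\psi_B^k},
\end{equation}
then multiply by $\mathbb{I}_A \otimes \psi_B^0$ on the right and use $\braket{\psi_B^k}{\psi_B^0}=\delta_{k,0}$ to collapse the $k$-sum:
\begin{equation}
\Psi_{AB}(\mathbb{I}_A \otimes \psi_B^0) = \sum_j \sqrt{\lambda_j \lambda_0}\, \ket{\psi_A^j}\bra{\psi_A^0} \otimes \ket{\psi_B^j}\bra{\psi_B^0}.
\end{equation}
Taking the partial trace over $B$ and using $\Tr[\ket{\psi_B^j}\bra{\psi_B^0}]=\braket{\psi_B^0}{\psi_B^j}=\delta_{j,0}$ kills every term except $j=0$, leaving $\lambda_0\,\ket{\psi_A^0}\bra{\psi_A^0} = \lambda_B^0\, \psi_A^0$. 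Dividing by the nonzero $\lambda_B^0$ yields the claimed identity.

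The only subtle point, and the place I would be most careful, is the implicit non-degeneracy assumption hiding in the phrase \emph{the} principal component. If $\lambda_0$ is non-degenerate then $\ket{\psi_A^0}$ and $\ket{\psi_B^0}$ are each determined up to a global phase, and a compatible pair of phases is automatically singled out by any Schmidt decomposition, so the identity holds as written. If instead $\lambda_0$ has multiplicity $m>1$, then $\psi_A^0$ and $\psi_B^0$ are not uniquely defined, and the identity should be read as: for \emph{every} choice of rank-$m$ principal projector $\Pi_B^0$ of $\rho_B$, the operator $\frac{1}{\lambda_B^0}\Tr_B[\Psi_{AB}(\mathbb{I}_A\otimes\Pi_B^0)]$ is the corresponding rank-$m$ principal projector $\Pi_A^0$ of $\rho_A$ selected by Uhlmann's theorem. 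The same Schmidt-basis calculation proves this stronger statement with no extra work, which is what is actually needed for the purification-assisted quantum PCA protocol that follows.
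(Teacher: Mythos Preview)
Your proof is correct and follows essentially the same route as the paper: expand $\Psi_{AB}$ in the Schmidt basis, use the orthogonality $\braket{\psi_B^k}{\psi_B^0}=\delta_{k,0}$ to collapse the sums, and divide by $\lambda_0$. Your added remark on the degenerate case is a welcome clarification that the paper does not spell out (it simply assumes $\lambda_0>\lambda_j$ in its appendix proof and imposes a $\Theta(1)$ spectral gap in Theorem~\ref{thm:pca}).
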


\noindent Essentially, this observation highlights the utility of purification in filtering out eigenstates of the target mixed state. Similar to quantum virtual cooling, Observation~\ref{obs:pca} implies that we can first reconstruct estimators $\hat{\psi}_B^0$ and $\hat{\lambda}_B^0$ using tomography and then estimate the observable $\frac{1}{\hat{\lambda}^0_B}O\otimes\hat{\psi}_B^0$ on $\Psi_{AB}$ to get the value of $\Tr(O\psi_A^0)$, thereby achieving a constant sample complexity. A caveat is that when two eigenvalues are close, accurately determining the corresponding eigenstates becomes difficult. To ensure $ \hat{\psi}_B^0 $ can be constructed accurately with a low sample complexity, we require the difference between the largest and second-largest eigenvalues of $\rho_B$ to be $\Theta(1)$. Under this assumption, we can design an efficient purification-assisted quantum principal component analysis protocol with constant sample complexity, as detailed in Appendix~\ref{app:qpca_upper}.

\begin{theorem}\label{thm:pca}
Assume that the difference between the largest and the second largest eigenvalues of the target $n$-qubit mixed state $\rho_A$ is $\Theta(1)$.
Given that $\rho_A$ is the reduced matrix of a pure state $\Psi_{AB}$ with $n+\mathcal{O}(1)$ qubits, there exists a protocol based on single-copy measurements on $\Psi_{AB}$, which uses $\mathcal{O}(1)$ copies of $\Psi_{AB}$ to estimate $\Tr(O\psi_A^0)$ within constant additive error, with $\norm{O}_\infty\le\mathcal{O}(1)$ and $\ket{\psi_A^0}$ being the eigenstate of $\rho_A$ with the largest eigenvalue.
\end{theorem}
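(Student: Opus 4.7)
The plan is to turn Observation~\ref{obs:pca} into an algorithm that reduces estimation of $\Tr(O\psi_A^0)$ on the large system $A$ to two subroutines that each cost $\mathcal{O}(1)$ samples: (i) constant-size tomography on system $B$ to learn its top eigenpair, and (ii) a bounded-observable estimation on the global pure state $\Psi_{AB}$. This mirrors the template already used for Theorem~\ref{thm:cooling}, with the extra twist that we must extract an eigenvector of $\rho_B$ rather than a power of $\rho_B$.

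First I would perform full state tomography on system $B$ by discarding system $A$, producing an estimator $\hat{\rho}_B$ with $\norm{\hat{\rho}_B - \rho_B}_\infty \le \epsilon_1$ with high probability. Since $\abs{B} = \mathcal{O}(1)$, this is achievable with $\mathcal{O}(1/\epsilon_1^2)$ copies of $\Psi_{AB}$ via, e.g., classical shadows or direct maximum-likelihood tomography on single-copy measurement outcomes of $B$. Diagonalizing $\hat{\rho}_B$ yields the plug-in top eigenpair $(\hat{\lambda}_B^0, \hat{\psi}_B^0)$. The hypothesis that $\lambda_B^0 - \lambda_B^1 = \Theta(1)$ is precisely what lets me invoke the Davis--Kahan $\sin\Theta$ theorem to conclude that $\norm{\hat{\psi}_B^0 - \psi_B^0}_1 = \mathcal{O}(\epsilon_1)$ and $\abs{\hat{\lambda}_B^0 - \lambda_B^0} = \mathcal{O}(\epsilon_1)$.

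Second, by Observation~\ref{obs:pca} we have $\Tr(O\psi_A^0) = \frac{1}{\lambda_B^0}\Tr[\Psi_{AB}(O \otimes \psi_B^0)]$. I would therefore construct the plug-in observable $\hat{M} := \frac{1}{\hat{\lambda}_B^0}(O \otimes \hat{\psi}_B^0)$ and, using a second independent batch of copies of $\Psi_{AB}$, directly estimate $\Tr(\Psi_{AB}\hat{M})$ via single-copy measurement. Because $\norm{O}_\infty = \mathcal{O}(1)$, $\hat{\psi}_B^0$ is a rank-one projector, and the spectral gap ensures $\hat{\lambda}_B^0$ is bounded below by a positive constant with high probability, the spectral norm $\norm{\hat{M}}_\infty$ is $\mathcal{O}(1)$, so Hoeffding's inequality delivers additive error $\epsilon_2$ from $\mathcal{O}(1/\epsilon_2^2)$ samples.

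Finally I would combine the two error budgets. Writing $M_\star = \frac{1}{\lambda_B^0}(O \otimes \psi_B^0)$, a Hölder-type bound together with the Davis--Kahan estimate yields $\abs{\Tr[\Psi_{AB}(\hat{M} - M_\star)]} = \mathcal{O}(\epsilon_1)$, so fixing $\epsilon_1$ and $\epsilon_2$ to suitably small constants drives the total additive error below any prescribed threshold with only $\mathcal{O}(1)$ total samples of $\Psi_{AB}$. The main obstacle is the Davis--Kahan step: without the $\Theta(1)$ spectral gap, a perturbation of size $\epsilon_1$ in $\hat{\rho}_B$ could leave $\hat{\psi}_B^0$ macroscopically far from $\psi_B^0$, so the non-degeneracy hypothesis is essential and enters the sample complexity quantitatively through a $1/\Delta$ factor hidden inside the big-$\mathcal{O}$.
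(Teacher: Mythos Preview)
Your proposal is correct and follows the same two-stage protocol as the paper's proof in Appendix~\ref{app:qpca_upper}: tomography on the constant-size system $B$ to extract $(\hat{\lambda}_B^0,\hat{\psi}_B^0)$, then single-copy estimation of the plug-in observable $\frac{1}{\hat{\lambda}_B^0}\,O\otimes\hat{\psi}_B^0$ on $\Psi_{AB}$, with the total error split into exactly the two terms you identify. The only difference is that you invoke Davis--Kahan for the eigenvector perturbation (yielding $\norm{\hat{\psi}_B^0-\psi_B^0}_1=\mathcal{O}(\epsilon_1/\Delta)$), whereas the paper proves a custom Lemma~\ref{lemma:pca} with the weaker square-root bound $\norm{\psi_B^0-\hat{\psi}_B^0}_{\mathrm{tr}}\le 2\sqrt{2\epsilon_1/\Delta}$; both suffice for the $\mathcal{O}(1)$-sample conclusion, and your choice actually gives tighter hidden constants.
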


\subsection{Quantum Fisher Information Estimation}\label{sec:q_fish}
Quantum Fisher information lies at the center of quantum metrology~\cite{braunstein1994fisher,Giovannetti2011metrology} and plays an important role in entanglement detection~\cite{hyllus2012fisher,toth2012fisher,li2013fisher,toth2014fisher,ren2021metrology}.
The challenge in estimating the value of quantum Fisher information stems from its definition.
Given an observable $O$ and a quantum state with the spectral decomposition $\rho_A=\sum_j\lambda_A^j\ketbra{\psi_A^j}{\psi_A^j}$, the Fisher information is defined as 
\begin{equation}\label{eq:QFI}
F_O(\rho_A)=2\sum_{j,k}\frac{(\lambda_A^j-\lambda_A^k)^2}{\lambda_A^j+\lambda_A^k}\abs{\bra{\psi_A^j}O\ket{\psi_A^k}}^2.
\end{equation}
Therefore, to estimate quantum Fisher information, one needs to estimate $\abs{\bra{\psi_A^j}O\ket{\psi_A^k}}^2$ for $j\neq k$, which is not straightforward without the spectral information about the target state $\rho_A$.
Conventional methods normally rely on statistical errors and state moments to compute the value of quantum Fisher information~\cite{yu2021fisher,rath2021fisher}.
Instead of calculating these values, we show that purification helps to estimate quantum Fisher information accurately.

The following observation indicates that the purification can effectively filter out the outer product of two different eigenstates of $\rho_A$.
\begin{observation}\label{obs:fisher}
Given a pure state $\Psi_{AB}=\ketbra{\Psi_{AB}}{\Psi_{AB}}$, its two reduced density matrices $\rho_A=\Tr_B(\Psi_{AB})$ and $\rho_B=\Tr_A(\Psi_{AB})$, two eigenvalues $\lambda_A^j=\lambda_B^j$ and $\lambda_A^k=\lambda_B^k$, and corresponding eigenstates $\ket{\psi_A^j}$, $\ket{\psi_A^k}$, $\ket{\psi_B^j}$, and $\ket{\psi_B^k}$, we have $\frac{1}{\sqrt{\lambda_B^j\lambda_B^k}}\Tr_B\left[\Psi_{AB}\left(\mathbb{I}_A\otimes\ketbra{\psi_B^j}{\psi_B^k}\right)\right]=\ketbra{\psi_A^k}{\psi_A^j}$.
\end{observation}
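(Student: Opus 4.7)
The plan is to reduce everything to the Schmidt decomposition of $\ket{\Psi_{AB}}$ and then perform the partial trace term by term. Since the eigenstates $\{\ket{\psi_A^l}\}$ and $\{\ket{\psi_B^l}\}$ appearing in the statement are precisely the Schmidt vectors used in Eq.~\eqref{eq:schimdt}, I can write
\begin{equation}
\ket{\Psi_{AB}}=\sum_l\sqrt{\lambda_l}\,\ket{\psi_A^l}\otimes\ket{\psi_B^l},\qquad
\Psi_{AB}=\sum_{l,m}\sqrt{\lambda_l\lambda_m}\,\ketbra{\psi_A^l}{\psi_A^m}\otimes\ketbra{\psi_B^l}{\psi_B^m},
\end{equation}
where $\lambda_l=\lambda_A^l=\lambda_B^l$.

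Next, I would insert the operator $\mathbb{I}_A\otimes\ketbra{\psi_B^j}{\psi_B^k}$ on the right of $\Psi_{AB}$ and take the partial trace over $B$. The $B$-side factor produces $\Tr_B\bigl[\ketbra{\psi_B^l}{\psi_B^m}\cdot\ketbra{\psi_B^j}{\psi_B^k}\bigr]=\braket{\psi_B^k}{\psi_B^l}\braket{\psi_B^m}{\psi_B^j}=\delta_{kl}\delta_{mj}$ by the orthonormality of the Schmidt vectors, collapsing the double sum to the single term $\sqrt{\lambda_j\lambda_k}\,\ketbra{\psi_A^k}{\psi_A^j}$. Dividing by $\sqrt{\lambda_B^j\lambda_B^k}=\sqrt{\lambda_j\lambda_k}$ then gives exactly $\ketbra{\psi_A^k}{\psi_A^j}$, as claimed. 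This is a one-line computation once the Schmidt form is in place, so there is no real calculational obstacle.

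The only subtlety I expect, and the point I would be most careful about in writing this up, concerns the case of degenerate eigenvalues. When $\lambda_A^j$ (or $\lambda_A^k$) lies in a degenerate eigenspace, the individual Schmidt vectors are not uniquely defined: the Schmidt decomposition is only determined up to a unitary acting within each degenerate block on the $A$ side, accompanied by the complex-conjugate unitary on the $B$ side. The statement therefore must be read with the convention that $\ket{\psi_A^l}$ and $\ket{\psi_B^l}$ are \emph{paired} Schmidt vectors for a single chosen decomposition, rather than arbitrarily chosen eigenvectors of $\rho_A$ and $\rho_B$ independently. Once this pairing convention is made explicit, the derivation above applies verbatim, and the identity is immediate. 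I would include a brief sentence to this effect before the calculation to make the statement unambiguous.
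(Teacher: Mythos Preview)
Your proof is correct and follows exactly the approach the paper uses for the analogous Observations~\ref{obs:cooling} and~\ref{obs:pca} (see Appendices~\ref{app:cooling_upper} and~\ref{app:qpca_upper}): write out the Schmidt decomposition, expand $\Psi_{AB}$ as a double sum, and let orthonormality of the $\ket{\psi_B^l}$ collapse the partial trace to a single term. The paper does not spell out a separate proof for Observation~\ref{obs:fisher} beyond invoking the Schmidt decomposition, so your computation is precisely what is intended; your additional remark on the degenerate case and the need to pair $\ket{\psi_A^l}$ with $\ket{\psi_B^l}$ via a fixed Schmidt decomposition is a worthwhile clarification that the paper leaves implicit.
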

\noindent Using this observation, we can design a protocol to accurately estimate quantum Fisher information by using purification.
We show that $\abs{\bra{\psi_A^j}O\ket{\psi_A^k}}^2=\frac{1}{2\lambda_B^j\lambda_B^k}\Tr\left[\Psi_{AB}^{\otimes 2}\left(O^{\otimes 2}\otimes P_{B}^{jk}\right)\right]$, where $P_{B}^{jk}=\ketbra{\psi_B^j}{\psi_B^k}\otimes\ketbra{\psi_B^k}{\psi_B^j}+h.c.$. 
Thus, we can reformulate the quantum Fisher information as 
\begin{equation}\label{eq:new_fisher}
F_O(\rho_A)=\sum_{j,k}\frac{(\lambda_B^j-\lambda_B^k)^2}{\lambda_B^j\lambda_B^k(\lambda_B^j+\lambda_B^k)}\Tr\left[\Psi_{AB}^{\otimes 2}\left(O^{\otimes 2}\otimes P_{B}^{jk}\right)\right].
\end{equation}
To estimate quantum Fisher information, we first construct $P_B^{jk}$ using tomography on system $B$, followed by measuring the observable $O^{\otimes 2}\otimes P_{B}^{jk}$ on $\Psi_{AB}^{\otimes 2}$.
Although this remains a nonlinear function of $\Psi_{AB}$, $P_B^{jk}$ can be decomposed into a summation of tensor product forms,
\begin{equation}
  \hat{P}_B^{jk}=\frac{1}{2}\left(\hat{P}_{B+}^{jk}\otimes\hat{P}_{B+}^{jk}+\hat{P}_{B-}^{jk}\otimes\hat{P}_{B-}^{jk}\right)  
\end{equation}
with $\hat{P}_{B+}^{jk}=\ketbra{\hat{\psi}_B^j}{\hat{\psi}_B^k}+h.c.$ and $\hat{P}_{B-}^{jk}=i\ketbra{\hat{\psi}_B^j}{\hat{\psi}_B^k}+h.c.$, where $i$ is the unit imaginary number. 
Such tensor decomposition allows efficient estimation of the target expectation value via single-copy measurements and post-processing. 
In Appendix~\ref{app:fisher_upper}, we prove the following theorem:
\begin{theorem}
Assume that all nonzero eigenvalues and the absolute value of the difference between arbitrary two eigenvalues of the target $n$-qubit mixed state $\rho_A$ are $\Theta(1)$. 
Given that $\rho_A$ is the reduced density matrix of a pure state $\Psi_{AB}$ with $n+\mathcal{O}(1)$ qubits, there exists a protocol based on single-copy measurements on $\Psi_{AB}$ that uses $\mathcal{O}(1)$ copies of $\Psi_{AB}$ to estimate $F_O(\rho_A)$, with $\norm{O}_\infty\le\mathcal{O}(1)$, within constant additive error.
\end{theorem}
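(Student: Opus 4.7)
The plan is to follow the blueprint suggested by Observation~\ref{obs:fisher} and equation~\eqref{eq:new_fisher}: reduce the estimation of $F_O(\rho_A)$ to (i) learning the spectral data of the constant-size state $\rho_B$ well enough to build empirical surrogates $\hat{P}_B^{jk}$ and weights, and (ii) estimating the bilinear quantities $\Tr[\Psi_{AB}^{\otimes 2}(O^{\otimes 2}\otimes \hat{P}_B^{jk})]$ via independent single-copy measurements on $\Psi_{AB}$, exploiting the tensor decomposition already displayed in the text. Let $\epsilon$ denote the target additive error.

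First I would perform full state tomography on system $B$: discard the $A$ register in $\mathcal{O}(\delta^{-2})$ copies of $\Psi_{AB}$ and apply any sample-optimal routine to obtain $\hat{\rho}_B$ with $\|\hat{\rho}_B-\rho_B\|_1\le \delta$, which costs only $\mathcal{O}(\delta^{-2})$ since $|B|=\mathcal{O}(1)$. Diagonalizing $\hat{\rho}_B$ yields $\hat{\lambda}_B^j$ and $\ket{\hat{\psi}_B^j}$. Weyl's inequality gives $|\hat{\lambda}_B^j-\lambda_B^j|=\mathcal{O}(\delta)$, and the Davis--Kahan theorem, applicable because all pairwise gaps are $\Theta(1)$, gives $\|\ketbra{\hat{\psi}_B^j}{\hat{\psi}_B^j}-\ketbra{\psi_B^j}{\psi_B^j}\|_1=\mathcal{O}(\delta)$. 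Hence the surrogates $\hat{P}_{B\pm}^{jk}$ and the surrogate weights $\hat{w}_{jk}=(\hat{\lambda}_B^j-\hat{\lambda}_B^k)^2/[\hat{\lambda}_B^j\hat{\lambda}_B^k(\hat{\lambda}_B^j+\hat{\lambda}_B^k)]$ are all $\mathcal{O}(\delta)$-close to their true counterparts, where the $\Theta(1)$ lower bound on nonzero eigenvalues is essential to keep $\hat{w}_{jk}$ both bounded and Lipschitz.

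Next, for each of the $\mathcal{O}(1)$ index pairs $(j,k)$ and each sign $s\in\{+,-\}$, I would estimate $T_{jk}^s := \Tr[\Psi_{AB}(O\otimes \hat{P}_{Bs}^{jk})]$ by measuring the bounded Hermitian observable $O\otimes \hat{P}_{Bs}^{jk}$ on $\mathcal{O}(\epsilon^{-2})$ fresh copies of $\Psi_{AB}$; Hoeffding then gives $|\hat{T}_{jk}^s-T_{jk}^s|\le \epsilon$ with high probability. Using the identity already displayed in the text, the two-copy quantity factorizes as $\Tr[\Psi_{AB}^{\otimes 2}(O^{\otimes 2}\otimes \hat{P}_B^{jk})] = \tfrac12\sum_{s}(T_{jk}^s)^2$, so the final estimator is $\hat{F}_O = \sum_{j,k}\hat{w}_{jk}\cdot \tfrac12\sum_{s}(\hat{T}_{jk}^s)^2$, which uses only single-copy measurements on $\Psi_{AB}$ followed by classical post-processing.

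The main obstacle is the bias analysis that fixes $\delta$: one must chain together (a) the Davis--Kahan control $\|\hat{P}_{Bs}^{jk}-P_{Bs}^{jk}\|_\infty=\mathcal{O}(\delta)$, (b) the Lipschitz dependence of $w_{jk}$ on the eigenvalues (which requires the gap and nonzero-eigenvalue lower bounds to avoid blow-up), and (c) the stability of the bilinear functional $T_{jk}^s$ under perturbations of its observable, then propagate all three through the nonlinear combination $\hat{w}_{jk}(\hat{T}_{jk}^s)^2$ and sum over the $\mathcal{O}(1)$ index pairs. Since every ingredient is $\mathcal{O}(1)$-bounded and $\mathcal{O}(\delta+\epsilon)$-accurate and only $\mathcal{O}(1)$ terms are summed, setting $\delta=\Theta(\epsilon)$ yields total error $\mathcal{O}(\epsilon)$ with $\mathcal{O}(\epsilon^{-2})$ total samples, which is $\mathcal{O}(1)$ for constant $\epsilon$, matching the theorem statement.
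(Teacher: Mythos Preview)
Your proposal is correct and mirrors the paper's proof in Appendix~\ref{app:fisher_upper}: both perform tomography on $\rho_B$, extract eigendata to build the surrogate operators $\hat{P}_{B\pm}^{jk}$ and weights, then exploit the tensor factorization $\hat{P}_B^{jk}=\tfrac12\sum_s \hat{P}_{Bs}^{jk}\otimes\hat{P}_{Bs}^{jk}$ to reduce the two-copy expectation to products of single-copy measurements, whose squares are then combined with the estimated weights. The only cosmetic difference is that you invoke Weyl and Davis--Kahan for the spectral perturbation bounds, whereas the paper rederives these from scratch (Lemma~\ref{lemma:pca}, Corollary~\ref{cor:fisher}, Lemma~\ref{lemma:fisher2}) with a looser $\mathcal{O}(\delta^{1/2r})$ eigenvector dependence that is still sufficient since $r=\mathcal{O}(1)$.
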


\section{Exponential Overhead without Purification}\label{sec:lower_bound}
After showing the four important quantum state learning tasks can be accomplished with constant sample complexity through purification, we delve into the challenge of achieving these same tasks via single-copy measurements without purification. 
Our analysis reveals that these tasks necessitate an exponential sample complexity without purification qubits.
This underscores the significant advantage offered by a constant number of purification qubits in quantum learning tasks.

Our proof reduces the task of learning the properties of states to distinguishing between two state ensembles~\cite{Huang_2022_quantum, chen2022memory}. 
The core logic behind this reduction is that, if an algorithm can accurately learn a particular quantum property, it should also be capable of successfully distinguishing between ensembles that exhibit different values of that property. 
Consequently, the difficulty of distinguishing tasks implies the fundamental challenges inherent in quantum learning tasks.
A key contribution of our proof is demonstrating that certain learning tasks require exponential sample complexity, even when the targeted mixed states have a \emph{constant known rank}. 

We further strengthen our results concerning learning with bounded additional memory qubits and limited quantum capabilities. 
Specifically, we consider the scenario where the quantum memory interacts with the input state $\rho$ twice, as depicted in Fig.~\ref{fig:bounded_memory}.
Initially, a positive operator-valued measure (POVM) is applied to the main register (containing the target state) and the ancillary register (containing the memory qubits). 
Following this, the ancillary register retains its state while the main register is reset to the target state. 
Subsequently, a second POVM is applied to both registers.
The measurement outcomes of these two POVMs can be utilized to predict quantum properties and address state distinguishing tasks. 
Note that when $k=0$, this algorithm reduces to all protocols utilizing only single-copy operations and measurements.
While $k=n$, any algorithms using two-copy measurements and operations are included.
Therefore, the learning model with $k$ qubits of quantum memory depicts a smooth transition between algorithms using single-copy and two-copy measurements.
We prove these learning tasks require exponential sample complexity even with bounded quantum memory, highlighting the significant role of a constant number of additional purification qubits in these tasks. 
The details of formalization and proof are provided in Appendix \ref{app:proof_hardness}.

\begin{figure}
\centering
\includegraphics[width=0.3\linewidth]{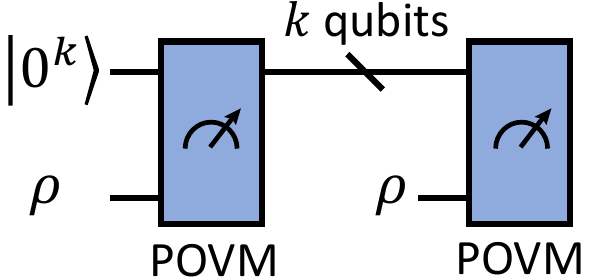}
\caption{The quantum state learning algorithm with bounded memory qubits and twice interactions with the target state $\rho$. When the number of additional memory qubits is zero, $k=0$, this algorithm reduces to any single-copy learning algorithm. When $k=n$, this algorithm includes all two-copy learning algorithms.}
\label{fig:bounded_memory}
\end{figure}

\subsection{Hardness of Purity Estimation}\label{sec:lower_bound_purity}
Our first result establishes the hardness of predicting the purity $\Tr(\rho^2)$ of a low-rank input state $\rho$ using single-copy measurements with bounded quantum memory.
We consider two state ensembles:
\begin{itemize}
\item Ensemble $\mathcal{S}_1$: $\rho=0.9U\ket{0}\bra{0}U^\dagger+0.1V\ket{0}\bra{0}V^\dagger$ is a rank-$2$ random state with Haar randomly chosen $U$ and $V$.
\item Ensemble $\mathcal{S}_2$: $\rho=\frac{1}{2}(U\ket{0}\bra{0}U^\dagger+V\ket{0}\bra{0}V^\dagger)$ is a rank-$2$ random state with Haar randomly chosen $U$ and $V$.
\end{itemize}
\noindent Suppose a random state $\rho$ is selected from either $\mathcal{S}_1$ or $\mathcal{S}_2$ with an equal probability $1/2$, and copies of the state are provided to an algorithm. 
The input state $\rho$ has a known constant rank and exhibits different values of purity with high probability depending on the chosen ensemble. 
Therefore, if the algorithm can estimate the purity within a constant additive error, it can distinguish between the two ensembles.
In Appendix~\ref{app:purity_lower}, we formalize the algorithm based on single-copy measurements via a learning tree representation \cite{Huang_2022_quantum, chen2022memory}.
We show that the total variation distance of the probability distributions at the leaf nodes of the learning tree is exponentially small for the two ensembles unless the depth of the learning tree is exponential with the qubit number. 
Consequently, we prove that distinguishing between $\mathcal{S}_1$ and $\mathcal{S}_2$ requires exponentially many single-copy measurements, which also serves as the sample complexity lower bound for purity estimation.

\begin{theorem}[Exponential overhead of purity estimation, informal, see Theorem~\ref{thm:purity_lower} and Theorem~\ref{thm:purity_lower_mem}] \label{thm:purity_informal}
Given an unknown $n$-qubit state $\rho$ and its rank (which can be constant), predicting its purity $\Tr(\rho^2)$ within constant additive error requires $\Omega(\min\{2^{n/2},2^{n-k}\})$ sample complexity for protocols using $0\leq k\leq n$ additional memory qubits and the circuit shown in Fig.~\ref{fig:bounded_memory}. 
\end{theorem}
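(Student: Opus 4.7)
The plan is to reduce purity estimation to a two-hypothesis distinguishing task and then apply the learning-tree lower-bound machinery used in \cite{Huang_2022_quantum, chen2022memory}. I take the two ensembles $\mathcal{S}_1$ and $\mathcal{S}_2$ introduced in the main text, both supported on rank-$2$ states, so that the rank is a known constant in either case. As a preliminary step I verify that, under Haar-random $U,V$, the purities concentrate around two different constants: using $\mathbb{E}|\langle 0|V^{\dagger}U|0\rangle|^{2}=1/d$, one gets $\Tr(\rho^{2})=0.82+\mathcal{O}(2^{-n})$ for $\rho\in\mathcal{S}_1$ and $\Tr(\rho^{2})=0.5+\mathcal{O}(2^{-n})$ for $\rho\in\mathcal{S}_2$ with high probability. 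This $\Theta(1)$ gap means any algorithm estimating purity to constant additive error yields a distinguisher between $\mathcal{S}_1$ and $\mathcal{S}_2$, so a sample-complexity lower bound on distinguishing implies the same lower bound on purity estimation.

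Next I formalize the bounded-memory algorithm of Fig.~\ref{fig:bounded_memory} as a learning tree. Each node stores the history of outcomes from the two POVMs per experiment (the first acting on the main register plus the $k$-qubit ancilla, the second acting on a freshly prepared copy of $\rho$ plus the retained ancilla). A leaf after $T$ experiments carries induced probabilities $p_{1}(\ell)$ and $p_{2}(\ell)$ under the two ensembles, obtained by averaging over $U,V$. By Le Cam's two-point inequality, success probability $\geq 2/3$ on distinguishing forces $\sum_{\ell}|p_1(\ell)-p_2(\ell)|=\Omega(1)$, so it suffices to upper-bound this total variation by a small function of $T$.

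The core technical step is to expand $p_{i}(\ell)$ using Haar integration. In the single-copy case ($k=0$), each path gives $p_{i}(\ell)=\mathbb{E}_{U,V}\prod_{t=1}^{T}\Tr(M_{s_{t}}\rho)$, which, by Weingarten/Schur-Weyl, reduces to a sum over pairings of the $T$ tensor copies. Because $\mathbb{E}[\rho]=\mathbb{I}/d$ in both ensembles, all non-collision terms cancel between $\mathcal{S}_1$ and $\mathcal{S}_2$, and the residual collision contributions are controlled by $\mathcal{O}(T^{2}/d)$. Summing over leaves and using a chaining argument along root-to-leaf paths yields $\sum_{\ell}|p_1(\ell)-p_2(\ell)|=\mathcal{O}(T/\sqrt{d})$, so distinguishing demands $T=\Omega(2^{n/2})$. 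I expect this Haar-moment computation, especially organizing the cancellation of the matched non-collision contributions between the two ensembles, to be the main technical hurdle.

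To extend to the $k$-qubit memory model I repeat the analysis treating each experiment as a POVM on $\rho\otimes\rho$ mediated by an intermediate $2^{n+k}$-dimensional register. The two intra-experiment interactions with $\rho$ effectively double the pool of indices available for Haar contractions, so collisions can be detected up to dimension $2^{n-k}$; this is the step where the memory budget enters quantitatively. Combined with the universal single-copy bound from the previous paragraph (which persists because each second interaction is still a single-copy measurement on a fresh $\rho$), one obtains $T=\Omega(\min\{2^{n/2},2^{n-k}\})$. The main difficulty here is the bookkeeping: carefully tracking how the $k$ memory qubits propagate correlations between the two interactions, while maintaining the cancellation structure between $\mathcal{S}_1$ and $\mathcal{S}_2$. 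These technical calculations follow the template of the learning-tree analyses cited above adapted to our rank-$2$ Haar ensembles, and are deferred to Appendix~\ref{app:purity_lower}.
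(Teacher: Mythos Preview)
Your high-level strategy matches the paper's: reduce to distinguishing $\mathcal{S}_1$ from $\mathcal{S}_2$ and invoke Le~Cam on a learning tree. The technical route, however, differs. The paper does \emph{not} directly compare $p_1(\ell)$ and $p_2(\ell)$; it instead routes through the maximally mixed state $\rho_m$, bounding $\TV(\bE_{\mathcal{S}_i}p_\rho,p_{\rho_m})$ for each $i$ and then applying the triangle inequality. In the single-copy case this ``many-versus-one'' detour lets one dominate the mixture $\bE_{U,V}[(q|u\rangle\!\langle u|+(1-q)|v\rangle\!\langle v|)^{\otimes T}]$ term-by-term by the pure Haar moment, inheriting the known $(1+T/d)^{-T}$ bound; no cancellation between the $(0.9,0.1)$ and $(0.5,0.5)$ weightings is needed. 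Your direct ``non-collision terms cancel because $\bE[\rho]=\mathbb{I}/d$'' idea could conceivably be made to work, but it requires controlling signed differences across all moment orders, which is messier than the paper's one-sided domination.

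The real gap is in your bounded-memory paragraph. The sentence ``the universal single-copy bound persists because each second interaction is still a single-copy measurement on a fresh $\rho$'' is incorrect: the second POVM acts jointly on a fresh $\rho$ \emph{and} the $k$-qubit post-measurement ancilla, which carries correlation with the first copy, so the two interactions within an experiment are not independent single-copy events, and the $\min$ is not obtained by combining two separate bounds. In the paper both terms of the $\min$ emerge from a single analysis: one writes the likelihood ratio $L_\rho(\ell)=p_\rho(\ell)/p_{\rho_m}(\ell)$, expands $\rho^{\otimes 2T}$ into a partition sum over $T_{00},T_{01},T_{10},T_{11}$, applies a splitting lemma (Lemma~\ref{lem:split}) to isolate the diagonal blocks, and then feeds the per-step second-moment bound $\sum_s \Tr(F_s\,\mathrm{SWAP})/\Tr(F_s)\le 2^{k+n}$ (Lemma~\ref{lem:swap_second}) into a martingale argument (Lemma~\ref{lem:martingale}). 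That SWAP inequality is precisely where the $2^{n-k}$ appears, while the residual $(1-4T^2/2^n)$ prefactor gives the $2^{n/2}$ term. Your sketch does not mention any of these ingredients, and without them the bounded-memory claim is not substantiated.
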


\noindent Combining the results in Sec.~\ref{sec:upper_bound}, we establish an exponential separation between purity estimation with and without purification. 
Furthermore, this separation between learning with a constant number of additional purification qubits and memory qubits highlights the essential role of purification in achieving an exponential learning advantage.
It is easy to generalize this result from purity to higher-order moment estimation, as the states from different ensembles also have different values of $\Tr(\rho_A^t)$.

\subsection{Hardness of Quantum Virtual Cooling and Principal Component Analysis}

Our next result establishes the hardness of quantum virtual cooling, which involves predicting the expectation value $\Tr(O\rho^2)$ of an observable $O$, as well as the difficulty of quantum principal component analysis, which involves predicting the expectation value $\Tr(O\ketbra{\psi_0})$ without purification.
Here, $\ketbra{\psi_0}$ represents the principal component, i.e., the eigenstate corresponding to the largest eigenvalue of the input state $\rho$. 
As discussed in Sec.~\ref{sec:upper_bound}, both tasks aim to estimate an expectation values of observables based on higher-order powers of the input state $\rho$. 
Based on the similar state distinguishing technique, we prove that these tasks require exponential sample complexity, even when the algorithm is assisted by fewer than $n$ memory qubits and the knowledge of the eigenvalues of the input state. 

\begin{theorem}[Exponential overhead of quantum virtual cooling and principal component analysis, informal, see Theorem~\ref{thm:vc_pca_lower} and Theorem~\ref{thm:vc_pca_lower_mem}]
Given an unknown $n$-qubit state $\rho$, its rank (which can be constant) or even its eigenvalues, and observable $O$ satisfying $\norm{O}_{\infty} = 1$, predicting $\Tr(O\rho^2)$ or $\Tr(O\ketbra{\psi_0})$ within constant additive error requires at least $\Omega(\min\{2^{n/2},2^{n-k}\})$ samples for protocols using $0\leq k\leq n$ additional memory qubits and the circuit shown in Fig.~\ref{fig:bounded_memory}. 
Here, $\ketbra{\psi_0}$ is the principal component of $\rho$.
\end{theorem}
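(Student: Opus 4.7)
The plan is to reduce each task to a two-ensemble distinguishing problem and apply the learning-tree argument of Sec.~\ref{sec:lower_bound_purity}, extended to the $k$-qubit memory algorithms of the form in Fig.~\ref{fig:bounded_memory}.

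For quantum virtual cooling the reduction is essentially free. Taking $O = \mathbb{I}$, which satisfies $\|\mathbb{I}\|_\infty = 1$, gives $\Tr(O\rho^2) = \Tr(\rho^2)$, so any algorithm that solves quantum virtual cooling within constant additive error for every admissible observable in particular estimates the purity. Feeding this into Theorem~\ref{thm:purity_informal} on the rank-two ensembles $\mathcal{S}_1,\mathcal{S}_2$ of Sec.~\ref{sec:lower_bound_purity} directly delivers the $\Omega(\min\{2^{n/2},2^{n-k}\})$ sample lower bound.

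For PCA this trivial reduction fails since $\Tr(\mathbb{I}|\psi_0\rangle\langle\psi_0|) = 1$ is uninformative, so I would construct a fresh pair of rank-two ensembles sharing a fixed and known constantly-separated spectrum $(\lambda_0,\lambda_1)$, paired with observables $O$ so that $\Tr(O|\psi_0\rangle\langle\psi_0|)$ is constantly separated across the ensembles while the marginal distribution of $\rho$ to bounded-memory measurements remains statistically close. A natural template is to draw a Haar-random unitary $U$ and set $\rho = \lambda_0 U|0\rangle\langle 0|U^\dagger + \lambda_1 U|1\rangle\langle 1|U^\dagger$, with two different mechanisms for coupling $O$ to the eigenbasis of $U$, one that concentrates the principal eigenvector of $O$ on the principal eigenvector of $\rho$ and one that concentrates it on the sub-principal eigenvector, taking care that $O$ by itself does not reveal the relevant column of $U$. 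The core technical step is then the same leaf-wise Haar-moment estimate used for purity: any admissible protocol is encoded as a learning tree whose leaves correspond to full measurement transcripts, and the total variation distance between the leaf distributions under the two ensembles is bounded by a weighted sum of second-moment integrals $\mathbb{E}_U[\Tr(M_\ell\,\rho_U^{\otimes 2})]$, where $M_\ell$ aggregates the POVM elements of the two-query, $k$-memory circuit of Fig.~\ref{fig:bounded_memory}. The $k$ memory qubits enable coherent correlations between the two queries of rank at most $2^k$, producing a $2^k$ factor that drives the crossover from the $2^{n/2}$ floor at $k=0$ down to $2^{n-k}$. A union bound over leaves combined with the TV-to-success-probability conversion then forces the tree depth, equal to the sample complexity, to be $\Omega(\min\{2^{n/2},2^{n-k}\})$.

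The main obstacle I anticipate is the PCA ensemble design together with the bounded-memory step. PCA singles out one specific eigenvector rather than a symmetric function of the spectrum, so any construction in which $O$ transparently identifies the principal direction collapses to a single-basis measurement; the design must hide the alignment between $O$ and $|\psi_0\rangle$ inside the Haar randomness in a way that also matches all low-order moments of $\rho$ between the two ensembles. The memory-bounded analysis, in turn, requires controlling cross-query Haar integrals without the naive $2^k$ inflation, which calls for a conditioning or data-processing argument along the lines of \cite{chen2022memory,chen2024optimal} that exploits the fact that the memory state entering the second POVM is constrained by the outcome of the first measurement rather than being freely chosen. I expect these two issues together to dictate the quantitative form of the $2^{n-k}$ term and to require the bulk of the technical work in Appendix~\ref{app:proof_hardness}.
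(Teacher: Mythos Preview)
Your reduction of virtual cooling to purity by setting $O=\mathbb{I}$ does not establish the full statement: the theorem asserts hardness \emph{even when the eigenvalues of $\rho$ are revealed}. With eigenvalues known, $\Tr(\rho^2)$ is known trivially, so the purity ensembles of Sec.~\ref{sec:lower_bound_purity} no longer separate. The paper sidesteps this by using a single pair of rank-three ensembles with \emph{identical} spectrum $(1/2,1/4,1/4)$, so that the eigenvalue information is useless, while a fixed Pauli $O=Z_1$ satisfies $\Tr(Z_1\rho^2)=\pm 1/8$ and $\Tr(Z_1\ketbra{\psi_0})=\pm 1$ across the two ensembles. Thus one construction handles both tasks simultaneously and survives disclosure of the spectrum.

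Your PCA sketch has a more structural gap. With $\rho=\lambda_0 U\ketbra{0}U^\dagger+\lambda_1 U\ketbra{1}U^\dagger$ and the same spectrum in both ensembles, a fixed $O$ gives $\Tr(O\ketbra{\psi_0})$ identically distributed under Haar in both cases, so the ensembles cannot separate; while an $O$ that ``couples to the eigenbasis of $U$'' necessarily carries information about $U$ and lets the algorithm cheat by inspecting $O$. You flag this tension as ``the main obstacle'' but do not resolve it. The paper's device is to break Haar symmetry on a single \emph{flag qubit}: the principal component lives in the $\ket{0}$ sector for $\mathcal{S}_1$ and the $\ket{1}$ sector for $\mathcal{S}_2$, with independent Haar unitaries $U_1,U_2,U_3$ acting only on the remaining $n-1$ or $n-2$ qubits. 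The observable $Z_1$ is then fixed, public, and independent of the randomness, while the flag bit is statistically hidden because each sector looks maximally mixed on the non-flag qubits. This asymmetric ``flag plus hidden randomness'' structure is the missing idea; once you have it, the learning-tree and martingale machinery you outline (Lemma~\ref{lem:split}, Corollary~\ref{cor:single_step_two_moment}) goes through essentially as in the purity case.
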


\noindent The proof is detailed in Appendix~\ref{app:cooling_pca_lower} and follows a similar argument to Theorem~\ref{thm:purity_informal}. 
We remark that a polynomial-time quantum principal component analysis algorithm has been proposed previously~\cite{Lloyd2014qpca}, which has a polynomial overhead on memory qubits and has a provable quantum advantage over any algorithm based on single-copy measurements~\cite{Huang_2022_quantum}. 
However, it was previously unknown whether this advantage persists when the rank of the target state is a constant and known. 
Our results confirm that the advantage remains and that a constant number of purification qubits suffices to achieve an exponential advantage over bounded quantum memories, further highlighting the necessity of purification for obtaining advantage.

\subsection{Hardness of Quantum Fisher Information Estimation}
Our fourth learning task considers estimating the quantum Fisher information of the input state $\rho$, a valuable component in many quantum information tasks. 
The quantum Fisher information is a highly nonlinear property of the input state, making its prediction challenging. 
Here, we prove that this task requires exponentially many single-copy measurements when performed without purification and with a bounded number of memory qubits, implying the advantage of having purification qubits in estimating quantum Fisher information.

\begin{theorem}[Exponential overhead of quantum Fisher information estimation, informal, see Theorem~\ref{thm:fisher_lower} and Theorem~\ref{thm:fisher_lower_mem}]
Given an unknown $n$-qubit state $\rho$, its rank (which can be constant) and observable $O$ satisfying $\norm{O}_{\infty} = 1$, predicting the quantum Fisher information $F_O(\rho)$ within constant additive error requires at least $\Omega(\min\{2^{n/2},2^{n-k}\})$ sample complexity for protocols using $0\leq k\leq n$ additional memory qubits and the circuit shown in Fig.~\ref{fig:bounded_memory}. 
\end{theorem}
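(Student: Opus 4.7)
The plan is to follow exactly the same reduction-to-state-discrimination template used for purity, virtual cooling, and PCA, but with two ensembles whose quantum Fisher information values are provably separated by a constant. Specifically, I would fix a constant-norm observable $O$ (for concreteness a single-qubit Pauli $Z$ acting on the first qubit) and consider two ensembles of rank-$2$ random mixed states analogous to those in Sec.~\ref{sec:lower_bound_purity}, say
\begin{equation}
\mathcal{S}_1:\ \rho=p_1 U\ketbra{0}U^\dagger+(1-p_1)V\ketbra{0}V^\dagger,\qquad \mathcal{S}_2:\ \rho=p_2 U\ketbra{0}U^\dagger+(1-p_2)V\ketbra{0}V^\dagger,
\end{equation}
with $U,V$ drawn independently from the Haar measure and $(p_1,p_2)$ chosen so that the eigenvalue-dependent prefactors in the Fisher information formula \eqref{eq:QFI} differ by an $\Omega(1)$ gap. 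The first step is to compute the expectation of $F_O(\rho)$ under each ensemble and show that they differ by a constant, then use Levy's concentration of measure on the unitary group to argue that $F_O(\rho)$ is within $o(1)$ of its mean for almost every realization. This turns Fisher information estimation to constant accuracy into a valid ensemble-distinguishing problem.

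Next I would plug these two ensembles into the learning-tree framework of Appendix~\ref{app:proof_hardness}. Each single-copy protocol with $k$ memory qubits is modeled as a tree whose edge probabilities depend on $\rho$, and the goal is to upper bound the total variation distance between the leaf distributions induced by $\mathcal{S}_1$ and $\mathcal{S}_2$. By a standard likelihood-ratio argument this TV distance is controlled by $\mathbb{E}_{\mathcal{S}_1,\mathcal{S}_2}\!\big[\Tr(M\,\rho^{\otimes N})\big]$ type quantities, where $N$ is the sample number and $M$ ranges over POVM elements. Because $\rho$ is rank $2$ with Haar-random eigenstates, these ensemble averages project onto the symmetric subspace of $U$'s first register and $V$'s first register separately, and collapse to inner-product powers that the existing purity-lower-bound machinery already handles. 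The result is an $\exp(-\Omega(n))$ bound on the per-node contribution, yielding $\Omega(2^{n/2})$ overall for $k=0$.

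For the memory-assisted setting I would adapt the two-query tree model of Fig.~\ref{fig:bounded_memory}: after the first POVM the algorithm retains $k$ memory qubits while the main register is refreshed to an independent copy of $\rho$. The key identity is that the joint POVM statistics can still be expressed as expectations of $\Tr(M_{\text{tot}}\,\rho^{\otimes 2})$ with $M_{\text{tot}}$ acting on $2n+k$ qubits and satisfying a normalization bounded by $2^k$. Carrying the same symmetric-subspace computation through with this additional $2^k$ factor gives the $\Omega(2^{n-k})$ term, and the final bound is the minimum of the two regimes, matching the theorem statement.

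The main obstacle I anticipate is the first step: unlike purity, which is a simple polynomial in the eigenvalues, the Fisher information also couples the eigenstates to $O$ through the matrix elements $|\bra{\psi^j}O\ket{\psi^k}|^2$. So I need to verify that the two contributions — the eigenvalue prefactor and the Haar-averaged matrix elements — do not conspire to make $\mathbb{E}_{\mathcal{S}_1}[F_O]=\mathbb{E}_{\mathcal{S}_2}[F_O]$. A clean way around this is to choose $p_1,p_2$ so that the prefactor $\tfrac{(p_i-(1-p_i))^2}{p_i(1-p_i)}$ differs by a constant while the Haar averages of $|\bra{\psi^0}O\ket{\psi^1}|^2$ and the variance-type terms are identical constants depending only on $\Tr(O^2)/d$; this decouples the two effects and leaves a clean $\Omega(1)$ Fisher-information gap, after which the learning-tree argument proceeds essentially verbatim from the purity proof.
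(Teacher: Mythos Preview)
Your reduction fails at the very first step: for the rank-$2$ Haar ensembles you wrote down, the quantum Fisher information does \emph{not} separate by $\Omega(1)$ between the two values of $p$. Using the identity
\[
F_O(\rho)=4\Tr(\rho O^2)-8\sum_{j,k}\frac{\lambda_j\lambda_k}{\lambda_j+\lambda_k}\,|\bra{\psi_j}O\ket{\psi_k}|^2,
\]
one sees that for $O=Z_1$ the first term equals $4$ irrespective of $p$, while in the second term the only $p$-dependent contribution is $16p(1-p)\,|\bra{\psi^0}O\ket{\psi^1}|^2$. For independent Haar vectors $\ket{\psi^0},\ket{\psi^1}$ one has $\mathbb E\,|\bra{\psi^0}O\ket{\psi^1}|^2=\Tr(O^2)/d^2=2^{-n}$, not the constant ``$\Tr(O^2)/d$'' you asserted. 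Hence $\mathbb E[F_O]=4-O(2^{-n})$ for \emph{both} ensembles, and concentration only pins each realization to this common value. There is no constant Fisher-information gap to exploit, so the distinguishing task you set up is vacuous. The null-space (zero-eigenvalue) contributions dominate $F_O$ and are essentially independent of the spectrum; the eigenvalue prefactor you hoped to leverage multiplies an exponentially small matrix element.

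The paper circumvents this by \emph{structuring} the eigenstates instead of taking them fully Haar-random: the first qubit is placed in a fixed basis ($\{\ket0,\ket1\}$ for $\mathcal S_1$, $\{\ket+,\ket-\}$ for $\mathcal S_2$), with Haar randomness only on the remaining $n-1$ qubits, and the observable is $X_1=X\otimes\mathbb I^{\otimes(n-1)}$. In $\mathcal S_2$ the operator $X_1$ commutes with $\rho$, so $F_{X_1}=0$ exactly; in $\mathcal S_1$ the eigenstates $\ket0\otimes U\ket0$ and $\ket1\otimes U\ket0$ have $|\bra{\psi_0}X_1\ket{\psi_1}|^2=1$, forcing $F_{X_1}\ge\Omega(1)$ with high probability (Corollary~\ref{col:fisher}). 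The residual $(n-1)$-qubit Haar randomness is then what drives the learning-tree indistinguishability argument, which proceeds as in the purity/PCA cases with the minor modification of Corollary~\ref{cor:single_step_two_moment} to handle the fixed first qubit. Your tree-level machinery for the $k$-memory bound is fine; it is the ensemble construction that needs to be replaced.
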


\noindent The proof also shares similarities as in Theorem~\ref{thm:purity_informal}, which is detailed in Appendix \ref{app:fisher_lower}.
Notably, this is the first establishment of the hardness in estimating quantum Fisher information.
The established lower bound indicates that learning quantum Fisher information remains challenging even when the quantum memory is relatively large, specifically for $k = n - \omega(\log n)$.
In Sec.~\ref{sec:discussion}, we show that the quantum Fisher information for low-rank quantum states can be estimated using the quantum principal component analysis algorithm with a polynomial number of additional memory qubits.

\section{Quantum Cryptography}\label{sec:crypto}
We have demonstrated that purification qubits provide advantages in certain learning tasks compared to memory qubits. 
Here, we leverage the power of purification in quantum cryptography by proposing two protocols. 
In the first protocol, a client uses mixed states to verify the server's computational capabilities. 
In the second protocol, a client requests the server to perform complex observable measurements while ensuring the expectation value remains private.

\subsection{Quantum Verification}
In the first application, a client aims to use a less powerful quantum computer to verify the capabilities of a more powerful server-side quantum computer, as depicted in Fig.~\ref{fig:verification}. 
The basic idea involves sending copies of a state to the serve, asking it to estimate specific properties, and verifying the results using purification. 
The hardness results from Sec.~\ref{sec:lower_bound} suggest that without sufficient computational power, the server cannot predict certain kinds of properties. 
Conversely, accurate estimations would demonstrate the server's quantum computation power. 
To verify the correctness of these estimations, the client uses the pure state and applies the estimation protocols discussed in Sec.~\ref{sec:upper_bound} to achieve constant sample complexity. 
Ultimately, the client can assess the server's computational power by comparing their estimations.

\begin{figure}[htbp]
\centering
\includegraphics[width=0.7\linewidth]{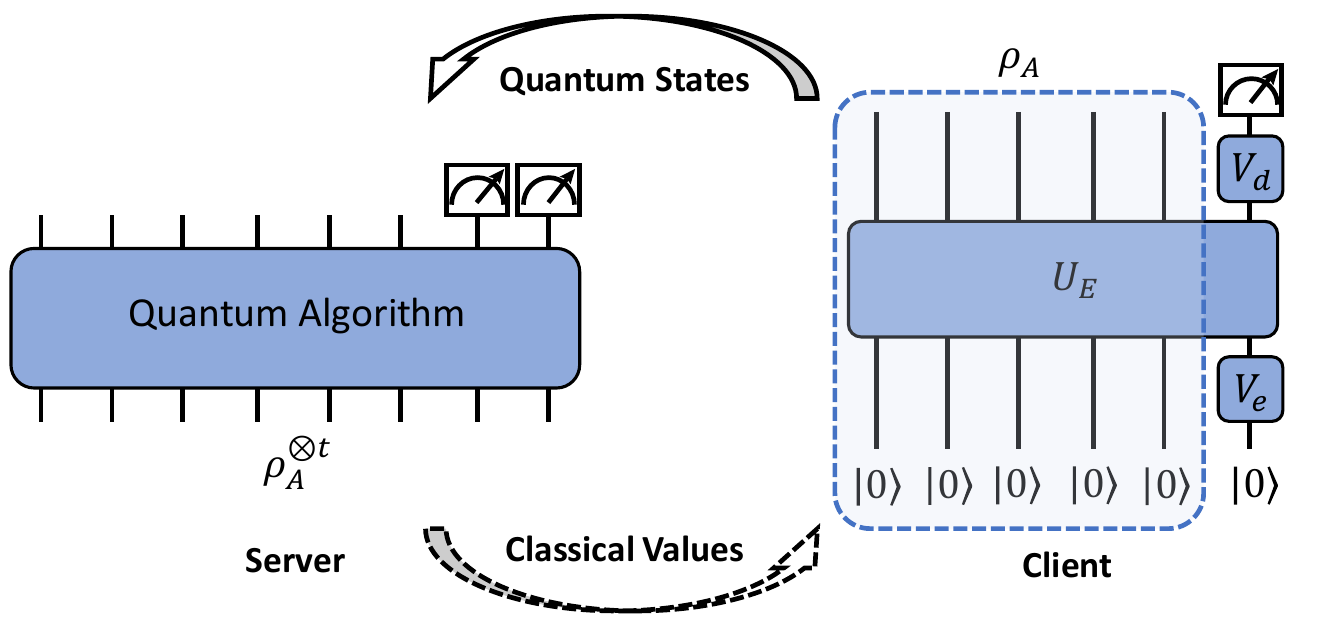}
\caption{Protocol for quantum verification and blind observable estimation. 
The client prepares the state $\Psi_{AB} = UV_e\ketbra{0}_{AB}V_e^{\dagger}U^{\dagger}$ and sends a constant number $t$ of copies of $\rho_A = \Tr_B(\Psi_{AB})$ to the server. 
The server is then asked to estimate certain properties of $\rho_A$ and report the classical values. 
Meanwhile, the client can measure system $A$ and the purification system $B$ on a rotated basis defined by $V_d$ to verify the server's computational power.
In addition, by post-selecting the data from the server using measurement results on system $B$, the client can achieve blind observable estimation.
}
\label{fig:verification}
\end{figure}

Specifically, assume the client possesses a quantum computer with systems $A$ and $B$, where $A$ is an $n$-qubit system and $B$ is an $\mathcal{O}(1)$-qubit system. 
The client seeks to verify whether the server can reliably manipulate a larger quantum computer with at least $2n$ qubits or has a powerful quantum memory capable of interacting with the input state more than twice. 
First, the client uses a small encoding unitary $V_e$ to encode certain information in system $B$ and then evolves the joint system using an entangling gate $U_E$. 
The client repeatedly prepares the state $\Psi_{AB} = U_EV_e\ketbra{0}_{AB}V_e^{\dagger}U_E^{\dagger}$ and sends the $n$-qubit state $\rho_A=\Tr_B(U_EV_e\ketbra{0}_{AB}V_e^{\dagger}U_E^{\dagger})$, which is now a mixed state, to the server for a constant times $t=\mathcal{O}(1)$.
The client then requests the server to estimate properties of $\rho_A$, such as purity or principal component information, and report the results to the client. 
The server can perform quantum operations on $\rho^{\otimes t}_A$ to extract the desired information. 
Meanwhile, since the client has access to the purification $\Psi_{AB}$, it can easily estimate the properties using its device with $n+\mathcal{O}(1)$ qubits and verify the server's estimation.

To illustrate the relationship between the quantum verification protocol and our proof of the sample complexity lower bound, we take the purity estimation as an example.
In Sec.~\ref{sec:lower_bound_purity}, we prove the difficulty of purity estimation by constructing two mixed-state ensembles, $\mathcal{S}_1$ and $\mathcal{S}_2$.
Note that, by setting $U_E=\ketbra{0}{0}\otimes U+\ketbra{1}{1}\otimes V$ and $V_e\ket{0}=\alpha\ket{0}+\sqrt{1-\alpha^2}\ket{1}$ and adjusting the value of $\alpha$, states in the two ensembles can be prepared in state $\rho_A$.
Thus, if one sets $\alpha=\sqrt{0.9}$ and $\alpha=\sqrt{0.5}$ with equal probabilities and asks the server to report the value of the purity of $\rho_A$, the function of this quantum verification protocol is guaranteed by our sample complexity lower bound analysis.
At the same time, as the client holds the state of the single-qubit state $\rho_B$, whose purity is equivalent with the purity of $\rho_A$, it is easy for the client to verify the result reported by the server.

According to the hardness results in Sec.~\ref{sec:lower_bound}, if the server can report accurate results after receiving only constant copies of $\rho_A$, it indicates that the server can effectively manipulate a large number of memory qubits or possesses a robust quantum memory capable of interacting with many input copies. 
Moreover, as the evolution $U$ and $V_e$ are not required to take on any special form, this verification protocol can be efficiently implemented via native Hamiltonian evolution or pseudo-random circuits~\cite{ji2018pseudo}, which is proved to be of $\mathrm{log}(n)$ depth~\cite{schuster2024random}. 
We summarize the quantum verification protocol using Box~\ref{box:verification_protocol}.

\subsection{Blind Observable Estimation}
Quantum computers are powerful not only for their speedups compared with classical computers, but also the privacy in certain computational tasks.
Consider the case where a client has limited quantum computation power and entrusts a server to perform some part of a quantum information processing task.
The blind quantum computation is proposed to ensure that the server reliably achieves the task and at the same time cannot obtain certain information of the task~\cite{broadbent2009bqc,stefanie2012BQC,Fitzsimons2017bqc}.
We will show that, with the assistance of purification, we can design a protocol to achieve blind observable estimation. 

\begin{mybox}[float,label={box:verification_protocol}]{Quantum verification}
    \textbf{Target: } 
    \begin{enumerate}[]
        \item
        The client uses a quantum computer with systems $A$ and $B$, where $A$ is an $n$-qubit system and $B$ is an $\mathcal{O}(1)$-qubit system, to verify whether the server can reliably manipulate a larger quantum computer with at least $2n$ qubits, or if it has a quantum memory capable of reliably interacting with the input state more than twice.
    \end{enumerate} 
    \textbf{Protocol:} 
    \begin{enumerate}[(1)]
		\item
		\emph{State preparation and transmission}: The client prepares the pure state $\Psi_{AB}$ by applying a small encoding unitary $V_e$ on system $B$ and evolve the joint system using an entangling evolution $U_E$. The client then sends a constant number of copies of the state $\rho_A$ to the server.
            \item 
		\emph{Property estimation}: The server is asked to learn properties of the state $\rho_A$ and report the results to the client. The client estimates the properties using the pure state $\Psi_{AB}$.
            \item 
		\emph{Verification}: The client verifies the server's capability by checking whether the server's estimation value is correct.
    \end{enumerate}
\end{mybox}

Consider a scenario where the client has a large analog quantum simulator adept at preparing quantum many-body states but struggles with performing measurements~\cite{Daley2022analog,tran2023measuring}. 
The client aims to predict the expectation value $\bra{\psi} O \ket{\psi}$ of a complicated observable $O$ on a target state $\ket{\psi}$, where  $\ket{\psi} = U \ket{0^n}$ and $U$ is generated by a Hamiltonian evolution $U = e^{-iHt}$ with a Hamiltonian that is native to the client's simulator. 
However, the measurement can only be performed by the server, and the client wishes to keep the correct value hidden from the server.

To achieve this, the client can conceal information about the state $\ket{\psi}$ by sending a mixed state. 
As shown in Fig.~\ref{fig:verification}, the unitary $V_e$ is set to be the identity gate, and the entangling gate $U_E$ is divided into two parts.
The first part of $U_E$ is a two-qubit entangled gate acting on the single-qubit system $B$ and one qubit in system $A$, resulting in the state of $\frac{1}{\sqrt{2}}\ket{0^{n-1}}_A\left(\ket{0}_A\ket{0}_B+\ket{1}_A\ket{1}_B\right)$.
Now, the density matrix of system $A$ is $\rho_A=\frac{1}{2}\ketbra{0^{n-1}}{0^{n-1}}\otimes\left(\ketbra{0}{0}+\ketbra{1}{1}\right)$.
Then, the client evolves the state by $U=e^{-iHt}$ on system $A$, resulting in the mixed state
\begin{equation}
\rho_A= \frac{1}{2}(\ketbra{\psi}+\ketbra{\psi_\perp}),
\end{equation}
where the state $\ket{\psi_\perp}$ is orthogonal to $\ket{\psi}$. 
The client then sends copies of $\rho_A$ to the server and requests the server to perform a single-copy observable measurement with the given observable $O$ and report the measurement results. 
Since the server can only access the degenerate mixed state, it cannot determine the expectation value. 
In contrast, the client retains the purification and can measure the purification qubit in the computational basis, which collapses $\rho_A$ to $\psi$ or $\psi_\perp$ with equal probabilities. 
Then, the client only keeps the reported result if the measurement result is $0$. 
After repeating this procedure many times, the client can average the kept results to obtain an accurate estimation.

Simultaneously, because the server is unaware of the measurement result of the purification qubit, it cannot derive the correct measurement value from its data. 
Thus, the client can estimate the expectation value without revealing it to the server, effectively implementing a blind observable estimation protocol. 
We summarize this protocol in Box~\ref{box:bqc_protocol}.

\begin{mybox}[float,label={box:bqc_protocol}]{Blind observable estimation}
 \textbf{Target: } 
    \begin{enumerate}[]
        \item
        The client aims to predict the value $\bra{\psi}O\ket{\psi}$ for the state $\ket{\psi} = U\ket{0^n}$, where $O$ is a complicated observable that can only be measured by the server. However, the client does not want the server to know the value of $\bra{\psi}O\ket{\psi}$.
        
    \end{enumerate} 
    \textbf{Protocol:} 
    \begin{enumerate}[(1)]
		\item
		\emph{State preparation and transmission}: The client prepares the state $\Psi_{AB}$ by applying a small entangling gate between $A$ and the single-qubit system $B$ and evolves system $A$ with $U$. The client then sends copies of the state $\rho_A$ to the server.
            \item 
		\emph{Observable measurement}: The server measures the observable $O$ on the state $\rho_A$ and reports the results. Meanwhile, the client measures system $B$ on the computational basis. If the outcome is $0$, the client keeps the reported result from the server.
            \item 
		\emph{Expectation value estimation}: The client estimates the expectation value by averaging the kept results.
    \end{enumerate}
\end{mybox}

Unlike the quantum verification protocol discussed previously, the security of this protocol does not rely on the sample complexity separations.
Instead, it directly depends on the degeneracy of $\rho_A$, which is a mixture of two orthogonal states with equal weights.
Therefore, even if the server has an unbounded number of memory qubits and arbitrary quantum computational power, and obtains the matrix form of $\rho_A$, it still cannot extract the accurate information of $\ket{\psi}$.

Moreover, the client can verify that the server conducts observable estimation honestly. 
To achieve this, the client prepares a set of observables, including both target observables and test observables with known expectation values, such as those that commute with the Hamiltonian.
Since the server lacks knowledge of the target observables, it risks performing dishonest operations. 
Thus, the client can leverage its knowledge of the evolution Hamiltonian to verify the server's report. 
If the server provides correct measurement results on all test observables, the client can trust the measurement results on the target observables.

\subsection{Necessity of Purification}\label{sec:necessity_purification}
A natural question regarding our protocols is the necessity of purification. 
It seems feasible to design protocols with similar functionality without purification. 
For instance, instead of preparing a pure state and sending a subsystem to the server, one could directly prepare and send a mixed state. 
There are two typical methods for preparing the mixed state.
The first one involves randomly initializing some qubits as the mixed state $\alpha\ketbra{0}+(1-\alpha)\ketbra{1}$ while keeping others in the state $\ket{0}$ followed by evolving the entire system with $U$.
The number of initially mixed qubits and the value of $\alpha$ determines the purity and the principal component of the resulting state.
Another method relies on classical randomness. 
One can flip a coin to obtain some random results and send random pure states based on those results. 
If the server cannot access the classical randomness, the effective state perceived by the server will be a mixed state.

The critical issue for these protocols utilizing mixed states is security or the potential for information leakage. 
If the randomness in preparing the mixed state is generated classically, such as by flipping a coin, or if the mixed state arises from decoherence, then the purification of the mixed state may inadvertently leak into the environment.
Thus, the purification is at risk of being accessed by the server and used to cheat the client. 
On the other hand, if the mixed state is prepared by entangling it with another qubit in the client's quantum processor, this approach is equivalent to our purification-assisted protocol.

\section{Discussion}\label{sec:discussion}
\subsection{Generalization Beyond Purification}

So far we have been discussing the various tasks we can perform if we have access to the purification for the mixed state of interests $\rho_A$. 
In the global pure state $\Psi_{AB}$, subsystems $A$ and $B$ share entanglement, which brings huge advantages in many quantum learning tasks.
A natural question is, can we achieve the same advantage with only classical correlation?
An intuitive example is that two $n$-bit probability distributions, $\mathbf{p}_n$ and $\mathbf{q}_n$, are originally hard to distinguish.
However, these two distributions are marginal distributions of two $(n+1)$-bit probability distributions, $\mathbf{p}_{n+1}$ and $\mathbf{q}_{n+1}$, and their complement single-bit marginal distributions, $\mathbf{p}_1$ and $\mathbf{q}_1$ share different properties.
Therefore, one can easily distinguish $\mathbf{p}_{n+1}$ and $\mathbf{q}_{n+1}$ by estimating properties of $\mathbf{p}_1$ and $\mathbf{q}_1$, which is much easier than the original problem.
Note that this simple example is analogous to the purification-assisted purity estimation protocol shown in Observation~\ref{obs:purity}, in which one can estimate the purity of a large system by only operating a much smaller complement system.

To generalize this simple example and further explore the power of classical correlation, we consider the state of 
\begin{align}\label{eq:mixed_composite}
    \rho_{AB} = \sum_{j=0}^{2^{\abs{B}-1}} \lambda_j \ketbra{\psi_A^j}\otimes\ketbra{\psi_B^j},
\end{align}
which is the incoherent version of purification, with $\ket{\psi_A^j}$ and $\ket{\psi_B^j}$ being eigenstates of reduced density matrices $\rho_A$ and $\rho_B$.
One can check that, with this state, the observations, protocols, and theorems discussed in Sec.~\ref{sec:upper_bound} still apply to purity estimation, quantum virtual cooling, and principal component analysis, achieving the same performance as the purification. 
The tasks are those effectively trying to filter the eigenvalues and eigenstates of $\rho_A$, and then perform some measurement on them. 
In these tasks, the necessary ingredient is the classical correlations between systems $A$ and $B$, not the quantum entanglement between these states (e.g. the coherence among the Schmidt basis states in the composite system). 
For principal component analysis, we do not even need perfect classical correlation between the eigenstates of $A$ and $B$ like in Eq.~\eqref{eq:mixed_composite}, we only require the classical conditional probability for the states to satisfy that system $A$ is always in $\ket{\psi_{A}^{0}}$ when system $B$ is in $\ket{\psi_{B}^0}$, and the reverse need not be true. 

On the other hand, we need to measure the cross terms between different eigenstates for estimating Fisher information, $\abs{\bra{\psi_A^j}O\ket{\psi_A^k}}^2$, which is beyond the paradigm described above. 
In such a case, the quantum entanglement between the states in different subsystems (i.e. the coherence among the Schmidt basis states in the composite system) is essential. 
We can easily see that the protocols outlined in Sec.~\ref{sec:q_fish} do not work if we are given the incoherent state Eq.~\eqref{eq:mixed_composite} instead.
In Appendix~\ref{app:fisher_classical_corr}, we provide a rigorous proof that having classical correlation cannot improve the sample complexity for estimating quantum fisher information, which is summarized as following:
\begin{theorem}[Informal]
Given an $n$-qubit low-rank state $\rho_A$, its rank, target observable $O$ with bounded norm, and access to a constant number of ancillary qubits that are classically correlated with $\rho_A$ in the form of Eq.~\eqref{eq:mixed_composite}, any protocol that accurately estimates quantum Fisher information requires at least $\Omega(2^{n/2})$ complexity.
\end{theorem}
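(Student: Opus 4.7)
The plan is to reduce quantum Fisher information estimation to a state discrimination problem in the classically correlated setting, paralleling the reduction used for the purification-free single-copy lower bound (Theorem~\ref{thm:fisher_lower}). First, I would construct two ensembles $\mathcal{E}_1,\mathcal{E}_2$ of composite states of the form $\rho_{AB}=\sum_j\lambda_j\ketbra{\psi_A^j}\otimes\ketbra{\psi_B^j}$ satisfying three requirements: (i) the eigenvalue spectrum $\{\lambda_j\}$ is fixed, known, and identical across the two ensembles, so that the rank (and even the spectrum) provides no distinguishing power; (ii) with high probability over the random draw, the Fisher information $F_O(\rho_A)$ differs by $\Omega(1)$ between $\mathcal{E}_1$ and $\mathcal{E}_2$; and (iii) the induced single-copy POVM statistics on $\rho_{AB}$ are exponentially close in total variation. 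The natural candidate is to adapt the Haar-random eigenstate construction used in the proof of Theorem~\ref{thm:fisher_lower} and append a constant-size $B$ register that simply carries the classical label $j$.

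The key structural observation driving the hardness is that because $\rho_{AB}$ is block diagonal in the basis $\{\ket{\psi_A^j}\ket{\psi_B^j}\}_j$, any single-copy POVM element $M$ on $\rho_{AB}$ yields an outcome probability of the form $\sum_j\lambda_j\bra{\psi_A^j,\psi_B^j}M\ket{\psi_A^j,\psi_B^j}$. Such measurements can probe diagonal expectations $\bra{\psi_A^j}X\ket{\psi_A^j}$ tagged by a constant-size classical label, but they cannot coherently couple distinct eigenstates and therefore cannot access the off-diagonal overlaps $\abs{\bra{\psi_A^j}O\ket{\psi_A^k}}^2$ that dominate the Fisher information in Eq.~\eqref{eq:QFI}. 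This is precisely the coherence between Schmidt branches that Observation~\ref{obs:fisher} exploits in the purification setting, and it is absent whenever the correlation between $A$ and $B$ is purely classical.

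Building on this, I would formalize any single-copy protocol on $\rho_{AB}^{\otimes N}$ through the learning tree representation of~\cite{Huang_2022_quantum,chen2022memory}, with each leaf labeled by a full measurement transcript. The aim is to upper bound the total variation distance between the leaf distributions induced by $\mathcal{E}_1$ and $\mathcal{E}_2$. Since the $B$ register has constant dimension, the extra classical labels contribute at most $\mathcal{O}(1)$ bits of information per sample, and the analysis reduces to controlling the discrimination signal of measurements acting effectively on $\rho_A$ tagged by $j$. A Haar moment computation analogous to the one in Appendix~\ref{app:fisher_lower} should then yield a likelihood-ratio bound at each leaf of order $1+\mathcal{O}(2^{-n/2})$, which propagates to $d_{\TV}(p_1,p_2)=\mathcal{O}(N\cdot 2^{-n/2})$ and forces $N=\Omega(2^{n/2})$ for any protocol with constant advantage.

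The main obstacle is engineering the two ensembles so that their Fisher information values separate by $\Omega(1)$ while the averaged measurement statistics, including the higher symmetric moments $\bE_{\mathcal{E}_i}[\rho_{AB}^{\otimes m}]$ that enter the tree analysis, remain matched to exponential precision. Because the Fisher information is a non-local function of the eigenstate pairs through the cross overlaps $\abs{\bra{\psi_A^j}O\ket{\psi_A^k}}^2$, one must design the eigenstate distributions to produce controlled gaps in these cross terms while preserving moment indistinguishability under Haar averaging. I expect that a direct adaptation of the ensemble from Theorem~\ref{thm:fisher_lower}, with $B$ holding a classical copy of the Schmidt label in a fixed basis, will satisfy both constraints; the nontrivial check is verifying that appending this constant-size classical register does not open up any new distinguishing channel beyond the $\mathcal{O}(1)$ label bits per sample already accounted for in the tree bound.
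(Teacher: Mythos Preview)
Your overall strategy (reduction to ensemble discrimination, learning-tree bound via a reference state, Haar-moment control) matches the paper's. However, the specific step you flag as ``the nontrivial check'' actually fails: directly adapting the Theorem~\ref{thm:fisher_lower} ensembles and appending a classical eigen-label on $B$ \emph{does} open a distinguishing channel that the tree bound cannot absorb.

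Recall that in Theorem~\ref{thm:fisher_lower} the two ensembles are related by a Hadamard on the first qubit, so the principal eigenvector of $\rho_A$ is $\ket{0}\otimes U\ket{0}$ in $\mathcal{S}_1$ versus $\ket{+}\otimes U\ket{0}$ in $\mathcal{S}_2$. Once $B$ carries the label $j$, a single-copy protocol can measure $B$, post-select on $j=0$ (this happens with probability $1/2$), and then measure the first qubit of $A$ in the $Z$ basis. For the $\mathcal{S}_1$-derived state the outcome is always $0$; for the $\mathcal{S}_2$-derived state it is uniform. Thus $\mathcal{O}(1)$ samples already separate the two classically-correlated ensembles, and no $\Omega(2^{n/2})$ bound can follow. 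The point is that Theorem~\ref{thm:fisher_lower} hides the Fisher-information gap in a \emph{local basis rotation} of the eigenvectors; that is invisible to single-copy measurements on $\rho_A$ alone, but becomes trivially visible once the classical label lets you isolate an individual eigenvector.

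The paper therefore uses a different hard instance (Appendix~\ref{app:fisher_classical_corr}). Both ensembles have the \emph{same} marginal distribution on each eigenvector separately: the eigenstates are $\ket{j}\otimes\ket{\phi_j}$ with $\ket{\phi_0}=U\ket{0}$ always, $\ket{\phi_2}=V\ket{0}$ always, and the ensembles differ only in whether $\ket{\phi_1}$ equals $U\ket{0}$ or $V\ket{0}$. The Fisher-information gap now lives in the \emph{overlap between two eigenvectors} ($\abs{\braket{\phi_0}{\phi_1}}^2$ is $1$ versus $\abs{\bra{0}U^\dagger V\ket{0}}^2\approx 0$), which is exactly the off-diagonal data your structural observation says classical correlation cannot expose. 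Post-selecting on any single label gives a Haar-random state in both ensembles, so the only way to distinguish is to compare states across different labels, which reduces to inner-product estimation and yields the $\Omega(2^{n/2})$ bound via comparison to the block-mixed reference $\rho_m=\bigl(\tfrac{1}{2}\ketbra{00}+\tfrac{3}{8}\ketbra{11}+\tfrac{1}{8}\ketbra{22}\bigr)\otimes\mathbb{I}/2^{n-2}$. Your likelihood-ratio machinery then goes through essentially as you outlined; the missing ingredient is this new ensemble design, not the analysis framework.
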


We would like to point out some key differences between purification and classical correlation.
Although the classical correlation given in Eq.~\eqref{eq:mixed_composite} can partially replace the role of purification, it has a very specific form: the mixture of tensor product of eigenstates of two subsystems.
This form is particularly vulnerable to unitary perturbations applied to the joint system, which can undermine the utility of the classical correlation.
However, the utility of purification is robust to arbitrary unitary operation.
Furthermore, as discussed in Sec.~\ref{sec:necessity_purification}, mixed states cannot ensure information security and thus cannot replace the role of purification in quantum cryptography.

\subsection{Comparison with Quantum Memory}
In this work, we show that in certain quantum state learning tasks, purification provides exponential sample complexity separations compared with single-copy measurements and even $n-\omega(\log n)$ memory qubits.
A natural question is whether we can further strengthen our conclusion.
Specifically, is there a quantum learning task that can be readily accomplished using purification but is exponentially challenging for any other protocol, regardless of the number of memory qubits utilized, even if unlimited?
However, we will show that such kind of advantages does not exist.

For tasks involving purity estimation and quantum virtual cooling, one can use the generalized SWAP test circuit, as shown in Fig.~\ref{fig:memory}(a).
The circuit comprises $t$ identical copies of $\rho_A$, a control qubit initialized as $\ket{+}$, a controlled $t$-th order permutation gate, and the Pauli-$X$ measurement on the control qubit and observable measurement on one copy of the state.
Then, the measurement results give the target value $\expval{X\otimes O}=\Tr(O\rho_A^t)$ and $\expval{X}=\Tr(\rho_A^t)$.
Essentially, the generalized SWAP test circuit transforms the nonlinear quantity estimation into the linear observable estimation on multiple copies of states.
Given that both the Pauli-$X$ operator and the observable $O$ possessing bounded spectral norms, the sample complexity for purity estimation and quantum virtual cooling are all upper bounded by some constant.

\begin{figure}[htbp]
\centering
\includegraphics[width=0.7\linewidth]{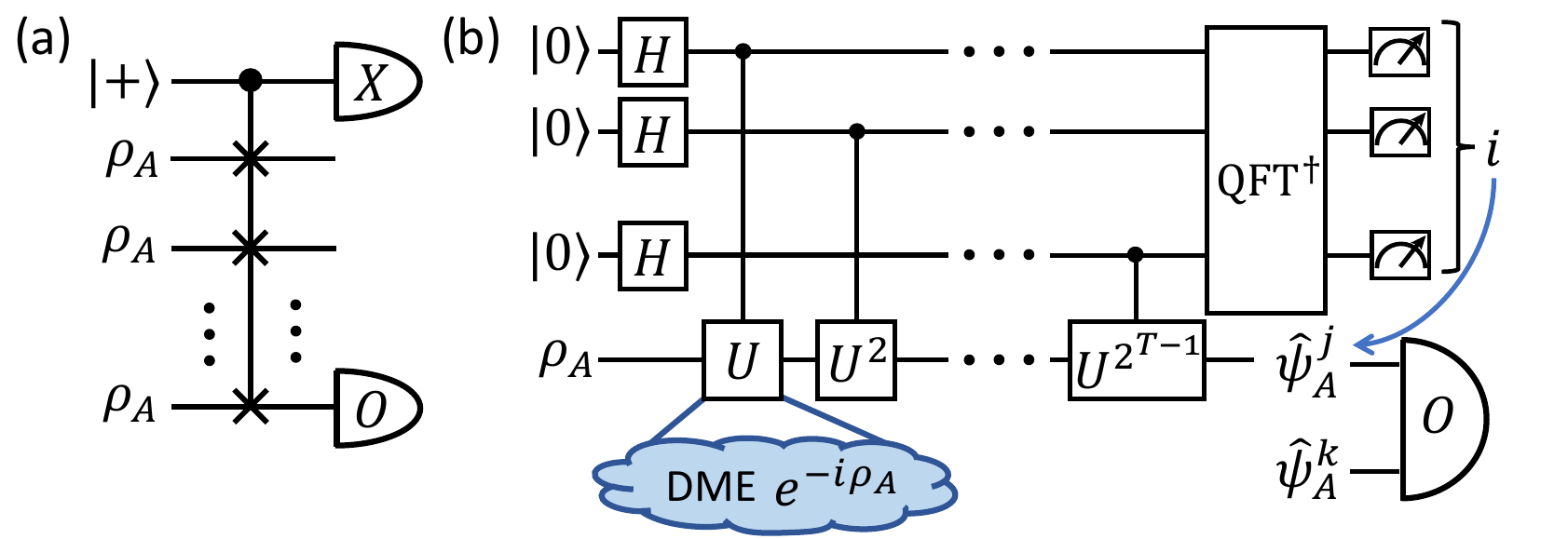}
\caption{Memory-based protocols to achieve tasks discussed in this work. (a) Utilizing the generalized SWAP test algorithm, one can accurately and efficiently estimate $\Tr(\rho_A^t)$ and $\Tr(O\rho_A^t)$. The gate depicted in this figure is the controlled $t$-th order permutation gate. (b) Combining the density matrix exponentiation (DME) and quantum phase estimation, one can prepare the eigenstates of $\rho_A$. Then, by performing joint measurements on multiple copies of eigenstates, one can estimate $\Tr(O\psi_A^0)$ and quantum Fisher information. QFT stands for the quantum Fourier transformation. }
\label{fig:memory}
\end{figure}

For tasks involving quantum principal component analysis and quantum Fisher information estimation, one can use the circuit shown in Fig.~\ref{fig:memory}(b), which is the combination of density matrix exponentiation~\cite{Lloyd2014qpca,Kimmel2017hamiltonian,kjaergaard2022DME} and quantum phase estimation algorithms.
Specifically, the quantum phase estimation algorithm is capable of estimating the eigenvalues of a target unitary and preparing the corresponding eigenstates, and the density matrix exponentiation algorithm can realize the unitary evolution $e^{-i\rho_A}$ for the given unknown state $\rho_A$.
By integrating these two algorithms, one can approximately estimate the eigenvalues $\hat{\lambda}_A^j$ according to the measurement results of control qubits and approximately prepare the eigenstates $\hat{\psi}_A^j$ of $\rho_A$, as shown in Fig.~\ref{fig:memory}(b).
To estimate $\Tr(O\psi_A^0)$, one can perform observable estimation on the output state and post-select the measurement result according to the measurement outcomes of control qubits.
To estimate $\abs{\bra{\psi_A^j}O\ket{\psi_B^k}}^2$, one can use another register to store the output state of the previous experiment and perform the SWAP test with the newly obtained state.
Subsequently, using the estimated eigenvalues and the values of $\abs{\bra{\psi_A^j}O\ket{\psi_B^k}}^2$, one can classically compute the quantum Fisher information.

This protocol can estimate $\Tr(O\psi_A^0)$ and $F_O(\rho_A)$ with constant sample complexity and $\mathcal{O}(n)$ ancillary qubits for the following reasons.
Firstly, the density matrix exponentiation algorithm achieves an $\epsilon$ accuracy in diamond distance for simulating the evolution of $e^{-i\rho_A}$ uses $\mathcal{O}(\frac{1}{\epsilon^2})$ copies of $\rho_A$, which does not result in exponential sample complexity.
Secondly, in this work, the state $\rho_A$ considered has only a constant number of nonzero eigenvalues, thus a constant number of control qubits and runs of density matrix exponentiation subroutine suffice.
In addition, the probability of getting a particular eigenstate using quantum phase estimation depends on the fidelity between the input state and the target eigenstate.
In the task of estimating $\Tr(O\psi_A^0)$, we require that the maximal eigenvalue of $\rho_A$ be $\Theta(1)$, implying the fidelity between $\rho_A$ and $\psi_A^0$ is $\Theta(1)$.
Therefore, by initializing the input state to be the target mixed state $\rho_A$, the post-selection probability is $\Theta(1)$, which does not contribute an exponential factor to the sample complexity.
In the task of quantum Fisher information estimation, we further require that all non-zero eigenvalues be $\Theta(1)$.
Thus, by similarly initializing the input state to be $\rho_A$, the probability of getting any eigenstate with nonzero eigenvalue is $\Theta(1)$, which again does not introduce an exponential factor into the sample complexity.

It is worth noting that this protocol serves as a general method for estimating functions of the eigenvalues and eigenstates of $\rho_A$, owing to its ability to prepare eigenstates and estimate eigenvalues.
Therefore, it eliminates many other scenarios where purification might offer advantages over unbounded quantum memory.
Note that Ref.~\cite{chen2024localtestunitarilyinvariant} provides a rigorous proof that, without the limitation of quantum memory, the learning capability of purification is the same as the original target state, which coincides with our observations.

\section{Summary and Outlook}\label{sec:outlook}

In this work, we explored purification as a valuable quantum resource, highlighting its potential in quantum learning. 
We identified exponential sample complexity separations between learning tasks performed with and without purification in four state learning scenarios. 
In addition, we demonstrated applications in quantum cryptography and discussed the power and underlying source of the advantages that purification provides.
To derive our main results, we proved the sample complexity lower bound for certain state learning tasks, even when constraining the rank of the state.
In addition, for the first time, we established the sample complexity lower bound for estimating quantum fisher information, which is central to quantum metrology.

In addition to scenarios mentioned in the Introduction and quantum cryptography, purification is also practically meaningful in many other physical scenarios. Our results indicate that if we have access to the collective state of the target quantum system and the environment, there are various protocols for estimating the properties of the target quantum system that are robust against the noise caused by the environment.
This observation might be useful for platforms where the environment (noise source) is clear and can be measured to a certain degree, such as the readout cavity and Purcell filter for circuit quantum electrodynamics~\cite{blais2021cqed}. 
A lot of open quantum system simulation is done by explicitly simulating the full quantum system including the target system and the environment~\cite{wang2011open,su2020open,cattaneo2021collision}. Hence, our protocols allow for possible faster subroutines for extracting properties of the target system in these simulations.
Throughout this work, we hold the assumption that the purification state $\Psi_{AB}$ only contains $n+\mathcal{O}(1)$ qubits to ensure that the purification-assisted protocols have constant sample complexity.
According to our sample complexity analysis of memory-assisted protocols, purification would show sample complexity advantage over quantum memory in a large range of qubit numbers.
For example, since the sample complexity for single-copy estimation of state purity is $\Theta(2^{n/2})$~\cite{chen2022memory}, purification will have advantages over quantum memory as long as the total qubit number of $\Psi_{AB}$ is less than $\frac{3}{2}n$.
It will be interesting to figure out the optimal sample complexity for the other tasks with the assistance of purification.

While our results showcase the significant potential of purification, they also underscore the challenges associated with preparing the purification of even a low-rank density matrix. 
If one could easily prepare this purification without doubling the system, learning the state properties would become efficient with such a small memory overhead, thus contradicting our established hardness results. 
Therefore, while purification can aid learning tasks when accessible, purifying a given mixed state presents considerable difficulties.
There exist many quantum algorithms requiring the oracle to prepare the purification of the target mixed state, such as block encoding~\cite{wang2024entropy,liu2024estimatingtracequantumstate} and quantum purified query access model~\cite{belovs2019quantum,gilyen2019distributional,gur2021sublinear}.
Our results indicate that, it is impossible to replace the purification oracle in these algorithms without at least doubling the system size.
Furthermore, from the perspective of open quantum dynamics, the decoherence of a quantum state is caused by its interaction with the environment. 
Thus, our results indicate that even the leakage of a single qubit of quantum information to the environment can pose significant challenges in quantum learning.
In addition, it is a valuable next step to find further applications in scenarios where purification naturally exists, beyond the cryptography that have been discussed.
For example, quantum many-body physics~\cite{Kokail2021entanglement}, quantum gravity~\cite{Kim2020gravity}, and quantum query models~\cite{Low2019hamiltonian} present interesting avenues for future research.

Another important direction is further exploring the power of purification in a wider range of quantum learning tasks. 
In Appendix~\ref{sec:channel_learning}, we have achieved constant sample complexity with purification in several \emph{channel} learning tasks, including the unitarity estimation, virtual channel distillation, and channel PCA.
However, the sample complexity lower bounds without purification, and the resulting exponential separation between learning with and without channel purification remains to be established. 
Another related potential direction is to explore the application of purification in Hamiltonian learning~\cite{huang2023Hamiltonian,Li2024hamiltonian,Yu2023hamiltonian,bluhm2024hamiltonian,Dutkiewicz2024hamiltonian}.
Additionally, finding new applications of purification in quantum state learning, such as estimating partial transposition moments $\Tr[(\rho_A^{\mathrm{T}_{A_1}})^t]$ with $\mathrm{T}_{A_1}$ being the partial transposition over a subsystem of $A$, is crucial, as these moments play vital roles in mixed-state entanglement detection~\cite{elben2020mixed,zhou2020single,liu2021negativity}. 
Furthermore, an intriguing question is whether we can fully characterize all functions of the target mixed state $f(\rho_A)$ that can be estimated using purification with constant sample complexity, thereby gaining a comprehensive understanding of the power of purification.

In recent years, significant advancements in quantum learning have led to the development of numerous novel protocols. 
While we primarily utilize tomography to illustrate the advantage of purification in Sec.~\ref{sec:upper_bound}, it would be valuable to integrate other quantum learning protocols, such as randomized measurements~\cite{Huang2020predicting,elben2023randomized,rovvie2025triple}, with purification and explore applications in broader contexts. 
Meanwhile, exploring other resources, in addition to quantum memory, purification, and conjugate states~\cite{robbie2024conjugate}, that would benefit quantum learning tasks is also a meaningful future direction.

\section*{Acknowledgements}
We thank Hong-Ye Hu and Andreas Elben for illuminating suggestions and comments throughout this research. 
We acknowledge the valuable discussion with Senrui Chen, Jens Eisert, Chao Yin, Weikang Li, Bartek Czech, Yunchao Liu, and Qi Ye. 
ZL and ZD acknowledge the support from the National Natural Science Foundation of China Grant No.~12174216 and the Innovation Program for Quantum Science and Technology Grant No.~2021ZD0300804 and No.~2021ZD0300702. ZC acknowledges support from the EPSRC QCS Hub EP/T001062/1, EPSRC projects Robust and Reliable Quantum Computing (RoaRQ, EP/W032635/1), Software Enabling Early Quantum Advantage (SEEQA, EP/Y004655/1) and the Junior Research Fellowship from St John’s College, Oxford.

\appendix

\section{Additional Lemmas}
In this section, we provide some lemmas that are necessary for our subsequent proofs.

\begin{lemma}\label{lemma:pca}
Given a positive matrix $M$, with the difference between the largest and the second largest eigenvalues equaling $\Delta$, another Hermitian matrix $M^\prime$ with $\norm{M-M^\prime}_{\mathrm{tr}}\le\epsilon$, their largest eigenvalues $\lambda$ and $\lambda^\prime$ satisfy $\abs{\lambda-\lambda^\prime}\le\epsilon\le\frac{\Delta}{2}$.
Given their eigenstates corresponding to their largest eigenvalues, $\ket{\psi}$ and $\ket{\psi^\prime}$, we have $\norm{\ketbra{\psi}{\psi}-\ketbra{\psi^\prime}{\psi^\prime}}\le2\sqrt{\frac{2\epsilon}{\Delta}}$.
\end{lemma}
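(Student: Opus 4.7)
The plan is to establish both claims through standard eigenvalue/eigenvector perturbation arguments, first controlling $\lambda'$ via a variational bound and then using the spectral gap of $M$ to propagate that control to the eigenvector. For the eigenvalue statement, I would observe that $\lambda' \ge \bra{\psi} M' \ket{\psi} = \lambda + \bra{\psi}(M' - M)\ket{\psi} \ge \lambda - \norm{M - M'}_\infty$ and, by the symmetric argument applied to $M$ and $\ket{\psi'}$, also $\lambda \ge \lambda' - \norm{M - M'}_\infty$. Since $\norm{M - M'}_\infty \le \norm{M - M'}_{\mathrm{tr}} \le \epsilon$, this immediately yields $\abs{\lambda - \lambda'} \le \epsilon$; this is essentially Weyl's inequality combined with the standard ordering of Schatten norms.

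For the eigenvector bound, I would decompose $\ket{\psi'} = \alpha \ket{\psi} + \beta \ket{\psi_\perp}$ with $\ket{\psi_\perp} \perp \ket{\psi}$ and $\abs{\alpha}^2 + \abs{\beta}^2 = 1$. Using the spectral decomposition of the positive operator $M$, every non-principal eigenvalue is at most $\lambda - \Delta$ by the spectral gap hypothesis, so
\begin{equation}
\bra{\psi'} M \ket{\psi'} \le \lambda \abs{\alpha}^2 + (\lambda - \Delta) \abs{\beta}^2 = \lambda - \Delta \abs{\beta}^2.
\end{equation}
On the other hand, $\bra{\psi'} M \ket{\psi'} = \bra{\psi'} M' \ket{\psi'} + \bra{\psi'}(M - M')\ket{\psi'} \ge \lambda' - \epsilon$. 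Combining these with $\lambda - \lambda' \le \epsilon$ from the first step yields $\Delta \abs{\beta}^2 \le 2\epsilon$, hence $\abs{\beta}^2 \le 2\epsilon/\Delta$. The assumption $\epsilon \le \Delta/2$ is what makes this bound sit inside $[0,1]$ and therefore non-vacuous.

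To finish, I would translate the overlap bound into the projector norm using the standard identity for two rank-one projectors, $\norm{\ketbra{\psi}{\psi} - \ketbra{\psi'}{\psi'}}_{\mathrm{tr}} = 2\sqrt{1 - \abs{\braket{\psi}{\psi'}}^2} = 2\abs{\beta}$, which gives $\norm{\ketbra{\psi}{\psi} - \ketbra{\psi'}{\psi'}} \le 2\sqrt{2\epsilon/\Delta}$ as claimed. The main obstacle is not computational but bookkeeping: one must line up the directions of the two variational inequalities so that $\Delta\abs{\beta}^2$ is isolated with the correct sign after invoking $\lambda - \lambda' \le \epsilon$; once this is done, the remainder is a Davis--Kahan-style routine estimate.
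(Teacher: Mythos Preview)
Your proposal is correct and follows essentially the same approach as the paper's proof: a variational argument for the eigenvalue bound, followed by expanding $\ket{\psi'}$ against the spectral decomposition of $M$ to extract the gap-controlled overlap, then converting the fidelity bound to a trace distance via the standard pure-state identity. The only cosmetic difference is that the paper expands $\ket{\psi'}$ in the full eigenbasis of $M$ rather than writing it as $\alpha\ket{\psi}+\beta\ket{\psi_\perp}$, but the estimates and constants are identical.
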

\begin{proof}
Labeling $M^\prime=M+c$, it is easy to prove that $\lambda^\prime=\bra{\psi^\prime}(M+c)\ket{\psi^\prime}=\bra{\psi^\prime}M\ket{\psi^\prime}+\bra{\psi^\prime}c\ket{\psi^\prime}\le\lambda+\epsilon$ due to the fact that $\norm{\epsilon}_{\mathrm{tr}}\le\epsilon$. 
Besides, as $\bra{\psi}(M+c)\ket{\psi}\ge\lambda-\epsilon$, we also have $\lambda^\prime\ge\lambda-\epsilon$, which finalizes the proof of the eigenvalues.

Based on this result, we have 
\begin{equation}\label{eq:fid1}
\bra{\psi^\prime}M\ket{\psi^\prime}=\bra{\psi^\prime}M^\prime-c\ket{\psi^\prime}\ge\lambda-2\epsilon.
\end{equation}
Note that $\bra{\psi^\prime}M\ket{\psi^\prime}=\lambda\abs{\braket{\psi^\prime}{\psi}}^2+\sum_j\lambda_j\abs{\braket{\psi^\prime}{\psi_j}}^2$, where $\lambda_j$ and $\ket{\psi_j}$ denote eigenvalues and eigenstates of $M$ except for the principal component.
Given the spectral gap of $M$, labeled by $\Delta$, and the normalization condition $\abs{\braket{\psi^\prime}{\psi}}^2+\sum_j\abs{\braket{\psi^\prime}{\psi_j}}^2=1$, we have
\begin{equation}\label{eq:fid2}
\begin{aligned}
\bra{\psi^\prime}M\ket{\psi^\prime}\le&\lambda\abs{\braket{\psi^\prime}{\psi}}^2+(\lambda-\Delta)\sum_j\abs{\braket{\psi^\prime}{\psi_j}}^2\\
=&\lambda-\Delta+\Delta\abs{\braket{\psi^\prime}{\psi}}^2.
\end{aligned}
\end{equation}
Combining Eq.~\eqref{eq:fid1} and Eq.~\eqref{eq:fid2}, we have $\abs{\braket{\psi^\prime}{\psi}}^2\ge1-\frac{2\epsilon}{\Delta}$.
Using the relationship between fidelity and trace distance, we can upper bound the trace distance between $\psi$ and $\psi^\prime$ as $\norm{\psi-\psi^\prime}_{\mathrm{tr}}\le 2\sqrt{\frac{2\epsilon}{\Delta}}$.
\end{proof}

\begin{corollary}\label{cor:fisher}
Given a $d$-dimensional positive non-degenerate matrix $M$ which has non-zero eigenvalues $\{\lambda_j\}_{j=0}^{r-1}$ with $\lambda_j>\lambda_{j+1}$, minimal spectral distance $\Delta_{\mathrm{min}}=\min\left\{\min_{j=0}^{r-2}\abs{\lambda_j-\lambda_{j+1}},\lambda_{r-1}\right\}$, and eigenstates $\{\psi_j\}_{i=0}^{r-1}$, a $d$-dimensional Hermitian matrix $M^\prime=M+c$ with $\norm{c}_{\mathrm{tr}}\le\epsilon$, which has eigenvalues $\{\lambda_j^\prime\}_{j=0}^{d-1}$ and eigenstates $\{\psi_j^\prime\}_{j=0}^{d-1}$, it satisfies $\abs{\lambda_j-\lambda_j^\prime}\le\epsilon$ and $\norm{\psi_j-\psi_j^\prime}\le\frac{16}{\Delta_\mathrm{min}}\epsilon^{\frac{1}{2r}}$ for all $j\le r-1$ when $\epsilon<\frac{\Delta_\mathrm{min}^{4r}}{(8\sqrt{2})^{2r}}$.
\end{corollary}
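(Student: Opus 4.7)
The plan is to prove the two inequalities separately. The eigenvalue bound $|\lambda_j - \lambda_j'| \le \epsilon$ for every $j$ follows from exactly the chain of variational inequalities used at the start of the proof of Lemma~\ref{lemma:pca}, but applied to the min--max characterisation of the $j$-th eigenvalue instead of only the top one: this yields $|\lambda_j - \lambda_j'| \le \|c\|_{\mathrm{op}} \le \|c\|_{\mathrm{tr}} \le \epsilon$, which handles the first claim and also supplies the control on $\lambda_k'$ needed below.

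For the eigenstate bound I would use induction on $j$, based on a deflation scheme that successively isolates the next eigenvector so that Lemma~\ref{lemma:pca} can be re-applied. The base case $j=0$ is Lemma~\ref{lemma:pca} itself: the top gap of $M$ is $\lambda_0-\lambda_1\ge \Delta_{\min}$, so $\|\psi_0-\psi_0'\|\le 2\sqrt{2\epsilon/\Delta_{\min}}$. For $j\ge 1$, assuming the bounds for $k<j$, I would introduce the deflated operators
\begin{equation*}
M_j := M - \sum_{k<j} \lambda_k \,|\psi_k\rangle\!\langle\psi_k|, \qquad
M_j' := M' - \sum_{k<j} \lambda_k' \,|\psi_k'\rangle\!\langle\psi_k'|.
\end{equation*}
By construction, the principal eigenpair of $M_j$ is $(\lambda_j,\psi_j)$, and its second-largest eigenvalue is $\lambda_{j+1}$ for $j<r-1$ or $0$ for $j=r-1$; in either case the definition of $\Delta_{\min}$ guarantees a spectral gap at least $\Delta_{\min}$. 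A triangle inequality in trace norm then yields
\begin{equation*}
\|M_j - M_j'\|_{\mathrm{tr}} \le \epsilon + \sum_{k<j}\Bigl(|\lambda_k - \lambda_k'| + \lambda_k\,\|\psi_k-\psi_k'\|_{\mathrm{op}}\Bigr),
\end{equation*}
and applying Lemma~\ref{lemma:pca} to the pair $(M_j, M_j')$ gives a new bound on $\|\psi_j-\psi_j'\|$ in terms of the previous ones.

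The principal obstacle is extracting the stated bound from the resulting recursion with the correct constants. Writing $a_j := \|\psi_j-\psi_j'\|$ and using $\lambda_k \le \|M\|_{\mathrm{op}}$, the recursion takes the square-root form $a_j^2 \lesssim \Delta_{\min}^{-1}\bigl(\epsilon + \sum_{k<j} a_k\bigr)$, so the exponent of $\epsilon$ in $a_j$ is halved at each level of deflation. The hypothesis $\epsilon < \Delta_{\min}^{4r}/(8\sqrt{2})^{2r}$ is precisely what is needed to (i) keep $\|M_j-M_j'\|_{\mathrm{tr}} \le \Delta_{\min}/2$ at every stage, so that the preconditions of Lemma~\ref{lemma:pca} remain valid throughout the induction, and (ii) close the induction with the explicit form $a_j \le \tfrac{16}{\Delta_{\min}}\epsilon^{1/(2r)}$. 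Verifying that the accumulated constants collapse to $16/\Delta_{\min}$ and that the exponent does not degrade below $1/(2r)$ is the bookkeeping-heavy part; I expect it to go through cleanly by carrying an inductive hypothesis of the form $a_j \le \tfrac{c_j}{\Delta_{\min}}\epsilon^{\beta_j}$ with $\beta_j = 1/(2(j+1))$ (or a similarly decaying sequence) and then checking that the smallness assumption on $\epsilon$ absorbs all cross terms.
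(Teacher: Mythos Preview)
Your proposal is correct and follows essentially the same approach as the paper: the eigenvalue bound via the Courant--Fischer min--max characterisation, and the eigenvector bound via successive deflation $M_j = M - \sum_{k<j}\lambda_k\psi_k$, $M_j' = M' - \sum_{k<j}\lambda_k'\psi_k'$ with Lemma~\ref{lemma:pca} re-applied at each level. The paper carries out exactly this recursion (writing it one step at a time rather than cumulatively), obtains $\|M_{j+1}-M_{j+1}'\|_{\mathrm{tr}}\le 4\sqrt{2\epsilon_j/\Delta_{\min}}$ from $\epsilon_j:=\|M_j-M_j'\|_{\mathrm{tr}}$, and closes the bookkeeping with the explicit formula $\|\psi_{r-1}-\psi_{r-1}'\|_{\mathrm{tr}}\le 2\sqrt{2}\bigl(4\sqrt{2}/\Delta_{\min}\bigr)^{1-1/2^r}\epsilon^{1/2^r}<\tfrac{16}{\Delta_{\min}}\epsilon^{1/2^r}$, which is precisely your anticipated square-root recursion with the exponent halving at each deflation.
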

\begin{proof}
We first prove the conclusion for eigenvalues.
By definition, we have
\begin{equation}
\begin{aligned}
\lambda_j=&\min_{\mathcal{H}_{d-j}}\max_{\ket{\psi}\in\mathcal{H}_{d-j}}\bra{\psi}M\ket{\psi}\\
\le&\min_{\mathcal{H}_{d-j}}\left(\max_{\ket{\psi}\in\mathcal{H}_{d-j}}\bra{\psi}M^\prime\ket{\psi}+\max_{\ket{\psi}\in\mathcal{H}_{d-j}}\bra{\psi}c\ket{\psi}\right)\\
\le& \min_{\mathcal{H}_{d-j}}\max_{\ket{\psi}\in\mathcal{H}_{d-j}}\bra{\psi}M^\prime\ket{\psi}+\epsilon=\lambda_j^\prime+\epsilon,
\end{aligned}
\end{equation}
where $d$ is the dimension of $M$ and $M^\prime$ and $\mathcal{H}_{d-j}$ is a $d-j$-dimensional Hilbert space.
Similarly, we can also prove that $\lambda_j^\prime\le\lambda_j+\epsilon$.
Thus, we get the conclusion that $\abs{\lambda_j-\lambda_j^\prime}\le\epsilon$ for all $j$.

Now we try to prove the conclusion for eigenstates.
According to Lemma~\ref{lemma:pca}, the distance of principal components is bounded by $\norm{\psi_0-\psi_0^\prime}_\mathrm{tr}\le2\sqrt{\frac{2\epsilon}{\Delta_{\mathrm{min}}}}$.
Substituting the conclusion of eigenvalues, we have $\norm{\lambda_0\psi_0-\lambda_0^\prime\psi_0^\prime}_\mathrm{tr}\le\lambda_0\norm{\psi_0-\psi_0^\prime}_\mathrm{tr}+\abs{\lambda_0-\lambda_0^\prime}\norm{\psi_0^\prime}_\mathrm{tr}\le2\sqrt{\frac{2\epsilon}{\Delta_{\mathrm{min}}}}+\epsilon\le3\sqrt{\frac{2\epsilon}{\Delta_{\mathrm{min}}}}$, where we use the fact that $2\epsilon<\Delta_\mathrm{min}<1$.
Define new matrix $M_1=M-\lambda_0\psi_0$ and $M_1^\prime=M^\prime-\lambda_0^\prime\psi_0^\prime$, we have
\begin{equation}
\norm{M_1-M_1^\prime}_\mathrm{tr}\le\norm{M-M^\prime}_\mathrm{tr}+\norm{\lambda_0\psi_0-\lambda_0^\prime\psi_0^\prime}_\mathrm{tr}\le4\sqrt{\frac{2\epsilon}{\Delta_{\mathrm{min}}}}.
\end{equation}
As $\ket{\psi_1}$ and $\ket{\psi_1^\prime}$ are principal eigenstates of $M_1$ and $M_1^\prime$, we can similarly employ the conclusion made in Lemma~\ref{lemma:pca} to prove that
\begin{equation}
\norm{\psi_1-\psi_1^\prime}_\mathrm{tr}\le2\sqrt{\frac{4}{\Delta_\mathrm{min}}\sqrt{\frac{2\epsilon}{\Delta_\mathrm{min}}}}.
\end{equation}
Following the similar rule, we have $\norm{\psi_{i}-\psi_{i}^\prime}_\mathrm{tr}\le\norm{\psi_{r-1}-\psi_{r-1}^\prime}_\mathrm{tr}\le2\sqrt{2}\left(\frac{4\sqrt{2}}{\Delta_{\mathrm{min}}}\right)^{1-\frac{1}{2r}}\epsilon^{\frac{1}{2r}}<\frac{16}{\Delta_\mathrm{min}}\epsilon^{\frac{1}{2r}}$.

\end{proof}

\begin{lemma}\label{lemma:fisher2}
Given $\norm{\ketbra{\psi_1}{\psi_1}-\ketbra{\psi_1^\prime}{\psi_1^\prime}}_\mathrm{tr}\le\epsilon_1$ and $\norm{\ketbra{\psi_2}{\psi_2}-\ketbra{\psi_2^\prime}{\psi_2^\prime}}_\mathrm{tr}\le\epsilon_2$, we have $\norm{P_{12}-P_{12}^\prime}_\mathrm{tr}\le 2(\epsilon_1+\epsilon_2)$, where $P_{12}=\ketbra{\psi_1}{\psi_2}\otimes \ketbra{\psi_2}{\psi_1}+\ketbra{\psi_2}{\psi_1}\otimes \ketbra{\psi_1}{\psi_2}$ and $P_{12}^\prime=\ketbra{\psi_1^\prime}{\psi_2^\prime}\otimes\ketbra{\psi_2^\prime}{\psi_1^\prime}+\ketbra{\psi_2^\prime}{\psi_1^\prime}\otimes\ketbra{\psi_1^\prime}{\psi_2^\prime}$.
\end{lemma}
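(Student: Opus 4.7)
The plan is to rewrite $P_{12}$ in a form that isolates the differences $\ketbra{\psi_j}{\psi_j}-\ketbra{\psi_j'}{\psi_j'}$ (whose trace norms are already controlled by $\epsilon_j$), so that bounding $\norm{P_{12}-P_{12}'}_\mathrm{tr}$ reduces to a telescoping estimate on tensor products of rank-one projectors. The only subtlety is that $P_{12}$ is not a tensor product of projectors on its face; this is what I would handle first.

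First I would observe the identity
\begin{equation*}
\ketbra{\psi_1}{\psi_2}\otimes\ketbra{\psi_2}{\psi_1}=\bigl(\ketbra{\psi_1}{\psi_1}\otimes\ketbra{\psi_2}{\psi_2}\bigr)\mathbb{S},
\end{equation*}
where $\mathbb{S}$ is the SWAP operator on the two copies of the system. This follows from a direct one-line verification on arbitrary basis states $\ket{u,v}$. Applying the same identity to the other summand of $P_{12}$ and to $P_{12}'$ gives
\begin{equation*}
P_{12}=\bigl(\ketbra{\psi_1}{\psi_1}\otimes\ketbra{\psi_2}{\psi_2}+\ketbra{\psi_2}{\psi_2}\otimes\ketbra{\psi_1}{\psi_1}\bigr)\mathbb{S},
\end{equation*}
and analogously for $P_{12}'$. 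Since $\mathbb{S}$ is unitary, the trace norm is invariant under right-multiplication by $\mathbb{S}$, so the problem reduces to bounding the trace-norm distance between the two bracketed operators.

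Next I would apply the triangle inequality to split $\norm{P_{12}-P_{12}'}_\mathrm{tr}$ into the two tensor-product terms, and for each apply the standard telescoping decomposition
\begin{equation*}
X_1\otimes X_2-X_1'\otimes X_2'=(X_1-X_1')\otimes X_2+X_1'\otimes (X_2-X_2'),
\end{equation*}
together with the multiplicativity $\norm{A\otimes B}_\mathrm{tr}=\norm{A}_\mathrm{tr}\norm{B}_\mathrm{tr}$. Using $\norm{\ketbra{\psi_j}{\psi_j}}_\mathrm{tr}=\norm{\ketbra{\psi_j'}{\psi_j'}}_\mathrm{tr}=1$ together with the hypotheses $\norm{\ketbra{\psi_j}{\psi_j}-\ketbra{\psi_j'}{\psi_j'}}_\mathrm{tr}\le\epsilon_j$, each of the two tensor-product differences is bounded by $\epsilon_1+\epsilon_2$, and summing yields the claimed bound $2(\epsilon_1+\epsilon_2)$.

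The main (and really only) obstacle is spotting the SWAP identity; without it one is tempted to expand $P_{12}$ in terms of the non-Hermitian outer products $\ketbra{\psi_1}{\psi_2}$ and relate their distances to $\epsilon_1,\epsilon_2$ via phase-fixed vector approximations, which works but is messier and typically loses a factor of $\sqrt{2}$. The SWAP reformulation is clean because it converts $P_{12}$ into a genuine sum of tensor products of projectors composed with a fixed unitary, after which the bound follows from elementary trace-norm inequalities without any need to track relative phases of the unnormalized vectors.
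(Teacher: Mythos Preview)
Your proposal is correct and follows essentially the same route as the paper's proof: both use the identity $\ketbra{\psi_1}{\psi_2}\otimes\ketbra{\psi_2}{\psi_1}=(\ketbra{\psi_1}{\psi_1}\otimes\ketbra{\psi_2}{\psi_2})\,\mathrm{SWAP}$ to reduce to a trace-norm difference of tensor products of projectors, then telescope. The only cosmetic difference is ordering—you factor out SWAP from both summands before applying the triangle inequality, whereas the paper first splits via the triangle inequality (noting the two summands are adjoints and hence have equal trace-norm differences) and then applies the SWAP trick to one of them.
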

\begin{proof}

According to the triangle inequality, we know that 
\begin{equation}
\begin{aligned}
&\norm{P_{12}-P_{12}^\prime}\\
\le&2\norm{\ketbra{\psi_1}{\psi_2}\otimes\ketbra{\psi_2}{\psi_1}-\ketbra{\psi_1^\prime}{\psi_2^\prime}\otimes\ketbra{\psi_2^\prime}{\psi_1^\prime}}_\mathrm{tr}\\
=&2\norm{\left(\ketbra{\psi_1}{\psi_1}\otimes\ketbra{\psi_2}{\psi_2}-\ketbra{\psi_1^\prime}{\psi_1^\prime}\otimes\ketbra{\psi_2^\prime}{\psi_2^\prime}\right)\mathrm{SWAP}}_\mathrm{tr}\\
=&2\norm{\ketbra{\psi_1}{\psi_1}\otimes\ketbra{\psi_2}{\psi_2}-\ketbra{\psi_1^\prime}{\psi_1^\prime}\otimes\ketbra{\psi_2^\prime}{\psi_2^\prime}}_\mathrm{tr}\\
\le&2\norm{\ketbra{\psi_1}{\psi_1}\otimes(\ketbra{\psi_2}{\psi_2}-\ketbra{\psi_2^\prime}{\psi_2^\prime})}_{\mathrm{tr}}\\
&+2\norm{(\ketbra{\psi_1^\prime}{\psi_1^\prime}-\ketbra{\psi_1}{\psi_1})\otimes\ketbra{\psi_2^\prime}{\psi_2^\prime}}_\mathrm{tr}\\
=&2(\epsilon_1+\epsilon_2),
\end{aligned}
\end{equation}
which concludes our proof.

\end{proof}

\begin{lemma}[{\cite[Lemma 2]{Huang_2022_quantum}}]\label{lem:high_moment_haar}
For any $m$-qubit pure states $\ket{\phi_1},\ket{\phi_2},\cdots,\ket{\phi_K}$ and an $m$-qubit Haar random state $\ket{\psi}$, we have
\begin{equation}
    \bE_{\ket{\psi}} \prod_{j=1}^K\abs{\langle \psi \ket{\phi_j}}^2 \ge \frac{1}{(2^m + K - 1)\cdots(2^m+1)2^m}
\end{equation}
\end{lemma}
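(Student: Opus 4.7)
The plan is to reduce the Haar expectation to a calculation involving the symmetric-subspace projector, and then to lower-bound the resulting quantity via a permanent inequality applied to the Gram matrix of the $\{\ket{\phi_j}\}$.

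First, I would rewrite the integrand as a trace: using $\abs{\langle\psi|\phi_j\rangle}^2 = \Tr(\psi\phi_j)$, we have
\begin{equation*}
\prod_{j=1}^K\abs{\langle\psi|\phi_j\rangle}^2 = \Tr\!\left[\psi^{\otimes K}\,(\phi_1\otimes\cdots\otimes\phi_K)\right] = \langle\Phi|\psi^{\otimes K}|\Phi\rangle,
\end{equation*}
where $\ket{\Phi} = \ket{\phi_1}\otimes\cdots\otimes\ket{\phi_K}$. Taking the Haar expectation and using the standard identity $\bE_\psi[\psi^{\otimes K}] = \Pi_{\mathrm{sym}}/\binom{2^m+K-1}{K}$, with $\Pi_{\mathrm{sym}} = \frac{1}{K!}\sum_{\pi\in S_K}P_\pi$ the projector onto the symmetric subspace of $(\mathbb{C}^{2^m})^{\otimes K}$, I obtain
\begin{equation*}
\bE_\psi\prod_{j=1}^K\abs{\langle\psi|\phi_j\rangle}^2 \;=\; \frac{\langle\Phi|\Pi_{\mathrm{sym}}|\Phi\rangle}{\binom{2^m+K-1}{K}}.
\end{equation*}
Because $K!\binom{2^m+K-1}{K} = 2^m(2^m+1)\cdots(2^m+K-1)$, the claim reduces to establishing $\langle\Phi|\Pi_{\mathrm{sym}}|\Phi\rangle \geq 1/K!$.

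Next, I would expand $\Pi_{\mathrm{sym}}$ as an average over permutations and exploit the tensor-product structure of $\ket{\Phi}$. Using $P_\pi\ket{\Phi} = \ket{\phi_{\pi^{-1}(1)}}\otimes\cdots\otimes\ket{\phi_{\pi^{-1}(K)}}$, this collapses to
\begin{equation*}
\langle\Phi|\Pi_{\mathrm{sym}}|\Phi\rangle = \frac{1}{K!}\sum_{\pi\in S_K}\prod_{j=1}^K\langle\phi_j|\phi_{\pi(j)}\rangle = \frac{1}{K!}\,\mathrm{perm}(G),
\end{equation*}
where $G_{ij} = \langle\phi_i|\phi_j\rangle$ is the Gram matrix of the $\{\ket{\phi_j}\}$, which is Hermitian positive semidefinite with unit diagonal.

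Finally, I would invoke Marcus's theorem, which states that for any positive semidefinite Hermitian matrix $A$ one has $\mathrm{perm}(A) \geq \prod_i A_{ii}$. Applied to $G$, this gives $\mathrm{perm}(G) \geq 1$, hence $\langle\Phi|\Pi_{\mathrm{sym}}|\Phi\rangle \geq 1/K!$, which closes the chain of reductions and yields the claimed bound. The main obstacle is this final step: the sum $\sum_\pi \prod_j \langle\phi_j|\phi_{\pi(j)}\rangle$ contains terms that are not individually non-negative (e.g.\ those from $3$-cycles are generally complex), so one cannot simply drop them or bound them term by term; the PSD structure of the Gram matrix is essential. It is also reassuring that the equality case $\mathrm{perm}(G)=1$ corresponds to $G = I$, i.e.\ to pairwise orthogonal $\ket{\phi_j}$, which is precisely the configuration that saturates the lemma.
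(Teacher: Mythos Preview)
Your proof is correct. The paper does not prove this lemma itself; it simply cites it from Huang et al.\ (2022). Your reduction via $\bE_\psi[\psi^{\otimes K}]=\Pi_{\mathrm{sym}}/\binom{2^m+K-1}{K}$ followed by $\langle\Phi|\Pi_{\mathrm{sym}}|\Phi\rangle=\tfrac{1}{K!}\,\mathrm{perm}(G)$ and Marcus's inequality $\mathrm{perm}(G)\ge\prod_i G_{ii}=1$ is exactly the standard route, and indeed the paper itself invokes the same core inequality elsewhere in the form $\sum_{\pi\in S_T}\Tr\!\big(\pi\cdot\bigotimes_i\ketbra{\psi_i}\big)\ge 1$ (quoted as Lemma~5.12 of Chen et al.), which is nothing other than $\mathrm{perm}(G)\ge 1$ for the Gram matrix of unit vectors.
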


\begin{lemma}\label{lem:eigenvalue}
    For two state $\ket{u}, \ket{v}$, the matrix $\alpha \ketbra{u} + \beta \ketbra{v}$ has at most two eigenvalues:
    \begin{equation}
        \frac{\alpha + \beta}{2} \pm \frac{1}{2} \sqrt{(\alpha - \beta)^2 + 4 \alpha\beta \abs{\bra{u}\ket{v}}^2}
    \end{equation}
\end{lemma}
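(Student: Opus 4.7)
The plan is to exploit the fact that $M:=\alpha\ketbra{u}+\beta\ketbra{v}$ has its entire image contained in the subspace $\mathrm{span}\{\ket{u},\ket{v}\}$, which has dimension at most two. Consequently $M$ has rank at most two, and hence at most two nonzero eigenvalues; every other eigenvalue is $0$. So the task reduces to identifying the two (possibly equal) nonzero eigenvalues.

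To compute them, I would use the two lowest power-sum traces rather than diagonalizing by hand. A direct calculation, using $\braket{u}{u}=\braket{v}{v}=1$, yields
\begin{equation}
\operatorname{Tr}(M)=\alpha+\beta,\qquad \operatorname{Tr}(M^2)=\alpha^2+\beta^2+2\alpha\beta\abs{\bra{u}\ket{v}}^2.
\end{equation}
Denoting the two nonzero eigenvalues by $\lambda_\pm$, these identities give
\begin{equation}
\lambda_++\lambda_-=\alpha+\beta,\qquad \lambda_+^2+\lambda_-^2=\alpha^2+\beta^2+2\alpha\beta\abs{\bra{u}\ket{v}}^2,
\end{equation}
from which $\lambda_+\lambda_-=\alpha\beta\bigl(1-\abs{\bra{u}\ket{v}}^2\bigr)$ by the identity $2\lambda_+\lambda_-=(\lambda_++\lambda_-)^2-(\lambda_+^2+\lambda_-^2)$. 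Solving the quadratic $x^2-(\alpha+\beta)x+\alpha\beta(1-\abs{\bra{u}\ket{v}}^2)=0$ and simplifying the discriminant as $(\alpha+\beta)^2-4\alpha\beta(1-\abs{\bra{u}\ket{v}}^2)=(\alpha-\beta)^2+4\alpha\beta\abs{\bra{u}\ket{v}}^2$ produces the claimed formula
\begin{equation}
\lambda_\pm=\frac{\alpha+\beta}{2}\pm\frac{1}{2}\sqrt{(\alpha-\beta)^2+4\alpha\beta\abs{\bra{u}\ket{v}}^2}.
\end{equation}

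There is no real obstacle here; the lemma is a short algebraic fact. The only mild subtlety is the degenerate case in which $\ket{u}$ and $\ket{v}$ are parallel (so the span collapses to one dimension), but in that case $\abs{\bra{u}\ket{v}}^2=1$ and the formula correctly returns the single eigenvalue $\alpha+\beta$ (with multiplicity one) together with $0$. No additional case analysis is needed, and the bound ``at most two eigenvalues'' in the statement already accommodates the rank-one and trivial $\alpha=\beta=0$ cases.
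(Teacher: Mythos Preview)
Your argument is correct; the paper states this lemma without proof, so there is nothing to compare against. Your trace-based computation is a clean way to obtain the two nonzero eigenvalues and handles the degenerate cases as you note.
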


\begin{lemma}\label{lem:innerproduct}
    For two $m$-qubit Haar random state $\ket{u},\ket{v}$, their inner product $a=\abs{\bra{u}\ket{v}}^2$ satisfies:
    \begin{equation}
        \Pr(a \ge 2^{-m}+\epsilon ) \le 2 \exp(-\frac{\epsilon^2 d}{100\pi^2})
    \end{equation}
\end{lemma}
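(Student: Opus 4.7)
The plan is to invoke Levy's concentration of measure on the complex unit sphere, after using unitary invariance to remove one of the two random states. By the left-right invariance of the Haar measure, the joint distribution of $(\ket{u},\ket{v})$ is the same as $(\ket{u},W\ket{v})$ for any unitary $W$, so after conditioning on $\ket{u}$ I may assume $\ket{u}=\ket{0}$. Then $a=|\langle u|v\rangle|^2=|v_0|^2$ where $\ket{v}$ is Haar random on the unit sphere of $\mathbb{C}^d$ with $d=2^m$. The mean $\mathbb{E}[a]=1/d=2^{-m}$ follows immediately from $\mathbb{E}[\ketbra{v}{v}]=\mathbb{I}/d$, so the claim is exactly a one-sided concentration statement for $a$ about its mean.

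The second step is to identify a small Lipschitz constant for $a$ as a function on $S^{2d-1}$ (the complex sphere viewed as a real sphere). Writing $f(v)=|\langle u|v\rangle|^2$, a short calculation gives
\[
|f(v_1)-f(v_2)|\le\bigl(|\langle u|v_1\rangle|+|\langle u|v_2\rangle|\bigr)\,|\langle u|v_1-v_2\rangle|\le 2\|v_1-v_2\|,
\]
so $f$ is $2$-Lipschitz. I would then apply Levy's inequality in its standard form for a $K$-Lipschitz function on $S^{n-1}$,
\[
\Pr\bigl[|f-\mathbb{E}f|\ge t\bigr]\le 2\exp\!\Bigl(-\tfrac{c\, n\, t^2}{K^2}\Bigr),
\]
with $n=2d$, $K=2$, $t=\epsilon$, giving $\Pr[|a-2^{-m}|\ge\epsilon]\le 2\exp(-c_0 d\epsilon^2)$ for an absolute constant $c_0>0$. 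Dropping the lower-tail event $\{a\le 2^{-m}-\epsilon\}$ yields the claimed one-sided bound; the prefactor $2$ in the statement is simply inherited from the two-sided Levy bound.

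The main obstacle is matching the particular constant $1/(100\pi^2)$ in the statement. Different textbook versions of Levy's lemma produce different universal constants (Gromov--Ledoux with $(n-1)/2$, Milman's version with factors of $\pi^2$ or $\pi^3$); the $100\pi^2$ in the denominator is a deliberately loose choice designed to accommodate any of these together with the factor of $K^2=4$ from the Lipschitz estimate. Pinning down this constant is therefore a bookkeeping exercise rather than a conceptual step.

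As a sanity check and alternative route, I would note that $a$ is exactly $\mathrm{Beta}(1,d-1)$ distributed, so the tail admits the closed form $\Pr[a\ge 2^{-m}+\epsilon]=(1-2^{-m}-\epsilon)^{d-1}\le e^{-(d-1)\epsilon}$. In the relevant regime $\epsilon\in[0,1]$ this elementary bound already dominates $2\exp(-\epsilon^2 d/(100\pi^2))$ since $(d-1)\epsilon\ge \epsilon^2 d/(100\pi^2)$ whenever $\epsilon\le 100\pi^2(1-1/d)$, which always holds. This gives a Levy-free proof that settles the lemma directly, at the cost of a slightly less geometric argument.
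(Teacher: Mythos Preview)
Your main argument via Levy's lemma is essentially the paper's proof: the paper fixes $\ket{v}$, defines $f(U)=\Tr(\ketbra{v}U\ketbra{0}U^\dagger)$ on the unitary group, computes $\mathbb{E}f=1/d$, shows $f$ is $2$-Lipschitz (via a SWAP-trick bound $|f(U)-f(V)|\le\|U\otimes U^\dagger-V\otimes V^\dagger\|_\infty\le 2\|U-V\|_2$), and invokes the Watrous form of Levy's lemma $\Pr(f-\mathbb{E}f\ge\epsilon)\le 2\exp(-\epsilon^2 d/(25\pi\kappa^2))$ with $\kappa=2$. The only cosmetic difference is that you work on the sphere $S^{2d-1}$ rather than on $U(d)$; the Lipschitz constant and the mechanism are identical.

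Your Beta-distribution alternative is a genuinely different and strictly sharper route that the paper does not take. It bypasses concentration-of-measure machinery entirely by using the exact law of $|v_0|^2$, yielding the closed-form tail $(1-2^{-m}-\epsilon)^{d-1}$ and hence a sub-exponential (in fact, linear-in-$\epsilon$ exponent) bound that dominates the stated sub-Gaussian one throughout $\epsilon\in[0,1]$. This is more elementary and removes all ambiguity about the constant; the Levy approach, by contrast, is what one would reach for if the functional were less explicit.
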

\begin{proof}
    For an $m$-qubit unitary $U$, define $f(U) = \Tr(\ketbra{v} U \ketbra{0}U^{\dagger})$. For a given $\ket{v}$, $f(U)$ has the same distribution as $a$. By utilizing the fact that $\bE_{U} U \ketbra{0} U^{\dagger} = \frac{I}{d}$, the mean value of $f(U)$ is $\bE f(U) = \frac{1}{d}$, where $d = 2^m$ is the dimension of the system.

    By Levy's lemma, we have {\cite[Theorem 7.37]{watrous2018theory}}:
    \begin{equation}\label{eq:levy_inner}
        \Pr(f(U) - \bE f(U) \ge \epsilon) \le 2 \exp(-\frac{\epsilon^2 d}{25\pi \kappa^2}),
    \end{equation}
    where $\kappa$ is the Lipshitz constant  of the function $f(U)$. To analyze the Lipschitz constant of the function $f$, consider two unitary $U$ and $V$, we have:
    \begin{equation}
    \begin{split}
        &\abs{f(U) - f(V)} \\
        &= \big{|}\Tr( (\ketbra{v} \otimes \ketbra{0}) (U \otimes U^{\dagger}) S)\\
        & \ \ - \Tr( (\ketbra{v} \otimes \ketbra{0}) (V \otimes V^{\dagger}) S)\big{|} \\
        &= \abs{\Tr( (\ketbra{v} \otimes \ketbra{0}) (U \otimes U^{\dagger} - V\otimes V^{\dagger}) S))}\\
        & \le \norm{\ketbra{v} \otimes \ketbra{0}}_1 \norm{(U \otimes U^{\dagger} - V\otimes V^{\dagger}) S}_{\infty}\\
        &= \norm{(U \otimes U^{\dagger} - V\otimes V^{\dagger})}_{\infty}\\
        & \le \norm{(U \otimes U^{\dagger} - V\otimes U^{\dagger})}_{\infty} + \norm{(V \otimes U^{\dagger} - V\otimes V^{\dagger})}_{\infty}\\
        & = \norm{U - V}_{\infty} + \norm{ U^{\dagger} - V^{\dagger}}_{\infty} \\
        & \le 2 \norm{U-V}_2
    \end{split}
    \end{equation}
    That is, $\kappa = 2$. Substituting $\kappa=2$ into Eq.~\eqref{eq:levy_inner} leads to the result.
\end{proof}

\begin{corollary}
    Given constant $\epsilon = \mathcal{O}(1)$ and $0 < \delta < 1$, for sufficiently large $m$, we have that for two $m$-qubit Haar random states $\ket{u}$ and $\ket{v}$,
    \begin{equation}
        \Pr(\abs{\bra{u}\ket{v}}^2 \ge \epsilon) \le \delta.
    \end{equation}
\end{corollary}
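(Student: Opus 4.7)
The plan is to deduce this corollary as an immediate consequence of Lemma~\ref{lem:innerproduct} by choosing parameters so that the Gaussian-type tail bound shrinks below $\delta$ once $m$ is large. First I would set $\epsilon' = \epsilon - 2^{-m}$, and observe that for $m$ large enough (say $m \ge \log_2(2/\epsilon)$) we have $\epsilon' \ge \epsilon/2 > 0$, so that the event $\{\abs{\bra{u}\ket{v}}^2 \ge \epsilon\}$ coincides with the event $\{a \ge 2^{-m} + \epsilon'\}$ in the notation of Lemma~\ref{lem:innerproduct}.

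Next, I would apply Lemma~\ref{lem:innerproduct} directly to obtain
\begin{equation}
\Pr\!\left(\abs{\bra{u}\ket{v}}^2 \ge \epsilon\right) \;\le\; 2\exp\!\left(-\frac{(\epsilon')^2\, 2^m}{100\pi^2}\right) \;\le\; 2\exp\!\left(-\frac{\epsilon^2\, 2^m}{400\pi^2}\right).
\end{equation}
Since $\epsilon = \Theta(1)$ is fixed while $m \to \infty$, the right-hand side tends to $0$ exponentially fast in $2^m$, so for all sufficiently large $m$ (explicitly, $m$ satisfying $2^m \ge (400\pi^2/\epsilon^2)\log(2/\delta)$) the bound is at most $\delta$, which yields the claimed inequality.

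There is essentially no obstacle here: the corollary is simply a qualitative restatement of the quantitative concentration bound in Lemma~\ref{lem:innerproduct}, and the only subtlety is absorbing the $2^{-m}$ shift into the constant $\epsilon$, which is harmless because $2^{-m}$ becomes negligible compared to any fixed $\epsilon = \mathcal{O}(1)$ for large $m$. No new machinery (Levy's lemma, Lipschitz estimates, etc.) needs to be invoked beyond what is already used to prove Lemma~\ref{lem:innerproduct}.
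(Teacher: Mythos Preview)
Your proposal is correct and matches the paper's approach: the corollary is stated in the paper without an explicit proof, being an immediate consequence of Lemma~\ref{lem:innerproduct}, and your argument spells out exactly this straightforward deduction (absorbing the $2^{-m}$ shift and letting the exponential tail fall below $\delta$).
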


\begin{corollary}\label{col:innerproduct}
    Given two $m$-qubit Haar random states $\ket{u}$ and $\ket{v}$, let $M = \frac{3}{8} \ketbra{u} + \frac{1}{8} \ketbra{v}$ and denote the largest eigenvalue of $M$ as $\lambda_1$ and the second largest eigenvalue of $M$ as $\lambda_2$, with corresponding eigenvector $\ket{\phi_1}, \ket{\phi_2}$. Given constant $\epsilon = \mathcal{O}(1)$ and $0 < \delta < 1$, for sufficiently large $m$, with probability at least $1-\delta$, we have
    \begin{equation}
    \begin{split}
        \abs{\bra{u}\ket{\phi_1}}^2 &\ge 1-\epsilon, \\
        \abs{\bra{u}\ket{\phi_2}}^2 &\le \epsilon, \\
        \lambda_1 &\le \frac{3}{8} + \frac{\epsilon}{2}, \\
        \lambda_2 &\ge \frac{1}{8} - \frac{\epsilon}{2}.
    \end{split}
    \end{equation}
\end{corollary}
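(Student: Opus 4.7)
The proof proceeds by controlling the single quantity $a := \abs{\bra{u}\ket{v}}^2$, which governs the entire spectral structure of $M$. The first step is to invoke Lemma~\ref{lem:innerproduct}: for any constant target $\epsilon' > 0$, once $m$ is large enough that $2^{-m} < \epsilon'/2$, the tail bound yields $\Pr(a \ge \epsilon') \le 2\exp(-(\epsilon')^2 \, 2^m / (400\pi^2)) \le \delta$. I would condition on the event $a < \epsilon'$ from this point forward, with $\epsilon'$ to be chosen as a polynomial function of $\epsilon$ at the end.

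For the eigenvalue bounds, I would apply Lemma~\ref{lem:eigenvalue} with $\alpha = 3/8$ and $\beta = 1/8$, which yields the closed form
\begin{equation}
\lambda_{1,2} \;=\; \frac{1}{4} \,\pm\, \frac{1}{8}\sqrt{1 + 3a}.
\end{equation}
Using the elementary inequalities $1 \le \sqrt{1+3a} \le 1 + \tfrac{3a}{2}$ on $[0,1]$, I obtain $\lambda_1 \le \tfrac{3}{8} + \tfrac{3a}{16}$ and $\lambda_2 \ge \tfrac{1}{8} - \tfrac{3a}{16}$. Taking $\epsilon' \le 8\epsilon/3$ gives the desired $\lambda_1 \le 3/8 + \epsilon/2$ and $\lambda_2 \ge 1/8 - \epsilon/2$.

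For the eigenvector bounds, I would decompose $\ket{v} = \sqrt{a}\,e^{i\theta}\ket{u} + \sqrt{1-a}\,\ket{u_\perp}$ and introduce the ``idealized'' matrix $\tilde M := \tfrac{3}{8}\ketbra{u} + \tfrac{1}{8}\ketbra{u_\perp}$, whose top eigenvector is exactly $\ket{u}$, whose second eigenvector is $\ket{u_\perp} \perp \ket{u}$, and whose spectral gap is $1/4$. Since $\abs{\bra{v}\ket{u_\perp}}^2 = 1-a$, the perturbation satisfies $\|M - \tilde M\|_{\mathrm{tr}} = \tfrac{1}{8}\|\ketbra{v} - \ketbra{u_\perp}\|_{\mathrm{tr}} = \sqrt{a}/4$. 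Applying Lemma~\ref{lemma:pca} to $M$ and $\tilde M$ yields $\|\ketbra{\phi_1} - \ketbra{u}\|_{\mathrm{tr}} \le 2\sqrt{2\sqrt{a}}$, which via the pure-state identity $\|\ketbra{\phi_1} - \ketbra{u}\|_{\mathrm{tr}} = 2\sqrt{1 - \abs{\bra{u}\ket{\phi_1}}^2}$ gives $\abs{\bra{u}\ket{\phi_1}}^2 \ge 1 - 2\sqrt{a}$. An analogous argument for the second eigenvector via Corollary~\ref{cor:fisher} (applied with $r=2$ on the joint two-dimensional support) shows that $\ket{\phi_2}$ is close in trace distance to $\ket{u_\perp}$, and since $\ket{u_\perp} \perp \ket{u}$, this forces $\abs{\bra{u}\ket{\phi_2}}^2 \le \mathcal{O}(a^{1/4})$.

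The main ``obstacle'' is really just bookkeeping: the four conclusions all follow from making $a$ small, but each one pays a different conversion factor (linear in $a$ for the eigenvalues via the Taylor expansion of the square root, $\sqrt{a}$ for the principal eigenvector via Lemma~\ref{lemma:pca}, and $a^{1/4}$ for the second eigenvector via Corollary~\ref{cor:fisher}). One therefore chooses $\epsilon'$ as the minimum of a few polynomials in $\epsilon$, and since $\epsilon$ and $\delta$ are both constant, the Haar-concentration step still succeeds for all sufficiently large $m$. A sharper route would be to diagonalize the explicit $2\times 2$ block of $M$ on $\mathrm{span}\{\ket{u},\ket{u_\perp}\}$ directly, yielding $\abs{\bra{u}\ket{\phi_{1}}}^2 = 1 - \Theta(a)$ and $\abs{\bra{u}\ket{\phi_{2}}}^2 = \Theta(a)$, but the weaker lemma-based bounds already suffice for the stated corollary.
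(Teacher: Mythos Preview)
Your proposal is correct but takes a different route for the eigenvector bounds. The paper argues directly: it computes $\bra{u}M\ket{u} = \tfrac{3}{8} + \tfrac{1}{8}a \ge \tfrac{3}{8}$, writes this expectation in the eigenbasis as $\lambda_1(1-\epsilon_2) + \lambda_2\epsilon_2$ with $\epsilon_2 := \abs{\bra{u}\ket{\phi_2}}^2$, and combines with the eigenvalue estimates to get $\epsilon_2 \le \tfrac{\lambda_1 - 3/8}{\lambda_1 - \lambda_2} = O(a)$. This is a one-line variational argument that immediately yields \emph{linear} dependence on $a$ for both eigenvector overlaps simultaneously, since $\abs{\bra{u}\ket{\phi_1}}^2 = 1 - \epsilon_2$ (the vector $\ket{u}$ lies in the two-dimensional support of $M$). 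Your approach via Lemma~\ref{lemma:pca} and Corollary~\ref{cor:fisher} is more general-purpose perturbation theory and pays the price in the exponent of $a$: you get $1 - O(\sqrt{a})$ for $\abs{\bra{u}\ket{\phi_1}}^2$ and $O(a^{1/4})$ for $\abs{\bra{u}\ket{\phi_2}}^2$, forcing you to shrink $\epsilon'$ further. Both work in the constant-$\epsilon$ regime, but the paper's argument is sharper and more self-contained. Note also that your invocation of Corollary~\ref{cor:fisher} for $\ket{\phi_2}$ is unnecessary: once you have $\abs{\bra{u}\ket{\phi_1}}^2 \ge 1 - \epsilon$, the rank-two structure immediately gives $\abs{\bra{u}\ket{\phi_2}}^2 = 1 - \abs{\bra{u}\ket{\phi_1}}^2 \le \epsilon$, exactly as the paper uses.
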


\begin{proof}
    Given $\epsilon < 0.1$ and $0 < \delta < 1$, denote $\epsilon_1 = \frac{8\epsilon}{3}$. By Corollary \ref{col:innerproduct}, we have
    \begin{equation}
        \Pr(\abs{\bra{u}\ket{v}}^2 \ge \epsilon_1) \le \delta
    \end{equation}
    for sufficiently large $m$. Then, by Lemma \ref{lem:eigenvalue}, with probability $1-\delta$, the eigenvalues satisfy:
    \begin{equation}\label{eq:col_eigenvalue}
        \begin{split}
            &\lambda_1 = \frac{1}{4} + \frac{1}{2}\sqrt{1/16 + 3\epsilon_1/16} \le \frac{3}{8} + \frac{3\epsilon_1}{16}, \\
            &\lambda_2 = \frac{1}{2} - \lambda_1 \ge \frac{1}{8} -  \frac{3\epsilon_1}{16}, \\
        \end{split}
    \end{equation}
    Moreover, we have:
    \begin{equation}\label{eq:col_expectation1}
        \bra{u}M\ket{u} = \frac{3}{8} + \frac{1}{8} \abs{\bra{u}\ket{v}}^2 \ge \frac{3}{8}
    \end{equation}
    Suppose $\abs{\bra{u}\ket{\phi_1}}^2 = 1-\epsilon_2$, then $\abs{\bra{u}\ket{\phi_2}}^2 = \epsilon_2$, and the expectation value can be written as:
    \begin{equation}\label{eq:col_expectation2}
        \bra{u}M\ket{u} = \lambda_1 (1-\epsilon_2) + \lambda_2 \epsilon_2
    \end{equation}
    Combining Eq.~\eqref{eq:col_eigenvalue}, Eq.~\eqref{eq:col_expectation1} and Eq.~\eqref{eq:col_expectation2}, with probability $1-\delta$, we have:
    \begin{equation}
    \begin{split}
        \lambda_1 (1-\epsilon_2) + \lambda_2 \epsilon_2  &\ge \frac{3}{8}, \\
        \lambda_1 - \frac{3}{8} &\ge \epsilon_2 (\lambda_1 - \lambda_2), \\
        \frac{3\epsilon_1}{16(\lambda_1 - \lambda_2)} &\ge \epsilon_2.
    \end{split}   
    \end{equation}
    Here, $\lambda_1 - \lambda_2 \le \frac{1}{4} + \frac{3\epsilon_1}{8} \le \frac{1}{2}$ and thus $\epsilon_2 \le \frac{3\epsilon_1}{8} \le \epsilon$. That is, with probability $1-\delta$, 
    \begin{equation}
        \abs{\bra{u}\ket{\phi_1}}^2  = 1-\epsilon_2 \ge 1-\epsilon.
    \end{equation}
\end{proof}

\begin{corollary}\label{col:fisher}
    Consider the state ensemble $\mathcal{S}_1, \mathcal{S}_2$ defined in Eq.~\eqref{eq:fisher_info_SA}. Given $0 < \delta < 1$, for the observable $X_1 = X \otimes I^{\otimes n-1}$, for sufficiently large $n$, if we randomly choosen $\rho \in \mathcal{S}_1$, then with probability at least $1-\delta$, $F_{X_1} \ge 0.01$. 
\end{corollary}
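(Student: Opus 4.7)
The plan is to exploit the rank-$2$ structure of $\rho\in\mathcal{S}_1$ to reduce $F_{X_1}(\rho)$ to a compact product of two factors and bound each below by an explicit constant with high probability over the random choice of $\rho$. For a rank-$2$ state $\rho=\lambda_1\ketbra{\phi_1}+\lambda_2\ketbra{\phi_2}$ with $\lambda_1\neq\lambda_2$, the definition in Eq.~\eqref{eq:QFI} collapses to
\begin{equation}
F_{X_1}(\rho)=\frac{4(\lambda_1-\lambda_2)^2}{\lambda_1+\lambda_2}\bigl|\bra{\phi_1}X_1\ket{\phi_2}\bigr|^2,
\end{equation}
so the task reduces to lower bounding the spectral prefactor and the bilinear matrix element separately.

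For the spectral prefactor, I would apply Corollary~\ref{col:innerproduct} to the rescaled matrix $M=\rho/2$, whose coefficients $3/8$ and $1/8$ match those in the corollary's hypothesis. That corollary guarantees, for any constant $\epsilon=\mathcal{O}(1)$ and any $0<\delta<1$, that for sufficiently large $n$ the eigenvalues of $M$ lie within $\epsilon/2$ of $3/8$ and $1/8$ with probability at least $1-\delta/2$. Doubling, $\lambda_1\ge 3/4-\epsilon$ and $\lambda_2\le 1/4+\epsilon$, which forces $(\lambda_1-\lambda_2)^2/(\lambda_1+\lambda_2)\ge 1/4-\mathcal{O}(\epsilon)$.

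The main step is lower bounding $|\bra{\phi_1}X_1\ket{\phi_2}|$. I would use the explicit construction of $\mathcal{S}_1$ from Eq.~\eqref{eq:fisher_info_SA}, in which the two Haar-random pure states $\ket{u},\ket{v}$ defining $\rho$ are coupled in such a way that the matrix element $\bra{u}X_1\ket{v}$ is deterministically close to $1$, rather than the typical Haar-random size $\mathcal{O}(2^{-n/2})$; this is precisely what makes the Fisher information of $\mathcal{S}_1$ large while the sibling ensemble $\mathcal{S}_2$ (with equal eigenvalues) has zero Fisher information. Corollary~\ref{col:innerproduct} also guarantees that $\ket{\phi_1}$ and $\ket{\phi_2}$ lie, up to $\mathcal{O}(\sqrt{\epsilon})$ error, in the orthonormal basis obtained by Gram-Schmidt on $\{\ket{u},\ket{v}\}$, while Lemma~\ref{lem:innerproduct} controls $|\bra{u}\ket{v}|$ and therefore the orthogonalization correction. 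Expanding $\ket{\phi_1},\ket{\phi_2}$ in this basis and using $\|X_1\|_\infty=1$, a few applications of the triangle inequality then yield $|\bra{\phi_1}X_1\ket{\phi_2}|\ge 1-\mathcal{O}(\sqrt{\epsilon})$ on the same high-probability event.

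A union bound over the two $\delta/2$-failure events combines the two factors into $F_{X_1}(\rho)\ge 4\bigl(1/4-\mathcal{O}(\epsilon)\bigr)\bigl(1-\mathcal{O}(\sqrt{\epsilon})\bigr)\ge 0.01$, achieved by taking $\epsilon$ to be a sufficiently small universal constant and $n$ large enough for both concentration statements to apply. The main obstacle I anticipate is propagating the $\mathcal{O}(\sqrt{\epsilon})$ eigenvector perturbations through the bilinear form $\bra{\phi_1}X_1\ket{\phi_2}$ without blowing up constants; the key is that both eigenvectors live in the same two-dimensional subspace $\mathrm{span}\{\ket{u},\ket{v}\}$, which reduces the whole estimate to a $2\times 2$ matrix calculation that can be carried out explicitly.
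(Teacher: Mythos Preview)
Your proposal rests on a misreading of the ensemble. The state in Eq.~\eqref{eq:fisher_info_SA} is not rank~$2$ but rank~$3$: writing $\ket{u}=U\ket{0^{n-1}}$ and $\ket{v}=V\ket{0^{n-1}}$,
\[
\rho \;=\; \tfrac12\,\ketbra{0}\otimes\ketbra{u}\;+\;\ketbra{1}\otimes\Bigl(\tfrac38\ketbra{u}+\tfrac18\ketbra{v}\Bigr),
\]
so the eigenvectors are $\ket{0}\otimes\ket{u}$ (eigenvalue $1/2$) and $\ket{1}\otimes\ket{\phi_1},\,\ket{1}\otimes\ket{\phi_2}$, where $\ket{\phi_1},\ket{\phi_2}$ diagonalize the $(n-1)$-qubit matrix $M=\tfrac38\ketbra{u}+\tfrac18\ketbra{v}$. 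Your identification ``$M=\rho/2$'' is therefore wrong: Corollary~\ref{col:innerproduct} is meant to be applied to this $(n-1)$-qubit block $M$, not to a rescaling of the full $\rho$.

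Because of this misreading, the mechanism you invoke for the matrix element is also incorrect. The states $\ket{u},\ket{v}$ are \emph{independent} Haar-random $(n-1)$-qubit vectors; there is no coupling making $\bra{u}X_1\ket{v}$ ``deterministically close to~$1$'' (any such inner product is $\mathcal O(2^{-n/2})$). What actually makes $F_{X_1}$ large is the first-qubit block structure: since $X_1=X\otimes I$, the only nonzero cross terms in Eq.~\eqref{eq:QFI} are between $\ket{0}\otimes\ket{u}$ and $\ket{1}\otimes\ket{\phi_i}$, and
\[
\bigl|\bra{0,u}\,X_1\,\ket{1,\phi_1}\bigr| \;=\; \bigl|\bra{0}X\ket{1}\bigr|\cdot\bigl|\braket{u}{\phi_1}\bigr| \;=\; \bigl|\braket{u}{\phi_1}\bigr|,
\]
which is $\ge 1-\epsilon$ with high probability by Corollary~\ref{col:innerproduct}. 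The relevant spectral prefactor is then $(\tfrac12-\lambda_1)^2/(\tfrac12+\lambda_1)$ with $\lambda_1\le\tfrac38+\tfrac{\epsilon}{2}$, not a gap between the two eigenvalues of $M$. This is exactly the paper's computation. Your side remark that $\mathcal{S}_2$ has ``equal eigenvalues'' is also off: $\mathcal{S}_2$ has the same spectrum as $\mathcal{S}_1$; its Fisher information vanishes because the first qubit sits in the $\ket{\pm}$ basis, so $X_1$ is diagonal in the eigenbasis of $\rho$ and every cross term in Eq.~\eqref{eq:QFI} is zero.
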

\begin{proof}
    For randomly selected $\rho_A \in \mathcal{S}_1$, the three eigenvector is $\frac{1}{2} \ge \lambda_1 \ge \lambda_2$, and denote the corresponding eigenstates as $\ket{0} \otimes \ket{u}, \ket{1} \otimes \ket{\phi_1}$ and $\ket{1} \otimes \ket{\phi_2}$. Then, the Fisher information can be calculated as:
    \begin{equation}
    \begin{split}
        F_{X_1} &= 2[\frac{(\frac{1}{2}-\lambda_1)^2}{\frac{1}{2}+\lambda_1} \abs{\bra{0}X\ket{1}}^2 \abs{\bra{u}\ket{\phi_1}}^2\\
        &+ \frac{(\frac{1}{2}-\lambda_2)^2}{\frac{1}{2}+\lambda_2} \abs{\bra{0}X\ket{1}}^2 \abs{\bra{u}\ket{\phi_2}}^2] \\
        &\ge \frac{(\frac{1}{2}-\lambda_1)^2}{\frac{1}{2}+\lambda_1} \abs{\bra{u}\ket{\phi_1}}^2
    \end{split}
    \end{equation}
    By Corollary \ref{col:innerproduct}, for sufficiently small $\epsilon$  and sufficient large $n$, we have
    \begin{equation}
        F_{X_1} \ge \frac{(\frac{1}{2} - \frac{3}{8}-\epsilon)^2}{\frac{1}{2} + \frac{3}{8} +\epsilon} (1-\epsilon)^2 \ge 0.01
    \end{equation}
    with probability at least $1-\delta$.
\end{proof}

\section{Additional Proofs on the Sample Complexity of Purification-assisted Learning Tasks} \label{app:proof_upper}

\subsection{State Moment Estimation}\label{app:purity_upper}
It is known that, with sample complexity $\mathcal{O}(\frac{d^3}{\epsilon^2})$, one can generate the estimator $\hat{\rho}$ of a $d$-dimensional state $\rho$, with trace distance $\norm{\hat{\rho}-\rho}_\mathrm{tr}\le\epsilon<1$ \cite{KUENG2017tomo}.
Thus, we can rewrite $\hat{\rho}=\rho+c$ with $\norm{c}_\mathrm{tr},\norm{c}_{\infty}\le\epsilon$.
Here, $\norm{\cdot}_\infty$ denotes the maximal absolute eigenvalue of the target matrix.
Then, we have
\begin{equation}
\norm{\hat{\rho}^t-\rho^t}_\mathrm{tr}=\norm{(\rho+c)^t-\rho^t}_\mathrm{tr}=\norm{\sum_{j=1}^t C_t}_\mathrm{tr}.
\end{equation}
where $C_t = \sum_j M_j^{(t)}$ is the sum of all matrices with $t$ terms of $c$ and $(b-t)$ terms of $\rho$.  Using the fact that $\norm{AB}_\mathrm{tr}\le\norm{A}_\infty\norm{B}_\mathrm{tr}$ and $\norm{\rho}_\infty,\norm{\rho}_\mathrm{tr}\le 1$, for arbitrary $M_j^{(t)}$, we have $\norm{M_j^{(t)}}_\mathrm{tr} \le \epsilon^t$. Then, 
\begin{equation}
    \norm{C_t}_\mathrm{tr} \le \binom{n}{t} \epsilon^t,
\end{equation}
and for $\epsilon < \frac{1}{t}$, we have
\begin{equation}     
    \abs{\Tr(\hat{\rho}^t)-\Tr(\rho^t)} \le \norm{\hat{\rho}^t-\rho^t}_\mathrm{tr} \le \sum_{t=1}^n \norm{C_t}_\mathrm{tr} \le (1+\epsilon)^t - 1 \le 2\epsilon t.
\end{equation}

Therefore, by setting the sample complexity to be $\mathcal{O}(\frac{t^2d^3}{\epsilon^2})$, one can estimate $\Tr(\rho^t)$ to $\epsilon$ accuracy.
Considering that $t$ and the dimension of $\rho_B$ are all constants, we can estimate $\Tr(\rho_A^t)$ to $\epsilon$ accuracy with a constant number of copies of the purification $\Psi_{AB}$.

\subsection{Quantum Virtual Cooling}\label{app:cooling_upper}
We first prove Observation~\ref{obs:cooling}.
Using the Schmidt decomposition,  $\Psi_{AB}=\sum_{j,k}\sqrt{\lambda_j\lambda_k}\ketbra{\psi_A^j,\psi_B^j}{\psi_A^k,\psi_B^k}$, $\rho_A=\sum_j\lambda_A^j\ketbra{\psi_A^j}{\psi_A^j}$, and $\rho_B=\sum_j\lambda_B^j\ketbra{\psi_B^j}{\psi_B^j}$, we have
\begin{equation}
\begin{aligned}
&\Tr_B[\Psi_{AB}(\mathbb{I}_A\otimes\rho_B^{t-1})]\\
=&\sum_{j,k,l}\sqrt{\lambda_j\lambda_k}\lambda_l^{t-1}\Tr_B[\ketbra{\psi_A^j,\psi_B^j}{\psi_A^k,\psi_B^k}(\mathbb{I}_A\otimes\ketbra{\psi_B^l}{\psi_B^l})]\\
=&\sum_{j,k,l}\sqrt{\lambda_j\lambda_k}\lambda_l^{t-1}\ketbra{\psi_A^j}{\psi_A^k}\braket{\psi_B^l}{\psi_B^j}\braket{\psi_B^k}{\psi_B^l}\\
=&\sum_{j,k,l}\sqrt{\lambda_j\lambda_k}\lambda_l^{t-1}\ketbra{\psi_A^j}{\psi_A^k}\delta_{j,l}\delta_{k,l}\\
=&\sum_l\lambda_l^t\ketbra{\psi_A^l}{\psi_A^l}=\rho_A^t,
\end{aligned}
\end{equation}
which finishes the proof. 

Based on Observation~\ref{obs:cooling}, we can rewrite $\Tr(O\rho_A^t)=\Tr[\Psi_{AB}(O\otimes\rho_B^{t-1})]$.
Thus, a straightforward protocol to estimate $\Tr(O\rho_A^t)$ is first reconstructing the estimator $\hat{\rho}_B$ for $\rho_B$ and then estimating the expectation value of observable $O\otimes\hat{\rho}_B^{t-1}$ on state $\Psi_{AB}$.
Denoting $\hat{M}$ to be the estimator of $\Tr[\Psi_{AB}(O\otimes\hat{\rho}_B^{t-1})]$, we have
\begin{equation}
\begin{aligned}
&\abs{\Tr[\Psi_{AB}(O\otimes\rho_B^{t-1})]-\hat{M}}\\
\le&\abs{\Tr[\Psi_{AB}(O\otimes\rho_B^{t-1})]-\Tr[\Psi_{AB}(O\otimes\hat{\rho}_B^{t-1})]}\\
&+\abs{\Tr[\Psi_{AB}(O\otimes\hat{\rho}_B^{t-1})]-\hat{M}}.
\end{aligned}
\end{equation}
Thus, if we can make sure of $\abs{\Tr[\Psi_{AB}(O\otimes\rho_B^{t-1})]-\Tr[\Psi_{AB}(O\otimes\hat{\rho}_B^{t-1})]}\le\epsilon_1$ and $\abs{\Tr[\Psi_{AB}(O\otimes\hat{\rho}_B^{t-1})]-\hat{M}}\le\epsilon_2$ with $0\le\epsilon_1,\epsilon_2$ and $\epsilon_1+\epsilon_2=\epsilon$, we can estimate $\Tr(O\rho_A^t)$ to $\epsilon$ accuracy.
We will show that these two conditions can be easily satisfied with constant sample complexity.

Firstly, as shown in the previous section, with constant sample complexity, one can make sure that $\norm{\rho_B^{t-1}-\hat{\rho}_B^{t-1}}_\infty\le\frac{\epsilon_1}{\norm{O}_\infty}$ and thus $\norm{O\otimes(\rho_B^{t-1}-\hat{\rho}_B^{t-1})}_\infty\le\epsilon_1$ using single-copy state tomography.
Therefore, with constant sample complexity, we can make sure $\abs{\Tr\{\Psi_{AB}[O\otimes(\rho_B^{t-1}-\hat{\rho}_B^{t-1})]\}}=\abs{\bra{\Psi_{AB}}O\otimes(\rho_B^{t-1}-\hat{\rho}_B^{t-1})\ket{}\Psi_{AB}}\le\norm{O\otimes(\rho_B^{t-1}-\hat{\rho}_B^{t-1})}_\infty\le\epsilon_1$.
After obtaining the classical description of $\hat{\rho}_B^{t-1}$, the estimation of $\hat{M}$ is achieved by normal expectation value estimation protocols.
For example, one can first rotate $\Psi_{AB}$ to the eigenbasis of $O\otimes\hat{\rho}_B^{t-1}$ and then perform computational basis measurements.
Then, the variance is bounded by $\mathrm{Var}(\hat{M})=\Tr[\Psi_{AB}(O^2\otimes\hat{\rho}_B^{2t-2})]-\Tr[\Psi_{AB}(O\otimes\hat{\rho}_B^{t-1})]^2\le\norm{O}_\infty^2\norm{\hat{\rho}_B^{t-1}}_\infty^2$.
According to the result of tomography, we can make sure that $\norm{\hat{\rho}_B^{t-1}}_\infty\le 1+\frac{\epsilon_1}{\norm{O}_\infty}$.
Therefore, one can prove that the variance of $\hat{M}$ is also bounded by some constant.
Subsequently, one can estimate the value of $\Tr[\Psi_{AB}(O\otimes\hat{\rho}_B^{t-1})]$ to $\epsilon_2$ accuracy with constant sample complexity, which summarizes the proof.

In practical scenarios, we are normally required to estimate the value of $\frac{\Tr(O\rho_A^t)}{\Tr(\rho_A^t)}$, where $\Tr(\rho_A^t)$ in the denominator represents the normalization factor.
When considering the case where $\rho_A$ can be purified with a constant number of ancilla qubits, the denominator, $\Tr(\rho_A^t)=\Tr(\rho_B^t)$, is a constant.
Thus, the extra denominator would not cause a big increase in the sample complexity.

\subsection{Quantum Principal Component Analysis}\label{app:qpca_upper}

To prove Observation~\ref{obs:pca}, we first rewrite the Schmidt decomposition as $\ket{\Psi_{AB}}=\sqrt{\lambda_0}\ket{\psi_A^0}\otimes\ket{\psi_B^0}+\sum_{j\neq0}\sqrt{\lambda_j}\ket{\psi_A^j}\otimes\ket{\psi_B^j}$, with $\lambda_0>\lambda_j$ for all $j$, $\lambda_0$ is the largest eigenvalue of $\rho_A$ and $\rho_B$ and $\ket{\psi_A^0}$ and $\ket{\psi_B^0}$ are principal components of $\rho_A$ and $\rho_B$, respectively.
Due to the orthogonality between $\ket{\psi_B^0}$ and $\ket{\psi_B^j}$, we have
\begin{equation}
\begin{aligned}
&\frac{1}{\lambda_0}\bra{\psi_B^0}\Psi_{AB}\ket{\psi_B^0}\\
=&\frac{1}{\lambda_0}\bra{\psi_B}(\lambda_0\ketbra{\psi_A^0}{\psi_A^0}\otimes\ketbra{\psi_B^0}{\psi_B^0})\ketbra{\psi_B^0}=\ketbra{\psi_A^0}{\psi_A^0}.
\end{aligned}
\end{equation}

The proof of Theorem~\ref{thm:pca} is also based on quantum tomography.
Specifically, one can first use state tomography to reconstruct the density matrix $\rho_B$ and classically compute its biggest eigenvalue $\hat{\lambda}_B^0$ and the corresponding eigenstate $\hat{\psi}_B^0$.
Then, one estimates the expectation value of $\frac{1}{\hat{\lambda}_B^0}O\otimes\hat{\psi}_B^0$ on quantum state $\Psi_{AB}$ to get the final estimator $\hat{M}$.
The total statistical error can be written as
\begin{equation}
\begin{aligned}
&\abs{\Tr(O\psi_A^0)-\hat{M}}\\
\le&\abs{\frac{1}{\lambda_B^0}\Tr[\Psi_{AB}(O\otimes\psi_B^0)]-\frac{1}{\hat{\lambda}_B^0}\Tr[\Psi_{AB}(O\otimes\hat{\psi}_B^0)]}\\
+&\abs{\frac{1}{\hat{\lambda}_B^0}\Tr[\Psi_{AB}(O\otimes\hat{\psi}_B^0)]-\hat{M}}.
\end{aligned}
\end{equation}
Similarly to quantum virtual cooling, to make sure that the statistical error of quantum principal component analysis is low, one needs to make sure that the two terms on the right-hand side of the last equation are low enough.

We first analyze the first term.
The natural question concerning it is that, if we can learn $\rho_B$, whose difference between the largest and second largest eigenvalues is a constant, to $\epsilon$ accuracy using tomography, can we bound the difference between $\psi_B^0$ and $\hat{\psi}_B^0$?
Using conclusions made in Lemma~\ref{lemma:pca}, we can upper bound the distance between eigenvalues and eigenstates as $\abs{\lambda_B^0-\hat{\lambda}_B^0}\le\epsilon$ and $\norm{\psi_B^0-\hat{\psi}_B^0}_\mathrm{tr}\le2\sqrt{\frac{2\epsilon}{\Delta}}$.
Using the property of $\frac{1}{1-x}\le 1+2x$ and $\frac{1}{1+x}\le 1-2x$ for all $0\le x\le \frac{1}{2}$, we have $\abs{\frac{1}{\lambda_B^0}-\frac{1}{\hat{\lambda}_B^0}}\le\frac{2\epsilon}{(\lambda_B^0)^2}\le\frac{2\epsilon}{\Delta^2}$.
Therefore, we have
\begin{equation}
\begin{aligned}
&\abs{\frac{1}{\lambda_B^0}\Tr[\Psi_{AB}(O\otimes\psi_B^0)]-\frac{1}{\hat{\lambda}_B^0}\Tr[\Psi_{AB}(O\otimes\hat{\psi}_B^0)]}\\
\le&
\norm{\frac{\psi_B^0}{\lambda_B^0}-\frac{\hat{\psi}_B^0}{\hat{\lambda}_B^0}}_\mathrm{tr}\\
=&\norm{\left(\frac{\psi_B^0}{\lambda_B^0}-\frac{\hat{\psi}_B^0}{\lambda_B^0}\right)+\left(\frac{\hat{\psi}_B^0}{\lambda_B^0}-\frac{\hat{\psi}_B^0}{\hat{\lambda}_B^0}\right)}_\mathrm{tr}\\
\le&\frac{1}{\lambda_B^0}\norm{\psi_B^0-\hat{\psi}_B^0}_{\mathrm{tr}}+\abs{\frac{1}{\lambda_B^0}-\frac{1}{\hat{\lambda}_B^0}}\\
\le&\frac{2}{\Delta}\sqrt{\frac{2\epsilon}{\Delta}}+\frac{2\epsilon}{\Delta^2},
\end{aligned}
\end{equation}
where we use the fact that $\norm{\hat{\psi}_B^0}_\mathrm{tr}=1$.
Thus, if $\norm{\rho_B-\hat{\rho}_B}_{\mathrm{tr}}\le\frac{\Delta^3\epsilon_1^2}{16}$ with $0<\epsilon_1\le 1$, one can make sure that $\abs{\frac{1}{\lambda_B^0}\Tr[\Psi_{AB}(O\otimes\psi_B^0)]-\frac{1}{\hat{\lambda}_B^0}\Tr[\Psi_{AB}(O\otimes\hat{\psi}_B^0)]}\le\epsilon_1$.
To achieve this, the sample complexity of tomography based on single-copy measurements on $\rho_B$ is upper bounded by $\mathcal{O}(\frac{256d_B^3}{\Delta^6\epsilon_1^4})$, which is not related with the qubit number of $\rho_A$ according to the requirements raised in Theorem~\ref{thm:pca}.

It is also easy to show that one can make sure $\abs{\frac{1}{\hat{\lambda}_B^0}\Tr[\Psi_{AB}(O\otimes\hat{\psi}_B^0)]-\hat{M}}\le\epsilon_2$ also with constant sample complexity.
The estimator $\hat{M}_2$ is actually constructed by measuring the observable $\frac{1}{\hat{\lambda}_B^0}O\otimes\hat{\psi}_B^0$ on the quantum state $\Psi_{AB}$, which is similarly achieved by first rotating $\Psi_{AB}$ to the eigenbasis of $O\otimes\hat{\psi}_B^0$ and performing computational basis measurements.
The sample complexity for accurately estimating it also depends on the spectral norm.
As $\frac{\Delta^3\epsilon_1^2}{16}<\frac{\Delta}{2}$ and $\norm{\hat{\psi}_B^0}_{\infty}=1$, we have $\hat{\lambda}_B^0\ge\lambda_B^0-\frac{\Delta^3\epsilon_1^2}{16}\ge\frac{\Delta}{2}$ and $\norm{\frac{1}{\hat{\lambda}_B^0}O\otimes\hat{\psi}_B^0}_\infty\le\frac{2\norm{O}_\infty}{\Delta}$.
Therefore, to make sure $\abs{\frac{1}{\hat{\lambda}_B^0}\Tr[\Psi_{AB}(O\otimes\hat{\psi}_B^0)]-\hat{M}}\le\epsilon_2$, the sample complexity is upper bounded by $\mathcal{O}(\frac{\norm{O}_\infty^2}{\Delta^2\epsilon_2^2})$, which is also irrelevant with the qubit number of $\rho_A$.

\subsection{Quantum Fisher Information}\label{app:fisher_upper}
According to the definition of quantum Fisher information, Eq.~\eqref{eq:new_fisher}, $F_O(\rho_A)$ is the summation over a $\mathcal{O}(1)$ number of terms when the rank of $\rho_A$ is upper bounded by $\mathcal{O}(1)$.
Therefore, if one can accurately estimate each term with $\mathcal{O}(1)$ sample complexity, one can also accurately estimate $F_O(\rho_A)$ with $\mathcal{O}(1)$ sample complexity.
Each term is composed of two factors, the eigenvalue factor $\frac{(\lambda_B^j-\lambda_B^k)^2}{\lambda_B^j\lambda_B^k(\lambda_B^j+\lambda_B^k)}$ and the eigenstate factor $\Tr\left[\Psi_{AB}^{\otimes 2}\left(O^{\otimes 2}\otimes P_{B}^{jk}\right)\right]$.
Due to the assumptions that $\lambda_A^j,\lambda_A^k=\Theta(1)$, $\abs{\lambda_A^j-\lambda_A^k}=\Theta(1)$, and $\norm{O}_\infty=\mathcal{O}(1)$, the values of these two terms can be both bounded by $\mathcal{O}(1)$.
Therefore, if these two terms can be both accurately estimated, the product of them can be accurately estimated.

Similar with all the previous tasks, the protocol to estimate quantum Fisher information is also based on tomography of $\rho_B$.
As the dimension of system $B$ is a constant, with constant sample complexity, one can accurately estimate its matrix form and make sure that $\norm{\rho_B-\hat{\rho}_B}_{\mathrm{tr}}\le\epsilon$ with arbitrarily small $\epsilon$.
After that, one classically calculates all its nonzero eigenvalues and eigenstates, $\{\hat{\lambda}_B^j\}_j$ and $\{\hat{\psi}_B^j\}_j$, and then uses them to calculate the eigenvalue factor and construct the observable $\hat{P}_B^{jk}=\ketbra{\hat{\psi}_B^j}{\hat{\psi}_B^k}\otimes\ketbra{\hat{\psi}_B^k}{\hat{\psi}_B^j}+h.c.$.
According to Lemma~\ref{cor:fisher}, the accuracy of eigenvalues is bounded by the accuracy of tomography, $\abs{\lambda_B^i-\hat{\lambda}_B^i}\le\epsilon$.
Therefore, the error of the eigenvalue factor can be bounded.

To estimate the eigenstate factor, we can treat $O^{\otimes 2}\otimes \hat{P}_B^{jk}$ as the observable and estimate its expectation value on $\Psi_{AB}^{\otimes 2}$. 
Denoting the estimator to be $\hat{M}$, the total error is upper bounded by
\begin{equation}
\begin{aligned}
e_1+e_2=&\abs{\Tr\left[\Psi_{AB}^{\otimes 2}\left(O^{\otimes 2}\otimes P_{B}^{jk}\right)\right]-\Tr\left[\Psi_{AB}^{\otimes 2}\left(O^{\otimes 2}\otimes \hat{P}_B^{jk}\right)\right]}+\abs{\hat{M}-\Tr\left[\Psi_{AB}^{\otimes 2}\left(O^{\otimes 2}\otimes \hat{P}_B^{jk}\right)\right]}.
\end{aligned}
\end{equation}
According to results about eigenstates made in Lemma~\ref{cor:fisher} and Lemma~\ref{lemma:fisher2}, $e_1$ can be bounded as the distance $\norm{p_{\mathcal{S}_2}^{jk}-\hat{P}_B^{jk}}_\mathrm{tr}$ is bounded by the accuracy of tomography.
Notice that only single-copy measurements on $\Psi_{AB}$ is allowed, one cannot directly estimate $O^{\otimes 2}\otimes \hat{P}_B^{jk}$ on two copies of $\Psi_{AB}$.
Therefore, it is not straightforward to use the matrix norm of $O^{\otimes 2}\otimes \hat{P}_B^{jk}$ to analyze the sample complexity for reducing $e_2$.
Note that $\hat{P}_B^{jk}$ can be written as the summation of two observables with tensor structure, $\hat{P}_B^{jk}=\frac{1}{2}\left(\hat{P}_{B+}^{jk}\otimes\hat{P}_{B+}^{jk}+\hat{P}_{B-}^{jk}\otimes\hat{P}_{B-}^{jk}\right)$ with $\hat{P}_{B+}^{jk}=\ketbra{\hat{\psi}_B^j}{\hat{\psi}_B^k}+h.c.$ and $\hat{P}_{B-}^{jk}=i\ketbra{\hat{\psi}_B^j}{\hat{\psi}_B^k}+h.c.$, where $i$ is the unit imaginary number.
Therefore, one can estimate the expectation value of $O\otimes\hat{P}_{B+}^{jk}$ or $O\otimes\hat{P}_{B-}^{jk}$ on single-copies of $\Psi_{AB}$ and multiply them to get the final estimator $\hat{M}=\frac{1}{2}\left(\hat{M}_+\times\hat{M}_++\hat{M}_-\times\hat{M}_-\right)$.
The error $e_2$ now can be bounded as
\begin{equation}
\begin{aligned}
e_2\le&\frac{1}{2}\abs{\hat{M}_+\times\hat{M}_+-\Tr\left[\Psi_{AB}\left(O\otimes\hat{P}_{B+}^{jk}\right)\right]^2}+\frac{1}{2}\abs{\hat{M}_-\times\hat{M}_--\Tr\left[\Psi_{AB}\left(O\otimes\hat{P}_{B-}^{jk}\right)\right]^2}\\
\le&\frac{1}{2}\abs{\hat{M}_++\Tr\left[\Psi_{AB}\left(O\otimes\hat{P}_{B+}^{jk}\right)\right]}\times\abs{\hat{M}_+-\Tr\left[\Psi_{AB}\left(O\otimes\hat{P}_{B+}^{jk}\right)\right]}+\frac{1}{2}\abs{\hat{M}_-+\Tr\left[\Psi_{AB}\left(O\otimes\hat{P}_{B-}^{jk}\right)\right]}\times\abs{\hat{M}_--\Tr\left[\Psi_{AB}\left(O\otimes\hat{P}_{B-}^{jk}\right)\right]}.
\end{aligned}
\end{equation}
Now, as norms of $O\otimes\hat{P}_{B+}^{jk}$ and $O\otimes\hat{P}_{B-}^{jk}$ are all bounded, we can use constant sample complexity to bound the value of $e_2$.
This concludes our proof as now errors of both the eigenvalue factor and eigenstate factor can be bounded using tomography on $\rho_B$ with constant sample complexity.

\section{Additional Proofs on the Hardness of Learning Tasks} \label{app:proof_hardness}
\subsection{Learning Tree Formalization}\label{app:learning_tree}

In this subsection, we recap the learning tree formalization introduced in~\cite{aharonov1997fault,bubeck2020entanglement,chen2022memory,huang2021information,chen2023complexity,chen2024optimal}, which is the standard reasoning about adaptive protocols for quantum learning. This formulation will be the key subroutine in proving all the sample lower bounds in the rest of this section.

An arbitrary (adaptive) protocol for learning a quantum state $\rho$ using measurements from a set $\mathcal{M}$ on $c$ copies of $\rho$ at a time can be formulated as follows. Starting from the root node, we select $c$ copies of $\rho$, perform some POVM chosen from $\mathcal{M}$ on $\rho^{\otimes c}$, and move to its child node corresponding to the outcome. In this work, we focus on $c=1$ for single-copy measurements and $c=2$ for quantum measurements with $k\leq n$ qubits of quantum memory (see Definition~\ref{def:learning_k}). Quantitatively, we define the tree representation as follows:
\begin{definition}[Tree representation learning protocols~\cite{chen2024optimal}]
Given an unknown $n$-qubit quantum state $\rho$, any learning protocol can be represented as a rooted tree $\mathcal{T}$ of depth $T$ with every node on the tree recording the measurement outcomes so far. Below are the properties of $\mathcal{T}$:
\begin{itemize}
    \item We associate a probability $p^\rho(u)$ with any node $u$ on the tree $\mathcal{T}$.
    \item The probability associated with the root $r$ of the tree is $p^\rho(r)=1$.
    \item At each non-leaf node $u$, we measure a batch of $\rho^{\otimes c}$ using an adaptively chosen POVM $M_u=\{F_{s}^u\}_s\in \mathcal{M}$, and obtain a classical outcome $s$. The child node $v$ associated with classical outcome $s$ of the node $u$ is connected via the edge $e_{u,s}$.
    \item If $v$ is the child of $u$ via the edge $e_{u,s}$, the probability associated with $v$ is
    \begin{align*}
    p^\rho(v)=p^\rho(u)\Tr(F_s^u \rho^{\otimes c}).
    \end{align*}
    \item Each root-to-leaf path is of length $T$. Note that for a leaf node $\ell$, $p^\rho(\ell)$ is the probability of being in state $\ell$ at the end of the learning protocol. We further denote the set of leaves of $\mathcal{T}$ by $\mathrm{leaf}(\mathcal{T})$.
\end{itemize}
\end{definition}

To prove lower bounds for learning quantum data, we usually consider the reduction from the distinguishing problem between two ensembles of quantum states. For the distinguishing problem in this paper, we can further reduce it to a distinguishing task of the following form between the maximally mixed state and an ensemble: Given access to copies of an unknown state $\rho$, we are to distinguish the following two cases:
\begin{itemize}
\item $\rho$ is the maximally mixed state $\rho_m=\mathbb{I}/2^n$.
\item $\rho$ is sampled from a known ensemble $\mathcal{S}$ over $n$-qubit mixed states.
\end{itemize}
This prototypical distinguishing task is also known as the many-versus-one distinguishing problem. 

There are a line of tools for proving lower bound of a many-versus-one distinguishing problem. Suppose $\mathcal{T}$ be the tree representation of a learning protocol for the many-versus-one distinguishing task that solves it with probability $p_{\text{succ}}$. Le Cam's two-point method~\cite{yu1997assouad} indicates that 
\begin{align}
p_{\text{succ}}&\leq d_{\mathrm{TV}}(\mathbb{E}_{\rho\sim \mathcal{S}}p^\rho, p^{\rho_m})\nonumber\\
&\equiv \frac12\sum_{\ell\in \mathrm{leaf}(\mathcal{T})}\abs{\mathbb{E}_{\rho\sim \mathcal{S}}p^\rho(\ell)- p^{\rho_m}(\ell)},
\end{align}
where $d_{\mathrm{TV}}$ is the total variation distance between two probability distributions.
For each leaf $\ell\in \mathrm{leaf}(\mathcal{T})$, we define the likelihood ratio as
\begin{align}
L(\ell)\equiv \frac{\mathbb{E}_{\rho\sim \mathcal{S}}p^{\rho}(\ell)}{p^{\rho_m}(\ell)}.
\end{align}
We also define the likelihood ratio for each edge $e_{u,s}$ and each state $\rho$:
\begin{align}
L_\rho(u, s)\equiv \frac{p^{\rho}(s|u)}{p^{\rho_m}(s|u)},\quad L_\rho(\ell)\equiv \frac{p^{\rho}(\ell)}{p^{\rho_m}(\ell)}.
\end{align}

Based on the standard Le Cam's method, we can further have a one-side likelihood ratio argument~\cite{chen2022memory}: we have
\begin{align}
&d_{\mathrm{TV}}(\mathbb{E}_{\rho\sim \mathcal{S}}p^\rho, p^{\rho_m})\leq\Pr_{\ell\sim p^{\rho_m}(\ell), \rho\sim S}[L_\rho(\ell)\leq \alpha] + 1-\alpha,\\
&d_{\mathrm{TV}}(\mathbb{E}_{\rho\sim \mathcal{S}}p^\rho, p^{\rho_m})\leq \Pr_{\ell\sim p^{\rho_m}(\ell)}[L(\ell)\leq \alpha] + 1-\alpha,
\end{align}
for any $0<\alpha<1$. Based on this one-side likelihood ratio argument, Ref.~\cite{chen2023complexity} proposed a martingale argument, which is later extended to the context of learning from quantum data~\cite{chen2024optimal}.

\begin{lemma}[Martingale argument, {\cite[Lemma 7]{chen2024optimal}}]\label{lem:martingale}
Suppose there exists is a $\delta>0$ such that, for every node $u$ we have 
\begin{align}
\mathbb{E}_{\rho\sim \mathcal{S}}\mathbb{E}_{s\sim p^{\rho_m}(s|u)}[(L_\rho(u, s)-1)^2]\leq \delta.
\end{align}
We then have
\begin{align}
\Pr_{\ell\sim p^{\rho_m}(\ell), \rho\sim \mathcal{S}}[L_\rho(\ell)\leq 0.9]\leq 0.1 + c\delta T
\end{align}
for some constant $c$.
\end{lemma}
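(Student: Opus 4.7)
The plan is to exploit the fact that, for each fixed $\rho$, the process $M_t := L_\rho(u_t)$ along a root-to-leaf path sampled under the reference distribution $p^{\rho_m}$ is a nonnegative martingale satisfying $\mathbb{E}[M_t]=1$ at every depth $t$. This follows from the multiplicative identity $L_\rho(u_{t+1}) = L_\rho(u_t)\,L_\rho(u_t,s_t)$ together with the normalization $\mathbb{E}_{s\sim p^{\rho_m}(s|u)}[L_\rho(u,s)]=1$. I would then combine Doob's maximal inequality (to truncate $M_t$ at a constant level) with a second-moment bound on the stopped process, whose variance is exactly what the hypothesis of the lemma controls.

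Concretely, fix a cutoff $K>1$ (eventually $K=10$) and introduce the stopping time $\tau = \min\{t : M_t \geq K\}\wedge T$, writing $\tilde M_t := M_{t\wedge \tau}$. Doob's maximal inequality for the nonnegative martingale $M_t$ gives $\Pr[\tau < T] = \Pr[\max_t M_t \geq K] \leq \mathbb{E}[M_T]/K = 1/K$ pointwise in $\rho$, and therefore also after averaging over $\rho\sim\mathcal{S}$.

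Next I would bound $\mathbb{E}[(\tilde M_T - 1)^2]$. Since $\tilde M_t$ is a martingale, its increments are orthogonal, and the increment $M_{t+1}-M_t = M_t(L_\rho(u_t,s_t)-1)$ has conditional $L^2$-norm $M_t^2\,\mathbb{E}_s[(L_\rho(u_t,s)-1)^2\,|\,u_t]$. Because $M_t<K$ strictly before the stopping time, one obtains
\begin{equation*}
\mathbb{E}\bigl[(\tilde M_T-1)^2\bigr]\;\leq\; K^2\,\sum_{t=0}^{T-1}\,\mathbb{E}_{u_t\sim p^{\rho_m}}\mathbb{E}_\rho\mathbb{E}_s\bigl[(L_\rho(u_t,s)-1)^2\bigr]\;\leq\; K^2\,T\,\delta,
\end{equation*}
where the last step invokes the hypothesis uniformly in the (random) node $u_t$. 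Chebyshev's inequality then yields $\Pr[\tilde M_T \leq 0.9] \leq K^2 T\delta / (0.1)^2 = 100\,K^2\,T\,\delta$.

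Finally, I observe $\{L_\rho(\ell)\leq 0.9\}\subseteq \{\tilde M_T\leq 0.9\}\cup\{\tau<T\}$ since $\tilde M_T = L_\rho(\ell)$ on $\{\tau = T\}$; combining the two bounds gives $\Pr[L_\rho(\ell)\leq 0.9] \leq 100\, K^2\, T\, \delta + 1/K$, and choosing $K=10$ yields the claimed bound with $c=10^4$. The one subtlety I would be most careful about is the commuting of expectations in the variance calculation: the path distribution $p^{\rho_m}$ does not depend on $\rho$, so $u_t$ and $\rho$ are independent under the joint reference-and-ensemble law, which is precisely what allows the pointwise-in-$u$ hypothesis to be inserted inside the summation over $t$. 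The truncation level $K$ is the only tunable parameter and only affects the absolute constant $c$.
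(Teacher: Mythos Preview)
The paper does not supply its own proof of this lemma; it is quoted verbatim from \cite[Lemma~7]{chen2024optimal} and used as a black box. Your argument is the standard one that gives the lemma its name, and it is correct: $L_\rho(u_t)$ is a nonnegative unit-mean martingale under $p^{\rho_m}$, Doob's maximal inequality controls the probability of ever exceeding $K$, the stopped martingale has orthogonal increments whose conditional second moments are bounded by $K^2$ times the single-step quantity in the hypothesis, and Chebyshev plus the union with the overshoot event closes the argument. Your handling of the one genuine subtlety---that the reference path law $p^{\rho_m}$ is independent of $\rho$, so the per-node hypothesis can be inserted after dropping the indicator $\mathbf{1}_{t<\tau}$ and swapping $\mathbb{E}_\rho$ with $\mathbb{E}_{u_t}$---is exactly right. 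This is essentially the proof one finds in the cited reference (and in the earlier \cite{chen2023complexity} where the technique originates); the only freedom is the choice of $K$, which as you note affects only the constant $c$.
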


\subsection{Purity Estimation}\label{app:purity_lower}
\subsubsection{Without quantum memory}
In this part, we consider the task of estimating purity $\Tr(\rho^2)$ given an unknown quantum state $\rho$, which can be regarded as computing state moments $\Tr(\rho^k)$ at $k=2$ or a special case of estimating nonlinear quantum observables $\Tr(O\rho^2)$ at $O=I$. We show that even purity estimation for states with known constant rank requires exponential sample complexity. Quantitatively:

\begin{theorem}[Exponential overhead of purity estimation without purification]\label{thm:purity_lower}
Given an $n$-qubit low-rank state $\rho$ with $n\geq 2$ and its rank (even constant), any protocol that predicts $\Tr(\rho^2)$ within constant additive error for arbitrary $\rho$ requires at least $\Omega(2^{n/2})$ sample complexity.
\end{theorem}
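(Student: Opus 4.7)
The plan is to reduce the purity-estimation task to a many-versus-many distinguishing problem between the two ensembles $\mathcal{S}_1,\mathcal{S}_2$ given in Sec.~\ref{sec:lower_bound_purity}, and then bound the total-variation distance between their measurement-outcome distributions in the learning-tree formalism of Appendix~\ref{app:learning_tree}. For Haar-random $\ket{u},\ket{v}$, Corollary~\ref{col:innerproduct} gives $|\braket{u}{v}|^2 = \mathcal{O}(1/d)$ with overwhelming probability, so $\Tr(\rho^2)$ concentrates near $0.82$ on $\mathcal{S}_1$ and near $0.5$ on $\mathcal{S}_2$; the resulting constant purity gap of roughly $0.32$ ensures that any protocol which estimates purity within additive error below $0.16$ must distinguish the two ensembles with high probability. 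Le Cam's two-point method then bounds the distinguishing success probability by $\tfrac12(1 + d_{\mathrm{TV}}(p_{\mathcal{S}_1,T},p_{\mathcal{S}_2,T}))$, so it suffices to show this TV distance is $o(1)$ whenever $T = o(2^{n/2})$.

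Both ensembles have mean $\mathbb{E}_{\rho\sim\mathcal{S}_i}[\rho] = \mathbb{I}/d$, which I will exploit by using the maximally mixed state $\rho_m$ as a common reference and bounding $d_{\mathrm{TV}}(\mathbb{E}_{\rho\sim\mathcal{S}_i}p^\rho,p^{\rho_m}) \le \tfrac12\sqrt{\chi^2(\mathbb{E}_{\rho\sim\mathcal{S}_i}p^\rho\,\|\,p^{\rho_m})}$ for each $i$ and then invoking the triangle inequality. The chi-squared divergence unfolds as
\begin{equation*}
\chi^2 \;=\; \mathbb{E}_{\rho,\rho'\sim\mathcal{S}_i} \, \mathbb{E}_{\ell\sim p^{\rho_m}}\!\left[\prod_{t=1}^{T} L_\rho(u_{t-1},s_t)\, L_{\rho'}(u_{t-1},s_t)\right] - 1,
\end{equation*}
which, by iterated conditional expectations along the tree, becomes $\mathbb{E}_{\rho,\rho'}\mathbb{E}_{\rho_m}[\prod_t \Lambda_{\rho,\rho'}(u_{t-1})] - 1$ with $\Lambda_{\rho,\rho'}(u) = 1 + \eta(\rho,\rho';u)$. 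Writing $\rho = \mathbb{I}/d + \Delta_\rho$ and using POVM completeness to cancel the linear pieces gives $\eta(\rho,\rho';u) = d\sum_s c_s^u \bra{\phi_s^u}\Delta_\rho\ket{\phi_s^u}\bra{\phi_s^u}\Delta_{\rho'}\ket{\phi_s^u}$ for any rank-one refinement $\{c_s^u\ketbra{\phi_s^u}\}$ of the POVM at node $u$.

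The heart of the proof is a Haar average over the pure-state factors defining $\rho$ and $\rho'$. Combining the standard Haar moments $\mathbb{E}_u|\braket{\phi}{u}|^2 = 1/d$ and $\mathbb{E}_u|\braket{\phi_s}{u}|^2|\braket{\phi_{s'}}{u}|^2 = (1+|\braket{\phi_s}{\phi_{s'}}|^2)/(d(d+1))$ with the POVM-level identities $\sum_s c_s^u = d$, $\sum_{s,s'} c_s^u c_{s'}^u |\braket{\phi_s^u}{\phi_{s'}^u}|^2 = d$, and $\sum_{s,s'} c_s^u c_{s'}^u |\braket{\phi_s^u}{\phi_{s'}^u}|^4 \le d$ --- all of which hold for \emph{every} POVM and therefore survive adaptivity --- one obtains $\mathbb{E}_{\rho,\rho'}[\eta] = 0$ and $\mathbb{E}_{\rho,\rho'}[\eta^2] = \mathcal{O}(1/d)$. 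Because the linear term vanishes, the Taylor expansion of $(1+\eta)^T$ gives a leading $\binom{T}{2}\mathbb{E}[\eta^2] = \mathcal{O}(T^2/d)$ contribution to $\chi^2$, hence $d_{\mathrm{TV}}(p_{\mathcal{S}_1,T},p_{\mathcal{S}_2,T}) = \mathcal{O}(T/\sqrt{d})$, forcing $T = \Omega(\sqrt{d}) = \Omega(2^{n/2})$.

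The principal obstacle is controlling the higher-order tail of $(1+\eta)^T$ rigorously, since $\eta$ can in principle reach $\mathcal{O}(d)$ on atypical POVM-state pairs and the adaptive tree may have exponentially many branches. I plan to address this either by a Haar-concentration argument that confines $\eta$ to $\tilde{\mathcal{O}}(1/\sqrt{d})$ simultaneously on every node along a realized root-to-leaf path with overwhelming probability (so that the second-order Taylor expansion of $(1+\eta)^T$ is valid up to negligible error), or by directly evaluating the second ensemble moment $\mathbb{E}_{\rho,\rho'}[\rho^{\otimes T}\otimes \rho'^{\otimes T}]$ via the Weingarten calculus and contracting it against the deterministic POVM tree. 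The latter route is more painful computationally but entirely removes any dependence on the branching structure, making the extension to higher state-moment estimation (e.g., $\Tr(\rho^t)$ for $t\geq 3$) and to the complementary $\mathcal{S}_2$-vs-$\rho_m$ bound essentially automatic.
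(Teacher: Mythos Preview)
Your reduction and the per-step computations $\mathbb{E}_{\rho,\rho'}[\eta]=0$, $\mathbb{E}_{\rho,\rho'}[\eta^2]=\mathcal{O}(1/d)$ are correct, but the $\chi^2$ route is genuinely different from the paper's, and the obstacle you flag is exactly where the real work lies on that route. The $\chi^2$-divergence demands an \emph{upper} bound on $\mathbb{E}_{\ell\sim\rho_m}\bigl[(\mathbb{E}_\rho L_\rho(\ell))^2\bigr]$; your heuristic ``$(1+\eta)^T\approx 1+\binom{T}{2}\eta^2$'' is not yet rigorous because the $\eta_t$'s along an adaptive path share the same $(\rho,\rho')$ and sit on random, adaptively chosen nodes, so they are neither independent nor identically distributed. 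Your option~(a) runs into union-bound trouble over the potentially exponential fan-out of the tree; option~(b) is feasible in principle but amounts to a full permutation-sum expansion of $\mathbb{E}_\rho[\rho^{\otimes T}]$ that must then be \emph{upper}-bounded leaf by leaf, which is substantially more work than the problem requires.

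The paper sidesteps this entirely with the one-sided identity $\TV(p^{\rho_m},\mathbb{E}_{\mathcal{S}}p^\rho)=\mathbb{E}_{\ell\sim\rho_m}\bigl[\max(0,\,1-\mathbb{E}_\mathcal{S} L_\rho(\ell))\bigr]$, which needs only a \emph{lower} bound on $\mathbb{E}_\rho L_\rho(\ell)$. For each fixed leaf with rank-one states $\ket{\phi_1},\dots,\ket{\phi_T}$ along its path, one writes $\rho=q\ketbra{u}+(1-q)\ketbra{v}$, expands $\prod_t d\bra{\phi_t}\rho\ket{\phi_t}$ into $2^T$ terms indexed by $S\subseteq[T]$, and applies Lemma~\ref{lem:high_moment_haar} to each Haar factor $\mathbb{E}_u\prod_{t\in S}\abs{\braket{\phi_t}{u}}^2$; summing over $S$ gives $\mathbb{E}_\rho L_\rho(\ell)\ge(1+\Theta(T/2^n))^{-T}$ \emph{uniformly in $\ell$}. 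Adaptivity costs nothing because the bound holds leaf by leaf, and the triangle inequality over $\mathcal{S}_1,\mathcal{S}_2$ then yields $\TV(\mathbb{E}_{\mathcal{S}_1}p^\rho,\mathbb{E}_{\mathcal{S}_2}p^\rho)=\mathcal{O}(T^2/2^n)$, hence $T=\Omega(2^{n/2})$. The asymmetry between lower and upper Haar-moment bounds is precisely what makes the paper's argument a two-line computation where yours would need significant additional bookkeeping; if you still want to pursue $\chi^2$, your option~(b) will eventually land on the same permutation sums, but you will have to upper-bound them rather than lower-bound them, which is the harder direction.
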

\begin{proof}
We consider the following distinguish task. One is given an unknown quantum state $\rho$ randomly drawn from one of the following two state ensembles $\mathcal{S}_1$ and $\mathcal{S}_2$ that only contains pure states or rank-$2$ mixed states:
\begin{itemize}
    \item Ensemble $\mathcal{S}_1$: $\rho=0.9U\ket{0}\bra{0}U^\dagger+0.1V\ket{0}\bra{0}V^\dagger$ is rank-$2$ random state with Haar randomly chosen $U$ and $V$.
    \item Ensemble $\mathcal{S}_2$: $\rho=\frac{1}{2}(U\ket{0}\bra{0}U^\dagger+V\ket{0}\bra{0}V^\dagger)$ is rank-$2$ random state with Haar randomly chosen $U$ and $V$.
\end{itemize}

Here, we show that distinguishing these two ensembles requires at least exponentially many single-copy measurements. We adapt the learning tree scheme in Appendix~\ref{app:learning_tree} for learning with single-copy measurements in the above lower bound. The most general single-copy measurement scheme can be realized by performing rank-$1$ POVMs adaptively on each copy of $\rho$~\cite{chen2022memory}:
\begin{enumerate}
    \item We start from the root $r$. 
    \item At each node $u$, we perform rank-1 POVM $\{2^n w^u_s \ketbra{\psi^u_s}\}$. After obtaining a measurement result $s$, go to the son node $s$ of $u$.
    \item Finally, we stop at a leaf node $l$.
\end{enumerate}
We assume that there are totally $T+1$ nodes, $u_1 = r, u_{T+1} = l$.
For a given state $\rho$, denote $p_{\rho}$ as the probability distribution of reaching leaf nodes. The probability of reaching one leaf node $l$ is denoted by $p_{\rho}(l)$.
 
Suppose we have an algorithm that, given an unknown state $\rho$ that is guaranteed to have rank at most $2$, outputs the purity within constant additive error $\epsilon\leq \mathcal{O}(1)$ with probability at least $5/6$. In $\mathcal{S}_1$, $\Tr(\rho^2)=0.82+0.18\Tr(U\ket{0}\bra{0}U^\dagger V\ket{0}\bra{0}V^\dagger)$. In $\mathcal{S}_2$, we have $\Tr(\rho^2)=\frac12+\frac12\Tr(U\ket{0}\bra{0}U^\dagger V\ket{0}\bra{0}V^\dagger)$. We have
\begin{align}
\mathbb{E}_{\rho\sim \mathcal{S}_1}[\Tr(\rho^2)]&=0.82+0.18\Tr(U\ket{0}\bra{0}U^\dagger V\ket{0}\bra{0}V^\dagger)\nonumber\\
&=0.82+\frac{0.09}{d}.
\end{align}
Given than $n\geq 2$ and thus $d\geq 4$, we have $\mathbb{E}_{\rho\sim \mathcal{S}_1}[\Tr(\rho^2)]\leq 5/8$. By Markov inequality, we have with high probability, a random state $\rho$ in $\mathcal{S}_2$ that has purity at most $4/5$. Thus, we can use our purity estimation algorithm to distinguish these two cases with high probability for some constant additive $\epsilon$.

By Le Cam's two-point method~\cite{yu1997assouad}, to successfully distinguish these two state ensembles with high probability, we require
\begin{align}
    \TV(\bE_{\mathcal{S}_1} p_{\rho}, \bE_{\mathcal{S}_2} p_{\rho}) \ge \Theta(1).
\end{align}
We can verify that $\TV(p_{\rho_m},\bE_{\mathcal{S}_1} p_{\rho})\leq \TV(p_{\rho_m},\bE_C p_{\rho}) \leq 1 -  (1+\frac{T}{2^{n-2}})^{-T}$ and $\TV(p_{\rho_m},\bE_{\mathcal{S}_2} p_{\rho})\leq \TV(p_{\rho_m},\bE_C p_{\rho}) \leq 1 -  (1+\frac{T}{2^{n-2}})^{-T}$, where $S_C$ is the ensemble containing Haar random states. Here, the second inequality hold because you can write $\mathbb{E}_{U,V}[(\frac 12(U\ket{0}\bra{0}U^\dagger+V\ket{0}\bra{0}V^\dagger))^k]$ into $2^k$ terms and each term is bounded above by $\mathbb{E}_U[\frac{1}{2^k}(U\ket{0}\bra{0}U^\dagger)^k]$ (and similar for the first). Thus we have
\begin{align}
\TV(\bE_{\mathcal{S}_1} p_{\rho}, \bE_{\mathcal{S}_2} p_{\rho})\leq 2\left(1 -  \left(1+\frac{T}{2^{n-2}}\right)^{-T}\right), 
\end{align}
which combined with the previous equation gives $T\geq\Omega(2^{n/2})$.
\end{proof}

\subsubsection{With bounded quantum memory}\label{app:purity_memory_lower}
In this part, we show that given a low-rank quantum state and its rank (even constant), estimating the purity with a $k$-qubit quantum memory to a constant additive error requires at least $\Omega(\min\{2^{n-k},2^{n/2}\})$ copies of $\rho$. We still consider distinguishing between the same ensembles:
\begin{itemize}
    \item Ensemble $\mathcal{S}_1$: $\rho=0.9U\ket{0}\bra{0}U^\dagger+0.1V\ket{0}\bra{0}V^\dagger$ is rank-$2$ random state with Haar randomly chosen $U$ and $V$.
    \item Ensemble $\mathcal{S}_2$: $\rho=\frac{1}{2}(U\ket{0}\bra{0}U^\dagger+V\ket{0}\bra{0}V^\dagger)$ is rank-$2$ random state with Haar randomly chosen $U$ and $V$.
\end{itemize}

We formalize the model of learning with $k\leq n$ qubits of quantum memory~\cite{chen2022memory,chen2024optimal} (see Fig.~\ref{fig:bounded_memory}).

\begin{definition}[Estimating purity with $k\leq n$ qubits of quantum memory]\label{def:learning_k}
The algorithm uses $N/2$ rounds and each round input two copies of $\rho$. In each round, the algorithm initializes a $k$-qubit quantum memory $\varrho$. We first get a new copy of $\rho$, select an $(n\to k)$ POVM $\{M_s^\dagger M_s\}_s$ to measure $\rho$, obtain the outcome $s$ w.p. $\Tr(\rho\cdot M_s^\dagger M_s)$, and store the $k$-qubit post-measurement state $\varrho_{s,k}=M_s\rho M_s^\dagger/\Tr(\rho M_s^\dagger M_s)$ into the $k$-qubit quantum memory. We then take another copy of $\rho$, and select an $(n+k\to 0)$ POVM $\{M_{s'}^\dagger M_{s'}\}_{s'}$ to measure the new copy and the joint state $\varrho_{s,k}\otimes\rho$, obtain the outcome $s'$ w.p. $\Tr((\varrho_{s,k}\otimes\rho)\cdot M_{s'}^\dagger M_{s'})$. After $N/2$ rounds, the algorithm predicts purity $\Tr(\rho^2)$ of $\rho$ based on the classical output. The total sample complexity of the protocol is $N$.
\end{definition}

We note that the definition here for learning with $k$ qubits of quantum memory is different from the original one~\cite{chen2022memory}: the latter one is able to extract information from unlimited copies of $\rho$ before each measurement while the definition here only allows extracting information from two copies before each measurement. Our definition can be regarded as learning with $2$-copy measurements and $k$ qubits of quantum memory~\cite{chen2024optimal}.

The learning tree scheme now becomes the following scheme.
\begin{itemize}
    \item We start from the root $r$;
    \item At each node $u$, we perform a two-copy rank-$1$ POVM $\{F_u^s=2^nw^s_u\ket{\psi^s_u}\bra{\psi^s_u}\}_s$ that satisfies Definition~\ref{def:learning_k}. After obtaining a measurement result $s$, we go to the son node $v$ of $u$.
    \item Finally, we reach a leaf $l$. We assume that the nodes and results corresponding to $l$ are $\{(u_t,s_t)\}$ and $u_1=r$.
\end{itemize}

Similar to the previous argument using Le Cam's two-point method~\cite{yu1997assouad}, we now need to compute $\TV(\bE_{\mathcal{S}_1} p_\rho,\bE_{\mathcal{S}_2} p_\rho)$. To compute a bound for $\TV(\bE_{\mathcal{S}_1} p_\rho,\bE_{\mathcal{S}_2} p_\rho)$, we only need to compute $\TV(\bE_{\mathcal{S}_1} p_\rho,p_{\rho_m})$ and $\TV(\bE_{\mathcal{S}_2} p_\rho,p_{\rho_m})$, and then apply the triangle inequality. We now consider the following values:
\begin{align}\label{eq:TV_rhoq_rhom}
\TV(\bE_Q p_\rho,p_{\rho_m}),
\end{align}
where $Q$ corresponding to the ensemble $S_Q$ that contains $\rho=qU\ket{0}\bra{0}U^\dagger+(1-q)V\ket{0}\bra{0}V^\dagger$, which are rank-$2$ random states with Haar randomly chosen $U$ and $V$, and some constant $q>0$. We can further write \eqref{eq:TV_rhoq_rhom} as:
\begin{align}
\begin{split}
\TV(\bE_Q p_\rho,p_{\rho_m})&=\frac{1}{2}\sum_l\abs{p_{\rho_m}(l)-E_Qp_\rho(l)}\\
&=\sum_l\max\{0,p_{\rho_m}(l)-E_Qp_\rho(l)\}\\
&=\bE_{l\sim p_{\rho_m}}\max\{0,1-E_QL_{\rho}(l)\}\\
&\leq\bE_{l\sim p_{\rho_m},Q}\max\{0,1-L_{\rho}(l)\},\label{eq:tv_ratio}
\end{split}
\end{align}
where $L_\rho(l)=p_\rho(l)/p_{\rho_m}(l)$ is the likelihood ratio between the two cases of reaching leaf $l$, and the last line follows from the convexity of $\max\{0,x\}$ function. We now compute $L_\rho(l)$ and $\bE_Q L_\rho(l)$:
\begin{align}
L_\rho(l)=\prod_{t=1}^T\frac{\Tr(F_{u_t}^{s_t}\rho^{\otimes2})}{\Tr(F_{u_t}^{s_t}\rho_m^{\otimes2})}=\frac{\Tr(\bigotimes_{t=1}^TF_{u_t}^{s_t}\rho^{\otimes2T})}{\Tr(\bigotimes_{t=1}^TF_{u_t}^{s_t}\rho_m^{\otimes2T})},
\end{align}
where we denote $\rho_m=I/2^n$ for simplicity. Average over $\rho$, we have two terms every $\rho$. Regarding these terms, we can split the $T$ pairs ($2T$ copies) of states into $T_{00},T_{01},T_{10},T_{11}$ that contains the terms of $U\ket{0}\bra{0}U^\dagger\otimes U\ket{0}\bra{0}U^\dagger$, $U\ket{0}\bra{0}U^\dagger\otimes V\ket{0}\bra{0}V^\dagger$, $V\ket{0}\bra{0}V^\dagger\otimes U\ket{0}\bra{0}U^\dagger$, and $V\ket{0}\bra{0}V^\dagger\otimes V\ket{0}\bra{0}V^\dagger$. We denote $S^U_{T_{00},T_{01},T_{10}}$ as the sum of all permutations on all two copies in $T_{00}$, the first copy in $T_{01}$, and the second copy in $T_{10}$. Similarly we can define $S^V_{T_{01},T_{10},T_{11}}$. We denote $s_U=2\abs{T_{00}}+\abs{T_{01}}+\abs{T_{10}}$ and $s_V=2T-s_V$. By summing over all possible partitions of $[T]$ into $T_{00},T_{01},T_{10},T_{11}$, we can compute the $L(l)=\bE_Q L_\rho(l)$ as
\begin{align}
\begin{split}
L(l)&=\bE_Q L_\rho(l)\\
&=\bE_{U,V} \frac{\Tr(\bigotimes_{t=1}^TF_{u_t}^{s_t}\cdot(qU\ket{0}\bra{0}U^\dagger+(1-q)V\ket{0}\bra{0}V^\dagger)^{\otimes2T})}{\Tr(\bigotimes_{t=1}^TF_{u_t}^{s_t}\cdot\rho_m^{\otimes2T})}\\
&=\sum_{T_{00},T_{01},T_{10},T_{11}}\frac{q^{s_U}(1-q)^{s_V}(2^n)^{2T}}{2^n...(2^n+s_U-1)2^n...(2^n+s_V-1)}\cdot\frac{\Tr(\bigotimes_{t=1}^TF_{u_t}^{s_t}\cdot S^U_{T_{00},T_{01},T_{10}}\otimes S^V_{T_{01},T_{10},T_{11}})}{\Tr(\bigotimes_{t=1}^TF_{u_t}^{s_t})}\\
&\geq\left(1-\frac{4T^2}{2^n}\right)\sum_{T_{00},T_{01},T_{10},T_{11}}q^{s_U}(1-q)^{s_V}\cdot\frac{\Tr(\bigotimes_{t=1}^TF_{u_t}^{s_t}\cdot S^U_{T_{00},T_{01},T_{10}}\otimes S^V_{T_{01},T_{10},T_{11}})}{\prod_{t=1}^T\Tr(F_{u_t}^{s_t})}.
\end{split}
\end{align}
We want to show that with high probability $0.99-\gamma 2^{k-n} T$, each individual term
\begin{align}
\frac{\Tr(\bigotimes_{t=1}^TF_{u_t}^{s_t}\cdot S^U_{T_{00},T_{01},T_{10}}\otimes S^V_{T_{01},T_{10},T_{11}})}{\prod_{t=1}^T\Tr(F_{u_t}^{s_t})}
\end{align}
is bounded below by $0.99$ for some constant $\gamma>0$. Suppose we can prove this, we can then show that with high probability $0.9\cdot\gamma'2^{k-n} T$, each $L_\rho(l)$ is bounded below by $0.9-(1-\frac{4T^2}{2^n})$ for some constant $\gamma'>0$. By plugging this into \eqref{eq:tv_ratio}, we can bound the TV distance to prove the lower bound of $T=\Omega(\min\{2^{n-k},2^{n/2}\})$.

Now, we consider how to show each individual term is bounded below by $0.99$ with high probability $0.99-\gamma 2^{k-n} T$ for some constant $\gamma>0$. We will need the following lemma:
\begin{lemma}[Lemma 16,~\cite{chen2024optimal}]\label{lem:split}
Given two positive numbers $x,y$, suppose $\rho_x$ and $\rho_y$ are two positive semi-definite operators on $x$ and $y$ qubits each of $2^n$-dimensional. Suppose $S_x$, $S_y$, and $S_{x+y}$ are the sum of all permutations on the $x$, $y$, and $x+y$ qubits, we have
\begin{align}
\Tr(\rho_x\otimes \rho_y S_{x+y})\geq\Tr(\rho_x S_x)\Tr(\rho_y S_y).
\end{align}
\end{lemma}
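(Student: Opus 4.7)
My plan is to reduce the statement to a geometric inequality about overlaps with the symmetric subspace, and then prove the latter via a coset decomposition combined with a multinomial identity. First, by bilinearity and the spectral decomposition of positive semidefinite operators, it suffices to handle rank-one $\rho_x=\ketbra{\phi}$ and $\rho_y=\ketbra{\chi}$ with arbitrary vectors $\phi\in H^{\otimes x}$ and $\chi\in H^{\otimes y}$, where $H=\mathbb{C}^{2^n}$. Using the identity $S_m=m!\,\Pi_m$ with $\Pi_m$ the projector onto the symmetric subspace $\mathrm{Sym}^{m}$, the claim reduces to
\begin{equation*}
\binom{x+y}{x}\,\|\Pi_{x+y}(\phi\otimes\chi)\|^{2}\;\ge\;\|\Pi_x\phi\|^{2}\,\|\Pi_y\chi\|^{2}.
\end{equation*}
Since $\Pi_{x+y}=\Pi_{x+y}(\Pi_x\otimes\Pi_y)$, I may replace $\phi$ by $\Pi_x\phi/\|\Pi_x\phi\|$ (and $\chi$ analogously), so without loss of generality $\phi,\chi$ are unit vectors in $\mathrm{Sym}^{x}$ and $\mathrm{Sym}^{y}$, and the target simplifies to $\|\Pi_{x+y}(\phi\otimes\chi)\|^{2}\ge \binom{x+y}{x}^{-1}$.

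Second, I would decompose $S_{x+y}$ as the disjoint union of the $\binom{x+y}{x}$ left cosets of $S_x\times S_y$ in $S_{x+y}$. Picking a transversal $C$ and using that $\Pi_x\otimes\Pi_y$ fixes every element of $\mathrm{Sym}^x\otimes\mathrm{Sym}^y$, I obtain
\begin{equation*}
\Pi_{x+y}(\phi\otimes\chi)\;=\;\frac{1}{\binom{x+y}{x}}\sum_{\sigma\in C}P_\sigma(\phi\otimes\chi),
\end{equation*}
so the target becomes $\sum_{\sigma\in C}\bra{\phi,\chi}P_\sigma\ket{\phi,\chi}\ge 1$. Since the identity coset contributes exactly $1$, this amounts to showing that the sum of overlaps over the nontrivial cosets has nonnegative real part.

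The nontrivial-coset sum is the main obstacle, since individual overlaps $\bra{\phi,\chi}P_\sigma\ket{\phi,\chi}$ can be complex-valued with negative real part. My plan is to expand $\phi$ and $\chi$ in the orthonormal Dicke-type bases of $\mathrm{Sym}^{x}$ and $\mathrm{Sym}^{y}$ indexed by nonnegative compositions $\vec k$ of $x$ and $\vec l$ of $y$, use the closed-form overlap $\Pi_{x+y}(\ket{\vec k}\otimes\ket{\vec l})=\sqrt{\binom{x}{\vec k}\binom{y}{\vec l}/\binom{x+y}{\vec k+\vec l}}\,\ket{\vec k+\vec l}$, and reduce the goal to a multinomial inequality that can be checked sector-by-sector in each total content $\vec m=\vec k+\vec l$. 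The equality cases appear to be exactly product Dicke vectors, so the inequality is tight and the sector-wise argument must be delicate. As a fallback I would induct on $y$: using the decomposition $S_{x+y}=\sum_{j=1}^{x+y}P_{(j,\,x+y)}\,S_{[x+y-1]}$ to pull out the rightmost coordinate, I would apply the inductive hypothesis to each summand after partially contracting $\rho_y$, and argue that the cross terms from the nontrivial transpositions contribute nonnegatively by the positive semidefiniteness of $\rho_y$.
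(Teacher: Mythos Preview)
The paper does not prove this lemma; it is quoted from \cite{chen2024optimal} (their Lemma~16) and used as a black box, so there is no in-paper proof to compare against.

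Your reductions are all sound: bilinearity reduces to rank-one $\rho_x=\ketbra{\phi}$, $\rho_y=\ketbra{\chi}$; the identities $S_m=m!\,\Pi_m$ and $\Pi_{x+y}=\Pi_{x+y}(\Pi_x\otimes\Pi_y)$ let you take $\phi\in\mathrm{Sym}^x$, $\chi\in\mathrm{Sym}^y$ unit vectors; and the left-coset decomposition correctly leaves you with the target $\sum_{\tau\in C}\bra{\phi,\chi}P_\tau\ket{\phi,\chi}\ge 1$. The gap is in how you plan to close this last step. The Dicke-basis route cannot be carried out ``sector-by-sector in each total content $\vec m$'': the product-state constraint on $(a_{\vec k}),(b_{\vec l})$ does not decompose across sectors, and in a single sector the inner sum can vanish by cancellation while the corresponding weight $\sum_{\vec k+\vec l=\vec m}|a_{\vec k}|^2|b_{\vec l}|^2$ is positive (already for $x=y=1$, $\phi=\ket{+}$, $\chi=\ket{-}$, the $(1,1)$ sector contributes~$0$). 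So the inequality genuinely mixes sectors. Your induction fallback is not wrong in spirit, but as stated it does not isolate why the transposition cross terms are nonnegative.

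The observation that finishes the argument cleanly is one step beyond what you wrote. Because $\phi,\chi$ are symmetric, the quantity $\bra{\phi,\chi}P_\sigma\ket{\phi,\chi}$ is constant on \emph{double} cosets $(S_x\times S_y)\backslash S_{x+y}/(S_x\times S_y)$, which are indexed by the number $r\in\{0,\ldots,\min(x,y)\}$ of positions crossing between the two blocks. Taking the representative $\tau_r=(1,x{+}1)\cdots(r,x{+}r)$, a direct index computation gives
\[
\bra{\phi,\chi}P_{\tau_r}\ket{\phi,\chi}\;=\;\Tr\big(\phi_r\,\chi_r\big),
\]
where $\phi_r=\Tr_{[r+1,x]}\ketbra{\phi}$ and $\chi_r=\Tr_{[r+1,y]}\ketbra{\chi}$ are the $r$-body marginals. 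These are positive semidefinite, so every double-coset contribution is nonnegative; the $r=0$ coset already supplies the~$1$, and the inequality follows. This replaces both your Dicke-basis expansion and the induction, and pinpoints exactly where positive semidefiniteness enters.
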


Via this lemma, we have each individual term satisfies
\begin{align}
&\quad\frac{\Tr(\bigotimes_{t=1}^TF_{u_t}^{s_t}\cdot S^U_{T_{00},T_{01},T_{10}}\otimes S^V_{T_{01},T_{10},T_{11}})}{\prod_{t=1}^T\Tr(F_{u_t}^{s_t})}\nonumber\\
&\geq \frac{\Tr(\bigotimes_{t\in T_{00}}F_{u_t}^{s_t}\cdot S^U_{T_{00},0,0})\cdot\Tr(\bigotimes_{t\in T_{11}}F_{u_t}^{s_t} S^V_{0,0,T_{11}})}{\prod_{t\in T_{00}}\Tr(F_{u_t}^{s_t})\cdot\prod_{t\in T_{11}}\Tr(F_{u_t}^{s_t})\cdot\prod_{t\in T_{01}\cup T_{10}}\Tr(F_{u_t}^{s_t})}\cdot\nonumber\\
&\Tr(\bigotimes_{t\in T_{01}\cup T_{10}}F_{u_t}^{s_t}\cdot S^U_{0,T_{01},T_{10}}\otimes S^V_{T_{01},T_{10},0}).
\end{align}

Note that $F_{u_t}^{s_t}=2^nw_{u_t}^{s_t}\ket{\psi_{u_t}^{s_t}}\bra{\psi_{u_t}^{s_t}}$, we have:
\begin{align}
&\frac{\Tr(\bigotimes_{t\in S_{01}\cup S_{10}}F_{u_t}^{s_t}\cdot S^U_{0,T_{01},T_{10}}\otimes S^V_{T_{01},T_{10},0})}{\prod_{t\in T_{01}\cup T_{10}}\Tr(F_{u_t}^{s_t})}\nonumber\\
=&\Tr(\bigotimes_{t\in T_{01}\cup T_{10}}\ket{\psi_{u_t}^{s_t}}\bra{\psi_{u_t}^{s_t}}\cdot S^U_{0,T_{01},T_{10}}\otimes S^V_{T_{01},T_{10},0}).
\end{align}
In the above equation, we are actually computing the trace of $\abs{T_{01}}+\abs{T_{10}}$ pairs of two-copy state $\ket{\psi_{u_t}^{s_t}}\bra{\psi_{u_t}^{s_t}}$ after the operator that sums the permutations over all the first copy of each $\ket{\psi_{u_t}^{s_t}}\bra{\psi_{u_t}^{s_t}}$ Kronecker times the sum of the permutations over all the second copy. By the symmetry invariant property of $S^V_{T_{01},T_{10},0}$, we further have
\begin{align}
\begin{split}
&\Tr(\bigotimes_{t\in T_{01}\cup T_{10}}\ket{\psi_{u_t}^{s_t}}\bra{\psi_{u_t}^{s_t}}\cdot S^U_{0,T_{01},T_{10}}\otimes S^V_{T_{01},T_{10},0})\\
=&\sum_{\pi\in S^U_{0,T_{01},T_{10}}}\Tr(\bigotimes_{t\in T_{01}\cup T_{10}}\ket{\psi_{u_t}^{s_t}}\bra{\psi_{u_t}^{s_t}}\cdot \pi\otimes S^V_{T_{01},T_{10},0})\\
=&\sum_{\pi\in S^U_{0,T_{01},T_{10}}}\Tr(\bigotimes_{t\in T_{01}\cup T_{10}}\ket{\psi_{u_t}^{s_t}}\bra{\psi_{u_t}^{s_t}}\cdot I_{d}^{\otimes \abs{T_{01}}+\abs{T_{10}} }\otimes S^V_{T_{01},T_{10},0}).
\end{split}
\end{align}
We now use the following lemma:
\begin{lemma}[Lemma 5.12,~\cite{chen2022memory}]
For any collection of pure states $\ket{\psi_1},,,.\ket{\psi_T}$, we have
\begin{align}
\sum_{\pi\in S_T}\Tr(\pi\cdot\bigotimes_{i=1}^T\ket{\psi_i}\bra{\psi_i})\geq 1.
\end{align}
\end{lemma}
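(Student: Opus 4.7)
The plan is to reduce the inequality to the classical statement that the permanent of the Gram matrix of a list of unit vectors is at least one. The first step is to recognize $\frac{1}{T!}\sum_{\pi\in S_T}\pi$ as the orthogonal projector $\Pi_{\mathrm{sym}}$ onto the totally symmetric subspace. Setting $\ket{\Psi}:=\ket{\psi_1}\otimes\cdots\otimes\ket{\psi_T}$ and using cyclicity of the trace gives
\begin{equation}
\sum_{\pi\in S_T}\Tr\!\left(\pi\cdot\bigotimes_{i=1}^T\ketbra{\psi_i}\right)=\sum_{\pi\in S_T}\bra{\Psi}\pi\ket{\Psi}=T!\,\bra{\Psi}\Pi_{\mathrm{sym}}\ket{\Psi}.
\end{equation}
This already shows nonnegativity, but the sharper bound of $1$ requires more work.

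The second step evaluates $\bra{\Psi}\pi\ket{\Psi}$ explicitly. Since $\pi$ acts by permuting tensor factors, one finds $\bra{\Psi}\pi\ket{\Psi}=\prod_{i=1}^T\braket{\psi_i}{\psi_{\pi^{-1}(i)}}$. Relabeling $\sigma=\pi^{-1}$ and summing over $S_T$ produces
\begin{equation}
\sum_{\pi\in S_T}\Tr\!\left(\pi\cdot\bigotimes_{i=1}^T\ketbra{\psi_i}\right)=\sum_{\sigma\in S_T}\prod_{i=1}^T G_{i\sigma(i)}=\mathrm{perm}(G),
\end{equation}
where $G_{ij}:=\braket{\psi_i}{\psi_j}$ is the Gram matrix of the $\ket{\psi_i}$. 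Because these vectors are normalized, $G$ is a positive semidefinite Hermitian matrix with $G_{ii}=1$ for all $i$.

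The third and final step invokes Marcus's classical permanent inequality: for any positive semidefinite Hermitian matrix $G$ one has $\mathrm{perm}(G)\ge\prod_iG_{ii}$. Applied to our Gram matrix this gives $\mathrm{perm}(G)\ge 1$, establishing the claim. The main non-routine input is Marcus's inequality; the remaining manipulations are the standard dictionary between permutation operators on tensor powers and permanents of Gram matrices. If a self-contained argument were preferred, one could instead diagonalize $G=U\Lambda U^\dagger$ and expand $\mathrm{perm}(G)$ as a nonnegative combination of elementary symmetric functions of the eigenvalues of $G$ (of which one term equals $\prod_iG_{ii}$ by a Cauchy--Binet-type expansion), but the direct appeal to Marcus is far cleaner and is the route I would take.
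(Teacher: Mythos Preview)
Your proof is correct. The reduction to $\mathrm{perm}(G)$ for the Gram matrix $G_{ij}=\braket{\psi_i}{\psi_j}$ is the standard dictionary, and Marcus's inequality $\mathrm{perm}(G)\ge\prod_iG_{ii}$ for positive semidefinite $G$ immediately gives the bound since each $G_{ii}=1$.

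The paper does not prove this lemma at all; it simply quotes it as Lemma~5.12 of \cite{chen2022memory}. Your argument is in fact the same route taken in that reference, so there is no meaningful difference in approach to compare.

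One small caveat: your closing parenthetical about diagonalizing $G=U\Lambda U^\dagger$ and expanding $\mathrm{perm}(G)$ as a nonnegative combination of elementary symmetric functions of the eigenvalues is not accurate as stated. Unlike the determinant, the permanent is \emph{not} a function of the eigenvalues alone, so no such expansion exists in general. This does not affect your main argument, which stands on Marcus's inequality, but I would drop that aside rather than leave an incorrect alternative sketch in the write-up.
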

Plugging into the above equation, for at least one non-empty $S_{01}$ and $S_{10}$ we have
\begin{align}
\begin{split}
&\Tr(\bigotimes_{t\in T_{01}\cup T_{10}}\ket{\psi_{u_t}^{s_t}}\bra{\psi_{u_t}^{s_t}}\cdot S^U_{0,T_{01},T_{10}}\otimes S^V_{T_{01},T_{10},0})\\
=&\sum_{\pi\in S^U_{0,T_{01},T_{10}}}\Tr(\bigotimes_{t\in T_{01}\cup T_{10}}\ket{\psi_{u_t}^{s_t}}\bra{\psi_{u_t}^{s_t}}\cdot I_{d}^{\otimes \abs{T_{01}}+\abs{T_{10}} }\otimes S^V_{T_{01},T_{10},0})\\
\geq& \abs{S^U_{0,T_{01},T_{10}}}\\
\geq& 1.
\end{split}
\end{align}

We thus have for each individual term:
\begin{align}
\begin{split}
&\frac{\Tr(\bigotimes_{t=1}^TF_{u_t}^{s_t}\cdot S^U_{T_{00},T_{01},T_{10}}\otimes S^V_{T_{01},T_{10},T_{11}})}{\prod_{t=1}^T\Tr(F_{u_t}^{s_t})}\\
\geq& \frac{\Tr(\bigotimes_{t\in T_{00}}F_{u_t}^{s_t}\cdot S^U_{T_{00},0,0})\cdot\Tr(\bigotimes_{t\in T_{11}}F_{u_t}^{s_t} S^V_{0,0,T_{11}})}{\prod_{t\in T_{00}}\Tr(F_{u_t}^{s_t})\cdot\prod_{t\in T_{11}}\Tr(F_{u_t}^{s_t})}\\
\geq& \frac{\prod_{t\in T_{00}\cup T_{11}}\Tr(F_{u_t}^{s_t}\cdot \mathcal{S}_2)}{\prod_{t\in T_{00}\cup T_{11}}\Tr(F_{u_t}^{s_t})},
\end{split}
\end{align}
where the second line follows again from Lemma~\ref{lem:split} and $\mathcal{S}_2$ is the sum of two-element permutations ($I\otimes I+\mathrm{SWAP}$). 

We can thus reach the desired argument of each individual term bounded below by $0.99$ with high probability $0.99-\gamma 2^{k-n} T$ for some constant $\gamma>0$ by combining a standard martingale argument in Lemma~\ref{lem:martingale} and the following bound on the second moment:
\begin{lemma}[Lemma 17,~\cite{chen2024optimal}]\label{lem:swap_second}
Suppose $F_{u_t}^{s_t}$ is a POVM that corresponding to the POVMs considered in Definition~\ref{def:learning_k} (learning $k$-qubit quantum memory), we have for any such $F_{u_t}^{s_t}$:
\begin{align}
\sum_{s_t}\frac{\Tr(F_{u_t}^{s_t}\mathrm{SWAP})}{\Tr(F_{u_t}^{s_t})}\leq 2^{k+n}.
\end{align}
\end{lemma}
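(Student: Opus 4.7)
The plan is to exploit the two-stage factorisation of any memory-bounded two-copy POVM guaranteed by Definition~\ref{def:learning_k}: for an outcome $s_t = (s,s')$,
\begin{equation*}
F^{u_t}_{s,s'} = (M_s^\dagger \otimes I_n)\, N_{s'|s}^\dagger N_{s'|s}\, (M_s \otimes I_n),
\end{equation*}
where $M_s:\mathbb{C}^{2^n}\to\mathbb{C}^{2^k}$ is the first-round Kraus map (of rank at most $2^k$) that writes the post-measurement state into the $k$-qubit memory, and $\{N_{s'|s}^\dagger N_{s'|s}\}_{s'}$ is the joint POVM on the $(n+k)$-qubit memory-joined register. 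Passing to rank-1 elements, i.e.\ $N_{s'|s}^\dagger N_{s'|s} = \ket{v_{s'|s}}\bra{v_{s'|s}}$, I would rewrite both quantities as expectations on the smaller register:
\begin{equation*}
\Tr(F_{s,s'}\mathrm{SWAP}) = \bra{v_{s'|s}}A_s\ket{v_{s'|s}}, \qquad \Tr(F_{s,s'}) = \bra{v_{s'|s}}B_s\ket{v_{s'|s}},
\end{equation*}
with $A_s := (M_s\otimes I)\,\mathrm{SWAP}\,(M_s^\dagger\otimes I)$ and $B_s := M_sM_s^\dagger \otimes I$ acting on $(n+k)$-qubit space.

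The central step is to bound the ratio by a Cauchy--Schwarz-type inequality of the form $\tfrac{|\bra{v}A\ket{v}|^2}{\bra{v}B\ket{v}} \le \bra{v}AB^{-1}A\ket{v}$ (restricted to the support of $B_s$), and then to sum over $s'$ using the POVM completeness $\sum_{s'}\ket{v_{s'|s}}\bra{v_{s'|s}} = I$. This collapses the $s'$-sum into a single trace, $\sum_{s'} \tfrac{|\Tr(F_{s,s'}\mathrm{SWAP})|^2}{\Tr(F_{s,s'})} \le \Tr(A_s B_s^{-1} A_s)$, and the linear ratio in the lemma statement is then handled by absorbing the extra normalisation into the rank-1 structure of each $F_{s,s'}$.

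The key algebraic simplification is the SWAP-conjugation identity $\mathrm{SWAP}\,(P\otimes I)\,\mathrm{SWAP} = I\otimes P$ applied to the projector $P = M_s^\dagger (M_sM_s^\dagger)^{-1} M_s$ of rank $\mathrm{rank}(M_s)\le 2^k$ onto the row-space of $M_s$. This gives $A_s B_s^{-1} A_s = M_sM_s^\dagger \otimes P$, so that $\Tr(A_s B_s^{-1} A_s) = \Tr(E_s)\cdot\mathrm{rank}(M_s) \le 2^k\Tr(E_s)$ with $E_s = M_s^\dagger M_s$. Summing over $s$ via the first-round normalisation $\sum_s E_s = I_{2^n}$, equivalently $\sum_s \Tr(E_s) = 2^n$, yields the claimed bound $2^{k+n}$; the two factors $2^k$ and $2^n$ come respectively from the memory dimension and the dimension of the second copy.

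The main obstacle is pointwise control of the ratio: $\Tr(F_{s,s'})$ can be very small when $\ket{v_{s'|s}}$ nearly lies in the kernel of $B_s$, so a naive operator-norm bound on $A_s B_s^{-1}$ fails. The saving grace is that once we pass through the Cauchy--Schwarz step, POVM completeness on the $2^{n+k}$-dimensional memory-joined register eliminates the $s'$-dependence entirely, leaving only a single trace on which the memory bound $\mathrm{rank}(M_s)\le 2^k$ applies directly, with no dimensional slack. This is precisely the point at which a bound on the $\ell_1$-norm of ratios, which would otherwise be unbounded, becomes a controlled dimensional quantity.
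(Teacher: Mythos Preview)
The paper does not give its own proof of this lemma; it is quoted from \cite{chen2024optimal}. So there is no proof in the paper to compare against, only the statement and its use.

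Your two-stage factorisation $F_{s,s'}=(M_s^\dagger\otimes I)\,\ket{v_{s'|s}}\bra{v_{s'|s}}\,(M_s\otimes I)$ is correct, and the key algebraic identity is clean: with $A_s=(M_s\otimes I)\mathrm{SWAP}(M_s^\dagger\otimes I)$ and $B_s=M_sM_s^\dagger\otimes I$, the SWAP conjugation gives $A_sB_s^{-1}A_s=M_sM_s^\dagger\otimes P$ with $P=M_s^\dagger(M_sM_s^\dagger)^{-1}M_s$, so $\Tr(A_sB_s^{-1}A_s)=\Tr(E_s)\cdot\mathrm{rank}(M_s)\le 2^k\Tr(E_s)$ and summing over $s$ yields $2^{k+n}$. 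Combined with your Cauchy--Schwarz step and the POVM completeness $\sum_{s'}\ket{v_{s'|s}}\bra{v_{s'|s}}=I$, this cleanly proves
\[
\sum_{s,s'}\frac{\bigl|\Tr(F_{s,s'}\,\mathrm{SWAP})\bigr|^2}{\Tr(F_{s,s'})}\le 2^{k+n},
\]
i.e.\ the \emph{squared}-ratio bound.

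The genuine gap is the last sentence of your second paragraph: ``the linear ratio \dots\ is then handled by absorbing the extra normalisation into the rank-1 structure.'' No such absorption exists, because the linear statement as literally written is \emph{false}. For $k=0$, take $M_s=\bra{m_s}$ with $\{\ket{m_s}\}$ an orthonormal basis of $\mathbb{C}^{2^n}$, and for each $s$ let the second-round POVM on $n$ qubits consist of $M$ copies of $\tfrac{1}{M}\ket{m_s}\bra{m_s}$ together with a rank-1 decomposition of $I-\ket{m_s}\bra{m_s}$. Every copy contributes ratio $1$, the orthogonal elements contribute $0$, so $\sum_{s,s'}\Tr(F_{s,s'}\mathrm{SWAP})/\Tr(F_{s,s'})=2^nM$, which is unbounded as $M\to\infty$.

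What resolves this is that the lemma quoted here almost certainly carries a square in the numerator in the source: note that the paper itself introduces it as ``the following bound on the second moment'', and the martingale Lemma~\ref{lem:martingale} requires precisely $\mathbb{E}_{s\sim p^{\rho_m}}\bigl[(L_\rho(u,s)-1)^2\bigr]=\tfrac{1}{2^{2n}}\sum_{s}\Tr(F^{s}\mathrm{SWAP})^2/\Tr(F^{s})\le 2^{k-n}$. Your Cauchy--Schwarz argument proves exactly this squared version with no further work needed; drop the hand-wave about ``absorbing normalisation'' and state the squared inequality directly.
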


In conclusion, we formally have the following theorem:
\begin{theorem}[Exponential overhead of purity estimation with bounded quantum memory without purification]\label{thm:purity_lower_mem}
Given an $n$-qubit low-rank state $\rho$ and its rank (even constant), predicting $\Tr(\rho^2)$ within constant additive error requires at least $\Omega(\min\{2^{n/2},2^{n-k}\})$ sample complexity for protocols with $k$ qubits of quantum memory and twice interactions with $\rho$.
\end{theorem}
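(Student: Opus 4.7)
The plan is to reduce accurate purity estimation to distinguishing the two ensembles $\mathcal{S}_1$ and $\mathcal{S}_2$ already used in Theorem~\ref{thm:purity_lower}, where a typical draw from $\mathcal{S}_1$ has purity close to $0.82$ while a typical draw from $\mathcal{S}_2$ has purity close to $0.5$. A first Haar-moment computation together with Markov's inequality shows that these gaps persist with constant probability, so any protocol that estimates $\Tr(\rho^2)$ to small constant additive error would succeed at this distinguishing task with high probability. It then suffices to prove that distinguishing $\mathcal{S}_1$ from $\mathcal{S}_2$ in the $k$-memory, two-interaction model requires $T=\Omega(\min\{2^{n/2},2^{n-k}\})$ samples.

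Next, I would model any such protocol as a depth-$T$ learning tree in the sense of Appendix~\ref{app:learning_tree}, whose nodes $u$ carry two-copy rank-1 POVM elements $F_u^s = 2^n w_u^s \ket{\psi_u^s}\bra{\psi_u^s}$ acting on $\varrho_{s,k}\otimes \rho$, consistent with Definition~\ref{def:learning_k}. By Le Cam's two-point method and the triangle inequality through the maximally mixed reference $\rho_m$, the task is reduced to upper bounding $\TV(\bE_Q p_\rho, p_{\rho_m})$ for $Q\in\{\mathcal{S}_1,\mathcal{S}_2\}$. I would control this via the one-sided likelihood-ratio bound $\bE_{\ell,\rho}\max\{0,1-L_\rho(\ell)\}$ with $L_\rho(\ell)=\prod_t \Tr(F_{u_t}^{s_t}\rho^{\otimes 2})/\Tr(F_{u_t}^{s_t}\rho_m^{\otimes 2})$.

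The core computation is the Haar expectation of $L_\rho(\ell)$. Expanding $(qU\ket{0}\bra{0}U^\dagger+(1-q)V\ket{0}\bra{0}V^\dagger)^{\otimes 2T}$ and grouping the $T$ two-copy blocks into diagonal classes $T_{00},T_{11}$ and cross classes $T_{01},T_{10}$ according to which of $U,V$ appears in each copy, the Schur--Weyl/Weingarten formula turns each Haar average into a trace against sums of permutation operators $S^U_{T_{00},T_{01},T_{10}}\otimes S^V_{T_{01},T_{10},T_{11}}$. Pulling out the leading dimensional factor yields the $(1-4T^2/2^n)$ prefactor responsible for the $\Omega(2^{n/2})$ bound. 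Lemma~\ref{lem:split} then splits the trace over disjoint diagonal and cross blocks, and the pure-state permutation-sum inequality lower bounds the cross-block contribution by $1$, leaving the diagonal blocks to contribute $\prod_{t\in T_{00}\cup T_{11}} \Tr(F_{u_t}^{s_t}\mathcal{S}_2)/\Tr(F_{u_t}^{s_t})$, where $\mathcal{S}_2 = I\otimes I + \mathrm{SWAP}$.

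The main obstacle, and the step that genuinely uses the two-interaction, $k$-memory restriction, is to turn this per-step product into a high-probability lower bound $L_\rho(\ell)\ge 0.9$. The plan is to apply the martingale argument of Lemma~\ref{lem:martingale} with the SWAP-trace estimate of Lemma~\ref{lem:swap_second}, namely $\sum_s \Tr(F_u^s\,\mathrm{SWAP})/\Tr(F_u^s)\le 2^{k+n}$, which delivers the per-step second-moment bound $\delta = O(2^{k-n})$. After $T$ steps the martingale concentrates, so the TV distance remains $o(1)$ unless $T = \Omega(2^{n-k})$; combined with the Haar prefactor this forces $T = \Omega(\min\{2^{n/2},2^{n-k}\})$. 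The delicate part will be verifying that Lemma~\ref{lem:swap_second} is valid for the full class of two-copy POVMs realizable with $k$ qubits of memory (as opposed to single-copy POVMs), since it is precisely the restriction to two interactions that keeps the effective POVM dimension small enough for the $2^{k+n}$ bound to hold.
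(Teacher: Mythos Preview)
Your proposal is correct and follows essentially the same route as the paper: the same two ensembles, the same learning-tree and Le Cam reduction through $\rho_m$, the same Haar expansion into $T_{00},T_{01},T_{10},T_{11}$ with the $(1-4T^2/2^n)$ prefactor, the same use of Lemma~\ref{lem:split} and the permutation-sum inequality to reduce to the diagonal $\mathcal{S}_2$ product, and the same martingale argument via Lemma~\ref{lem:swap_second}. Your final caveat about verifying Lemma~\ref{lem:swap_second} for the $k$-memory two-interaction POVM class is exactly the point where the paper simply cites~\cite{chen2024optimal}, so there is nothing further to supply beyond invoking that lemma as stated.
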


\subsection{Quantum Virtual Cooling and Principle Component Analysis}\label{app:cooling_pca_lower}
\subsubsection{Without quantum memory}
Here, we consider the task of virtual cooling and quantum principle analysis, that is, predicting the expectation value $\Tr(\rho^2 O)$ of an observable $O$, as well as the difficulty of quantum principal component analysis, which involves predicting the expectation value $\Tr(\ketbra{\psi_0} O)$ without purification. Here, $\ketbra{\psi_0}$ represents the principal component, the eigenstate corresponding to the largest eigenvalue of the input state $\rho$. We show that estimating such non-linear quantity for constant low-rank states requires exponential sample complexity. 

\begin{theorem}[Exponential overhead of quantum virtual cooling and principal component analysis without purification]\label{thm:vc_pca_lower}
Given an $n$-qubit low-rank state $\rho$ and its rank or eigenvalues, any protocol that predicts $\Tr(O\rho^2)$ or $\Tr(O\ketbra{\psi_1})$ for arbitrary observable $O$ with $\norm{O}_\infty=1$ within constant additive error requires at least $\Omega(2^{n/2})$ sample complexity. Here, $\ketbra{\psi_1}$ is the principal component of $\rho$.
\end{theorem}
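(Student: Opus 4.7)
The plan is to adapt the learning-tree lower-bound machinery of Theorem~\ref{thm:purity_lower} and Theorem~\ref{thm:purity_lower_mem}: I would reduce each of the two estimation tasks to a two-point distinguishing problem between ensembles $\mathcal{S}_1,\mathcal{S}_2$ that share the same (known) spectrum, represent the most general single-copy (respectively two-copy-with-$k$-memory) protocol as a learning tree with rank-$1$ POVMs, and bound $\TV(\bE_{\mathcal{S}_1}p^{\rho},\bE_{\mathcal{S}_2}p^{\rho})$ via Le Cam's two-point method. Any algorithm that estimates the target quantity within constant additive error also distinguishes the two ensembles with constant advantage, so the TV-distance bound translates to a sample-complexity lower bound. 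The likelihood-ratio and Haar-integration steps (Lemma~\ref{lem:high_moment_haar}, Lemma~\ref{lem:martingale}, Lemma~\ref{lem:swap_second}) carry over essentially unchanged from Appendix~\ref{app:purity_lower} and Appendix~\ref{app:purity_memory_lower} once the ensembles are fixed.

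For $\Tr(O\rho^2)$, the weakest version (only the rank revealed) follows immediately by taking $O=\mathbb{I}$, since $\norm{\mathbb{I}}_\infty=1$ and $\Tr(O\rho^2)=\Tr(\rho^2)$, so Theorem~\ref{thm:purity_lower} applies verbatim. To strengthen the statement to the case in which the eigenvalues are revealed, I would instead use two ensembles with identical spectrum whose eigenvectors are permuted. Concretely, fix $\lambda_1>\lambda_2$ with $\lambda_1-\lambda_2=\Theta(1)$, draw Haar-random unitaries $U,V$, and set
\begin{equation*}
\rho^{(1)}=\lambda_1 U\ketbra{0}U^{\dagger}+\lambda_2 V\ketbra{0}V^{\dagger},\qquad
\rho^{(2)}=\lambda_2 U\ketbra{0}U^{\dagger}+\lambda_1 V\ketbra{0}V^{\dagger},
\end{equation*}
together with a suitable spectral-norm-$1$ observable $O$ (for instance a rescaled difference of projectors onto $U\ket{0}$ and $V\ket{0}$). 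The two ensembles share eigenvalues $(\lambda_1,\lambda_2)$ up to exponentially small corrections coming from $\abs{\bra{0}U^\dagger V\ket{0}}^2=\mathcal{O}(2^{-n})$, and $\Tr(O\rho^2)$ differs by $\Theta(1)$ because the weighting across the two Haar-random eigenstates is swapped.

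For $\Tr(O\ketbra{\psi_0})$, the analogous eigenvector-swap construction makes $\Tr(O\ketbra{\psi_0})$ differ by a constant between $\mathcal{S}_1$ and $\mathcal{S}_2$. However, in a naive rank-$2$ model one has the identity $\Tr(O\rho^{(1)})-\Tr(O\rho^{(2)})=(\lambda_1-\lambda_2)\bigl(\Tr(O\ketbra{\psi_0^{(1)}})-\Tr(O\ketbra{\psi_0^{(2)}})\bigr)$, so a single-copy measurement of $O$ distinguishes the ensembles in $\mathcal{O}(1)$ samples whenever the spectral gap is $\Theta(1)$. To break this linear dependence I would raise the rank to at least $3$, introducing an additional Haar-random eigenstate $W\ket{0}$ with a tunable eigenvalue $\lambda_3$ and choosing the diagonal entries of $O$ in the basis $\{U\ket{0},V\ket{0},W\ket{0}\}$ so that $\Tr(O\rho^{(1)})=\Tr(O\rho^{(2)})$ in expectation over the Haar randomness, while $\bra{\psi_0}O\ket{\psi_0}$ still differs by $\Theta(1)$. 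The feasibility constraint $\bra{\psi_i}O\ket{\psi_i}\in[-1,1]$ forces the top eigenvalue to be bounded away from $1$, e.g.\ $\lambda_1\in[\tfrac{1}{3},\tfrac{1}{2}]$, so that the subdominant eigenvalues carry enough weight to cancel the first-order signal.

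The main obstacle is precisely this ensemble-design step: arranging simultaneously (i) identical known spectra with constant spectral gap, (ii) a $\Theta(1)$ gap in the target quantity, and (iii) matched low-order single-copy statistics after Haar integration, so that the likelihood ratio $L(\ell)=\bE_{\mathcal{S}}p^{\rho}(\ell)/p^{\rho_m}(\ell)$ concentrates around $1$ and the learning-tree bound of Theorem~\ref{thm:purity_lower} still goes through. Once such ensembles are in place, the rest of the argument — including the extension to $k$ memory qubits via the SWAP second-moment bound of Lemma~\ref{lem:swap_second} — parallels Appendix~\ref{app:purity_lower} and Appendix~\ref{app:purity_memory_lower} step for step, yielding the desired $\Omega(\min\{2^{n/2},2^{n-k}\})$ lower bound.
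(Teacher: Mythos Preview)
Your overall strategy (reduce to a two-ensemble distinguishing problem and run the learning-tree argument) is the right one, but the specific ensembles you propose have a genuine flaw: your observable $O$ depends on the Haar-random unitaries $U,V(,W)$. The theorem gives $O$ to the algorithm, so any information $O$ carries about the eigenbasis of $\rho$ is available to the protocol. In your rank-$2$ virtual-cooling construction with $O$ a rescaled difference of projectors onto $U\ket{0}$ and $V\ket{0}$, the \emph{linear} expectation already separates the ensembles: $\Tr(O\rho^{(1)})\approx\lambda_1-\lambda_2$ versus $\Tr(O\rho^{(2)})\approx\lambda_2-\lambda_1$, so a single-copy measurement of $O$ distinguishes them in $\mathcal{O}(1)$ samples. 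Your rank-$3$ PCA fix matches first moments, but the problem is deeper: if $O$ is diagonal in $\{U\ket{0},V\ket{0},W\ket{0}\}$ with distinct diagonal entries, the algorithm can diagonalize $O$ to recover the eigenbasis of $\rho$, then measure a single copy in that basis; the outcome frequencies are precisely the eigenvalues, revealing which vector carries the largest one in $\mathcal{O}(1)$ samples. Condition~(iii) in your design list is therefore not achievable with an instance-dependent $O$.

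The paper sidesteps this by a \emph{register} construction that makes $O$ fixed and instance-independent. The eigenstates are placed in orthogonal sectors labeled by the first one or two qubits, with the Haar randomness living only on the remaining $n-1$ (or $n-2$) qubits:
\[
\rho=\tfrac12\ketbra{0}\otimes U_1\ketbra{0}U_1^\dagger+\tfrac14\ketbra{10}\otimes U_2\ketbra{0}U_2^\dagger+\tfrac14\ketbra{11}\otimes U_3\ketbra{0}U_3^\dagger,
\]
and the second ensemble swaps the first-qubit labels $\ket{0}\leftrightarrow\ket{1}$. Now $O=Z_1$ is fixed, reveals nothing about $U_1,U_2,U_3$, and automatically satisfies $\Tr(Z_1\rho)=0$ for both ensembles while $\Tr(Z_1\rho^2)=\pm\tfrac18$ and $\Tr(Z_1\psi_0)=\pm1$. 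As a bonus, the eigenvalues are \emph{exactly} $(\tfrac12,\tfrac14,\tfrac14)$ rather than ``up to $\mathcal{O}(2^{-n})$ corrections,'' so the ``known eigenvalues'' clause is clean. After this, the TV bound via Lemma~\ref{lem:high_moment_haar} and the memory extension via Lemma~\ref{lem:split} and the analogue of Lemma~\ref{lem:swap_second} proceed exactly as you outlined.
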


\begin{proof}
    
Consider the following distinguishing task: A state $\rho$ is randomly drawn from one of the following two state ensembles $\mathcal{S}_1$ and $\mathcal{S}_2$ that only contains low-rank mixed states:
\begin{enumerate}
    \item {Ensemble $\mathcal{S}_1$: 
    \begin{align}
        \rho = \frac{1}{2} \ketbra{0} \otimes (U_1 \ketbra{0}^{\otimes n-1} U_1^{\dagger}) + \frac{1}{4} [\ketbra{1} \otimes (\ketbra{0} \otimes U_2 \ketbra{0}^{\otimes n-2} U_2^{\dagger} + \ketbra{1} \otimes U_3 \ketbra{0}^{\otimes n-2} U_3^{\dagger})].
    \end{align}
    }
    \item {Ensemble $\mathcal{S}_2$:
        \begin{align}
            \rho = \frac{1}{2} \ketbra{1} \otimes (U_1 \ketbra{0}^{\otimes n-1} U_1^{\dagger}) + \frac{1}{4} [\ketbra{0} \otimes (\ketbra{0} \otimes U_2 \ketbra{0}^{\otimes n-2} U_2^{\dagger} + \ketbra{1} \otimes U_3 \ketbra{0}^{\otimes n-2} U_3^{\dagger})].
        \end{align}
    }
\end{enumerate}
In both state ensembles, $U_1$ is Haar-random unitary on $n-1$ qubits, and $U_2, U_3$ are independent Haar-random unitaries on $n-2$ qubits. Our task is to distinguish whether $\rho$ is drawn from state ensemble $\mathcal{S}_1$ or ensemble $\mathcal{S}_2$.

We consider the following properties:
\begin{enumerate}
    \item {Quantum principal component analysis. That is, given a state $\rho = \sum_j \lambda_j \ketbra{\psi_j}$ which has a single maximal eigenvalue $\lambda_1$, and given an observable $O$, output $\Tr(O\ketbra{\psi_1})$.}
    \item {Nonlinear high-order observable. Given a state $\rho$ and an observable $O$, output $\Tr(O \rho^2)$.}
\end{enumerate}

If we are able to predict these properties, we can distinguish whether $\rho$ is drawn from $\mathcal{S}_1$ or from $\mathcal{S}_2$. To see this, take nonlinear observable as an example. We have:
\begin{align}
    \Tr(\rho^2 Z_1) =  \begin{cases}
        \frac{1}{8}, & \rho \in \mathcal{S}_1, \\
        -\frac{1}{8}, & \rho \in \mathcal{S}_2.
    \end{cases} 
\end{align}

Then, we prove that the distinguishing task requires exponentially many single-copy measurements, thus proving the hardness of predicting these properties by single-copy measurements without purifications.

Here, we show that distinguishing these two ensembles requires at least exponentially many single-copy measurements. The most general single-copy measurement scheme can be realized by performing rank-1 POVMs adaptively on each copy of $\rho$~\cite{chen2022memory}:
\begin{enumerate}
    \item We start from the root $r$. 
    \item At each node $u$, we perform rank-1 POVM $\{2^n w^u_s \ketbra{\psi^u_s}\}$. After obtaining a measurement result $s$, go to the son node $s$ of $u$.
    \item Finally, we stop at a leaf node $l$.
\end{enumerate}

Suppose there are totally $T+1$ nodes, $u_1 = r, u_{T+1} = l$.
For a given state $\rho$, denote $p_{\rho}$ as the probability distribution of reaching leaf nodes. The probability of reaching one leaf node $l$ is given by
\begin{align}
    p_{\rho}(l) = \prod_{i=1}^{T} 2^n w^u_s \Tr(\ketbra{\psi_{u_{i+1}}^{u_i}}\rho).   
\end{align}

For the maximally mixed state, we have
\begin{align}
    p_{\rho_m}(l) = \prod_{i=1}^{T} w^u_s.
\end{align}

We will show later that on average, the total variation distance between $p_{\rho_m}$ and $p_{\rho}$ can be bounded.
\begin{lemma}\label{lem:pca}
    \begin{align}
        \TV(p_{\rho_m}, \bE_{\mathcal{S}_1} p_{\rho}) \le 1 -  (1+\frac{T}{2^{n-2}})^{-T}.
    \end{align}
\end{lemma}

Similarly, we have $\TV(p_{\rho_m},\bE_{\mathcal{S}_2} p_{\rho}) \le 1 -  (1+\frac{T}{2^{n-2}})^{-T}$. From triangular inequality, we obtain
\begin{align}
\begin{split}
    \TV(\bE_{\mathcal{S}_1} p_{\rho}, \bE_{\mathcal{S}_2} p_{\rho}) &\le 2 -  2(1+\frac{T}{2^{n-2}})^{-T} \\
    &=2-2\exp(-T\ln(1+T/2^{n-2}))  \\
    &\le 2 - 2\exp(-T^2/2^{n-2}) \\
    &\le 2-2(1-T^2/2^{n-2}) \\
    &= \frac{T^2}{2^{n-3}}.
\end{split}
\end{align}

By Le Cam's two-point method~\cite{yu1997assouad}, to successfully distinguish these two state ensembles with probability greater than $\frac{5}{6}$, the total variation distance must be greater than a constant, which means $T = \Omega(2^{n/2})$.
\end{proof}

\begin{proof}[Proof of Lemma \ref{lem:pca}]
\begin{align}
    \TV(p_{\mathcal{S}_1}, p_{\mathcal{S}_2}) &= \frac{1}{2} \sum_l |p_{\mathcal{S}_1}(l) - p_{\mathcal{S}_2}(l)|\nonumber\\
    &= \sum_l \max(0, p_{\mathcal{S}_1}(l) - p_{\mathcal{S}_2}(l)) \nonumber\\
    &= \sum_l p_{\mathcal{S}_1}(l) \max(0, 1 - \frac{p_{\mathcal{S}_2}(l)}{p_{\mathcal{S}_1}(l)}).
\end{align}
\begin{align}
    &\TV(p_{\rho_m}, \bE_{\mathcal{S}_1} p_{\rho}) \nonumber\\\label{eq:distance_A_mixed}
    =& \sum_l p_{\rho_m}(l) \max\left(0, 1- \bE \prod_{i=1}^{T} 2^n \Tr(\ketbra{\psi_{u_{i+1}}^{u_i}}\rho) \right).
\end{align}

To bound $\bE \prod_{i=1}^{T} \Tr(\ketbra{\psi_{u_{i+1}}^{u_i}}\rho)$, we decompose $\ket{\psi_{u_{i+1}}^{u_i}}$ into
\begin{align}
    \ket{\psi_{u_{i+1}}^{u_i}} =& \alpha_{u_{i+1}}^{u_i} \ket{0} \otimes \ket{\phi_{u_{i+1}}^{u_i}(0)} + \beta_{u_{i+1}}^{u_i} \ket{10} \otimes \ket{\phi_{u_{i+1}}^{u_i}(10)} +\nonumber\\
    &\gamma_{u_{i+1}}^{u_i} \ket{11} \otimes \ket{\phi_{u_{i+1}}^{u_i}(11)},
\end{align}
where $|\alpha_{u_{i+1}}^{u_i}|^2 + |\beta_{u_{i+1}}^{u_i}|^2 + |\gamma_{u_{i+1}}^{u_i}|^2 = 1$. And denote $\rho$ as 
\begin{align}
    \rho = &\frac{1}{2} \ketbra{0} \otimes \ketbra{\psi_0} + \frac{1}{4} \ketbra{1} \otimes (\ketbra{0} \otimes \ketbra{\psi_{10}} + \nonumber\\
    &\ketbra{1} \otimes \ketbra{\psi_{11}}),
\end{align}

Then we have
\begin{align}\label{eq:moments}
\begin{split}
    &\bE \prod_{i=1}^{T} 2^n \Tr(\ketbra{\psi_{u_{i+1}}^{u_i}}\rho) \\
    = &\bE \prod_{i=1}^{T} (2^{n-1}|\alpha_{u_{i+1}}^{u_i}|^2 |\bra{\phi_{u_{i+1}}^{u_i}(0)} \ket{\psi_0}|^2 \\
    &+ 2^{n-2}|\beta_{u_{i+1}}^{u_i}|^2 |\bra{\phi_{u_{i+1}}^{u_i}(10)} \ket{\psi_{10}}|^2 \\
    &+ 2^{n-2}|\gamma_{u_{i+1}}^{u_i}|^2 |\bra{\phi_{u_{i+1}}^{u_i}(11)} \ket{\psi_{11}}|^2) \\
    = & \sum_{S_0,S_{10},S_{11}} \bE \prod_{i \in S_0} 2^{n-1}|\alpha_{u_{i+1}}^{u_i}|^2 |\bra{\phi_{u_{i+1}}^{u_i}(0)} \ket{\psi_0}|^2\cdot\\
    &\prod_{i \in S_{10}} 2^{n-2}|\beta_{u_{i+1}}^{u_i}|^2 |\bra{\phi_{u_{i+1}}^{u_i}(10)} \ket{\psi_{10}}|^2\cdot\\
    &\prod_{i \in S_{11}} 2^{n-2}|\gamma_{u_{i+1}}^{u_i}|^2 |\bra{\phi_{u_{i+1}}^{u_i}(11)} \ket{\psi_{11}}|^2,
    \end{split}
\end{align}
where $\sum_{S_0,S_{10},S_{11}}$ denotes $\sum_{S_0,S_{10},S_{11}: \cup S_i = [T], S_i \cap S_j=\emptyset}$. From Lemma \ref{lem:high_moment_haar}, we have
\begin{align}
\begin{split}
    \bE \prod_{i \in S_0} 2^{n-1} |\bra{\phi_{u_{i+1}}^{u_i}(0)} \ket{\psi_0}|^2 & \ge \frac{1}{(1 + \frac{|S_0|-1}{2^{n-1}}) \cdots (1+\frac{1}{2^{n-1}})} \\
    & \ge (1+\frac{T}{2^{n-1}})^{-|S_0|} \\
    &\ge (1+\frac{T}{2^{n-2}})^{-|S_0|}
\end{split}
\end{align}

Similarly, we have
\begin{align}
\begin{split}
    \bE \prod_{i \in S_{10}} 2^{n-2} |\bra{\phi_{u_{i+1}}^{u_i}(10)} \ket{\psi_{10}}|^2 \ge (1+\frac{T}{2^{n-2}})^{-|S_{10}|},\\
    \bE \prod_{i \in S_{11}} 2^{n-2} |\bra{\phi_{u_{i+1}}^{u_i}(11)} \ket{\psi_{11}}|^2 \ge (1+\frac{T}{2^{n-2}})^{-|S_{11}|}.
\end{split}
\end{align}
Then, we can bound Eq.~\eqref{eq:moments}:
\begin{align}
\begin{split}
    &\bE \prod_{i=1}^{T} 2^n \Tr(\ketbra{\psi_{u_{i+1}}^{u_i}}\rho) \\ 
    \ge & \sum_{S_0,S_{10},S_{11}} \prod_{i \in S_0} |\alpha_{u_{i+1}}^{u_i}|^2 \prod_{j \in S_{10}}|\beta_{u_{i+1}}^{u_i}|^2 \cdot\\
    &\prod_{k \in S_{11}} |\gamma_{u_{i+1}}^{u_i}|^2 (1+\frac{T}{2^{n-2}})^{-|S_0|+|S_{10}|+|S_{11}|} \\
    = & \prod_{i=1}^T (|\alpha_{u_{i+1}}^{u_i}|^2 + |\beta_{u_{i+1}}^{u_i}|^2 + |\gamma_{u_{i+1}}^{u_i}|^2) (1+\frac{T}{2^{n-2}})^{-T} \\
    = & (1+\frac{T}{2^{n-2}})^{-T}.
\end{split}
\end{align}

Combining with Eq.~\eqref{eq:distance_A_mixed}, we have
\begin{align}
\begin{split}
    \TV(p_{\rho_m},\bE_{\mathcal{S}_1} p_{\rho}) &\le \sum_l p_{\rho_m}(l) \max\left(0, 1- (1+\frac{T}{2^{n-2}})^{-T} \right)\\
    &= 1 -  (1+\frac{T}{2^{n-2}})^{-T}.
\end{split}
\end{align}
\end{proof}

\subsubsection{With bounded quantum memory}\label{app:vc_pca_lower_mem}

Here, we prove the hardness of quantum virtual cooling and principal component analysis of low-rank quantum states with bounded quantum memory.
\begin{theorem}[Exponential overhead of quantum virtual cooling and principal component analysis with bounded quantum memory without purification]\label{thm:vc_pca_lower_mem}
Given an unknown $n$-qubit low-rank state $\rho$ and its rank or eigenvalues, predicting $\Tr(O\rho^2)$ or $\Tr(O\ketbra{\psi_1})$ within constant additive error requires at least $\Omega(\min\{2^{n/2},2^{n-k}\})$ sample complexity for protocols with $k$ qubits of quantum memory. Here, $\ketbra{\psi_1}$ is the principal component of $\rho$.
\end{theorem}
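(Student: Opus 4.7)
The plan is to combine the distinguishing-ensemble reduction used in the single-copy hardness Theorem~\ref{thm:vc_pca_lower} with the two-copy martingale/likelihood-ratio machinery developed in Appendix~\ref{app:purity_memory_lower} for the purity lower bound with quantum memory. We reuse the two three-component ensembles $\mathcal{S}_1,\mathcal{S}_2$ defined inside the proof of Theorem~\ref{thm:vc_pca_lower}, for which $\Tr(Z_1\rho^2)=\pm\tfrac18$ and whose principal components differ on the first qubit; hence any algorithm estimating $\Tr(O\rho^2)$ or $\Tr(O\ketbra{\psi_1})$ with $\norm{O}_\infty=1$ to a small constant additive error distinguishes $\mathcal{S}_1$ from $\mathcal{S}_2$ with constant advantage. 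So it suffices to lower bound the sample complexity of the many-versus-one distinguishing tasks $\mathcal{S}_1$ vs. $\rho_m=\mathbb{I}/2^n$ and $\mathcal{S}_2$ vs. $\rho_m$ under the learning model of Definition~\ref{def:learning_k} (with $k$-qubit memory and two-copy access per round), and apply triangle inequality.

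First, I formalize a protocol of total sample complexity $N=2T$ as a depth-$T$ learning tree whose nodes carry adaptive rank-$1$ two-copy POVMs $\{F^{s}_u=2^{n+k}w^{s}_u\ketbra{\psi^{s}_u}{\psi^s_u}\}_s$ satisfying Definition~\ref{def:learning_k}, so that each leaf $\ell$ has probability $p^{\rho_m}(\ell)$ under $\rho_m$ and $p^\rho(\ell)$ under $\rho$. Using Le~Cam's two-point method in the one-sided form of Eq.~\eqref{eq:tv_ratio}, it is enough to show that with high probability over $\rho\sim\mathcal{S}_j$ and $\ell\sim p^{\rho_m}$, the likelihood ratio $L_\rho(\ell)=\prod_{t=1}^T\Tr(F^{s_t}_{u_t}\rho^{\otimes 2})/\Tr(F^{s_t}_{u_t}\rho_m^{\otimes 2})$ is close to $1$. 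Expanding $\rho^{\otimes 2T}$ via the mixture structure $\rho=\tfrac12 \ketbra{0}\otimes U_1\ketbra{0^{n-1}}U_1^\dagger+\tfrac14\ketbra{1}\otimes(\ketbra{0}\otimes U_2\ketbra{0^{n-2}}U_2^\dagger+\ketbra{1}\otimes U_3\ketbra{0^{n-2}}U_3^\dagger)$ gives $3^{2T}$ terms indexed by a partition of $[2T]$ into three sub-ensembles (indexing which of $U_1,U_2,U_3$ produces each copy); the Haar average of each term factorises across $U_1,U_2,U_3$ and each factor is (up to $1\pm O(T^2/2^n)$) a sum over permutations of the corresponding copies.

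Next, I invoke Lemma~\ref{lem:split} to peel off the pairs of copies that originate from the \emph{same} Haar-random unitary on the \emph{same} pair-position (the ``same-pair'' terms), for which the contribution already reduces to a product of two-copy symmetrizers $I+\mathrm{SWAP}$ on the relevant subsystem. The remaining ``mixed'' terms can be lower bounded by $1$ using the same symmetry trick as in Appendix~\ref{app:purity_memory_lower}: applying $S^{V}_{\cdots}$-invariance replaces one side of the permutation sum by the identity, and then the inequality $\sum_{\pi}\Tr(\pi\bigotimes\ketbra{\psi_i})\geq 1$ of~\cite{chen2022memory} gives a factor $\geq 1$. This leaves me with a product over the same-pair rounds of the conditional expectations $\Tr(F^{s_t}_{u_t}(I+\mathrm{SWAP})_{\text{sub}})/\Tr(F^{s_t}_{u_t})$, which is exactly the quantity controlled by the swap second-moment bound, Lemma~\ref{lem:swap_second}, giving $\sum_{s_t}\Tr(F^{s_t}_{u_t}\mathrm{SWAP})/\Tr(F^{s_t}_{u_t})\leq 2^{n+k}$ per round. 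Together with the Haar coefficient $1/2^n$ per same-pair round, each round contributes at most a constant plus a $O(2^{k-n})$ fluctuation.

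Finally, I bundle these per-node second-moment estimates into the martingale argument of Lemma~\ref{lem:martingale}, concluding $\Pr_{\ell,\rho}[L_\rho(\ell)\leq 0.9]\leq 0.1+c(2^{k-n}+2^{-n}T)T$. Plugging back into Eq.~\eqref{eq:tv_ratio} gives $\TV(\mathbb{E}_{\mathcal{S}_j}p^\rho,p^{\rho_m})=O(2^{k-n}T+2^{-n}T^2)$, and triangle inequality yields $\TV(\mathbb{E}_{\mathcal{S}_1}p^\rho,\mathbb{E}_{\mathcal{S}_2}p^\rho)=O(2^{k-n}T+2^{-n}T^2)$. For this to be $\Omega(1)$, we need $T\geq\Omega(\min\{2^{n/2},2^{n-k}\})$, which is the claimed lower bound. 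The step I expect to be the main obstacle is the combinatorial bookkeeping in the mixed-pair terms: unlike the two-component ensembles in Appendix~\ref{app:purity_memory_lower}, here we must simultaneously account for (i) three Haar-independent unitaries acting on different subsystems, and (ii) the fact that $U_2,U_3$ live on only $n-2$ qubits so the $1/d$ Haar-normalization factors no longer cancel symmetrically; the argument works only because the lowest-rank subsystem still has dimension $2^{n-2}$, which is where the $2^{n/2}$ and $2^{n-k}$ scaling originates.
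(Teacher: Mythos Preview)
Your proposal is essentially the same approach as the paper's proof: reuse the three-component ensembles from Theorem~\ref{thm:vc_pca_lower}, expand $\rho^{\otimes 2T}$ according to which of $U_1,U_2,U_3$ produces each copy (the paper organizes this into nine sets $T_{ij}$, $i,j\in\{1,2,3\}$, indexing each pair), apply Lemma~\ref{lem:split} to isolate the diagonal sets $T_{11},T_{22},T_{33}$, bound the remaining mixed terms below by $1$ via the permutation trick, and finish with the martingale argument.

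The one technical point you correctly flag but do not resolve is that Lemma~\ref{lem:swap_second} as stated applies to the full $n$-qubit $\mathrm{SWAP}$, whereas here the diagonal terms carry fixed classical prefixes $\ket{0},\ket{10},\ket{11}$ tensored with $\mathrm{SWAP}$ on only $n-1$ or $n-2$ qubits. The paper handles exactly this by stating and proving Corollary~\ref{cor:single_step_two_moment}, which extends the swap second-moment bound to operators of the form $\rho^{(1)}\otimes\mathrm{SWAP}_{n-1}$ and $\rho^{(2)}\otimes\mathrm{SWAP}_{n-2}$ for arbitrary one- and two-qubit states; the bound $2^{n+k}$ is unchanged, so the dimension mismatch you worried about does not cost anything. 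With that substitution your sketch goes through verbatim.
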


\begin{proof}
We consider the hard instance used in the single-copy hardness proof: a state $\rho$ is randomly drawn from one of the following two state ensembles $\mathcal{S}_1$ and $\mathcal{S}_2$ that only contains low-rank mixed states:
\begin{enumerate}
    \item {Ensemble $\mathcal{S}_1$: 
    \begin{align}
        \rho = \frac{1}{2} \ketbra{0} \otimes (U_1 \ketbra{0}^{\otimes n-1} U_1^{\dagger}) + \frac{1}{4} [\ketbra{1} \otimes (\ketbra{0} \otimes U_2 \ketbra{0}^{\otimes n-2} U_2^{\dagger} + \ketbra{1} \otimes U_3 \ketbra{0}^{\otimes n-2} U_3^{\dagger})].
    \end{align}
    }
    \item {Ensemble $\mathcal{S}_2$:
        \begin{align}
            \rho = \frac{1}{2} \ketbra{1} \otimes (U_1 \ketbra{0}^{\otimes n-1} U_1^{\dagger}) + \frac{1}{4} [\ketbra{0} \otimes (\ketbra{0} \otimes U_2 \ketbra{0}^{\otimes n-2} U_2^{\dagger} + \ketbra{1} \otimes U_3 \ketbra{0}^{\otimes n-2} U_3^{\dagger})].
        \end{align}
    }
\end{enumerate}
In both state ensembles, $U_1$ is Haar-random unitary on $n-1$ qubits, and $U_2, U_3$ are Haar-random unitaries on $n-2$ qubits. Our task is to distinguish whether $\rho$ is drawn from state ensemble $\mathcal{S}_1$ or ensemble $\mathcal{S}_2$.

Similar to the argument in Appendix~\ref{app:purity_memory_lower}, we only need to compute the total variation distance
\begin{align}
\TV(\bE_Q p_\rho,p_{\rho_m}),
\end{align}
where $Q$ corresponding to the ensemble $S_Q$ that contains
\begin{align}
&q_1\ketbra{0} \otimes (U_1 \ketbra{0}^{\otimes n-1} U_1^{\dagger}) +\nonumber\\
&q_2 \ketbra{1} \otimes \ketbra{0} \otimes U_2 \ketbra{0}^{\otimes n-2} U_2^{\dagger} +\nonumber\\
&(1-q_1-q_2)\ketbra{1} \otimes\ketbra{1} \otimes U_3 \ketbra{0}^{\otimes n-2} U_3^{\dagger},
\end{align}
which are rank-$3$ random states with Haar randomly chosen $U_1$, $U_2$, and $U_3$, and some constant $0\leq q_1,q_2\leq1$ with $q_1+q_2\leq1$. By symmetry, we can also bound
\begin{align}
\TV(\bE_{Q'} p_\rho,p_{\rho_m}),
\end{align}
where $Q'$ corresponding to the ensemble $S_{Q'}$ that contains
\begin{align}
&q_1\ketbra{1} \otimes (U_1 \ketbra{0}^{\otimes n-1} U_1^{\dagger})\\
+& q_2 \ketbra{0} \otimes \ketbra{0} \otimes U_2 \ketbra{0}^{\otimes n-2} U_2^{\dagger}\nonumber\\
+& (1-q_1-q_2)\ketbra{0} \otimes\ketbra{1} \otimes U_3 \ketbra{0}^{\otimes n-2} U_3^{\dagger},
\end{align}
which are rank-$3$ random states with Haar randomly chosen $U_1$, $U_2$, and $U_3$, and some constant $0\leq q_1,q_2\leq1$ with $q_1+q_2\leq1$. By triangle inequality and choosing $q_1=1/2$ and $q_2=1/4$, we can compute a bound for
\begin{align}
\TV(\bE_{\mathcal{S}_2} p_\rho,\bE_{\mathcal{S}_2} p_\rho),
\end{align}
which leads to a sample complexity lower bound for the distinguishing problem by Le Cam's method~\cite{yu1997assouad}.

We thus only need to consider the likelihood ratio:
\begin{align}
L_\rho(l)=\prod_{t=1}^T\frac{\Tr(F_{u_t}^{s_t}\rho^{\otimes2})}{\Tr(F_{u_t}^{s_t}\rho_m^{\otimes2})}=\frac{\Tr(\bigotimes_{t=1}^TF_{u_t}^{s_t}\rho^{\otimes2T})}{\Tr(\bigotimes_{t=1}^TF_{u_t}^{s_t}\rho_m^{\otimes2T})},
\end{align}
where we denote $\rho_m=\mathbb{I}/2^n$ for simplicity. Average over $\rho$, we have three terms every $\rho$. Regarding these terms, we can split the $T$ pairs ($2T$ copies) of states into $9$ sets corresponding to the terms containing the Haar random part $U_1$, $U_2$, and $U_3$:
\begin{align}
\begin{split}
&T_{11}:\ U_1\ket{0}\bra{0}U_1^\dagger\otimes U_1\ket{0}\bra{0}U_1^\dagger,\\
&T_{22}:\ U_2\ket{0}\bra{0}U_2^\dagger\otimes U_2\ket{0}\bra{0}U_2^\dagger,\\
&T_{33}:\ U_3\ket{0}\bra{0}U_3^\dagger\otimes U_3\ket{0}\bra{0}U_3^\dagger,\\\
&T_{12}:\ U_1\ket{0}\bra{0}U_1^\dagger\otimes U_2\ket{0}\bra{0}U_2^\dagger,\\
&T_{21}:\ U_2\ket{0}\bra{0}U_2^\dagger\otimes U_1\ket{0}\bra{0}U_1^\dagger,\\
&T_{13}:\ U_1\ket{0}\bra{0}U_1^\dagger\otimes U_3\ket{0}\bra{0}U_3^\dagger,\\
&T_{31}:\ U_3\ket{0}\bra{0}U_3^\dagger\otimes U_1\ket{0}\bra{0}U_1^\dagger,\\
&T_{23}:\ U_2\ket{0}\bra{0}U_2^\dagger\otimes U_3\ket{0}\bra{0}U_3^\dagger,\\
&T_{32}:\ U_3\ket{0}\bra{0}U_3^\dagger\otimes U_2\ket{0}\bra{0}U_2^\dagger,
\end{split}
\end{align}
where $\ket{0}$ is $(n-1)$-qubit all zero state for $U_1$ and $(n-2)$-qubit all zero state for $U_2$ and $U_3$. We denote $S^{U_1}_{T_{11},T_{12},T_{13},T_{21},T_{31}}$ as the sum of all permutations on all two copies of $U_1$ in $T_{11}$, the first copy corresponding to $U_1$ in $T_{12}$ and $T_{13}$, and the second copy corresponding to $U_1$ in $T_{21}$ and $T_{31}$. Similarly we can define $S^{U_2}_{T_{22},T_{21},T_{23},T_{12},T_{32}}$ and $S^{U_3}_{T_{33},T_{31},T_{32},T_{13},T_{23}}$. We denote $s_{U_1}=2\abs{T_{11}}+\abs{T_{12}}+\abs{T_{13}}+\abs{T_{21}}+\abs{T_{31}}$, $s_{U_2}=2\abs{T_{22}}+\abs{T_{21}}+\abs{T_{23}}+\abs{T_{12}}+\abs{T_{32}}$, and $s_{U_3}=2T-s_{U_1}-s_{U_2}$. By summing over all possible partitions of $[T]$ into the $9$ sets, we can compute the $L(l)=\bE_Q L_\rho(l)$ as
\begin{align}
&L(l)=\bE_Q L_\rho(l)\nonumber\\
&=\bE_{U_1,U_2,U_3} \frac{\Tr(\bigotimes_{t=1}^TF_{u_t}^{s_t}\cdot(q_1\ket{0}\bra{0}\otimes U_1\ket{0}\bra{0}U_1^\dagger+q_2\ket{10}\bra{10}\otimes U_3\ket{0}\bra{0}U_3^\dagger+(1-q_1-q_2)\ket{11}\bra{11}\otimes U_3\ket{0}\bra{0}U_3^\dagger)^{\otimes2T})}{\Tr(\bigotimes_{t=1}^TF_{u_t}^{s_t}\cdot\rho_m^{\otimes2T})}\nonumber\\
&=\sum_{T_{ij}}\frac{q_1^{s_{U_1}}q_2^{s_{U_2}}(1-q_1-q_2)^{s_{U_3}}(2^n)^{2T}}{2^n...(2^n+s_{U_1}-1)2^n...(2^n+s_{U_2}-1)2^n...(2^n+s_{U_3}-1)}\cdot\nonumber\\
&\ \frac{\Tr(\bigotimes_{t=1}^TF_{u_t}^{s_t}\cdot \ket{0}\bra{0}^{\otimes s_{U_1}}\otimes S^{U_1}_{T_{11},T_{12},T_{13},T_{21},T_{31}}\otimes \ket{10}\bra{10}^{\otimes s_{U_2}}\otimes S^{U_2}_{T_{22},T_{21},T_{23},T_{12},T_{32}}\otimes \ket{11}\bra{11}^{\otimes s_{U_3}}\otimes S^{U_3}_{T_{33},T_{13},T_{23},T_{31},T_{32}})}{\prod_{t=1}^T\Tr(F_{u_t}^{s_t})}\nonumber\\
&\geq\left(1-\frac{4T^2}{2^n}\right)\sum_{T_{ij}}q_1^{s_{U_1}}q_2^{s_{U_2}}(1-q_1-q_2)^{s_{U_3}}\cdot\nonumber\\
&\ \frac{\Tr(\bigotimes_{t=1}^TF_{u_t}^{s_t}\cdot \ket{0}\bra{0}^{\otimes s_{U_1}}\otimes S^{U_1}_{T_{11},T_{12},T_{13},T_{21},T_{31}}\otimes \ket{10}\bra{10}^{\otimes s_{U_2}}\otimes S^{U_2}_{T_{22},T_{21},T_{23},T_{12},T_{32}}\otimes \ket{11}\bra{11}^{\otimes s_{U_3}}\otimes S^{U_3}_{T_{33},T_{13},T_{23},T_{31},T_{32}})}{\prod_{t=1}^T\Tr(F_{u_t}^{s_t})},
\end{align}
where the summation in the third and the fourth line works for $i,j=1,2,3$. Similar to Appendix~\ref{app:purity_memory_lower}, we only need to argue that each individual term 
\begin{align}
\frac{\Tr(\bigotimes_{t=1}^TF_{u_t}^{s_t}\cdot \ket{0}\bra{0}^{\otimes s_{U_1}}\otimes S^{U_1}_{T_{11},T_{12},T_{13},T_{21},T_{31}}\otimes \ket{10}\bra{10}^{\otimes s_{U_2}}\otimes S^{U_2}_{T_{22},T_{21},T_{23},T_{12},T_{32}}\otimes \ket{11}\bra{11}^{\otimes s_{U_3}}\otimes S^{U_3}_{T_{33},T_{13},T_{23},T_{31},T_{32}})}{\prod_{t=1}^T\Tr(F_{u_t}^{s_t})}
\end{align}
is bounded below by $0.99$ with high probability $0.99-\gamma 2^{k-n}T$ for some constant $\gamma$. We can now use Lemma~\ref{lem:split} similar to Appendix~\ref{app:purity_memory_lower} to split $T_{11}$ from $T_{12},T_{21},T_{13},T_{31}$, and similar for $T_{22}$ and $T_{33}$. Therefore, the term is lower bounded by:
\begin{align}
\frac{\Tr(\bigotimes_{t\in T_{11}\cup T_{22}\cup T_{33}}F_{u_t}^{s_t}\cdot \ket{0}\bra{0}^{\otimes \abs{T_{11}}}\otimes S^{U_1}_{T_{11},0,0,0,0}\otimes \ket{10}\bra{10}^{\otimes \abs{T_{22}}}\otimes S^{U_2}_{T_{22},0,0,0,0}\otimes \ket{11}\bra{11}^{\otimes T_{33}}\otimes S^{U_3}_{T_{33},0,0,0,0})}{\prod_{t\in T_{11}\cup T_{22}\cup T_{33}}\Tr(F_{u_t}^{s_t})}.
\end{align}
Following a same proof of Lemma~\ref{lem:swap_second} (see Lemma 17 of~\cite{chen2024optimal}), we can similarly prove the following corollary
\begin{corollary}\label{cor:single_step_two_moment}
Suppose $F_{u_t}^{s_t}$ is a POVM that corresponding to the POVMs considered in Definition~\ref{def:learning_k} (learning $k$-qubit quantum memory), we have for any such $F_{u_t}^{s_t}$:
\begin{align}
\sum_{s_t}\frac{\Tr(F_{u_t}^{s_t}\rho^{(1)}\otimes\mathrm{SWAP}_{n-1})}{\Tr(F_{u_t}^{s_t})}\leq 2^{k+n},\\
\sum_{s_t}\frac{\Tr(F_{u_t}^{s_t}\rho^{(2)}\otimes\mathrm{SWAP}_{n-2})}{\Tr(F_{u_t}^{s_t})}\leq 2^{k+n},
\end{align}
for arbitrary single-qubit state $\rho^{(1)}$ and two-qubit state $\rho^{(2)}$
\end{corollary}
Plugging this corollary into each term using the standard martingale argument in Lemma~\ref{lem:martingale}, we obtain the theorem.
\end{proof}

\subsection{Quantum Fisher information estimation}\label{app:fisher_lower}
\subsubsection{Without quantum memory}
Here, we prove the hardness of estimating quantum Fisher information without purification and quantum memory.
\begin{theorem}[Exponential overhead of estimation Fisher information without purification]\label{thm:fisher_lower}
Given an $n$-qubit low-rank state $\rho$, its rank and observable $O$ satisfying $\norm{O}_{\infty} = 1$, any protocol that predicts the quantum Fisher information $F_O(\rho)$ within constant additive error for arbitrary $\rho$ requires at least $\Omega(2^{n/2})$ sample complexity. 
\end{theorem}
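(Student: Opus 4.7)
The plan is to reduce Fisher information estimation to a distinguishing problem between two carefully chosen low-rank state ensembles $\mathcal{S}_1$ and $\mathcal{S}_2$ with a constant gap in $F_{X_1}$, where $X_1 = X \otimes I^{\otimes n-1}$. The natural construction, suggested by Corollary~\ref{col:fisher}, is:
\begin{align}
\mathcal{S}_1: \rho &= \tfrac{1}{2}\ketbra{0}\otimes U_1\ketbra{0^{n-1}}U_1^{\dagger} + \tfrac{3}{8}\ketbra{1}\otimes U_1 \ketbra{0^{n-1}} U_1^{\dagger} + \tfrac{1}{8}\ketbra{1}\otimes U_2\ketbra{0^{n-1}}U_2^{\dagger}, \nonumber\\
\mathcal{S}_2: \rho &= \tfrac{1}{2}\ketbra{0}\otimes U_1\ketbra{0^{n-1}}U_1^{\dagger} + \tfrac{3}{8}\ketbra{1}\otimes U_2 \ketbra{0^{n-1}} U_2^{\dagger} + \tfrac{1}{8}\ketbra{1}\otimes U_3\ketbra{0^{n-1}}U_3^{\dagger},
\end{align}
where all unitaries are independent Haar-random. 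Both ensembles produce rank-$3$ states with identical known eigenvalue spectrum $(1/2,\lambda_1,\lambda_2)$. In $\mathcal{S}_1$, the eigenstates in the $\ketbra{0}$ and $\ketbra{1}$ sectors share the common Haar direction $U_1\ket{0^{n-1}}$, so by Corollary~\ref{col:fisher} we have $F_{X_1}(\rho)\geq 0.01$ with high probability. In $\mathcal{S}_2$, the three Haar-random directions are independent, so $|\bra{u}\ket{\phi_i}|^2 = \mathcal{O}(2^{-n})$ by Lemma~\ref{lem:innerproduct}, and the cross-term matrix elements $|\bra{\psi_j}X_1\ket{\psi_k}|^2$ vanish up to exponentially small corrections, giving $F_{X_1}(\rho)\leq \mathcal{O}(2^{-n})$ with high probability. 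An estimator with constant additive error $<0.005$ would therefore distinguish the two ensembles with high probability.

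The next step is to lower bound the sample complexity of this distinguishing task using the learning tree framework of Appendix~\ref{app:learning_tree} and Le Cam's two-point method. I would apply the triangle inequality through the maximally mixed state,
\begin{equation}
\TV(\bE_{\mathcal{S}_1} p_{\rho},\, \bE_{\mathcal{S}_2} p_{\rho}) \leq \TV(\bE_{\mathcal{S}_1} p_{\rho},\, p_{\rho_m}) + \TV(\bE_{\mathcal{S}_2} p_{\rho},\, p_{\rho_m}),
\end{equation}
and bound each term by writing the likelihood ratio at each leaf, expanding the mixture over Haar components, and applying Lemma~\ref{lem:high_moment_haar} exactly as in the proofs of Theorems~\ref{thm:purity_lower} and the quantum virtual cooling / PCA hardness. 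Since the ensembles only involve a constant number of Haar-random directions inserted via rank-$1$ components of constant weight, the same combinatorial partition argument used in Lemma~\ref{lem:pca} yields
\begin{equation}
\TV(\bE_{\mathcal{S}_i} p_{\rho},\, p_{\rho_m}) \leq 1 - \left(1 + \tfrac{T}{2^{n-2}}\right)^{-T},
\end{equation}
so $\TV(\bE_{\mathcal{S}_1} p_\rho, \bE_{\mathcal{S}_2} p_\rho) = \mathcal{O}(T^2/2^n)$, which forces $T = \Omega(2^{n/2})$ for successful distinguishing.

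The main obstacle is the Fisher information side of the argument rather than the learning-tree side: unlike purity or $\Tr(O\rho^t)$, the quantum Fisher information depends non-trivially on both the eigenvalues \emph{and} the relative orientation of the eigenvectors with respect to $O$ through the off-diagonal matrix elements $|\bra{\psi_j}O\ket{\psi_k}|^2$. Certifying that the two ensembles have provably separated Fisher information values requires careful control of the eigenstates of mixtures of two Haar-random rank-$1$ projectors like $\tfrac{3}{8}\ketbra{u}+\tfrac{1}{8}\ketbra{v}$; this is precisely what Corollary~\ref{col:innerproduct} provides, ensuring that the principal eigenvector is aligned with $\ket{u}$ up to error $\mathcal{O}(2^{-n})$ in $\mathcal{S}_1$ while being essentially uncorrelated with $\ket{u}$ in $\mathcal{S}_2$. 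Once this Fisher gap is established, the TV bound proceeds in direct analogy with Theorems~\ref{thm:purity_lower} and \ref{thm:vc_pca_lower}.
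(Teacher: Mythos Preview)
Your proposal is correct and follows the same overall learning-tree reduction strategy, but the choice of the second ensemble $\mathcal{S}_2$ differs from the paper's. The paper keeps only two independent Haar unitaries and obtains $\mathcal{S}_2$ from $\mathcal{S}_1$ simply by applying a Hadamard on the first qubit:
\[
\mathcal{S}_2:\quad \rho = \tfrac{1}{2}\ketbra{+}\otimes U\ketbra{0^{n-1}}U^\dagger + \tfrac{3}{8}\ketbra{-}\otimes U\ketbra{0^{n-1}}U^\dagger + \tfrac{1}{8}\ketbra{-}\otimes V\ketbra{0^{n-1}}V^\dagger.
\]
Because $X$ is diagonal in the $\ket{\pm}$ basis, every off-diagonal element $\bra{\psi_j}X_1\ket{\psi_k}$ vanishes identically, so $F_{X_1}(\rho)=0$ deterministically rather than merely $O(2^{-n})$ with high probability. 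Moreover the TV bound for $\mathcal{S}_2$ follows from the $\mathcal{S}_1$ bound for free, since a local unitary on the first qubit leaves the distance to $\rho_m$ unchanged. Your construction instead adds a third independent Haar direction and establishes the Fisher gap via Haar near-orthogonality (Lemma~\ref{lem:innerproduct}); the partition argument with three independent Haar groups then goes through exactly as you describe. Both routes give the same $\Omega(2^{n/2})$ bound; the paper's trades an extra Haar unitary for a cleaner deterministic Fisher separation and an automatic symmetry argument, while yours is more uniform with the purity and PCA constructions and does not rely on a special relation between the observable $X_1$ and the first-qubit basis.
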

\begin{proof}
Consider the following distinguishing task: A state $\rho$ is randomly drawn from one of the following two state ensembles $\mathcal{S}_1$ and $\mathcal{S}_2$ that only contains low-rank mixed states:
\begin{enumerate}
    \item {Ensemble $\mathcal{S}_1$: 
    \begin{align}\label{eq:fisher_info_SA}
        \rho = \frac{1}{2} \ketbra{0} \otimes U \ketbra{0}^{\otimes n-1} U^{\dagger} + \frac{3}{8} \ketbra{1} \otimes U \ketbra{0}^{\otimes n-1} U^{\dagger} + \frac{1}{8}\ketbra{1} \otimes V \ketbra{0}^{\otimes n-1} V^{\dagger}.
    \end{align}
    }
    \item {Ensemble $\mathcal{S}_2$:
        \begin{align}\label{eq:fisher_info_SB}
            \rho = \frac{1}{2} \ketbra{+} \otimes U \ketbra{0}^{\otimes n-1} U^{\dagger} + \frac{3}{8} \ketbra{-} \otimes U \ketbra{0}^{\otimes n-1} U^{\dagger} + \frac{1}{8}\ketbra{-} \otimes V \ketbra{0}^{\otimes n-1} V^{\dagger}.
        \end{align}
    }
\end{enumerate}

In both state ensembles, $U, V$ are Haar-random unitary on $n-1$ qubits. For the observable $X_1 = X \otimes \mathbb{I}^{\otimes n-1}$, it is straightforward to check that $F_{X_1} = 0$ for $\rho \in \mathcal{S}_2$. By Corollary \ref{col:fisher}, given $0 < \delta < 1$, for sufficiently large $n$, if we randomly choose $\rho \in \mathcal{S}_1$, then with probability $1-\delta$, $F_{X_1} \ge 0.01$. 

Suppose we have an algorithm that, given an unknown state $\rho$ that is guaranteed to have rank at most 3 and an observable $O$ satisfies $\norm{O}_{\infty} \le 1$, outputs the Fisher information within constant additive error $\epsilon \le \mathcal{O}(1)$ with probability at least $5/6$. Then, this algorithm can distinguish these two state ensembles with probability $5/6$ for sufficiently large n.

We can show that, on average, the total variation distance between $\rho_A$ and $\rho_B$ can be small, which implies the algorithm has exponential sample complexity overhead.

\begin{lemma}\label{lem:fisher}
    \begin{align}
        \TV(p_{\rho_m}, \bE_{\mathcal{S}_1} p_{\rho}) \le 1 -  (1+\frac{T}{2^{n-2}})^{-T}.
    \end{align}
\end{lemma}

Similarly, we have $\TV(p_{\rho_m},\bE_{\mathcal{S}_2} p_{\rho}) \le 1 -  (1+\frac{T}{2^{n-1}})^{-T}$. From triangular inequality, we obtain
\begin{align}
\begin{split}
    \TV(\bE_{\mathcal{S}_1} p_{\rho}, \bE_{\mathcal{S}_2} p_{\rho}) &\le 2 -  2(1+\frac{T}{2^{n-1}})^{-T} \\
    &=2-2\exp(-T\ln(1+T/2^{n-1}))  \\
    &\le 2 - 2\exp(-T^2/2^{n-1}) \\
    &\le 2-2(1-T^2/2^{n-1})\\
    &= \frac{T^2}{2^{n-2}}.
\end{split}
\end{align}

By Le Cam's two-point method~\cite{yu1997assouad}, to successfully distinguish these two state ensembles with probability greater than $\frac{5}{6}$, the total variation distance must be greater than a constant, which means $T = \Omega(2^{n/2})$.
\end{proof}

\begin{proof}[Proof of Lemma \ref{lem:fisher}]
The proof is similar to the proof of Lemma \ref{lem:pca}. 
\begin{align}
\begin{split}
 \TV(p_{\mathcal{S}_1}, p_{\mathcal{S}_2})& = \frac{1}{2} \sum_l |p_{\mathcal{S}_1}(l) - p_{\mathcal{S}_2}(l)|\\
&= \sum_l \max(0, p_{\mathcal{S}_1}(l) - p_{\mathcal{S}_2}(l)) \\
&= \sum_l p_{\mathcal{S}_1}(l) \max(0, 1 - \frac{p_{\mathcal{S}_2}(l)}{p_{\mathcal{S}_1}(l)}).
\end{split}
\end{align}
\begin{align}
    &\TV(p_{\rho_m}, \bE_{\mathcal{S}_1} p_{\rho}) = \nonumber\\\label{eq:distance_A_mixed2}
    &\sum_l p_{\rho_m}(l) \max\left(0, 1- \bE \prod_{i=1}^{T} 2^n \Tr(\ketbra{\psi_{u_{i+1}}^{u_i}}\rho) \right).
\end{align}

To bound $\bE \prod_{i=1}^{T} \Tr(\ketbra{\psi_{u_{i+1}}^{u_i}}\rho)$, we decompose $\ket{\psi_{u_{i+1}}^{u_i}}$ into
\begin{align}
    \ket{\psi_{u_{i+1}}^{u_i}} = \alpha_{u_{i+1}}^{u_i} \ket{0} \otimes \ket{\phi_{u_{i+1}}^{u_i}(0)} +\beta_{u_{i+1}}^{u_i} \ket{1} \otimes \ket{\phi_{u_{i+1}}^{u_i}(1)}.
\end{align}
where $|\alpha_{u_{i+1}}^{u_i}|^2 + |\beta_{u_{i+1}}^{u_i}|^2= 1$. And denote $\rho$ as 
\begin{align}
    \rho = &\frac{1}{2} \ketbra{0} \otimes \ketbra{\psi_0} + \frac{3}{8} \ketbra{1} \otimes \ketbra{\psi_0} + \nonumber\\ 
    &\frac{1}{8}\ketbra{1} \otimes \ketbra{\psi_{1}}.
\end{align}

Then we have
\begin{align}\label{eq:moments_fisher}
\begin{split}
    &\bE \prod_{i=1}^{T} 2^n \Tr(\ketbra{\psi_{u_{i+1}}^{u_i}}\rho) \\
    = &\bE \prod_{i=1}^{T} (2^{n-1}|\alpha_{u_{i+1}}^{u_i}|^2 |\bra{\phi_{u_{i+1}}^{u_i}(0)} \ket{\psi_0}|^2 + \\
    &(2^{n-1}\times \frac{3}{4})|\beta_{u_{i+1}}^{u_i}|^2 |\bra{\phi_{u_{i+1}}^{u_i}(1)} \ket{\psi_{0}}|^2 + \\
    &(2^{n-1} \times \frac{1}{4})|\beta_{u_{i+1}}^{u_i}|^2 |\bra{\phi_{u_{i+1}}^{u_i}(1)} \ket{\psi_{1}}|^2) \\
    = & 2^{(n-1)T}\sum_{S_0,S_{10},\atop S_{11}} (\frac{3}{4})^{|S_{10}|} (\frac{1}{4})^{|S_{11}|}\bE \prod_{i \in S_0} |\alpha_{u_{i+1}}^{u_i}|^2 |\bra{\phi_{u_{i+1}}^{u_i}(0)} \ket{\psi_0}|^2\\
    &\prod_{i \in S_{10}} |\beta_{u_{i+1}}^{u_i}|^2 |\bra{\phi_{u_{i+1}}^{u_i}(1)} \ket{\psi_{0}}|^2 \prod_{i \in S_{11}} |\beta_{u_{i+1}}^{u_i}|^2 |\bra{\phi_{u_{i+1}}^{u_i}(1)} \ket{\psi_{1}}|^2.
    \end{split}
\end{align}
where $\sum_{S_0,S_{10},S_{11}}$ denotes $\sum_{S_0,S_{10},S_{11}: \cup S_i = [T], S_i \cap S_j=\emptyset}$. From Lemma \ref{lem:high_moment_haar}, we have
\begin{align}
\begin{split}
    &\bE \left[ 2^{(n-1)(|S_0| + |S_{10}|)}\prod_{i \in S_0} |\bra{\phi_{u_{i+1}}^{u_i}(0)} \ket{\psi_0}|^2 \prod_{i \in S_{10}} |\bra{\phi_{u_{i+1}}^{u_i}(1)} \ket{\psi_{0}}|^2\right] \\
    &\ge [ (1 + \frac{|S_0|+|S_{10}|-1}{2^{n-1}}) \cdots (1+\frac{1}{2^{n-1}})]^{-1}  \\
    &\ge (1+\frac{T}{2^{n-1}})^{-|S_{10}|}. 
\end{split}
\end{align}
Similarly, we have:
\begin{align}
    \bE \left[ 2^{(n-1)|S_{11}|}\prod_{i \in S_{11}} |\bra{\phi_{u_{i+1}}^{u_i}(1)} \ket{\psi_1}|^2\right] \ge (1+\frac{T}{2^{n-1}})^{-|S_{11}|}.
\end{align}
Substituting this into Eq.~\eqref{eq:moments_fisher}, we have
\begin{align}
\begin{split}
    &\bE \prod_{i=1}^{T} 2^n \Tr(\ketbra{\psi_{u_{i+1}}^{u_i}}\rho) \\
    \ge & \sum_{S_0,S_{10},S_{11}} \bE \prod_{i \in S_0} |\alpha_{u_{i+1}}^{u_i}|^2 \prod_{i \in S_{10}} \frac{3}{4}|\beta_{u_{i+1}}^{u_i}|^2\\
    &\prod_{i \in S_{11}} \frac{1}{4}|\beta_{u_{i+1}}^{u_i}|^2 
    (1+\frac{T}{2^{n-1}})^{-T} \\
    = & (|\alpha_{u_{i+1}}^{u_i}|^2 + \frac{3}{4}|\beta_{u_{i+1}}^{u_i}|^2 + \frac{1}{4}|\beta_{u_{i+1}}^{u_i}|^2 ) (1+\frac{T}{2^{n-1}})^{-T} \\
    = &(1+\frac{T}{2^{n-1}})^{-T}.
    \end{split}
\end{align}

Combining with Eq.~\eqref{eq:distance_A_mixed2}, we have
\begin{align}
\begin{split}
    \TV(p_{\rho_m},\bE_{\mathcal{S}_1} p_{\rho}) &\le \sum_l p_{\rho_m}(l) \max\left(0, 1- (1+\frac{T}{2^{n-2}})^{-T} \right)\\
    &= 1 -  (1+\frac{T}{2^{n-1}})^{-T}.
\end{split}
\end{align}
\end{proof}

\subsubsection{With bounded quantum memory}

Here, we prove the hardness of estimating quantum Fisher information of low-rank quantum states with bounded memory.
\begin{theorem}[Exponential overhead of estimation Fisher information with bounded quantum memory without purification]\label{thm:fisher_lower_mem}
Given an unknown $n$-qubit low-rank state $\rho$ with $n\geq 2$ and its rank (even constant), any protocol that predicts $F_O(\rho)$ within constant additive error requires at least $\Omega(\min\{2^{n/2},2^{n-k}\})$ sample complexity with $k$ qubits of quantum memory.
\end{theorem}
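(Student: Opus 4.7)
The plan is to combine the single-copy hard instance developed for Theorem~\ref{thm:fisher_lower} with the bounded-memory learning-tree machinery used in Theorem~\ref{thm:purity_lower_mem} and Theorem~\ref{thm:vc_pca_lower_mem}. Concretely, I will reuse the two ensembles $\mathcal{S}_1$, $\mathcal{S}_2$ from Eq.~\eqref{eq:fisher_info_SA} and Eq.~\eqref{eq:fisher_info_SB}, since Corollary~\ref{col:fisher} already shows that the observable $X_1 = X\otimes\mathbb{I}^{\otimes n-1}$ produces $F_{X_1}\ge 0.01$ on a random $\rho\in\mathcal{S}_1$ with high probability whereas $F_{X_1}=0$ identically on $\mathcal{S}_2$. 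Hence a constant-error Fisher-information estimator must distinguish the two ensembles with probability $\ge 5/6$, and Le Cam's two-point method reduces the problem to bounding $\mathrm{TV}(\mathbb{E}_{\mathcal{S}_1}p^\rho,\mathbb{E}_{\mathcal{S}_2}p^\rho)$ on the leaves of any learning tree for the memory model of Definition~\ref{def:learning_k}.

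The next step is to formalize the protocol as a depth-$T$ learning tree whose every node carries a rank-$1$ two-copy POVM $\{F_u^s=2^nw_u^s\ketbra{\psi_u^s}\}_s$ constrained to the $(n\to k)$ and $(n+k\to 0)$ form of Fig.~\ref{fig:bounded_memory}. By triangle inequality it suffices to bound $\mathrm{TV}(\mathbb{E}_Q p^\rho, p^{\rho_m})$ for each of $Q\in\{\mathcal{S}_1,\mathcal{S}_2\}$, following exactly the route of Appendix~\ref{app:purity_memory_lower} and Appendix~\ref{app:vc_pca_lower_mem}. Using
\begin{equation}
  \mathrm{TV}(\mathbb{E}_Q p^\rho,p^{\rho_m})\le \mathbb{E}_{\ell\sim p^{\rho_m},\rho\sim Q}\max\{0,1-L_\rho(\ell)\},
\end{equation}
I would expand $\rho^{\otimes 2T}=\sum_{(i_1,\ldots,i_{2T})\in\{1,2,3\}^{2T}}\big(\prod_t\lambda_{i_t}\big)\bigotimes_t(\sigma_{i_t}\otimes W_{i_t}\ketbra{0}W_{i_t}^{\dagger})$ with $(\lambda_1,\lambda_2,\lambda_3)=(1/2,3/8,1/8)$, $\sigma_1=\ketbra{0},\sigma_2=\sigma_3=\ketbra{1}$ (for $\mathcal{S}_1$; replace by $\ketbra{\pm}$ for $\mathcal{S}_2$) and $W_{i_t}\in\{U,V\}$. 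Integrating the Haar averages for the $W_{i_t}$'s produces symmetrizer operators $S^U_{s_U}\otimes S^V_{s_V}$ with normalization $1/[2^{n-1}(2^{n-1}+1)\cdots(2^{n-1}+s_W-1)]$, and partitioning the $T$ pairs into a $3\times 3$ table analogous to the one used in Theorem~\ref{thm:vc_pca_lower_mem} converts $L_\rho(\ell)$ into a convex combination of terms indexed by these partitions, each multiplied by a combinatorial prefactor $\lambda_{i_1}\cdots\lambda_{i_{2T}}$ that sums to $1$.

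The crucial step is to show that each partition term
\begin{equation}
  \frac{\Tr\!\big(\bigotimes_t F_{u_t}^{s_t}\cdot \Sigma_T\otimes S^U_{s_U}\otimes S^V_{s_V}\big)}{\prod_t\Tr(F_{u_t}^{s_t})}\ge 1-o(1)
\end{equation}
with probability $1-O(2^{k-n}T)$ over the trajectory, where $\Sigma_T$ collects the fixed single-qubit operators $\sigma_{i_t}$. I would apply Lemma~\ref{lem:split} twice to separate the ``diagonal'' pairs (both copies assigned to the same Haar part) from the off-diagonal pairs, and then use the fact that summing a rank-$1$ two-copy POVM against $I\otimes S^V_{s_V}$ produces again a symmetrizer contracted against single-copy projectors; a Lemma~5.12-of-\cite{chen2022memory} type inequality $\sum_\pi\Tr(\pi\bigotimes_j\ketbra{\psi_j})\ge 1$ kills the off-diagonal factors, and on each diagonal pair I would invoke Corollary~\ref{cor:single_step_two_moment} together with the martingale argument of Lemma~\ref{lem:martingale}. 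Collecting the prefactors as in Appendix~\ref{app:purity_memory_lower} gives $\mathrm{TV}\le O(T^2/2^n+2^{k-n}T)$, which yields the claimed bound $T=\Omega(\min\{2^{n/2},2^{n-k}\})$.

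The main obstacle will be bookkeeping the three-term structure of $\rho$: unlike the rank-$2$ case behind the purity bound, expanding $\rho^{\otimes 2T}$ now produces nine classes per pair (three from $\mathcal{S}_1$ times three for $\mathcal{S}_2$ matches the PCA-memory proof, but here two of the three eigencomponents share the same Haar unitary $U$, so one must be careful that the accompanying ``classical'' operators $\sigma_{i_t}$ do not sabotage the Lemma~\ref{lem:split} split). A second, minor difficulty is that $\mathcal{S}_2$ differs from $\mathcal{S}_1$ by the $\ket{0/1}\leftrightarrow\ket{\pm}$ rotation on the first qubit; I expect this to cause no trouble because the triangle-inequality bound is applied to each ensemble separately against $\rho_m$, and $\rho_m$ is invariant under any single-qubit rotation, so the two sides are estimated by identical computations up to the labelling of the fixed single-qubit operators $\sigma_i$.
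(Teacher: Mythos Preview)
Your plan is correct and follows essentially the same route as the paper: reuse the ensembles of Eq.~\eqref{eq:fisher_info_SA}--\eqref{eq:fisher_info_SB}, bound $\mathrm{TV}(\mathbb{E}_{\mathcal{S}_i}p^\rho,p^{\rho_m})$ via the likelihood-ratio expansion, split diagonal from off-diagonal pairs with Lemma~\ref{lem:split}, and finish with Corollary~\ref{cor:single_step_two_moment} plus the martingale of Lemma~\ref{lem:martingale}.

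The one simplification you miss, and which dissolves your ``main obstacle'', is that the paper does not carry the $3\times 3$ bookkeeping at all. Since the first two eigencomponents share the same Haar block $U\ketbra{0}U^\dagger$, they can be merged \emph{before} expanding: writing $\rho_{01}=\tfrac{4}{7}\ketbra{0}+\tfrac{3}{7}\ketbra{1}$ gives
\[
\rho=\tfrac{7}{8}\,\rho_{01}\otimes U\ketbra{0}U^\dagger+\tfrac{1}{8}\,\ketbra{1}\otimes V\ketbra{0}V^\dagger,
\]
which is a genuine two-term mixture with independent Haar parts. The argument then becomes a verbatim replay of Appendix~\ref{app:purity_memory_lower}, with the only change that the symmetrizers $S^U,S^V$ act on the last $n-1$ qubits and the fixed first-qubit factors ($\rho_{01}$ or $\ketbra{1}$) are absorbed by Corollary~\ref{cor:single_step_two_moment} rather than Lemma~\ref{lem:swap_second}. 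Your three-term expansion would also go through, but the summation over the sub-cases $(i_t\in\{1,2\})$ with weights $\tfrac12,\tfrac38$ simply reconstructs $\tfrac78\rho_{01}$ inside each $U$-copy, so doing it up front saves all the extra case analysis you were bracing for.
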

\begin{proof}
We consider following the same proving procedure in Appendix~\ref{app:vc_pca_lower_mem} and the hard instance against single-copy protocols for estimating Fisher information (we rewrite it for convenience):
\begin{enumerate}
    \item {Ensemble $\mathcal{S}_1$: 
    \begin{align}
        \rho = \frac{7}{8}\cdot\left(\frac{4}{7} \ketbra{0}+\frac{3}{7}\ketbra{1}\right) \otimes U \ketbra{0}^{\otimes n-1} U^{\dagger} + \frac{1}{8}\ketbra{1} \otimes V \ketbra{0}^{\otimes n-1} V^{\dagger}
    \end{align}
    }
    \item {Ensemble $\mathcal{S}_2$:
        \begin{align}
            \rho = \frac{7}{8}\cdot\left(\frac{4}{7} \ketbra{+}+\frac{3}{7}\ketbra{-}\right) \otimes U \ketbra{0}^{\otimes n-1} U^{\dagger} + \frac{1}{8}\ketbra{-} \otimes V \ketbra{0}^{\otimes n-1} V^{\dagger}
        \end{align}
    }
\end{enumerate}
In both state ensembles, $U, V$ are Haar-random unitary on $n-1$ qubits. In the following, we write $\rho_{01}=\frac{4}{7} \ketbra{0}+\frac{3}{7}\ketbra{1}$ and $\rho_{+-}=\frac{4}{7} \ketbra{+}+\frac{3}{7}\ketbra{-}$ for convenience. Our task is to distinguish whether $\rho$ is chosen from state ensemble $\mathcal{S}_1$ or $\mathcal{S}_2$.

Similar to the argument in Appendix~\ref{app:purity_memory_lower}, we only need to compute the total variation distance
\begin{align}
\TV(\bE_Q p_\rho,p_{\rho_m}),
\end{align}
where $Q$ corresponding to the ensemble $S_Q$ that contains
\begin{align}
&q\rho_{01} \otimes (U \ketbra{0}^{\otimes n-1} U^{\dagger}) +\nonumber\\
&\quad (1-q)\ketbra{1} \otimes \ketbra{0} \otimes V \ketbra{0}^{\otimes n-2} V^{\dagger},
\end{align}
which are rank-$2$ random state with Haar randomly chosen $U$ and $V$, and some constant $0\leq q\leq1$. By symmetry, we can also bound
\begin{align}
\TV(\bE_{Q'} p_\rho,p_{\rho_m}),
\end{align}
where $Q'$ corresponding to the ensemble $S_{Q'}$ that contains
\begin{align}
&q\rho_{+-} \otimes (U \ketbra{0}^{\otimes n-1} U^{\dagger}) + \nonumber\\
&\quad (1-q)\ketbra{-} \otimes \ketbra{0} \otimes V \ketbra{0}^{\otimes n-2} V^{\dagger},
\end{align}
which are rank-$2$ random state with Haar randomly chosen $U$ and $V$, and some constant $0\leq q\leq1$. By triangle inequality and choosing $q_1=7/8$ and $q_2=1/8$, we can compute a bound for
\begin{align}
\TV(\bE_{\mathcal{S}_2} p_\rho,\bE_{\mathcal{S}_2} p_\rho),
\end{align}
which leads to a sample complexity lower bound for the distinguishing problem by Le Cam’s method~\cite{yu1997assouad}. We will follow then an exact procedure with Appendix~\ref{app:purity_memory_lower} except that the permutation operator $S_{T_{00},T_{01},T_{10}}^U$ and $S_{T_{01},T_{10},T_{11}}^V$ now acts on the last $n-1$ qubits of each copy. By using Lemma~\ref{cor:single_step_two_moment} instead of Lemma~\ref{lem:swap_second} as Appendix~\ref{app:vc_pca_lower_mem}, we can prove the desired theorem.
\end{proof}

\subsubsection{With classical correlation}\label{app:fisher_classical_corr}
Here, we prove the hardness of estimating the quantum Fisher information of low-rank quantum states, even with ancillary qubits that are classically correlated with these states.

\begin{theorem}[Exponential overhead of estimating Fisher information with classical correlation]\label{thm:fisher_lower_correlation}
Given an $n$-qubit low-rank state $\rho$, its rank $r$, observable $O$ satisfying $\norm{O}_{\infty} = 1$, and access to ancillary qubits that are classically correlated with the eigenvalues of $\rho$, the entire state can be written as
\begin{equation}
    \rho_C = \sum_{i=1}^r p_i \ketbra{\psi_i} \otimes \ketbra{i},
\end{equation}
where $\ket{\psi_i}$ are the eigenstates of $\rho$ for $1 \le i \le r$. 
Any protocol that predicts the quantum Fisher information $F_O(\rho)$ within constant additive error for arbitrary $\rho$ with a high probability requires at least $\Omega(2^{n/2})$ sample complexity. 
\end{theorem}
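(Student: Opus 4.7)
The plan is to mirror the learning-tree argument behind Theorem \ref{thm:fisher_lower}, reducing Fisher-information estimation to a distinguishing task between two ensembles $\mathcal{S}_1,\mathcal{S}_2$ of classically-correlated low-rank states whose Fisher informations differ by a constant but whose induced leaf distributions are exponentially close.

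The starting structural observation is that every POVM $\{M_s\}$ on $\rho_C=\sum_i p_i\ketbra{\psi_i}\otimes\ketbra{i}$ gives outcome probabilities $\Pr(s)=\sum_i p_i\bra{\psi_i}N_s^{(i)}\ket{\psi_i}$, where $N_s^{(i)}=\bra{i}M_s\ket{i}$ acting on $C$ is a POVM element on $A$ and $\sum_s N_s^{(i)}=I_A$. Hence any single-copy protocol on $\rho_C$ is equivalent, in distribution, to sampling a label $i\sim p$, receiving the eigenstate $\ket{\psi_i}$, and applying a conditional (possibly adaptive) POVM on it. The classical tag thus reveals which eigenstate is at hand on any given copy but never creates coherence across distinct eigenstates, and the nonlinear cross-terms $\abs{\bra{\psi_j}O\ket{\psi_k}}^2$ in $F_O(\rho)$ require exactly such cross-eigenstate coherence to be probed with $o(2^{n/2})$ samples.

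With this reduction in hand, I would construct $\mathcal{S}_1,\mathcal{S}_2$ satisfying (i) for every label $i$, the conditional marginal distribution of $\ket{\psi_i}$ coincides between the two ensembles (e.g.\ both Haar on the pure-state manifold), while (ii) the joint distribution of $(\ket{\psi_1},\dots,\ket{\psi_r})$ is chosen so that $F_O(\rho)=\Omega(1)$ on average in $\mathcal{S}_1$ but $F_O(\rho)=o(1)$ on average in $\mathcal{S}_2$. A natural realization takes $\mathcal{S}_1$ with a single Haar-random unitary $U$ generating a coherent frame $\ket{\psi_i}=U\ket{e_i}$ matched to a structured observable $O$ so that the averaged cross-terms are $\Omega(1)$, and takes $\mathcal{S}_2$ with an independently scrambled orthonormal frame whose cross-terms average to $O(2^{-n})$. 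Under (i), the leaf-likelihood ratio factorizes across the subsequences of samples carrying each label, and the Haar-moment bound from Lemma \ref{lem:high_moment_haar} applied block-by-block yields $\bE_{\rho\sim\mathcal{S}_j}L_\rho(\ell)\ge(1+T/2^n)^{-T}$. Inserting this into the Le Cam / TV-distance computation of Lemma \ref{lem:fisher}, performed separately for each $\TV(\bE_{\mathcal{S}_j}p_\rho,p_{\rho_m})$ and combined by the triangle inequality, gives $\TV(\bE_{\mathcal{S}_1}p_\rho,\bE_{\mathcal{S}_2}p_\rho)=O(T^2/2^n)$ and forces $T=\Omega(2^{n/2})$.

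The main obstacle is engineering ensembles that meet (i), (ii), and exact orthonormality of the eigenstates simultaneously. A direct port of the Theorem \ref{thm:fisher_lower} construction breaks (i): conditional on a label $i$, the first qubit of $\ket{\psi_i}$ is deterministically $\ket{1}$ in $\mathcal{S}_1$ versus $\ket{-}$ in $\mathcal{S}_2$, and a single computational-basis measurement distinguishes the ensembles in $O(1)$ samples. Rescuing (i) by applying a global Haar scrambling before labeling symmetrizes the conditional marginals, but then (ii) must be re-verified using Weingarten calculus on the scrambled eigenstates; choosing the observable $O$ so that a constant Fisher-information gap survives the averaging is the delicate step of the argument.
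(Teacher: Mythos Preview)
Your high-level strategy---reduce to a two-ensemble distinguishing task, impose the matched-marginal condition (i) so that the classical tag alone gives nothing away, and bound the TV distance via the block-by-block Haar-moment computation and Le~Cam---is exactly what the paper does. The gap is in the hard-instance construction. Your first realization, $\ket{\psi_i}=U\ket{e_i}$ from a single Haar $U$, fails condition (ii): for any \emph{fixed} $O$ with $\norm{O}_\infty\le 1$, Haar concentration forces $\abs{\bra{e_j}U^\dagger O U\ket{e_k}}^2=O(2^{-n})$ with overwhelming probability, so $F_O(\rho)=o(1)$ in \emph{both} ensembles and there is no gap to detect. Your fallback of globally Haar-scrambling the Theorem~\ref{thm:fisher_lower} ensembles has the same problem: once the entire eigenframe is randomized relative to $O$, all cross-terms $\abs{\bra{\psi_j}O\ket{\psi_k}}^2$ collapse, and no amount of Weingarten bookkeeping will recover a constant Fisher-information separation.

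The paper's resolution is to separate the register on which $O$ acts from the register that carries the Haar randomness. Use a qutrit first register, set $O=(\ketbra{0}{1}+\ketbra{1}{0})\otimes I^{\otimes(n-2)}$, and take eigenstates of the form $\ket{i}\otimes\ket{\phi_i}$ with $\ket{\phi_i}\in\{U\ket{0},V\ket{0}\}$ Haar on $n-2$ qubits. In $\mathcal{S}_1$ choose $(\phi_0,\phi_1,\phi_2)=(U\ket{0},U\ket{0},V\ket{0})$, so $\abs{\bra{\psi_0}O\ket{\psi_1}}^2=\abs{\braket{\phi_0}{\phi_1}}^2=1$ and $F_O=\Theta(1)$; in $\mathcal{S}_2$ choose $(U\ket{0},V\ket{0},V\ket{0})$, so $\abs{\bra{\psi_0}O\ket{\psi_1}}^2=\abs{\braket{U0}{V0}}^2=O(2^{-n})$ w.h.p.\ and $F_O=o(1)$. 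Orthonormality is automatic from the qutrit register, and your condition (i) holds trivially because each $\ket{\phi_i}$ is marginally Haar in both ensembles---only the \emph{correlation} between $\phi_0$ and $\phi_1$ differs. The Le~Cam comparison is then against the structured reference $\rho_m=\bigl(\tfrac{1}{2}\ketbra{00}+\tfrac{3}{8}\ketbra{11}+\tfrac{1}{8}\ketbra{22}\bigr)\otimes I/2^{n-2}$ rather than the full maximally mixed state, after which your block-by-block likelihood-ratio bound goes through verbatim. The idea you were missing is simply: let $O$ act on a deterministic label register and hide the randomness only in a tensor factor that $O$ never touches.
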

\begin{proof}
To prove the lower bound in the classical correlation case, we consider the following distinguishing task: A state $\rho$ is randomly drawn from one of the following two state ensembles $\mathcal{S}_1$ and $\mathcal{S}_2$ that only contains low-rank mixed states:
\begin{enumerate}
    \item {Ensemble $\mathcal{S}_1$: 
    \begin{equation}\label{eq:fisher_info_SA_2}
        \rho = \frac{1}{2} \ketbra{0} \otimes U \ketbra{0}^{\otimes n-2} U^{\dagger} + \frac{3}{8} \ketbra{1} \otimes U \ketbra{0}^{\otimes n-2} U^{\dagger} + \frac{1}{8}\ketbra{2} \otimes V \ketbra{0}^{\otimes n-2} V^{\dagger}
    \end{equation}
    }
    \item {Ensemble $\mathcal{S}_2$:
        \begin{equation}\label{eq:fisher_info_SB_2}
            \rho = \frac{1}{2} \ketbra{0} \otimes U \ketbra{0}^{\otimes n-2} U^{\dagger} + \frac{1}{8} \ketbra{1} \otimes V \ketbra{0}^{\otimes n-2} V^{\dagger} + \frac{1}{8}\ketbra{2} \otimes V \ketbra{0}^{\otimes n-2} V^{\dagger}
        \end{equation}
    }
\end{enumerate}
In both state ensembles, the first register is realized by the first two qubits and $U, V$ are Haar-random unitary on $n-2$ qubits.

We consider the observable 
\begin{equation}
    O = (\ketbra{0}{1} + \ketbra{1}{0}) \otimes \mathbb{I}^{\otimes (n-2)}.
\end{equation}
If $\rho \in \mathcal{S}_1$, then the three eigenstates are
\begin{equation}
    \ket{\psi_0} = \ket{0} \otimes U \ket{0}, \ket{\psi_1} = \ket{1} \otimes U \ket{0}, \ket{\psi_2} = \ket{2} \otimes V \ket{0}.
\end{equation}
In this scenario, the term $\abs{\bra{\psi_0} O \ket{\psi_1}}^2$ contributes non-negligibly to the Fisher information, making $F_O(\rho)$ a constant non-zero value. On the other hand, for $\rho \in \mathcal{S}_2$, the three eigenstates become
\begin{equation}
    \ket{\psi_0} = \ket{0} \otimes U \ket{0}, \ket{\psi_1} = \ket{1} \otimes V \ket{0}, \ket{\psi_2} = \ket{2} \otimes V \ket{0},
\end{equation}
and $U\ket{0}$ and $V \ket{0}$ are nearly orthogonal with high probability (by Lemma \ref{lem:innerproduct}). As a result, the Fisher information is exponentially small with probability exponentially close to 1.

Consequently, suppose we have an algorithm that, given an unknown state $\rho$ of rank at most 3 and an observable $O$ satisfying $\norm{O}_{\infty} \le \mathcal{O}(1)$, outputs $F_O(\rho)$ within constant additive error $\epsilon \le \mathcal{O}(1)$ with probability at least $5/6$. This algorithm can distinguish these two state ensembles with probability $5/6$ for sufficiently large $n$.

We now prove that even when ancillary qubits are classically correlated with the eigenstates of $\rho$, distinguishing between these two ensembles remains hard for any algorithm employing single-copy measurements. We consider the following ensemble:

\begin{enumerate}
    \item {Ensemble $\mathcal{CS}_1$: 
    \begin{equation}\label{eq:fisher_info_CSA}
        \rho = \frac{1}{2} \ketbra{00} \otimes U \ketbra{0}^{\otimes n-2} U^{\dagger} + \frac{3}{8} \ketbra{11} \otimes U \ketbra{0}^{\otimes n-2} U^{\dagger} + \frac{1}{8}\ketbra{22} \otimes V \ketbra{0}^{\otimes n-2} V^{\dagger}
    \end{equation}
    }
    \item {Ensemble $\mathcal{CS}_2$:
        \begin{equation}\label{eq:fisher_info_CSB}
            \rho = \frac{1}{2} \ketbra{00} \otimes U \ketbra{0}^{\otimes n-2} U^{\dagger} + \frac{3}{8} \ketbra{11} \otimes V \ketbra{0}^{\otimes n-2} V^{\dagger} + \frac{1}{8}\ketbra{22} \otimes V \ketbra{0}^{\otimes n-2} V^{\dagger}
        \end{equation}
    }
\end{enumerate}
Both ensembles include ancillary qubits (the second register in the first braket) that are classically correlated with the eigenstates of $\rho$ in $\mathcal{S_1}$ or $\mathcal{S_2}$
We can show that, on average, the total variation distance for the probability on leaf nodes of any algorithm with inputs from $\rho \in \mathcal{CS}_1$ and $\rho \in \mathcal{CS}_2$ can be small, which implies the algorithm has exponential sample complexity overhead. To show this, we consider the state 
\begin{equation}
    \rho_m = [\frac{1}{2}(\ketbra{00}+\frac{3}{8}\ketbra{11}+\frac{1}{8}\ketbra{22})] \otimes \mathbb{I}^{\otimes (n-2)}
\end{equation}
We prove that the two ensembles are close to $\rho_m$, as stated in the following lemma.
\begin{lemma}\label{lem:fisher_classical_correlation}
    \begin{equation}
        \TV(p_{\rho_m}, \bE_{\mathcal{CS}_1} p_{\rho}) \le 1 -  (1+\frac{T}{2^{n-2}})^{-T}
    \end{equation}
\end{lemma}

Similarly, we have $\TV(p_{\rho_m},\bE_{\mathcal{CS}_2} p_{\rho}) \le 1 -  (1+\frac{T}{2^{n-1}})^{-T}$. From triangular inequality, we obtain
\begin{equation}
\begin{split}
    \TV(\bE_{\mathcal{CS}_1} p_{\rho}, \bE_{\mathcal{CS}_2} p_{\rho}) &\le 2 -  2(1+\frac{T}{2^{n-1}})^{-T} \\
    &=2-2\exp(-T\ln(1+T/2^{n-1}))  \\
    &\le 2 - 2\exp(-T^2/2^{n-1}) \\
    &\le 2-2(1-T^2/2^{n-1}) = \frac{T^2}{2^{n-1}}
\end{split}
\end{equation}

By Le Cam's two-point method~\cite{yu1997assouad}, to successfully distinguish these two state ensembles with probability greater than $\frac{5}{6}$, the total variation distance must be greater than a constant, which means $T = \Omega(2^{n/2})$.
\end{proof}

\begin{proof}[Proof of Lemma \ref{lem:fisher_classical_correlation}]
Note that
\begin{equation}
\begin{aligned}
\TV(p_{\mathcal{S}_1}, p_{\mathcal{S}_2}) =& \frac{1}{2} \sum_l |p_{\mathcal{S}_1}(l) - p_{\mathcal{S}_2}(l)|\\
=& \sum_l \max(0, p_{\mathcal{S}_1}(l) - p_{\mathcal{S}_2}(l))\\
=& \sum_l p_{\mathcal{S}_1}(l) \max(0, 1 - \frac{p_{\mathcal{S}_2}(l)}{p_{\mathcal{S}_1}(l)})
\end{aligned}
\end{equation}
\begin{equation}\label{eq:distance_correlation_mixed2}
    \TV(p_{\rho_m}, \bE_{\mathcal{CS}_1} p_{\rho}) = \sum_l p_{\rho_m}(l) \max\left(0, 1-\frac{p_{\mathcal{CS}_1}}{p_{\rho_m}} \right) 
\end{equation}

Now we show how to bound the term $\frac{p_{\mathcal{CS}_1}}{p_{\rho_m}}$. We begin by decomposing $\ket{\psi_{u_{i+1}}^{u_i}}$ into
\begin{equation}
\begin{aligned}
\ket{\psi_{u_{i+1}}^{u_i}} =& \alpha_{u_{i+1}}^{u_i} \ket{00} \otimes \ket{\phi_{u_{i+1}}^{u_i}(0)} + \beta_{u_{i+1}}^{u_i} \ket{11} \otimes \ket{\phi_{u_{i+1}}^{u_i}(1)}\\
&+ \gamma_{u_{i+1}}^{u_i} \ket{22} \otimes \ket{\phi_{u_{i+1}}^{u_i}(2)} + \ket{\perp}
\end{aligned}
\end{equation}
where $|\alpha_{u_{i+1}}^{u_i}|^2 + |\beta_{u_{i+1}}^{u_i}|^2 + |\gamma_{u_{i+1}}^{u_i}|^2\le 1$, and $\braket{ii}{\perp} = 0, 0 \le i \le 2$. A state $\rho \in \mathcal{CS}_1$ is denoted as 
\begin{equation}
\begin{aligned}
\rho =& \frac{1}{2} \ketbra{00} \otimes \ketbra{\psi_0} + \frac{3}{8} \ketbra{11} \otimes \ketbra{\psi_0}\\ 
&+ \frac{1}{8}\ketbra{22} \otimes \ketbra{\psi_{1}}
\end{aligned}
\end{equation}
where $\psi_0 = U\ket{0}$ and $\psi_1 = V\ket{0}$. Then we have
\begin{equation}\label{eq:moments_fisher_correlation}
\begin{split}
    &p_{\mathcal{CS}_1}(l)\\
    = & \prod_{i=1}^T \Tr(\ketbra{\psi_{u_{i+1}}^{u_i}} \rho )\\
    &\bE \prod_{i=1}^{T} (\frac{1}{2}|\alpha_{u_{i+1}}^{u_i}|^2 |\bra{\phi_{u_{i+1}}^{u_i}(0)} \ket{\psi_0}|^2 + \frac{3}{8}|\beta_{u_{i+1}}^{u_i}|^2 |\bra{\phi_{u_{i+1}}^{u_i}(1)} \ket{\psi_{0}}|^2 \\
    &+ \frac{1}{8}|\gamma_{u_{i+1}}^{u_i}|^2 |\bra{\phi_{u_{i+1}}^{u_i}(2)} \ket{\psi_{1}}|^2)\\
    = & \sum_{S_0,S_{1},S_{2}} \bE \prod_{i \in S_0} \frac{1}{2}|\alpha_{u_{i+1}}^{u_i}|^2 |\bra{\phi_{u_{i+1}}^{u_i}(0)} \ket{\psi_0}|^2 \\
    &\prod_{i \in S_{1}}\frac{3}{8}|\beta_{u_{i+1}}^{u_i}|^2 |\bra{\phi_{u_{i+1}}^{u_i}(1)} \ket{\psi_{0}}|^2 \prod_{i \in S_{2}} \frac{1}{8}|\gamma_{u_{i+1}}^{u_i}|^2 |\bra{\phi_{u_{i+1}}^{u_i}(1)} \ket{\psi_{1}}|^2 
    \end{split}
\end{equation}
where $\sum_{S_0,S_{1},S_{2}}$ denotes $\sum_{S_0,S_{1},S_{2}: \cup S_i = [T], S_i \cap S_j=\emptyset}$.  From Lemma \ref{lem:high_moment_haar}, we have
\begin{equation}
\begin{split}
    &\bE \left[ 2^{(n-2)(|S_0| + |S_{1}|)}\prod_{i \in S_0} |\bra{\phi_{u_{i+1}}^{u_i}(0)} \ket{\psi_0}|^2 \prod_{i \in S_{1}} |\bra{\phi_{u_{i+1}}^{u_i}(1)} \ket{\psi_{0}}|^2\right] \\
    \ge& [ (1 + \frac{|S_0|+|S_{1}|-1}{2^{n-2}}) \cdots (1+\frac{1}{2^{n-2}})]^{-1}  \\
    \ge& (1+\frac{T}{2^{n-2}})^{-(|S_{0}|+|S_1|)} 
\end{split}
\end{equation}
Similarly, we have:
\begin{equation}
    \bE \left[ 2^{(n-2)|S_{2}|}\prod_{i \in S_{11}} |\bra{\phi_{u_{i+1}}^{u_i}(1)} \ket{\psi_1}|^2\right] \ge (1+\frac{T}{2^{n-2}})^{-|S_{2}|} 
\end{equation}
Substituting this into Eq.~\eqref{eq:moments_fisher_correlation}, we have

\begin{equation}
\begin{split}
    &p_{\mathcal{CS}_1}(l)\\
    \ge& 2^{-(n-2)T} (1+\frac{T}{2^{n-2}})^{-T}\\
    &\sum_{S_0,S_{1},S_{2}} \prod_{i \in S_0} \frac{1}{2}|\alpha_{u_{i+1}}^{u_i}|^2  \prod_{i \in S_{1}} \frac{3}{8}|\beta_{u_{i+1}}^{u_i}|^2  \prod_{i \in S_{2}} \frac{1}{8}|\gamma_{u_{i+1}}^{u_i}|^2 
    \\
    =&  2^{-(n-2)T} (1+\frac{T}{2^{n-2}})^{-T} \prod_{i=1}^T (\frac{1}{2}|\alpha_{u_{i+1}}^{u_i}|^2  + \frac{3}{8}|\beta_{u_{i+1}}^{u_i}|^2  + |\gamma_{u_{i+1}}^{u_i}|^2)
    \end{split}
\end{equation}

For $\rho_m$, we have that
\begin{equation}
\begin{split}
    &p_{\rho_m}(l)\\
    = &\bE \prod_{i=1}^{T} (\frac{1}{2} |\alpha_{u_{i+1}}^{u_i}|^2 |\bra{\psi_0} \frac{\mathbb{I}}{2^{n-2}} \ket{\psi_0}|^2\\
    &+ \frac{3}{8}|\beta_{u_{i+1}}^{u_i}|^2 |\bra{\psi_1} \frac{\mathbb{I}}{2^{n-2}} \ket{\psi_1}|^2  + \frac{1}{8} |\gamma_{u_{i+1}}^{u_i}|^2 |\bra{\psi_2} \frac{\mathbb{I}}{2^{n-2}} \ket{\psi_2}|^2 ) \\
    &= \prod_{i=1}^T2^{-(n-2)}(\frac{1}{2}|\alpha_{u_{i+1}}^{u_i}|^2 + \frac{3}{8}|\beta_{u_{i+1}}^{u_i}|^2 + \frac{1}{8}|\gamma_{u_{i+1}}^{u_i}|^2)
    \end{split}
\end{equation}
Then, we can obtain
\begin{equation}
\begin{split}
    &\frac{p_{\mathcal{CS}_1}(l)}{p_{\rho_m}(l)} = (1 + \frac{T}{2^{n-2}})^{-T}
    \end{split}
\end{equation}

Combining with Eq.~\eqref{eq:distance_correlation_mixed2}, we have
\begin{equation}
\begin{split}
    \TV(p_{\rho_m},\bE_{\mathcal{CS}_1} p_{\rho}) &\le \sum_l p_{\rho_m}(l) \max\left(0, 1- (1+\frac{T}{2^{n-2}})^{-T} \right)\\
    &= 1 -  (1+\frac{T}{2^{n-2}})^{-T}
\end{split}
\end{equation}
\end{proof}

\section{Channel Learning}\label{sec:channel_learning}
After showing that purification can help to reduce the sample complexity in many quantum state learning tasks, it is natural to wonder whether this can be generalized to channel learning tasks. 
We have listed a range of scenarios where one might have access to the purification $\Psi_{AB}$ of the target mixed state $\rho_A$ in Sec.~\ref{sec:intro}. 
In some cases, the target mixed state and its purification are obtained by putting the input \emph{pure} state $\phi_A$ through some channel and its isometry, respectively. 
Hence, we will also have access to both channel and isometry in these cases.
If we are interested in applying the channel to a range of input states beyond $\phi_A$, then we will be interested in the property of the channel instead of the output mixed states. 
This can be efficiently probed by studying the isometry, which is actually the purification of $\mathcal{E}_A$, as will be discussed below.

A quantum channel represents the interaction between the system and the environment.
Thus, the purification of a quantum channel should contain additional information of the interaction.
According to the Stinespring's dilation theorem, the action of an arbitrary channel can be represented as an isometry, $V_{\mathcal{E}}:\mathcal{H}_A\to\mathcal{H}_{AB}$, acting on the input state and tracing out part of the output system, $\mathcal{E}(\rho_A)=\Tr_B\left(V_{\mathcal{E}}\rho_AV_{\mathcal{E}}^\dagger\right)$. 
We thus define the purification of the channel $\mathcal{E}$ to be the corresponding isometry $V_{\mathcal{E}}$. 
The lowest dimension needed for the trace-out system $B$ is given by the rank of the Choi state $\rho_{\mathcal{E}}$ of the channel $\mathcal{E}$. 
Hence, as long as $\rho_{\mathcal{E}}$ has a constant rank, we can again have a system $B$ consisting a constant number of qubits like in Sec.~\ref{sec:upper_bound}. 
Under this context, we will show that if we have access to the channel purification $V_{\mathcal{E}}$, many channel learning tasks can be efficiently achieved with constant sample complexity and single-copy operations. 
Without loss of generality, hereafter we only consider the case where the input and output of the target channel have the same dimension.

Such tasks can be identified by looking for the channel analogy for the tasks in Sec.~\ref{sec:upper_bound}. One might want to apply the methods in Sec.~\ref{sec:upper_bound} directly to the Choi state $\rho_{\mathcal{E}}$ to gain the necessary advantage. 
However, preparing the Choi state of an $n$-qubit quantum channel requires $2n$-qubit quantum memory, which is much more than the $n + \order{1}$ qubit overhead needed in Sec.~\ref{sec:upper_bound}.
We will illustrate alternative ways to accomplish these channel learning tasks with the same $n + \order{1}$ qubit overhead in the following subsections.

\subsection{Unitarity Estimation}

\begin{figure}[htbp]
\centering
\includegraphics[width=0.7\linewidth]{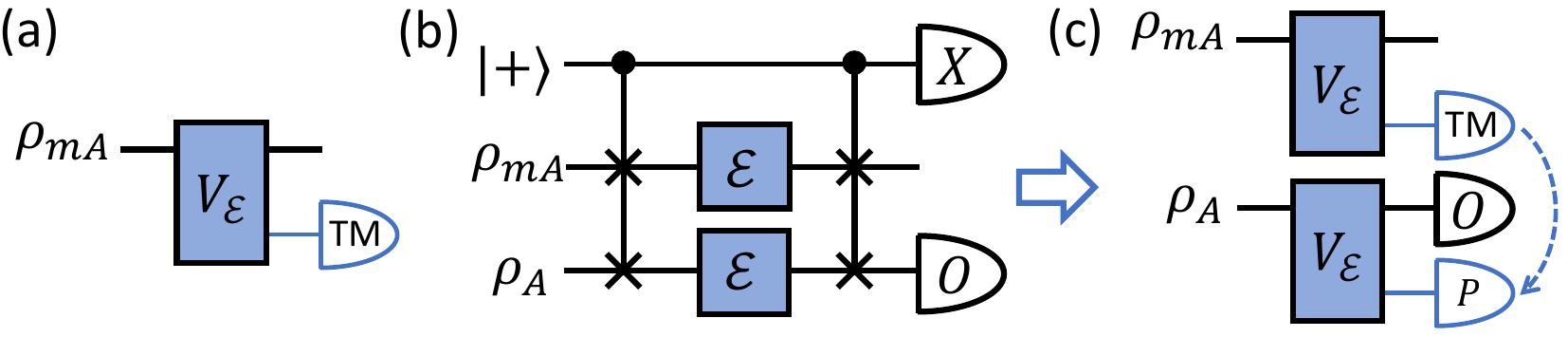}
\caption{Circuits for some channel learning tasks. (a) Purification-assisted unitarity estimation circuit. (b) The original circuit for virtual channel distillation. (c) Purification-assisted circuits for virtual channel distillation and channel principal component analysis. Here, $\rho_{mA}$ is the maximally mixed state, TM stands for tomography, and $P$ is some observable constructed based on the results of the tomography. }
\label{fig:qem}
\end{figure}
The first task we are looking at is the unitarity estimation~\cite{montanaro2013survey}.
Unitarity is defined as the purity of the Choi state of the target channel, quantifying the distance between the given channel and unitary evolution. 
It has been proven that, without the prior knowledge of the target $d_A$-dimensional channel, arbitrary protocol without quantum memory that can accurately estimate the unitarity requires $\Omega(\sqrt{d_A})$ sample complexity~\cite{chen2023unitarity}. 
With access to $V_{\mathcal{E}}$, we find that the unitarity of the channel can be efficiently estimated using the circuit in Fig.~\ref{fig:qem}(a), which is explained by the following observation:
\begin{observation}
Given a quantum channel $\mathcal{E}$ and the isometry $V_{\mathcal{E}}:\mathcal{H}_A\to\mathcal{H}_{AB}$ satisfying $\mathcal{E}(\rho_A)=\Tr_B\left(V_{\mathcal{E}}\rho_AV_{\mathcal{E}}^\dagger\right)$ for arbitrary $\rho_A$, the unitarity of the channel $\mathcal{E}$, which is also the purity of its Choi state $\rho_{\mathcal{E}}$, is given as $\Tr(\rho_{\mathcal{E}}^2)=\Tr_B\left[\Tr_A\left(V_{\mathcal{E}}\rho_{mA}V_{\mathcal{E}}^\dagger\right)^2\right]$, where $\rho_m$ is the maximally mixed state.
\end{observation}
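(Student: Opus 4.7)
The plan is to recognize the claimed identity as an instance of Observation~\ref{obs:purity} applied to a carefully chosen purification of the Choi state $\rho_{\mathcal{E}}$, plus a routine partial-trace calculation. The key is to identify the right tripartite pure state that has $\rho_{\mathcal{E}}$ as one reduced density matrix and $\Tr_A(V_{\mathcal{E}}\rho_{mA}V_{\mathcal{E}}^{\dagger})$ as the complementary reduced density matrix.

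Concretely, first I would write the Choi state in its standard form
\begin{equation}
    \rho_{\mathcal{E}} = (\mathcal{I}_{A'}\otimes\mathcal{E})(\ketbra{\Phi}{\Phi}_{A'A}), \qquad \ket{\Phi}_{A'A} = \frac{1}{\sqrt{d_A}}\sum_i \ket{i}_{A'}\ket{i}_A,
\end{equation}
and then substitute the Stinespring dilation $\mathcal{E}(\cdot) = \Tr_B(V_{\mathcal{E}}\cdot V_{\mathcal{E}}^{\dagger})$. This exhibits the tripartite pure state
\begin{equation}
    \ket{\Psi}_{A'AB} = (\mathbb{I}_{A'}\otimes V_{\mathcal{E}})\ket{\Phi}_{A'A}
\end{equation}
as a purification of $\rho_{\mathcal{E}}$ across the bipartition $(A'A)\mid B$, i.e.\ $\rho_{\mathcal{E}} = \Tr_B(\ketbra{\Psi}{\Psi})$.

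Second, I would invoke Observation~\ref{obs:purity} on this bipartition to conclude $\Tr(\rho_{\mathcal{E}}^2) = \Tr(\rho_B^2)$, where $\rho_B = \Tr_{A'A}(\ketbra{\Psi}{\Psi})$. Computing $\rho_B$ explicitly uses only the fact that tracing out $A'$ from $\ketbra{\Phi}{\Phi}_{A'A}$ yields the maximally mixed state $\rho_{mA}=\mathbb{I}_A/d_A$, so that
\begin{equation}
    \rho_B = \Tr_A\!\left[V_{\mathcal{E}}\,\Tr_{A'}(\ketbra{\Phi}{\Phi})\,V_{\mathcal{E}}^{\dagger}\right] = \Tr_A(V_{\mathcal{E}}\rho_{mA}V_{\mathcal{E}}^{\dagger}).
\end{equation}
Combining these two steps gives $\Tr(\rho_{\mathcal{E}}^2) = \Tr_B\!\left[\Tr_A(V_{\mathcal{E}}\rho_{mA}V_{\mathcal{E}}^{\dagger})^2\right]$, as claimed.

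There is no genuinely hard step here: the content is essentially a change of perspective, observing that preparing the channel purification on half of a maximally entangled state already implements the Choi state purification. The only point that deserves a sentence of care is confirming that $\ket{\Psi}_{A'AB}$ is indeed a legitimate purification, which follows immediately from $V_{\mathcal{E}}^{\dagger}V_{\mathcal{E}} = \mathbb{I}_A$ so that $\ket{\Psi}$ is normalized, together with the partial-trace identity above. This proof structure also makes clear why the circuit of Fig.~\ref{fig:qem}(a) works: it directly prepares $\rho_B$ by applying $V_{\mathcal{E}}$ to $\rho_{mA}$ and tracing out $A$, after which the constant-qubit purity-estimation protocol from Sec.~\ref{sec:upper_bound} can be applied to $\rho_B$.
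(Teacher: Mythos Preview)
Your proof is correct and conceptually clean, but it proceeds along a different route from the paper. The paper works with the canonical Kraus representation: it writes $\mathcal{E}(\rho)=\sum_i p_i E_i\rho E_i^\dagger$ with orthonormal Kraus operators, expresses the isometry as $V_{\mathcal{E}}=\sum_i\sqrt{p_i}\,E_i\otimes\ket{\psi_B^i}$, and then computes $\rho_B=\Tr_A(V_{\mathcal{E}}\rho_{mA}V_{\mathcal{E}}^\dagger)=\sum_i p_i\ketbra{\psi_B^i}$ explicitly, so that $\Tr(\rho_B^2)=\sum_i p_i^2$ manifestly equals the Choi-state purity. You instead bypass the Kraus machinery entirely by recognizing $(\mathbb{I}_{A'}\otimes V_{\mathcal{E}})\ket{\Phi}_{A'A}$ as a purification of $\rho_{\mathcal{E}}$ and invoking Observation~\ref{obs:purity} directly. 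Your argument is shorter and makes the logical dependence on the earlier state-learning result transparent; the paper's approach, on the other hand, sets up the explicit spectral data $\{p_i,E_i,\ket{\psi_B^i}\}$ that it reuses immediately afterward for virtual channel distillation and channel principal component analysis, so the extra computation is not wasted in context.
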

\begin{proof}
Any given channel $\mathcal{E}$ acting on the system $A$ can always be written in the canonical Kraus representation:
\begin{align}\label{eqn:canonical_kraus}
\mathcal{E}(\rho) = \sum_{i=0}^{r-1} p_i E_{i}\rho E_{i}^\dagger
\end{align}
where $r$ is the Choi rank of the channel. The orthonormal Kraus basis here satisfies $\Tr(E_{i}^\dagger E_{j}) = \delta_{ij}d_A$, and the trace-preserving condition implies that $\sum_i p_i = 1$. Without loss of generality, we will assume all Kraus operators are ordered in descending order of $p_i$, i.e. $p_{0} \geq p_{1} \cdots p_{r-1} \geq 0$.  

The corresponding isometry of the channel can always be written in the form:
\begin{align}
V_{\mathcal{E}} = \sum_{i=0}^{r-1} \sqrt{p_i} E_{i} \otimes \ket{\psi_B^i}
\end{align}
where $\{\ket{\psi_B^i}\}$ is a set of orthonormal basis of the purification system $B$.  It is easy to verify that indeed $\mathcal{E}(\rho)=\Tr_B\left(V_{\mathcal{E}}\rho V_{\mathcal{E}}^\dagger\right)$.

Using this representation, if we input a maximally mixed state in system $A$ into the isometry and trace out system $A$, the resultant state is:
\begin{align}\label{eqn:rho_B}
\begin{split}
\rho_B &= \Tr_A\left(V_{\mathcal{E}}\rho_{mA}V_{\mathcal{E}}^\dagger\right)\\
&=   \frac{1}{d_A}\sum_{ij} \sqrt{p_ip_j} \Tr_A\left(E_{i}E_{j}^{\dagger}\right) \otimes \ketbra{\psi_B^i}{\psi_B^j} \\
&= \sum_{i=0}^{r-1} p_i \ketbra{\psi_B^i}{\psi_B^i}
\end{split}
\end{align}
In this way, the unitarity of the channel $A$, which is $\Tr(\rho_{\mathcal{E}}^2)=\sum_{i} p_i^2$, can be obtained via
\begin{align}
\sum_{i = 0}^{r-1}p_i^2 = \Tr_B[\rho_B^2].
\end{align}
\end{proof}

This proof can be generalized to prove observations in the following two sections.
According to this observation, the estimation of unitarity is equivalent to the purity estimation of the system $B$ after the isometry evolution in Fig.~\ref{fig:qem}(a). Since the qubit number of system $B$ is assumed to be a constant, we can obtain its purity via state tomography with a constant number of samples. 

\subsection{Virtual Channel Distillation}

The task of virtual channel distillation aims to estimate the value of $\Tr\left[O\mathcal{E}^{(2)}(\rho_A)\right]$ for arbitrary input state $\rho_A$  and observable $O$, with $\mathcal{E}^{(2)}$ being the (unnormalized) distilled channel whose Choi state $\rho_{\mathcal{E}^{(2)}}$ is the square of the Choi state of the original one, $\rho_{\mathcal{E}^{(2)}}=(\rho_{\mathcal{E}})^2$. The ability to perform such channel distillation enables the application of quantum error mitigation techniques~\cite{liu2024virtual} that require a minimal set of assumptions and achieve an exponential error suppression rate. As shown in Fig.~\ref{fig:qem}(b), the original circuit implementation of virtual channel distillation has removed the need for preparing the Choi states. However, it still requires $2n + \order{1}$ memory qubits to implement. Furthermore, it requires controlled-SWAP between the two registers, which means long-range connections are needed for swapping the corresponding qubits in different registers.   
We will show that the accessibility of purification removes all of these hardware requirements based on the following observation:
\begin{observation}
Given a quantum channel $\mathcal{E}$ and the isometry $V_{\mathcal{E}}$ satisfying $\mathcal{E}(\rho_A)=\Tr_B\left(V_{\mathcal{E}}\rho_AV_{\mathcal{E}}^\dagger\right)$ for arbitrary $\rho_A$, we have $\mathcal{E}^{(2)}(\rho_A)=\Tr_B\left[V_{\mathcal{E}}\rho_AV_{\mathcal{E}}^\dagger\Tr_A\left(V_{\mathcal{E}}\rho_{mA}V_{\mathcal{E}}^\dagger\right)\right]$ for arbitrary $\rho_A$, where $\rho_m$ is the maximally mixed state.
\end{observation}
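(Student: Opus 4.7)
The plan is to mirror the argument used in the preceding observation, working in the canonical Kraus representation $\mathcal{E}(\rho) = \sum_{i=0}^{r-1} p_i E_i \rho E_i^\dagger$ with $\Tr(E_i^\dagger E_j) = d_A \delta_{ij}$ and $\sum_i p_i = 1$, so that the Stinespring isometry takes the form $V_{\mathcal{E}} = \sum_i \sqrt{p_i}\, E_i \otimes \ket{\psi_B^i}$ with $\{\ket{\psi_B^i}\}$ orthonormal on $B$. The identity to be proved equates an operator expression built from $V_{\mathcal{E}}$, $\rho_A$, and $\rho_{mA}$ to the distilled channel $\mathcal{E}^{(2)}$, so the natural strategy is to first identify the Kraus form of $\mathcal{E}^{(2)}$ and then evaluate the right-hand side of the claim explicitly in the same canonical basis.

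First I would identify the Kraus operators of $\mathcal{E}^{(2)}$. The canonical form above is precisely the spectral decomposition of $\rho_{\mathcal{E}}$ with eigenvalues $\{p_i\}$ (this follows from $\Tr(E_i^\dagger E_j) = d_A \delta_{ij}$ via the standard Choi--Jamio{\l}kowski correspondence), so $\rho_{\mathcal{E}}^2$ carries eigenvalues $\{p_i^2\}$ with the same eigenvectors, which corresponds to the (unnormalized) channel $\mathcal{E}^{(2)}(\rho_A) = \sum_i p_i^2 E_i \rho_A E_i^\dagger$. This is the expression I need to reproduce from the right-hand side.

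Next I would expand the right-hand side directly. Substituting the isometry gives
\begin{equation*}
V_{\mathcal{E}} \rho_A V_{\mathcal{E}}^\dagger = \sum_{i,j} \sqrt{p_i p_j}\; E_i \rho_A E_j^\dagger \otimes \ketbra{\psi_B^i}{\psi_B^j},
\end{equation*}
while Eq.~(\ref{eqn:rho_B}), already established in the unitarity observation, delivers $\Tr_A(V_{\mathcal{E}} \rho_{mA} V_{\mathcal{E}}^\dagger) = \sum_k p_k \ketbra{\psi_B^k}{\psi_B^k}$. Multiplying the two (the second factor acts as $\mathbb{I}_A \otimes \cdot$) and using orthonormality of $\{\ket{\psi_B^k}\}$ collapses the $k$-sum onto $k = j$, leaving $\sum_{i,j} \sqrt{p_i p_j}\, p_j\, E_i \rho_A E_j^\dagger \otimes \ketbra{\psi_B^i}{\psi_B^j}$. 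Taking $\Tr_B$ then enforces $i = j$ and produces exactly $\sum_i p_i^2 E_i \rho_A E_i^\dagger = \mathcal{E}^{(2)}(\rho_A)$.

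The main obstacle is conceptual rather than technical: one must insist on the \emph{canonical} (orthonormal) Kraus decomposition so that the Choi spectrum is manifest and squaring preserves the Kraus structure, and one must know that the environment states $\{\ket{\psi_B^i}\}$ attached to this decomposition are orthonormal. Both facts are already in place from the proof of the preceding observation, so the verification reduces to two applications of orthonormality on $B$ --- one from multiplying by $\Tr_A(V_{\mathcal{E}} \rho_{mA} V_{\mathcal{E}}^\dagger)$ and one from the outer $\Tr_B$ --- with no additional machinery required.
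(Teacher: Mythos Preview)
Your proposal is correct and follows essentially the same approach as the paper: both use the canonical Kraus representation and the previously established identity $\rho_B=\sum_i p_i\ketbra{\psi_B^i}$ from Eq.~\eqref{eqn:rho_B}, then expand $V_{\mathcal{E}}\rho_A V_{\mathcal{E}}^\dagger$, multiply by $\rho_B$, and trace out $B$ to obtain $\sum_i p_i^2 E_i\rho_A E_i^\dagger$, which is identified with $\mathcal{E}^{(2)}$ via the Choi spectrum. The paper's proof is terser (it simply states the key line $\Tr_B[\rho_B V_{\mathcal{E}}\rho_A V_{\mathcal{E}}^\dagger]=\sum_i p_i^2 E_i\rho_A E_i^\dagger$), but the logic is identical to yours.
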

\noindent This observation can be proved using Eq.~\eqref{eqn:rho_B}, as
\begin{equation}
\Tr_B\left[\rho_BV_{\mathcal{E}}\rho_AV_{\mathcal{E}}^\dagger\right] = \sum_{i=0}^{r-1} p_i^2 E_{i}\rho_A E_{i}^\dagger.
\end{equation}
Here, the Choi state of channel $\sum_{i=0}^{r-1} p_i^2 E_{i}\rho_A E_{i}^\dagger$ is $\rho_{\mathcal{E}^{(2)}}=(\rho_{\mathcal{E}})^2$.

Based on this observation, we can design a purification-assisted protocol to estimate the value of $\Tr\left[O\mathcal{E}^{(2)}(\rho_A)\right]$ using the circuit shown in Fig.~\ref{fig:qem}(c).
We first initialize the input state to be the maximally mixed state and perform tomography on the output state in system $B$, which only contains a constant number of qubits.
Based on the result of tomography, we classically reconstruct an estimator of the density matrix $\Tr_A\left(V_{\mathcal{E}}\rho_{Am}V_{\mathcal{E}}^\dagger\right)$, labeled as $\hat{\rho}_B$.
Then, we change the input state to be $\rho_A$ and measure the observable $O\otimes\hat{\rho}_B$ on the state $V_{\mathcal{E}}\rho_A V_{\mathcal{E}}^\dagger$ to get the final estimator of $\Tr\left[O\mathcal{E}^{(2)}(\rho_A)\right]$. 
Following similar arguments in quantum virtual cooling, we can easily prove that this protocol only requires constant sample complexity when $\norm{O}_{\infty}=\mathcal{O}(1)$.

\subsection{Channel Principal Component Extraction}
Using a protocol modified from the last section, we can also perform the principal component analysis for a quantum channel. 
The principle component of a given channel is defined as the Kraus component that corresponds to the leading eigenvector of its Choi state. 
More explicitly, any given channel can always be written in the canonical Kraus representation $\mathcal{E}(\rho) = \sum_j p_j E_{j}\rho E_{j}^\dagger$ with orthonormal Kraus operators $\Tr(E_{j}^\dagger E_{k}) = \delta_{j,k}d_A$ and $\sum_{j}p_j = 1$. 
Without loss of generality, we will assume $p_0$ is the largest out of all $p_i$. 
The target of channel principal component analysis is to extract the value of $\Tr\left[OE_0\rho_AE_0^\dagger\right]$, where $\rho_A$ is an arbitrary input state. 
This can be useful in practice when $\mathcal{E}$ is some noise channel and $E_0$ is the target noiseless component. 
Given the isometry $V_{\mathcal{E}}$, we have the following observation:
\begin{observation}
Given a channel $\mathcal{E}$, which has the canonical Kraus decomposition $\mathcal{E}(\cdot)=\sum_jp_jE_j\cdot E_j^\dagger$ with $\Tr(E_jE_k^\dagger)=d_A\delta_{j,k}$, and the isometry $V_{\mathcal{E}}$ satisfying $\mathcal{E}(\rho_A)=\Tr_B\left(V_{\mathcal{E}}\rho_AV_{\mathcal{E}}^\dagger\right)$ for arbitrary $\rho_A$, we have $E_0\rho_AE_0^\dagger= \frac{1}{p_0}\Tr_B\left(V_{\mathcal{E}}\rho_AV_{\mathcal{E}}^\dagger\ketbra{\psi_B^0}\right)$ for arbitrary $\rho_A$, 
where $\ket{\psi_B^0}$ is the principal eigenstate of $\Tr_A(V_{\mathcal{E}}\rho_{mA}V_{\mathcal{E}}^\dagger)$.
\end{observation}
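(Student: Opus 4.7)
The plan is to leverage the explicit Stinespring representation of $V_{\mathcal{E}}$ that was already derived earlier in the channel section. Concretely, from the unitarity proof we have the identity
\begin{equation}
V_{\mathcal{E}} = \sum_{i=0}^{r-1}\sqrt{p_i}\, E_i \otimes \ket{\psi_B^i},
\end{equation}
where $\{\ket{\psi_B^i}\}$ are orthonormal, and consequently
\begin{equation}
\rho_B = \Tr_A(V_{\mathcal{E}}\rho_{mA}V_{\mathcal{E}}^\dagger) = \sum_{i=0}^{r-1} p_i \ketbra{\psi_B^i}{\psi_B^i}.
\end{equation}
Since the $p_i$ are ordered in descending order and the $\ket{\psi_B^i}$ are orthonormal, the principal eigenstate of $\rho_B$ is exactly $\ket{\psi_B^0}$, matching the statement. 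This identifies the $B$-system label $\ket{\psi_B^0}$ used in the observation with the principal eigenstate one actually measures or tomographs.

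Next I would simply compute the right-hand side by brute expansion. Substituting the Kraus form of $V_{\mathcal{E}}$ into $V_{\mathcal{E}}\rho_A V_{\mathcal{E}}^\dagger$ gives a double sum $\sum_{i,j}\sqrt{p_ip_j}\, E_i\rho_A E_j^\dagger \otimes \ketbra{\psi_B^i}{\psi_B^j}$. Multiplying by $\mathbb{I}_A\otimes\ketbra{\psi_B^0}{\psi_B^0}$ and using $\braket{\psi_B^j}{\psi_B^0}=\delta_{j,0}$ collapses the $j$-sum, leaving $\sum_i \sqrt{p_ip_0}\, E_i\rho_A E_0^\dagger \otimes \ketbra{\psi_B^i}{\psi_B^0}$. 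Taking $\Tr_B$ collapses the $i$-sum via $\braket{\psi_B^0}{\psi_B^i}=\delta_{i,0}$, yielding $p_0\, E_0\rho_A E_0^\dagger$, and dividing by $p_0$ completes the identity.

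I do not anticipate a serious technical obstacle: the computation is essentially the channel-level analogue of Observation~\ref{obs:pca}, and the orthonormality of the Kraus basis (inherited from the canonical decomposition condition $\Tr(E_j^\dagger E_k) = \delta_{jk}d_A$) is precisely what makes the $B$-system basis $\{\ket{\psi_B^i}\}$ orthonormal so that the projection $\ketbra{\psi_B^0}$ isolates the $i=j=0$ term cleanly. The only subtle point worth flagging in the write-up is the well-definedness of $E_0$ and $\ket{\psi_B^0}$ when $p_0$ is degenerate with $p_1$; in that case the canonical Kraus decomposition is only unique up to a unitary rotation within the top eigenspace of $\rho_{\mathcal{E}}$, and the same rotation freedom is inherited by $\ket{\psi_B^0}$, so the identity still holds provided one fixes a common basis on both sides. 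For estimation purposes this motivates the standard spectral-gap assumption analogous to the one used in Theorem~\ref{thm:pca}, which can then be combined with the tomography argument in Appendix~\ref{app:qpca_upper} to build a constant-sample-complexity protocol for channel principal component extraction.
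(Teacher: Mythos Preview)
Your proposal is correct and matches the paper's approach: the paper does not spell out a separate proof for this observation but instead remarks that the computation from the unitarity observation (establishing $V_{\mathcal{E}}=\sum_i\sqrt{p_i}E_i\otimes\ket{\psi_B^i}$ and $\rho_B=\sum_i p_i\ketbra{\psi_B^i}$) generalizes directly, which is exactly the expansion-and-orthonormality argument you carry out. Your additional remark on the degeneracy caveat is a reasonable clarification but is not needed for the bare identity and is not discussed in the paper at this point.
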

\noindent Based on this observation, we can use the same circuit and protocol as virtual channel distillation in the last section to estimate the value of $\Tr\left[OE_0\rho_AE_0^\dagger\right]$.
The only adjustment is that the observable we measure on subsystem $B$ should be changed from $\hat{\rho}_B$ to an estimator of the principal component of $\Tr_A(V_{\mathcal{E}}\rho_AV_{\mathcal{E}}^\dagger)$.
Following similar arguments in quantum state principal component analysis, we can easily prove that this protocol also requires constant sample complexity when $\norm{O}_{\infty}=\mathcal{O}(1)$. 


%

\end{document}